\documentclass{article}
\usepackage{csquotes}
\usepackage[english]{babel}

\usepackage[letterpaper,top=2cm,bottom=2cm,left=3cm,right=3cm,marginparwidth=1.75cm]{geometry}
\usepackage[
  backend=biber,
  style=alphabetic,   % or numeric, authoryear
  sorting=nyt
]{biblatex}

\usepackage{mdframed}
\usepackage{amsmath}
\usepackage{amssymb}
\usepackage{amsmath,amssymb}
\usepackage[most]{tcolorbox}
\usepackage{thmtools}
\usepackage{amsthm}
\usepackage{bbm}
\usepackage{graphicx}
\usepackage{braket}
\usepackage{comment}
\usepackage{xcolor}
\usepackage[colorlinks=true, allcolors=blue]{hyperref}
\theoremstyle{definition} % makes the text upright instead of italic
\newtheorem{theorem}{Theorem}[section] % numbering by section
\newtheorem*{theorem*}{Theorem}

\newtheorem{conjecture}[theorem]{Conjecture}
\newtheorem*{corollary*}{Corollary}
\newtheorem{lemma}[theorem]{Lemma} % numbering by section

\newtheorem{corollary}[theorem]{Corollary} % numbering by section

\newtheorem{proposition}[theorem]{Proposition} % numbering by section
\usepackage{caption}
\usepackage{enumitem}
\usepackage{thm-restate}
\usepackage{multirow}

\newcommand{\Tr}{\operatorname{Tr}}

\newcommand{\tr}{\mathrm{tr}}

\title{Quantum colorings of spheres}
\author{Olivier Lalonde \\
\small Institute for Quantum Computing \\
\small University of Waterloo \\
\small \texttt{olalonde@uwaterloo.ca}
}
\date{}

\begin{document}
\maketitle

\begin{abstract}
    Cameron, Montanaro, Newman, Severini and Winter gave a construction which shows that, for $n \in \{2,4,8\}$, any graph $G$ which admits a real $n$-dimensional orthogonal representation is quantumly $n$-colorable. This result can be recast as the statement that the real sphere $S^{n-1}$ is quantumly $n$-colorable for these values of $n$. We investigate possible extensions of their construction. We first show that their hypothesis that the orthogonal representation be real-valued is required by proving that there is no analogue of this for the complex spheres, which all have quantum chromatic number strictly bigger than the dimension except in two dimensions. We also provide candidate finitary witnesses of this and show for the first time that the real and complex orthogonal ranks are distinct as a byproduct. For the real case, we show that if $S^{n-1}$ is quantumly $n$-colorable, then either $n=2$ or $n$ is a multiple of 4, and show that the converse holds whenever a Hadamard matrix of order $n$ exists. Hence, assuming the Hadamard conjecture, this completely classifies the dimensions to which the CMNSW construction can be extended. Our method of proof involves showing the equivalence between the existence of such a construction and the existence of a maximal code space for Clifford-algebraic errors given a clean ancilla, and we believe that the representation-theoretic techniques we use for tackling the latter problem could be of independent interest. It also follows from this equivalence that $S^{n-1}$ admits a rank-one quantum $n$-coloring if and only if $n \in \{2,4,8\}$, thereby settling a conjecture of Zeng and Zhang, as does the fact that for all $m \geq 1$, there exists a catalytic zero-error remote state preparation protocol for real $m$-qubit states with $m$ bits of communication and which consumes $m$ ebits.
\end{abstract}

\section{Introduction}
\subsection{Background}
In an $n$-coloring game \cite{galliard2002}, two players, traditionally called Alice and Bob, are each given vertices of a publicly known graph $G$ which are promised to either be identical or adjacent, and requested to produce colors in $[n]$ which are identical if and only if their inputs were. In a given model, of interest is the smallest value of $n$ such that this is achievable with probability one for any possible input pair. A strategy achieving this will be called a coloring of the graph. Classically, it is simple to see that such strategies are in one-to-one correspondence with classical colorings of the graph, so that this smallest value of $n$ is simply the chromatic number $\chi(G)$. In the presence of entanglement, this smallest value of $n$ is called the quantum chromatic number of $G$, denoted $\chi_q(G)$ \cite{cameron2006quantumchromaticnumbergraph}. Its study can be traced back to the works \cite{Brassard_1999, avis}, which, in modern notation, showed that, letting $\Omega_{n}$ be the graph with vertex set $\{1,-1\}^{n}$ and with two vertices being adjacent if the corresponding vectors are orthogonal, we have that for all $m$, $\chi_q(\Omega_{4m}) \leq 4m$ (in fact, equality holds, as shown by \cite{Man_inska_2016}). On the other hand, classically, it was shown by \cite{FranklRodl1987ForbiddenIntersections} that $\chi(\Omega_{4m}) \geq \alpha^{4m}$ for some $\alpha > 1$, so that these graphs asymptotically separate the quantum and classical chromatic numbers. This was sharpened by \cite{godsil2005eigenvalueboundsindependentsets} to show that the smallest such graph which exhibits the desired separation is $\Omega_{12}$. 

Though several families of graphs with differing quantum and classical chromatic numbers are now known, these all rely on a handful of known constructions of entangled strategies for the corresponding coloring game, with the most commonly used ones relying on the notion of an orthogonal representation. Following \cite{Lovasz}, given a graph $G$, call an assignment of unit vectors $\{\psi_v\}_{v \in V(G)}$ in $\mathbb{F}^n$ to the vertices of $G$ an orthogonal representation if $\braket{\psi_u | \psi_v} = 0$ whenever $(u,v) \in E(G)$. For example, the aforementioned graphs $\Omega_n$ all have an orthogonal representation in $\mathbb{R}^{n}$, by definition. This notion can be seen to be a linear-algebraic generalization of that of a classical coloring, which is the special case that is obtained when one restricts the $\psi_v$ to be standard basis vectors. We will write $\xi(G)$ and $\xi_\mathbb{R}(G)$ for the least dimension $n$ such that $G$ has an orthogonal representation in $\mathbb{C}^n$ and $\mathbb{R}^n$, respectively. 

We have the following construction due to \cite{cameron2006quantumchromaticnumbergraph}, which is based on the existence of real division algebras in the given dimensions:
\begin{theorem}\label{thm:quaternions}
Given $n \in \{2,4,8\}$, any graph which has an orthogonal representation in $\mathbb{R}^n$ is quantumly $n$-colorable.
\end{theorem}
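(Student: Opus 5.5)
The plan is to build, for each unit vector $x \in \mathbb{R}^n$, a projective measurement $\{P_c(x)\}_{c=1}^n$ on $\mathbb{C}^n$ such that for orthogonal $x,y$ the outcomes disagree with certainty. We then take the strategy in which Alice and Bob share the maximally entangled state $\ket{\Phi_n} = \frac{1}{\sqrt n}\sum_i \ket{ii}$, Alice measures $\{P_c(\psi_v)\}$ and Bob measures $\{\overline{P_c(\psi_w)}\}$. Since $\braket{\Phi_n | A\otimes \overline{B} | \Phi_n} = \frac1n \Tr(AB^{*})$, the probability of outputs $(c,c')$ is $\frac1n\Tr(P_c(x)P_{c'}(y))$. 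The requirements then become the following. If $x=y$, we need $\Tr(P_c(x)P_{c'}(x))=0$ for $c\neq c'$, which holds automatically for a projective measurement. If $x\perp y$, we need $\Tr(P_c(x)P_c(y))=0$ for all $c$, i.e. $P_c(x)P_c(y)=0$.

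We use rank-one projections $P_c(x)=\ket{e_c(x)}\bra{e_c(x)}$, where $\{e_c(x)\}$ is an orthonormal basis of $\mathbb{R}^n$ depending linearly on $x$. The division algebra gives this. Let $\mathbb{A}$ be $\mathbb{C}$, $\mathbb{H}$ or $\mathbb{O}$ identified with $\mathbb{R}^n$, with basis units $u_1=1,u_2,\dots,u_n$, and set $e_c(x) = u_c x$ (left multiplication in $\mathbb{A}$). Multiplicativity of the norm, $|ab|=|a||b|$, which holds even for the octonions, gives $|e_c(x)|=|x|=1$. Polarizing $|(u_c+u_{c'})x|^2 = |u_c+u_{c'}|^2|x|^2$ gives $\langle u_c x, u_{c'} x\rangle = \langle u_c,u_{c'}\rangle |x|^2 = \delta_{cc'}$, so $\{e_c(x)\}_c$ is orthonormal for each unit $x$. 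Polarizing in the other slot, $|u(x+y)|^2=|u|^2|x+y|^2$, gives $\langle u_c x, u_c y\rangle = |u_c|^2\langle x,y\rangle = 0$ when $x\perp y$. Hence $\langle e_c(x),e_c(y)\rangle=0$, so $P_c(x)P_c(y)=0$. All vectors are real, so complex conjugation on Bob's side is harmless, and the correlation is in fact symmetric.

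Given an orthogonal representation $\{\psi_v\}$ of $G$ in $\mathbb{R}^n$, we assign to vertex $v$ the measurement $\{P_c(\psi_v)\}$. Adjacent vertices then never yield equal colors, and equal vertices always yield equal colors. This is a quantum $n$-coloring, and it also explains the reformulation as coloring $S^{n-1}$.

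The main obstacle is the orthogonality identity, since it must hold for the octonions, which are non-associative. We avoid this by using only the composition property $|ab|=|a||b|$ together with polarization, rather than manipulating products. The case $n=2$ is trivial in any event, since one can use the basis $\{x, Jx\}$ with $J$ the rotation by $\pi/2$. We also want to check that this basis depends only on $x$ and not on a choice of sign. This holds because $e_c(x)$ is defined linearly in $x$, so no choices enter.
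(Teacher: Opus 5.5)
Your proof is correct. It has the same skeleton as the paper's: each vertex receives the rank-one measurement $P_c(x)=V_c\ket{x}\bra{x}V_c^\dagger$, where $V_1,\dots,V_n$ are unitaries with $V_c^\dagger V_{c'}$ antisymmetric for $c\neq c'$. In your construction this is the identity $\langle u_cx,u_{c'}x\rangle=0$ for all real $x$, with $V_c=L_{u_c}$. The relation $P_c(x)P_c(y)=0$ for $x\perp y$ then comes for free from unitarity.

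The difference lies in how the $V_c$ are produced and verified.
\begin{itemize}
\item The paper (Section \ref{subsec:introitus}) works on $S^{n-1}$ and reduces rank-one colorings to this antisymmetry condition via Proposition \ref{prop:redrsp}. It writes down explicit Pauli-tensor unitaries for $n=4,8$, whose antisymmetry is left to direct computation, and handles $n=2$ by bipartiteness of $S^1$. It then transfers the result to $G$ by monotonicity along the homomorphism $G\to S^{n-1}$.
\item You take left multiplication by the units of $\mathbb{C},\mathbb{H},\mathbb{O}$ and obtain both orthogonality relations uniformly from $|ab|=|a||b|$ by polarization. This is the original division-algebra argument of Cameron et al. It explains the construction conceptually and avoids any use of associativity. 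It also spells out the entangled strategy and its winning probabilities, which the paper leaves to the general passage from rank-one colorings to quantum colorings.
\end{itemize}

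The two constructions are of the same kind up to an irrelevant global phase. For instance, for $n=4$ the paper's $iV_2,iV_3,iV_4$, namely $iY\otimes I$, $X\otimes iY$, $Z\otimes iY$, are real and satisfy the quaternion relations.

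What the paper's framing buys is the converse. After normalizing $V_1=I$, the antisymmetry condition makes $V_2,\dots,V_n$ pairwise anticommuting. On each irreducible component, and after rescaling by phases, they therefore generate a representation of the Clifford algebra on $n-1$ generators. The divisibility requirement $d_{n-1}\mid n$ then shows that rank-one colorings of $S^{n-1}$ exist only for $n\in\{2,4,8\}$.
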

The most notable example of a quantum-classical separation based on this construction is a graph $G_{14}$ on 14 vertices that was introduced by \cite{mancinska2018odditiesquantumcolorings} and which satisfies $\chi_q(G_{14}) = 4$ and $\chi(G_{14}) = 5$. This graph was shown to be the smallest possible separation between the quantum and classical chromatic numbers by \cite{lalonde2023quantumchromaticnumberssmall}, and its small size has enabled its use in experiments aimed at benchmarking quantum hardware \cite{furches, than}. 

Given its success, it is natural to look for possible extensions of the construction \ref{thm:quaternions}, which could perhaps be used to find other interesting examples of graphs exhibiting separations between their quantum and classical chromatic numbers. It was already noted in \cite{cameron2006quantumchromaticnumbergraph} that their method of proof for theorem \ref{thm:quaternions} strictly requires the orthogonal representation to be real and $n \in \{2,4,8\}$, since $\mathbb{C}^n$ cannot be made into a complex division algebra for any $n > 1$ and since these are exactly the dimensions in which real division algebras exist by Hurwitz's theorem \cite{Hurwitz1898Composition}. Also,  \cite{mancinska2018odditiesquantumcolorings} exhibited a graph $G_{13}$ (from which the aforementioned graph $G_{14}$ is obtained by adding an apex) with an orthogonal representation in $\mathbb{R}^3$ but with $\chi_q(G_{13}) = 4$, thereby showing that no such construction exists for $n=3$.

It is useful to recast construction \ref{thm:quaternions} as follows. For $\mathbb{F} \in \{\mathbb{R}, \mathbb{C}\}$, make the unit sphere in $\mathbb{F}^n$, written $S_\mathbb{F}^{n-1}$, into a graph by taking two points to be adjacent if they are orthogonal. In line with standard notation, we will write $S^{n-1}$ to mean $S_\mathbb{R}^{n-1}$. Note that $S_\mathbb{F}^{n-1}$ contains an $n$-clique given by the standard basis, so that $\chi_q(S_\mathbb{F}^{n-1}) \geq n$. First of all, $S_\mathbb{F}^{n-1}$ trivially has an orthogonal representation in $\mathbb{F}^n$, given by the identity map, so that any theorem to the effect that the quantum $n$-colorability of a graph follows from the existence of an orthogonal representation in $\mathbb{F}^n$ implies that $\chi_q(S_\mathbb{F}^{n-1}) = n$. Conversely, given a graph $G$, the existence of an orthogonal representation of $G$ in $\mathbb{F}^n$ can be seen to be equivalent to the existence of a homomorphism from $G$ into $S_\mathbb{F}^{n-1}$. Since the quantum chromatic number is monotonic under homomorphisms, provided that $\chi_q(S_\mathbb{F}^{n-1}) = n$, the existence of such a homomorphism would imply that $G$ is quantumly $n$-colorable. Hence, the construction \ref{thm:quaternions} can be rephrased as showing that $\chi_q(S^{n-1}) = n$ for $n \in \{2,4,8\}$, and the purpose of this paper is to investigate if this also holds for other values of $n$ or for the complex sphere instead of the real sphere. Note that $S_\mathbb{F}^1$ is a bipartite graph, so that this automatically holds for $n=2$. Hence, we will only be interested in values of $n \geq 3$. 

\subsection{The case of complex spheres and finite witnesses}
Section \ref{sec:complex} investigates possible extensions of the construction \ref{thm:quaternions} to complex orthogonal representations. Our main result is negative. Namely, in subsection \ref{sub:infinitarycomplex}, it is shown:
\begin{theorem} \label{thm:noconstructioncomplex}
For all $n \geq 3$, it holds that $\chi_q(S^{n-1}_\mathbb{C}) > n$. 
\end{theorem}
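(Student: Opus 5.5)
Suppose for contradiction that $S^{n-1}_\mathbb{C}$ admits a quantum $n$-coloring. By the standard characterization of perfect strategies for the coloring game, this yields projection-valued measures $\{P_{v,c}\}_{c\in[n]}$ for each unit vector $v\in\mathbb{C}^n$, acting on some Hilbert space with a tracial state $\tau$ (in the synchronous, finite- or infinite-dimensional setting). They satisfy $\sum_c P_{v,c}=I$ and $P_{u,c}P_{v,c}=0$ whenever $\langle u|v\rangle=0$. The plan is to turn this into a ``quantum Kochen--Specker'' type contradiction by exploiting the large symmetry group $U(n)$ of the complex sphere, together with the fact that complex scalars act freely on the sphere.

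First, average over symmetries. The group $U(n)$ acts on $S^{n-1}_\mathbb{C}$ by graph automorphisms, and $S_n$ acts on colors. Taking direct sums/integrals of the strategy conjugated by these actions (using Haar measure on the compact group $U(n)$), we may assume the strategy is covariant: there is a unitary representation $W$ of $U(n)$ on the Hilbert space with $W(g)P_{v,c}W(g)^* = P_{gv,c}$. Normalizing by $\tau$, orthonormal bases $\{v_1,\dots,v_n\}$ give, for each color, a partition of unity $\sum_i P_{v_i,c}=I$: the $n$ projections $P_{v_i,c}$ are mutually orthogonal and there are only $n$ colors, so the $n\times n$ array $(P_{v_i,c})$ is a quantum permutation matrix.

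Next, bring in the phase circle. For $\theta\in\mathbb{R}$, the vectors $v$ and $e^{i\theta}v$ share the same orthogonal complement. Since every basis containing $v$ can have $v$ replaced by $e^{i\theta}v$, the quantum permutation relations force $P_{e^{i\theta}v,c}=I-\sum_{i\ge2}P_{v_i,c}=P_{v,c}$. So the coloring factors through projective space $\mathbb{CP}^{n-1}$, the set of rank-one projections. Now use the magic-unitary structure to define, for a fixed color $c$, the operator-valued function $v\mapsto P_{v,c}$ on $\mathbb{CP}^{n-1}$. It is a ``quantum frame function'': it sums to $I$ on every orthonormal basis. Applying the tracial state with an arbitrary positive element of the commutant (or of the algebra) gives scalar frame functions $f(v)=\tau(X P_{v,c})$ with $X\ge 0$. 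For $n\ge3$, Gleason's theorem (boundedness is automatic here) gives $f(v)=\langle v|\rho_X|v\rangle$ for some positive operator $\rho_X$. Hence $P_{v,c}$ depends continuously on $v$, in fact linearly on $|v\rangle\langle v|$ in the weak sense, as $v$ ranges over the connected space $\mathbb{CP}^{n-1}$.

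Finally, derive the contradiction. A continuous family of projections over a connected space has constant trace, so $\tau(P_{v,c})=1/n$, which is fine on its own. However, linearity in $|v\rangle\langle v|$ means that $P_{v,c}=\Phi_c(|v\rangle\langle v|)$ for a positive linear map $\Phi_c$ that sends every rank-one projection to a projection, and sends orthogonal rank-one projections to orthogonal projections. Then $\Phi_c$ maps each orthogonal basis decomposition of $I$ to a partition of $\Phi_c(I)=I$ using only that color, whereas the quantum permutation structure forces $\sum_i P_{v_i,c}=I$ as well. Combining these across colors gives $\sum_c\Phi_c(I)=nI$, but also $\sum_c P_{v,c}=I$, that is, $\sum_c\Phi_c(|v\rangle\langle v|)=I$ for every $v$. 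Taking the trace over a basis then gives $nI=\sum_c\Phi_c(I)=\sum_{i}\sum_c\Phi_c(|v_i\rangle\langle v_i|)=nI$. That identity is consistent, so the actual contradiction must instead come from linearity applied to a superposition: for nonorthogonal $u,v$ with $w=(u+v)/\|u+v\|$, the projection $\Phi_c(|w\rangle\langle w|)$ must be a projection, which forces $\Phi_c$ to be a Jordan homomorphism from $M_n$. Such a map takes rank-one projections either all to zero or to a full $M_n$ amplification, and neither is compatible with $\sum_i P_{v_i,c}=I$ across all $n$ colors using $n$ vectors. I expect this last step to be the main obstacle: pinning down the correct Gleason-type rigidity and making sure the Jordan-homomorphism argument genuinely rules out $n$ colors rather than just restating the relations. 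The case $n=2$ escapes precisely because Gleason's theorem fails there.
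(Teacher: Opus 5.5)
Everything up to the Jordan-homomorphism step is sound, and it takes a genuinely different route from the paper's. The quantum-permutation-matrix argument on an orthonormal basis, phase invariance, and Gleason's theorem applied to the frame functions $v\mapsto\omega(P_{v,c})$ give, for each color $c$, a unital linear map $\Phi_c$ on $M_n(\mathbb{C})$ with $\Phi_c(\ket{v}\bra{v})=P_{v,c}$. This map sends projections to projections, so it is a Jordan $*$-homomorphism. Add the standard fact that such a map is, up to a unitary $V_c$, of the form $A\mapsto (A\otimes I_{p_c})\oplus(A^T\otimes I_{q_c})$. You then re-derive exactly the classification of theorem \ref{thm:tianxu}, which the paper simply cites. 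This gives a self-contained substitute for the citation that is not tied to finite dimension. The Haar-averaging step is never used afterwards, and it would need measurability of $v\mapsto P_{v,c}$ that you do not have at that point, so drop it.

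The genuine gap is the last step, and the mechanism you propose there is false. A unital Jordan homomorphism never sends rank-one projections to zero. An amplification such as $\ket{v}\bra{v}\mapsto\ket{v}\bra{v}\otimes I_r$ satisfies $\sum_i\Phi_c(\ket{v_i}\bra{v_i})=I$ perfectly well, so no single color can give a contradiction.

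Two checks show what is missing. First, nothing in your final step uses complex scalars, and Gleason's theorem also holds over $\mathbb{R}$ for $n\ge 3$. The same reasoning run on $S^{3}$ would therefore ``prove'' $\chi_q(S^3)>4$, contradicting theorem \ref{thm:quaternions}. Second, for even $n$ there exist two adjacent homomorphisms $S^{n-1}_\mathbb{C}\to Gr(1,n)$, namely $\ket{\psi}\bra{\psi}$ and $U\ket{\overline{\psi}}\bra{\overline{\psi}}U^\dagger$ with $U$ an antisymmetric unitary. So any contradiction must involve at least three colors.

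The missing idea is to use the cross-color relations $P_{\psi,a}P_{\psi,b}=0$ for $a\neq b$. Set $M_a(\psi)=V_a((\ket{\psi}\otimes I_{p_a})\oplus(\ket{\overline{\psi}}\otimes I_{q_a}))$. The relations then read $M_a(\psi)^\dagger M_b(\psi)=0$ for all $\psi\in S^{n-1}_\mathbb{C}$. Complex polarization (lemma \ref{lem:polarization}, point 1) kills the $\psi$--$\psi$ and $\overline{\psi}$--$\overline{\psi}$ blocks of $V_a^\dagger V_b$. So a homomorphism-type summand can only be orthogonal to a transpose-type one, and unitarity forces $p_a=q_b$ and $q_a=p_b$ (lemma \ref{lem:complexstructure}).

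Now normalize $V_1=I$. Pairing colors $2$ and $3$ with color $1$ makes $V_2^\dagger V_3$ block-diagonal. Pairing them with each other makes it block-antidiagonal with respect to the same splitting, which no unitary can be. This is proposition \ref{prop:complexcase}, and without this three-color, complex-specific argument your proof does not close.
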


Note that this is a statement about the quantum chromatic number of an uncountably infinite graph, which is not the preferred object of study in quantum graph theory. In particular, one would like to prove the existence of a family of subgraphs $\{G_n\}_{n \geq 3}$ of the graphs $S^{n-1}_\mathbb{C}$ which witness theorem \ref{thm:noconstructioncomplex}, i.e. which are such that $\xi(G_n) \leq n$ but $\chi_q(G_n) > n$. For the classical chromatic number, this would follow 
from the de Bruijn-Erd\H{o}s theorem:
\begin{theorem}\cite{deBruijnErdos1951}
    If a graph $G$ is such that $\chi(G) < \infty$, then there exists $S \subseteq V(G)$ with $|S| < \infty$ and $\chi(G[S]) = \chi(G)$. 
\end{theorem}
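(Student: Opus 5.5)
We prove the de Bruijn--Erd\H{o}s theorem by compactness, via Tychonoff's theorem applied to the space of all colorings. Let $k = \chi(G) < \infty$. It suffices to show that if every finite induced subgraph $G[S]$ is $(k-1)$-colorable, then $G$ itself is $(k-1)$-colorable; this contradicts $\chi(G) = k$. Hence some finite $S$ has $\chi(G[S]) \geq k$. Since $\chi(G[S]) \leq \chi(G) = k$ trivially, we get $\chi(G[S]) = k$. If $k=0$ (empty graph), take $S = \emptyset$.

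\textbf{Setup.} Put $m = k-1$ and consider the product space $X = [m]^{V(G)}$, where $[m]$ carries the discrete topology. Since $[m]$ is finite, $X$ is compact by Tychonoff's theorem. For each edge $e=(u,v) \in E(G)$, let
\[
C_e = \{ f \in X : f(u) \neq f(v) \}.
\]
This set depends on only two coordinates, so it is clopen, and in particular closed. The proper $m$-colorings of $G$ are exactly the points of $\bigcap_{e \in E(G)} C_e$.

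\textbf{Finite intersection property.} Let $F \subseteq E(G)$ be finite, and let $S$ be the finite set of endpoints of edges in $F$. By hypothesis $G[S]$ has a proper $m$-coloring $c$. Extend $c$ arbitrarily to all of $V(G)$. The resulting $f \in X$ lies in $C_e$ for every $e \in F$, so $\bigcap_{e\in F} C_e \neq \emptyset$. By compactness of $X$, the full intersection $\bigcap_{e \in E(G)} C_e$ is nonempty, which yields a proper $m$-coloring of $G$. This is the desired contradiction.

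\textbf{Remarks.} The only nontrivial step is the passage from finite to infinite, which compactness handles. For uncountable graphs such as $S^{n-1}_\mathbb{C}$, Tychonoff's theorem requires (a weak form of) the axiom of choice; the Boolean prime ideal theorem suffices. Alternatively, one can use Zorn's lemma on partial colorings that are extendable to every finite subgraph, or the compactness theorem of propositional logic with variables $x_{v,i}$ expressing ``$v$ has color $i$.'' No earlier result of the paper is needed.
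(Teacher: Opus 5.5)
Your proof is correct. It is the standard Tychonoff argument on the single product space $[m]^{V(G)}$: each edge constraint is a clopen cylinder set, and the finite intersection property comes directly from finite colorability.

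The paper does not prove this statement separately. It cites it and recovers it as the $d=1$ case of point one of theorem \ref{thm:dbepseudotelepathy}, using Gottschalk's inverse-system argument (theorem \ref{thm:inversesystem}). That argument proceeds as follows:
\begin{enumerate}
\item Index by the directed family of finite input rectangles $X'\times Y'$.
\item To each rectangle, attach the nonempty compact Hausdorff space $Z_S$ of winning strategies for the restricted game, and take the product of these solution spaces.
\item Show that the compatibility sets $C_{T,S}$ are closed (lemma \ref{lem:hausdorff}) and that $C_T\subseteq C_S$ whenever $S\subsetneq T$, which gives the finite intersection property.
\item Extract a compatible thread, which defines a global strategy.
\end{enumerate}

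Both arguments therefore rest on Tychonoff's theorem plus the finite intersection property. Yours treats the unknowns (one color per vertex) as coordinates of one fixed product. It needs no restriction maps and no thread-consistency bookkeeping, so it is shorter and more transparent for the classical statement. The paper's presheaf formulation is chosen as a uniform template, reused for quantum strategies of fixed local dimension and for the compactness of the state space of a projective $*$-algebra in the commuting-operator case.

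Your approach would in fact extend just as easily to fixed local dimension $d$. Take the product of one compact POVM space per input together with one compact space of states. Each winning constraint $\Tr(\rho(E^x_a\otimes F^y_b))=0$ is then a closed condition on finitely many coordinates, so the same finite-intersection argument goes through.
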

Declare a monotonic graph parameter $f$ to have the de Bruijn-Erd\H{o}s property if the analogous statement holds for it. It is of interest to know which quantum graph parameters have this property. This is investigated in section \ref{sec:debruijn}, where the following is shown. Note that the $d$-dimensional quantum chromatic number and the commuting chromatic number of a graph $G$ correspond to the least number of colors $n$ such that the $n$-coloring game on $G$ has a perfect entangled strategy respectively involving a joint entangled state of local dimension $d$ and an infinite-dimensional entangled state. Rank-$r$ quantum colorings are a proxy for the $d$-dimensional quantum chromatic numbers which are introduced in subsection \ref{subsec:nonlocal}.
\begin{theorem} \label{thm:debruijnerdoswatereddown}
The $d$-dimensional quantum chromatic numbers and the commuting chromatic number all have the de Bruijn-Erd\H{o}s property. Also, for $r, n \geq 1$, if a given graph $G$ is such that all of its finite induced subgraphs have a rank-$r$ quantum $n$-coloring, then so does $G$. On the other hand, the usual quantum chromatic number $\chi_q$ does not.
\end{theorem}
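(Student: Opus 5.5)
For the positive statements I will use a compactness argument, modelled on the standard Tychonoff proof of the de Bruijn--Erd\H{o}s theorem.

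\textbf{Positive parts.} Fix $d$ and $n$. A perfect quantum strategy with local dimension $d$ for the $n$-coloring game can be taken to be a family of projective measurements $\{P_{v,i}\}_{i\in[n]}$ on $\mathbb{C}^d$, one per vertex, together with a state. By the standard synchronous-game reduction, we may take the state to be maximally entangled and Bob's measurements to be the transposes of Alice's. This is valid for perfect strategies in finite dimension; for the transfer to infinite graphs I only need it for finite subgraphs. The condition to impose is $\tr(P_{u,i}P_{v,i})=0$ for $(u,v)\in E(G)$ and all $i$.

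For each vertex, the set $X_v$ of $n$-outcome PVMs on $\mathbb{C}^d$ is compact. By Tychonoff, $X=\prod_v X_v$ is compact. For each edge $(u,v)$ and color $i$, the condition above defines a closed subset $C_{u,v}$ of $X$, since it depends continuously on only two coordinates. If every finite induced subgraph is colorable, then any finite collection of the $C_{u,v}$ has nonempty intersection: a coloring of the finite subgraph spanned by the edges involved, with arbitrary values elsewhere, lies in all of them. The finite intersection property then gives a point of $X$ in the intersection of all the $C_{u,v}$, which is a coloring of $G$.

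Since there are only finitely many possible values below $\chi_{q}^{(d)}(G)$, this yields the de Bruijn--Erd\H{o}s property for each $d$-dimensional parameter. For the rank-$r$ statement I run the same argument with the extra closed condition that each $P_{v,i}$ has rank $r$. The constraint "rank exactly $r$" is closed inside the space of projections because rank is constant on connected components.

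For the commuting chromatic number I use the known characterization of perfect commuting strategies for synchronous games by a tracial $C^*$-algebra. The graph is colorable if and only if the universal $*$-algebra $\mathcal{A}(G,n)$, generated by projections $p_{v,i}$ with $\sum_i p_{v,i}=1$ and $p_{u,i}p_{v,i}=0$ on edges, admits a tracial state. Given tracial states $\tau_F$ on $\mathcal{A}(G[F],n)$ for finite $F$, I extend them to states on the free algebra generated by all the $p_{v,i}$ subject only to the PVM relations. I then take a weak-$*$ cluster point along the net of finite subsets, which exists by Banach--Alaoglu. The limit is tracial and annihilates each relation $p_{u,i}p_{v,i}$, since each such relation is eventually respected along the net. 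Hence it induces a tracial state on $\mathcal{A}(G,n)$.

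\textbf{Negative part.} The failure for $\chi_q$ is where the real work lies. The natural route is to use the fact, developed in the paper, that for the real spheres of dimension $n$ outside $\{2,4,8\}$ no rank-one coloring exists, and more generally that quantum $n$-colorings need unbounded dimension. I would look for a graph $G$ whose finite subgraphs are all quantumly $n$-colorable but which requires local dimensions that are unbounded over its finite subgraphs. Then $\chi_q$ is realized on each finite piece but not on $G$ itself, because a strategy for $G$ would live in a single dimension $d$ and would therefore restrict to dimension-$d$ strategies on every finite piece. A disjoint union of the finite graphs $G_k$, where $G_k$ is $n$-colorable only in dimension at least $k$, achieves this. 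The main obstacle is exhibiting such a family with $n$ fixed. I would try to take the $G_k$ as finite subgraphs of a suitable real sphere whose quantum $n$-colorings are forced to have rank tending to infinity. Alternatively, I would invoke a known result, such as the separation of $\chi_q$ from the commuting chromatic number, to produce a graph needing infinite dimension in the limit.
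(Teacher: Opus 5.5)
\textbf{Positive parts.} Your compactness arguments for the positive statements are sound, and leaner than the paper's. You specialize to coloring games and use the synchronous reduction (Theorem~\ref{thm:structure2cameron}), so the compact space is just a product of PVM spaces indexed by vertices, with one closed condition per edge and color. The paper instead treats arbitrary pseudo-telepathy games via an inverse system over finite rectangles carrying POVMs and a density matrix. For the commuting case you use tracial states on the synchronous algebra, where the paper uses states on the bipartite commuting algebra. One step there must be made explicit: the extensions of the $\tau_F$ have to be tracial. For instance, pull $\tau_F$ back along the $*$-homomorphism that fixes $p_{v,i}$ for $v\in F$ and sends $p_{v,1}\mapsto 1$, $p_{v,i}\mapsto 0$ for $i>1$, $v\notin F$. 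An arbitrary state extension would not make the cluster point tracial. Also, Banach--Alaoglu should be applied in the universal $C^*$-algebra, or replaced by the paper's direct proof that $S(\mathcal{A})$ is compact for projective $\mathcal{A}$.

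\textbf{The gap: the negative part.} You correctly identify the shape of the counterexample, and it is exactly the paper's: a disjoint union of finite graphs $G_k$ with $\chi_q(G_k)=N$ for a fixed $N$ but $\chi^{[k]}_q(G_k)>N$. However, you never show that such a family exists, and neither of your suggested routes supplies one.

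\emph{Spheres do not work.} Take finite subgraphs of one fixed graph $H$. If $N=\chi_q(H)$, a single $d^*$ with $\chi^{[d^*]}_q(H)=N$ colors all of them at once, so there is no unbounded dimension. If $N<\chi_q(H)$, then $H$ would already be a counterexample, which nothing in the paper shows for spheres. The exponential rank lower bounds of Theorem~\ref{thm:realspheres} grow only with $n$, that is, together with the number of colors.

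\emph{A $\chi_{qc}$ versus $\chi_q$ separation does not work either.} A finite $H$ with $\chi_{qc}(H)<\chi_q(H)$ has no finite-dimensional coloring with $\chi_{qc}(H)$ colors at all. What you need is colorings that exist but only in ever larger dimension.

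\emph{The missing ingredient} is the computability argument behind point 3 of Theorem~\ref{thm:quantchrom}. Deciding $\chi_q(G)=3$ for finite $G$ is RE-complete, hence undecidable. For each fixed $d$, deciding $\chi^{[d]}_q(G)=3$ is decidable via Tarski. Suppose a single $d$ worked for every graph with $\chi_q(G)=3$. Then $\chi_q(G)=3$ would hold if and only if $\chi^{[d]}_q(G)=3$, making the former decidable. Hence for every $d$ there is $G_d$ with $\chi_q(G_d)=3<\chi^{[d]}_q(G_d)$.

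With this family your disjoint-union argument goes through. You also need one more observation: a finite subgraph meeting finitely many $G_d$ is still quantumly $3$-colorable, by tensoring each component's coloring with identities to reach a common dimension.
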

Hence, theorems \ref{thm:noconstructioncomplex} and \ref{thm:debruijnerdoswatereddown} are insufficient to imply the existence of a family of graphs with the aforementioned property, though the implication does hold if a uniform bound is imposed on the local dimension of the shared entangled state in a quantum strategy or if one accepts the conjecture that the quantum and commuting chromatic numbers coincide for all spheres. In subsection \ref{sec:subfinitary}, we describe a family which we conjecture to have this property, and for which we can prove the nonexistence of a restricted form of quantum $n$-coloring. 

\begin{figure}[!t]
    \centering
    \includegraphics[width=0.6\textwidth]{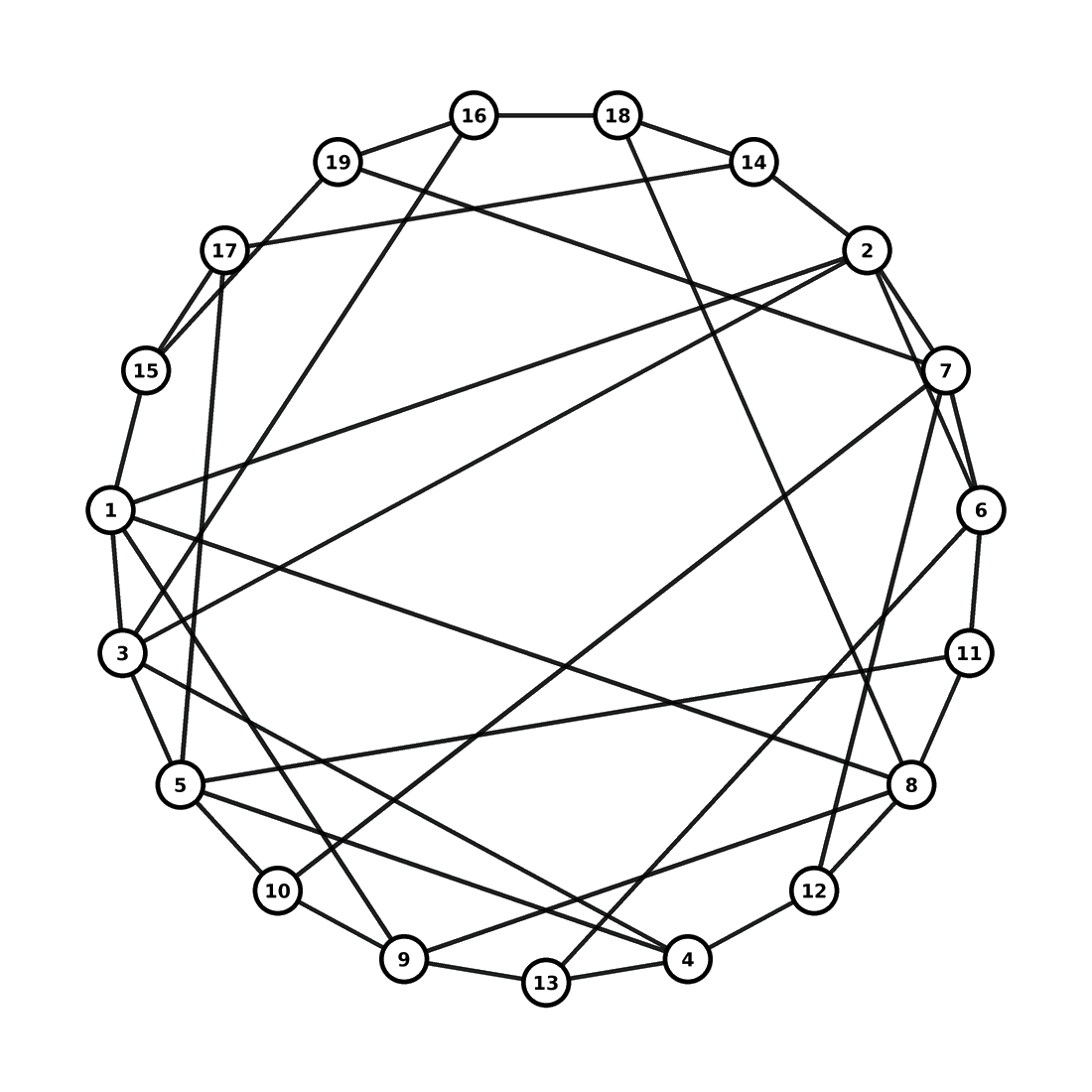}
    \caption{The graph $G_{19}$. The graph6 representation of this graph is \texttt{RxLAKA@AgYAWDGO?O?@??A?W@@OC@\_} .}
    \label{fig:g19_1}
\end{figure}
Let $G_{19}$ be the graph given in figure \ref{fig:g19_1}, and, for $n \geq 3$, set $G_n = G_{19} \vee K_{n-3}$, where $\vee$ denotes the join of two graphs and where $K_{n-3}$ denotes the complete graph on $n-3$ vertices. The graph $G_{19}$ is constructed so that $\xi(G_{19}) = 3$, so that $G_n$ can indeed be realized as a subgraph of $S^{n-1}_\mathbb{C}$ for every $n$. On the other hand, the induced subgraph on the first 13 vertices of $G_{19}$ is precisely the graph $G_{13}$ that was considered in \cite{mancinska2018odditiesquantumcolorings}, which showed that $\chi_q(G_{13}) = 4$, so that $\chi(G_{19}) \geq \chi_q(G_{19}) \geq 4$, and it can be verified that these inequalities are all saturated. One of the main results of \cite{mancinska2018odditiesquantumcolorings} was that $G_{13}$ has the curious property that $\chi^{(1)}_q(G_{13}) = \chi^{(1)}_q(G_{13} \vee K_1) = 4$. We can prove that this does not happen for $G_{19}$: 
\begin{theorem} \label{thm:boundGn}
For all $n$, we have that $\chi^{(1)}_q(G_n) = n+1$.  
\end{theorem}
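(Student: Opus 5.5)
We need to show two things: the upper bound $\chi^{(1)}_q(G_n) \leq n+1$ and the lower bound $\chi^{(1)}_q(G_n) > n$. Here $\chi^{(1)}_q$ denotes the least $c$ such that $G$ admits a rank-one quantum $c$-coloring, i.e. an assignment of orthonormal bases $\{\psi_{v,i}\}_{i\in[c]}$ of $\mathbb{C}^c$ with $\psi_{u,i}\perp\psi_{v,i}$ for all $i$ whenever $u\sim v$.

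\emph{Upper bound.} We have $\chi(G_{19})=4$, and a classical coloring is a rank-one quantum coloring using standard basis vectors. Hence $\chi^{(1)}_q(G_n)\le \chi(G_n)=\chi(G_{19})+n-3=n+1$.

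\emph{Lower bound for $n=3$.} We would show that $G_{19}$ has no rank-one quantum $3$-coloring. Since $G_{13}$ is an induced subgraph of $G_{19}$ and $\chi_q(G_{13})=4$ by \cite{mancinska2018odditiesquantumcolorings}, no quantum $3$-coloring of $G_{19}$ exists at all, so $\chi^{(1)}_q(G_3)\ge 4$.

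\emph{Lower bound for $n\ge 4$: reducing the join.} We need $\chi^{(1)}_q(G_{19}\vee K_{n-3})>n$. Suppose there is a rank-one quantum $n$-coloring of $G_n$. The $K_{n-3}$ part gives $n-3$ mutually adjacent vertices $w_1,\dots,w_{n-3}$, each of which is also adjacent to every vertex of $G_{19}$. For each color $i$, the vectors $\psi_{w_j,i}$ are pairwise orthogonal, and every $\psi_{v,i}$ with $v\in G_{19}$ is orthogonal to all of them.

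The standard trick, used by \cite{mancinska2018odditiesquantumcolorings} for $G_{13}\vee K_1$, is to let $W_i=\mathrm{span}\{\psi_{w_j,i}\}_j$, which has dimension $n-3$. Each vertex $v\in G_{19}$ then has $n$ vectors $\psi_{v,i}\in W_i^\perp$, where $W_i^\perp$ has dimension $3$. These subspaces $W_i^\perp$ vary with $i$, so the data does not directly give a rank-one $3$-coloring. The key point is that $G_{13}\vee K_1$ \emph{does} have a rank-one $4$-coloring, so the reduction must use the specific structure of $G_{19}$; the six extra vertices were presumably added exactly to kill this.

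\emph{Main obstacle: the rigidity argument for $G_{19}$.} The plan is as follows.

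1. Use the description of rank-one quantum colorings as orthogonal representations of $G\,\square\, K_n$ (the Cartesian product) by unit vectors in $\mathbb{C}^n$. Equivalently, use a graph homomorphism $G_{19}\to$ the graph whose vertices are orthonormal bases of $\mathbb{C}^n$, with adjacency meaning color-wise orthogonality.

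2. Restricted to the $G_{19}$ vertices, the problem becomes: assign to each $v$ vectors $\psi_{v,i}\in U_i:=W_i^\perp\cong\mathbb{C}^3$ such that the $n$ vectors $\{\psi_{v,i}\}_i$ are orthonormal in $\mathbb{C}^n$, and such that $\psi_{u,i}\perp\psi_{v,i}$ for $u\sim v$. For each fixed $i$, the map $v\mapsto\psi_{v,i}$ is then an orthogonal representation of $G_{19}$ in $\mathbb{C}^3$.

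3. Show that $G_{19}$ is \emph{rigid}: every orthogonal representation in $\mathbb{C}^3$ is unique up to unitaries and phases. This should be a finite computation, propagating from a fixed triangle via cross products, and checking that the extra six vertices pin down the free parameters that $G_{13}$ leaves open.

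4. By rigidity, $\psi_{v,i}=e^{i\theta_{v,i}}U_i\phi_v$ for fixed vectors $\phi_v\in\mathbb{C}^3$ and isometries $U_i$. The condition that $\{\psi_{v,i}\}_i$ is orthonormal for each $v$ reads $\sum_i |\langle U_i\phi_v,e\rangle|^2$-type constraints. Combined with the Gram matrices $U_i^*U_k$, this requires $\langle \phi_v, M_{ik}\phi_v\rangle=0$ for all $v$ and all $i\ne k$, where $M_{ik}=U_i^*U_k$ up to phases.

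5. Derive a contradiction via a linear-algebra argument: the vectors $\phi_v\phi_v^*$, as $v$ ranges over $G_{19}$, span enough of the $3\times3$ matrices to force the relevant $M_{ik}=0$, apart from phase freedom. This would make $U_1,\dots,U_n$ have mutually orthogonal ranges, which is impossible since $3n>n$.

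The phase handling and the spanning check in step 5 are where we expect the real difficulty to lie.
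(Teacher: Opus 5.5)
Your upper bound, the $n=3$ case, and the reduction in steps 1--2 are fine and match the paper. In that reduction, the $n$-clique formed by vertices $1,2,3$ and the $K_{n-3}$ gives, for each color, an orthogonal representation of $G_{19}$ in a $3$-dimensional space.

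The argument breaks at step 3: $G_{19}$ is \emph{not} rigid up to unitaries and phases. By Theorem~\ref{thm:classificationreps}, its representations in $\mathbb{C}^3$ fall into exactly two classes, complex conjugates of each other, distinguished by the sign $b=\pm1$ (e.g.\ $\psi_{14}=(1,0,bi)$ when $\alpha=\beta=1$). No unitary relates the two classes even allowing phases. Such a unitary would have to fix $e_1,e_2,e_3,(1,1,0),(1,0,1)$ up to phases and hence be scalar, while $(1,0,i)$ and $(1,0,-i)$ are not proportional.

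This is not an accident of $G_{19}$. Since $\overline{\phi}$ is always an orthogonal representation, rigidity would give $P_v^T=UP_vU^\dagger$ for $P_v=\ket{\phi_v}\bra{\phi_v}$. Step 5 needs the $P_v$ to span $M_3(\mathbb{C})$ (so that $\Tr(MP_v)=0$ for all $v$ forces $M=0$). Together these would force $X^T=UXU^\dagger$ for all $X$, which is impossible because transposition reverses products. So steps 3 and 5 can never hold simultaneously, for any graph.

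Concretely, for two colors in opposite classes the constraint reads $w_v^T M w_v=0$ with $w_v=\overline{\phi_v}$. This only kills the symmetric part of $M$, so nonzero antisymmetric cross blocks survive. It is the same mechanism as the two adjacent homomorphisms exhibited after Proposition~\ref{prop:complexcase}. Hence you cannot conclude that all $n$ ranges are mutually orthogonal.

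The missing idea is a pigeonhole over the two classes. At least $m=\lceil n/2\rceil$ colors share the same $b$, and for such pairs the spanning argument does work. This is Lemma~\ref{lem:Aiszero}: vertices $1,2,3$ kill the diagonal, and the pairs $\{4,16\}$, $\{6,14\}$, $\{8,15\}$ kill the off-diagonal entries. For opposite signs, by contrast, the two equations in each pair become proportional. The pigeonhole gives $m$ isometries $\mathbb{C}^3\to\mathbb{C}^n$ with pairwise orthogonal ranges, hence $3\lceil n/2\rceil\le n$, still a contradiction; your count $3n>n$ has enough slack to absorb the halving.

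Two smaller remarks. The phases you flag as a difficulty are harmless, since every constraint is homogeneous. The classification itself, however, needs a non-degeneracy input you do not mention: the base-vertex coordinates must be nonzero, and the two neighbours used to propagate each vector by cross products must be non-collinear. The paper gets this from the computer-checked Lemma~\ref{lem:notcollinear}.
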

We conjecture that the same holds for the full quantum chromatic number, though we are unable to prove it. Also, as a byproduct of our analysis, we can prove the following apparently new result (which also follows from corollary \ref{cor:isacore}): 
\begin{theorem} \label{thm:complexrealdistinct}
The graph $G_{19}$ satisfies $\xi(G_{19}) = 3$ and $\xi_\mathbb{R}(G_{19}) = 4$. In particular, the parameters $\xi$ and $\xi_\mathbb{R}$ are distinct.
\end{theorem}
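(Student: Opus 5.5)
We split the statement into four inequalities: $\xi(G_{19}) \geq 3$, $\xi(G_{19}) \leq 3$, $\xi_\mathbb{R}(G_{19}) \leq 4$ and $\xi_\mathbb{R}(G_{19}) \geq 4$. The first three are routine. Since $G_{19}$ contains a triangle (indeed $G_{13}$ does), $\xi(G_{19}) \geq 3$. For $\xi(G_{19}) \leq 3$ we exhibit an explicit orthogonal representation of the 19 vertices in $\mathbb{C}^3$. Presumably $G_{19}$ was built by extending a real representation of $G_{13}$ in $\mathbb{R}^3$, as in \cite{mancinska2018odditiesquantumcolorings}, with six further vertices whose vectors genuinely need complex phases (for example, entries in $\{0,\pm 1, \pm i, \omega\}$). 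Orthogonality along each of the listed edges is then a finite check. For $\xi_\mathbb{R}(G_{19}) \leq 4$ we need no construction: $\chi(G_{19}) = 4$ (stated above as verifiable), and a proper $4$-coloring gives an orthogonal representation by standard basis vectors of $\mathbb{R}^4$.

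The substance is $\xi_\mathbb{R}(G_{19}) \geq 4$, i.e. that $G_{19}$ has no orthogonal representation in $\mathbb{R}^3$. The plan is a rigidity argument in three steps.

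\textbf{Step 1: fix a triangle.} Choose a triangle $\{a,b,c\}$ of $G_{19}$. In any real representation in $\mathbb{R}^3$ we may rotate so that $a,b,c$ go to $e_1,e_2,e_3$. In $\mathbb{R}^3$, a vertex adjacent to two vertices with non-parallel vectors is forced, up to sign, to the normalized cross product. This gives a propagation rule that fixes vectors deterministically.

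\textbf{Step 2: propagate.} We follow the structure of $G_{13}$. In \cite{mancinska2018odditiesquantumcolorings} the orthogonal representations of $G_{13}$ in dimension $3$ are essentially unique up to orthogonal transformation and a small number of free parameters. Iterating the cross-product rule from the triangle should pin down all 13 vectors up to signs, and possibly a continuous parameter such as an angle $\theta$ on a vertex adjacent to only one fixed vertex.

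\textbf{Step 3: derive the contradiction.} The six additional vertices of $G_{19}$ then impose orthogonality constraints on these determined vectors. These constraints should force either a vector to be zero, or two adjacent vertices to receive parallel vectors. Equivalently, they reduce to a polynomial equation in $\theta$, e.g. $\cos^2\theta = -\tfrac12$ or $\tan^2\theta = -1$, which has complex solutions but no real ones. This is exactly the mechanism that would make the complex representation from the first part exist while the real one fails, so that phases like $i$ appear in the complex vectors.

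The main obstacle is carrying out Step 2 rigorously. One must show that every real representation, not just the intended one, falls into the parametrized family. Some cross products may degenerate, namely when two neighbours of a vertex receive parallel vectors. Each such degenerate branch has to be treated as a separate case and eliminated, by showing it forces parallel vectors on an edge or an additional collapse.

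I would first try to choose the starting triangle so that propagation never branches. If that fails, I would fall back on a Gröbner-basis or real-algebraic computation over the finite set of cases.

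An alternative route to $\xi_\mathbb{R}(G_{19}) \geq 4$ is to derive it from the quantum-coloring results via Theorem~\ref{thm:quaternions}-type arguments. That route is unavailable for $n = 3$, though, so the direct rigidity argument is the one to pursue.
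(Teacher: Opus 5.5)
Your outline is correct and is essentially the paper's proof. The upper bounds hold because $G_{19}$ is defined as the orthogonality graph of explicit vectors in $\mathbb{C}^3$ and because $\xi_\mathbb{R}(G_{19}) \le \chi(G_{19}) = 4$. For the lower bound $\xi_\mathbb{R}(G_{19}) \ge 4$, the paper likewise fixes the triangle $\{1,2,3\}$ to the standard basis and parametrizes the base vertices $4, 6, 8, 14$ (not a single $\theta$). It then propagates by cross products and imposes the remaining edges, which forces $\beta^2 = -\delta^2$ with $\beta, \delta \neq 0$, impossible over $\mathbb{R}$. The degenerate branches you flag as the main obstacle are handled all at once by the computer-verified Lemma~\ref{lem:notcollinear}: no two distinct non-adjacent vertices can be collinear in any $3$-dimensional representation. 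That lemma replaces your branch-by-branch case analysis or Gr\"obner computation.
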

It is an interesting open problem to determine the smallest graph exhibiting a separation between the real and complex orthogonal ranks, as was done for the quantum and classical chromatic numbers by \cite{lalonde2023quantumchromaticnumberssmall}.

\subsection{The case of real spheres}
The purpose of section \ref{sec:realspheres} is to determine for which dimensions construction \ref{thm:quaternions} can be generalized, keeping the assumption that the orthogonal representation be real, which, as was seen in the last subsection, is needed. The conclusion is partly positive and can be summarized by the following theorem. Here $d_{m} = 2^{\lfloor m/2 \rfloor}$ is the dimension of an irreducible representation of the Clifford algebra on $m$ generators, as described in subsection \ref{subsec:reptheory}.
\begin{theorem} \label{thm:realspheres}
For $n \geq 3$, we have: 
\begin{enumerate}
\item If $n$ is not a multiple of $4$, then $\chi_q(S^{n-1}) > n$.
\item If there exists a Hadamard matrix of order $n$, then $\chi^{(d_{n-1})}_q(S^{n-1}) = n$.
\item If $n \geq 2$ is a power of two, then, setting $r = \max(1, \frac{d_{n-1}}{n})$, we have $\chi^{(r)}_q(S^{n-1}) = n$.
\item Given $r \geq 1$, if $rn$ is not divisible by $d_{n-1}$, then $\chi^{(r)}_q(S^{n-1}) > n$. 
\end{enumerate}
\end{theorem}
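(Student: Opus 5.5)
Throughout, the plan works with projective (rank-$r$) quantum colorings. A quantum $n$-coloring of $S^{n-1}$ assigns to each unit vector $x$ a projective measurement $\{P_x^i\}_{i\in[n]}$ on $\mathbb{C}^d$ such that $P_x^iP_y^i=0$ whenever $x\perp y$. The standard reduction lets us take $P_{-x}^i=P_x^i$ and replace the pair $(x,i)$ by the projector-valued map $x\mapsto P_x^i$. Fix an orthonormal basis $e_1,\dots,e_n$. Since the $n$ projectors $P^i_{e_1},\dots,P^i_{e_n}$ are pairwise orthogonal, and each $\{P^i_{e_j}\}_i$ sums to the identity, we get $\sum_j P^i_{e_j}=I$ for each color $i$.

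The first key step is a continuity and rigidity argument. Every orthonormal basis of $\mathbb{R}^n$ induces such a decomposition, and the assignment respects the action of $SO(n)$ on bases. I would show that, after averaging or extracting a measurable or continuous version, the coloring is equivalent to a representation-theoretic object: operators $A_x=\sum_i \omega^i P_x^i$, or better the reflections $R_x^i=2P_x^i-I$. From the anticommutation-like relations forced by orthogonality, one should obtain, for orthogonal $x,y$ and a fixed color, operators $\Gamma_{x,y}$ satisfying Clifford relations $\Gamma_j\Gamma_k+\Gamma_k\Gamma_j=-2\delta_{jk}$ on $n-1$ generators. This should follow the paper's announced equivalence with a code space for Clifford-algebraic errors given a clean ancilla. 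Concretely, the $n$ rotations $P_{e_1}\to P_{e_j}$ induce operators $U_j$ acting on the rank-$r$ space of a single color. In this picture, a coloring of rank $r$ corresponds to an $rn$-dimensional representation of $\mathrm{Cl}_{n-1}$ (real or complex as appropriate) in which a code subspace of dimension $r$ is mapped by $U_1,\dots,U_n$ to mutually orthogonal subspaces.

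Parts 1 and 4 are the obstructions. For part 4, the total space carries a representation of $\mathrm{Cl}_{n-1}$, so its dimension $rn$ must be a multiple of the irreducible dimension $d_{n-1}$. For part 1, I would use the real structure: the relevant representation must be compatible with the real orthogonal group. The Clifford generators realize reflections of $\mathbb{R}^n$, which forces a unitary of the form $\prod\Gamma_j$ to square to $\pm I$ and commute or anticommute appropriately. A trace or character identity — for instance $\operatorname{Tr}(\Gamma_1\cdots\Gamma_{n-1})$ together with $\sum_j P_{e_j}=I$ — then forces $n\equiv 0\pmod 4$ when $n>2$. This parity and character computation, uniform over infinite-dimensional commuting versions, is where I expect the main difficulty. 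The first thing I would try is to reduce to the finite-dimensional case, using the de Bruijn–Erdős-type compactness of Theorem~\ref{thm:debruijnerdoswatereddown} together with a tracial state.

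Parts 2 and 3 are constructions. Given a Hadamard matrix $H$ of order $n$, take an irreducible representation $\Gamma_1,\dots,\Gamma_{n-1}$ of $\mathrm{Cl}_{n-1}$ of dimension $d_{n-1}$, and set $\gamma(x)=x_1 I+\sum_{j\ge2}x_j\Gamma_{j-1}$, which is orthogonal up to scaling. Define $P_x^i=\gamma(x)^*Q_i\gamma(x)$, where the $Q_i$ are commuting projectors built from $H$: these are simultaneous eigenprojectors of a commuting family $\{\pm\Gamma\text{-products}\}$ indexed by the rows of $H$. We then check that for $x\perp y$ the operator $\gamma(x)^*\gamma(y)$ is a traceless element of the span of the generators, which maps each $Q_i$-range to a $Q_{i'}$-range with $i'\neq i$. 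Orthogonality of the rows of $H$ is exactly what guarantees $i'\ne i$. For part 3, with $n$ a power of two, use the Pauli/Walsh–Hadamard instance, tensoring up or down to rank $r=\max(1,d_{n-1}/n)$. For $n\in\{2,4,8\}$ this recovers the division-algebra construction of Theorem~\ref{thm:quaternions}.
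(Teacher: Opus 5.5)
There are genuine gaps in your reduction sketch, in part 1 and in part 2. Part 4 is fine once the code-space equivalence (Theorem \ref{thm:theproof}) is granted. What you describe under part 2 is really the paper's part-3 construction.

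\textbf{Reduction.} Nothing forces a coloring of $S^{n-1}$ to be continuous or $SO(n)$-equivariant. Convex averaging destroys both projectivity and the relations $P^i_xP^i_y=0$, so continuity or averaging cannot be where the Clifford data come from.

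The actual input is the rigidity result of Theorem \ref{thm:tianxu}: every homomorphism $S^{n-1}\to Gr(r,rn)$ has the form $\psi\mapsto V(\ket{\psi}\bra{\psi}\otimes I_r)V^\dagger$. Polarization then gives $(V_a^\dagger V_b)^{\Gamma_1}=-V_a^\dagger V_b$. The index swap $V_a\ket{b}\ket{j}=W_b\ket{a}\ket{j}$, with $U_a=iW_1^\dagger W_{a+1}$, produces the generators and the code space $\ket{1}\bra{1}\otimes I_r$.

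Your worry about commuting-operator models is also misplaced. $\chi_q$ is finite-dimensional, so $\chi_q(S^{n-1})=n$ already gives $\chi_q^{(r)}(S^{n-1})=n$ for some $r$ by Theorem \ref{thm:structurecameron}. Passing to finite subgraphs via de Bruijn--Erd\H{o}s would discard exactly the rigidity the argument needs.

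\textbf{Part 1.} You give no argument here, and the identity you suggest yields nothing. For even $m=n-1$, the volume element $\omega=U_1\cdots U_m$ satisfies $U_a\omega U_a=-\omega$ and $\Tr\omega=0$. Summing over the $n$ translates of the code space then gives only $\Tr(\omega P)=0$.

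The missing idea is spectral:
\begin{itemize}
\item By Proposition \ref{prop:codespace}, $X=P-I/n$ satisfies $\Phi_{\mathcal{U}}(X)=-X$.
\item By Lemma \ref{lem:eigvalsclifford}, the twirl $\Phi_\Gamma$ of an irreducible representation acts on $\Lambda_d$ with eigenvalue $(m-2d)(-1)^d$.
\item Write $U_a=(I_p\otimes\Gamma_a)\oplus(I_q\otimes(-\Gamma_a))$. Each block of $X$ must be an eigenvector of $\Phi_\Gamma$ with eigenvalue $-1$ (blocks inside one isotypic component) or $+1$ (blocks between the two).
\item For even $m$, every eigenvalue is even, so $X=0$.
\item For $m\equiv 1\pmod 4$, the value $-1$ never occurs, so $X$ is block off-diagonal. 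The off-diagonal blocks of $P^2=P$ then read $2B/n=B$, so again $X=0$.
\end{itemize}
In both cases $P=I/n$, which is not a projector for $n\ge 3$.

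\textbf{Part 2.} Your construction lives in a single irreducible representation of dimension $d_{n-1}$. Since $\{P^i_x\}_i$ is a projective measurement and $\sum_j P^i_{e_j}=I$, every $Q_i$ must have rank $d_{n-1}/n$. This is impossible unless $n$ is a power of two, for example $n=12$, $d_{11}=32$. That is exactly the range where part 2 goes beyond part 3.

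Two further problems:
\begin{itemize}
\item Monomials attached to the rows of a general Hadamard matrix do not commute, so there is no commuting family to take joint eigenprojectors of.
\item For $x\perp y$, the operator $\gamma(x)^*\gamma(y)$ is not in the span of the generators. It has the degree-two part $-\sum_{2\le j<k}(x_jy_k-x_ky_j)\Gamma_{j-1}\Gamma_{k-1}$. The real requirement is therefore $Q_i\Gamma_aQ_i=Q_i\Gamma_a\Gamma_bQ_i=0$, i.e.\ each $Q_i$ must be a maximal code space.
\end{itemize}
What you describe is the paper's power-of-two construction, with commuting $\Gamma_{L_w}$ and $|L_w|=n/2$.

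For part 2 the paper instead uses $n$ copies, $U_a=\Gamma_a\otimes I_n$. It normalizes $H$ to have an all-ones row, obtaining $S_1,\dots,S_n\subseteq[n-1]$ with $|S_a\triangle S_b|=n/2$. It then sets $P=VV^\dagger$ with $V=n^{-1/2}\sum_a\Gamma_{S_a}\otimes\ket{a}$. The matrix $P-I/n$ has vanishing diagonal blocks and off-diagonal blocks in $\Lambda_{n/2}$. On $\Lambda_{n/2}$ the eigenvalue is $(m-n)(-1)^{n/2}=-1$ because $4\mid n$. This gives rank $d_{n-1}$, as the statement requires.

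\textbf{Part 3.} You can tensor a coloring up in rank but not down. For $n=4$, where $d_3/4<1$, take two copies of the Pauli representation with a maximally entangled code vector; this recovers the rank-one coloring.
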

In particular, it follows from the results of \cite{KharaghaniTayfehRezaie2005Hadamard428} that for all $n \leq 664$, it holds that $\chi_q(S^{n-1}) = n$ if and only if $n=2$ or $n$ is a multiple of 4, thus greatly generalizing the construction \ref{thm:quaternions}. The same holds without the restriction on $n$ under the Hadamard conjecture. Interestingly, the proof of theorem \ref{thm:realspheres} is representation-theoretic and is based on the following equivalences:
\begin{theorem} \label{thm:theproof}
The following are equivalent, for $n \geq 3$, $r \geq 1$:
\begin{enumerate}
    \item It holds that $\chi^{(r)}_q(S^{n-1}) = n$.
    \item There exist unitaries $V_1, \dots, V_n$ on $\mathbb{C}^n \otimes \mathbb{C}^r$ such that, for all $a \neq b$, 
\[(V_a^\dagger V_b)^{\Gamma_1} = -V_a^\dagger V_b\]
    where $\Gamma_1$ denotes the partial transpose on the first register. 
    \item There exists a unitary representation $\mathcal{U}$ of the real Clifford algebra on $n-1$ generators in $U(nr)$ along with a dimension-$r$ Knill-Laflamme code space for $\mathcal{U}$.
\end{enumerate}
\end{theorem}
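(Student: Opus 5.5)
The plan is to prove (1)$\Leftrightarrow$(2) by showing that every rank-$r$ quantum $n$-coloring of $S^{n-1}$ has a rigid normal form, and then (2)$\Leftrightarrow$(3) by a change of variables turning the partial-transpose condition into Clifford anticommutation relations. Throughout I take a rank-$r$ quantum $n$-coloring to be an assignment $x\mapsto\{P^a_x\}_{a\in[n]}$ of rank-$r$ projections on $\mathbb{C}^{nr}$ with $\sum_a P^a_x=I$ and $P^a_xP^a_y=0$ whenever $x\perp y$. For every orthonormal basis $x_1,\dots,x_n$ and fixed colour $a$, the $P^a_{x_i}$ are $n$ mutually orthogonal rank-$r$ projections in dimension $nr$. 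Hence
\[\sum_i P^a_{x_i}=I.\]

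\textbf{(2)$\Rightarrow$(1).} Given $V_1,\dots,V_n$, I would set $P^a_x=V_a(|x\rangle\langle x|\otimes I_r)V_a^\dagger$ for real unit $x$. These are rank-$r$ projections, and the same-colour orthogonality for $x\perp y$ is automatic since $V_a$ is unitary. It remains to check $\sum_aP^a_x=I$. This holds if and only if the subspaces $V_a(x\otimes\mathbb{C}^r)$, $a\in[n]$, are mutually orthogonal, since there are $n$ of them of dimension $r$ in $\mathbb{C}^{nr}$. That is, $(\langle x|\otimes I)V_a^\dagger V_b(|x\rangle\otimes I)=0$ for all real $x$ and all $a\neq b$. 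Writing $M=V_a^\dagger V_b=\sum_{ij}|i\rangle\langle j|\otimes M_{ij}$, this says that the quadratic form $\sum_{ij}x_ix_jM_{ij}$ vanishes on $\mathbb{R}^n$. Equivalently $M_{ji}=-M_{ij}$, which is exactly $M^{\Gamma_1}=-M$.

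\textbf{(1)$\Rightarrow$(2).} This is the step I expect to be the main obstacle, since a priori the coloring need not depend nicely on $x$. Fix $a$ and a vector $\psi$. By the observation above, $f_\psi(x)=\langle\psi|P^a_x|\psi\rangle$ is a frame function of weight $\|\psi\|^2$ on the real sphere $S^{n-1}$ with $n\geq3$. Gleason's theorem (real version, which needs no continuity hypothesis) then gives $f_\psi(x)=\langle x|\rho_\psi|x\rangle$ for a unique real positive semidefinite $\rho_\psi$. Uniqueness makes $\psi\psi^\dagger\mapsto\rho_\psi$ extend to a linear positive map, so $P^a_x=\Phi_a(|x\rangle\langle x|)$ for a linear map $\Phi_a$ from real symmetric $n\times n$ matrices to operators on $\mathbb{C}^{nr}$.

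I then want to show $\Phi_a$ is, up to a unitary, $X\mapsto X\otimes I_r$. The plan is to take the standard basis: the ranges $R_i=\mathrm{ran}\,P^a_{e_i}$ decompose $\mathbb{C}^{nr}=\bigoplus_iR_i$. The idempotence $P^a_x{}^2=P^a_x$ for $x=\cos\theta\, e_i+\sin\theta\, e_j$, expanded in $\theta$, should force $\Phi_a(|e_i\rangle\langle e_j|+|e_j\rangle\langle e_i|)$ to be $T_{ij}+T_{ij}^\dagger$ with $T_{ij}:R_j\to R_i$ a partial isometry. Consistency across triples $i,j,k$ should then give $T_{ij}T_{jk}=T_{ik}$. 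Choosing bases of each $R_i$ compatibly via the $T_{1i}$ defines a unitary $V_a$ with $V_a(e_i\otimes w)\in R_i$ and $P^a_x=V_a(|x\rangle\langle x|\otimes I)V_a^\dagger$. With this normal form in hand, the computation in the previous paragraph run backwards yields (2).

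\textbf{(2)$\Leftrightarrow$(3).} Replacing $V_a$ by $V_n^\dagger V_a$ preserves (2), so I may assume $V_n=I$. Then $W_a:=V_a$ for $a<n$ satisfies $W_a^{\Gamma_1}=-W_a$ and $(W_a^\dagger W_b)^{\Gamma_1}=-W_a^\dagger W_b$. Operators with $M^{\Gamma_1}=-M$ are exactly those $M=\sum A_k\otimes B_k$ with each $A_k$ antisymmetric. Hence on the code space $\mathcal{C}=\mathrm{span}\{\Omega\otimes w\}$, with $\Omega=\sum_i e_i\otimes e_i/\sqrt n$ and a clean ancilla copy of $\mathbb{C}^n$, every such $M$ satisfies the Knill–Laflamme condition $\Pi_{\mathcal{C}}M\Pi_{\mathcal{C}}=0$. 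I would then identify the operators $\{I,W_a,W_a^\dagger W_b\}$ with the error set generated by a unitary representation $\mathcal{U}$ of the Clifford algebra on $n-1$ generators in $U(nr)$. The relation $(W_a^\dagger W_b)^{\Gamma_1}=-W_a^\dagger W_b$, combined with unitarity, is what encodes anticommutation and squaring to $-I$. Running the identification backwards recovers the $V_a$. I would do this bookkeeping in the basis of subsection \ref{subsec:reptheory}, where the irreducible dimension $d_{n-1}$ appears.
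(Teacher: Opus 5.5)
Your (2)$\Rightarrow$(1) is the paper's argument (Lemma~\ref{lem:polarization}). Your (1)$\Rightarrow$(2) is a valid alternative route: the paper simply cites the Tian--Xu rigidity theorem (Theorem~\ref{thm:tianxu}), whereas you reprove its real case from Gleason's theorem. The steps you mark with ``should'' do go through. Put $C=\Phi_a(\ket{e_i}\bra{e_j}+\ket{e_j}\bra{e_i})$. Idempotence along $\cos\theta\, e_i+\sin\theta\, e_j$ gives $C^2=P^a_{e_i}+P^a_{e_j}$ and $P^a_{e_i}C+CP^a_{e_i}=C$, so $C=T_{ij}+T_{ij}^\dagger$ with $T_{ij}\colon R_j\to R_i$ unitary. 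The coefficient of $x_s^2$ in the $(p,q)$ block of $(P^a_x)^2=P^a_x$ then gives $T_{ps}T_{sq}=T_{pq}$. This route is more self-contained, at the price of Gleason's theorem.

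The genuine gap is (2)$\Leftrightarrow$(3). You assert that $(V_a^\dagger V_b)^{\Gamma_1}=-V_a^\dagger V_b$ plus unitarity makes the $V_a$ themselves (your $W_a$, after normalising one to $I$) anticommute and square to a scalar. This is false for $r>1$, because $\Gamma_1$ does not reverse products.

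A counterexample: take the rank-one solution $V^{(0)}_1=I$, $V^{(0)}_2=Y\otimes I$, $V^{(0)}_3=X\otimes Y$, $V^{(0)}_4=Z\otimes Y$ for $n=4$, and set $V_a=V^{(0)}_a\otimes R_a$ with $R_1=I$ and arbitrary $R_a\in U(r)$. Since $\Gamma_1$ commutes with left and right multiplication by operators $I\otimes R$, condition (2) still holds. Yet for distinct $a,b\ge 2$ we get $V_aV_b+V_bV_a=V^{(0)}_aV^{(0)}_b\otimes[R_a,R_b]$, and $V_2^2=I_4\otimes R_2^2$. So the set $\{I,V_a,V_a^\dagger V_b\}$ is not a Clifford error set in general.

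The missing idea is the paper's realignment (Lemma~\ref{lem:realstructure}). Define $W_b$ by $V_a\ket{b}\ket{j}=W_b\ket{a}\ket{j}$, which swaps the colour label with the first-register index. Then the $(c,d)$ block of $W_a^\dagger W_b$ is the $(a,b)$ block of $V_c^\dagger V_d$. Two consequences follow:
\begin{itemize}
\item $\Gamma_1$-antisymmetry of $V_c^\dagger V_d$ for $c\neq d$ says exactly that the off-diagonal blocks of $W_a^\dagger W_b+W_b^\dagger W_a$ vanish.
\item Unitarity of $V_c$ says $P_cW_a^\dagger W_bP_c=\delta_{a,b}P_c$, where $P_c=\ket{c}\bra{c}\otimes I_r$.
\end{itemize}
Together these give $W_a^\dagger W_b+W_b^\dagger W_a=2\delta_{a,b}I$. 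Hence $U_a=iW_1^\dagger W_{a+1}$ is a Clifford representation in $U(nr)$, and $P_1$ is a maximal code space for it. No decomposition into irreducibles, and no $d_{n-1}$, is needed for this equivalence.

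Your code space is also the wrong object. $\mathcal{C}=\{\Omega\otimes w\}$ lives in a space of dimension $n^2r$. Condition (3) instead asks for a rank-$r$ projector on $\mathbb{C}^{nr}$ whose $n$ translates $P,U_1PU_1^\dagger,\dots$ tile the space (Proposition~\ref{prop:codespace}). Moreover, the Knill--Laflamme identity you verify uses only that antisymmetric matrices are traceless. It therefore holds for every $M$ with vanishing partial trace over the first register and carries none of the required structure.

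Finally, ``running the identification backwards'' hides all of (3)$\Rightarrow$(2). You need to:
\begin{enumerate}
\item set $\tilde W_1=I$ and $\tilde W_a=iU_{a-1}$;
\item form the matrix units $M_{a,b}=\tilde W_aP\tilde W_b^\dagger$;
\item use Lemma~\ref{lem:normalform} to identify $\mathbb{C}^{nr}$ with $\mathbb{C}^n\otimes\mathbb{C}^r$ so that $P$ becomes $\ket{1}\bra{1}\otimes I_r$;
\item check the code condition on every slice, using that the $\tilde W_a$ pairwise commute or anticommute;
\item only then undo the realignment to recover the $V_a$.
\end{enumerate}
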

Subsection \ref{subsec:introitus} shows the equivalence between the first two points, subsection \ref{subsec:equivqecc} shows the equivalence between the last two, and subsection \ref{subsec:theproof} derives theorem \ref{thm:realspheres} from the equivalence between the first and the last. Hence, in effect, theorem \ref{thm:realspheres} is a statement about the existence of high-dimensional code spaces for Clifford-algebraic errors given a clean ancilla, which could be of independent interest.

Theorem \ref{thm:realspheres} has a number of implications which we now discuss.

\subsubsection{A new largest known separation between the quantum and classical chromatic numbers}
As was discussed previously, it has been known for some time that, for some graphs, the classical chromatic number can be exponentially larger than the quantum chromatic number. It is of interest to determine how large the gap between the two can be for a given value of the quantum chromatic number. At the moment, whether this gap can be unboundedly large for a fixed value of the quantum chromatic number is an open problem (see \cite{ciardo2025quantumchromaticgap}), and the largest-known such gap is provided by the graphs $\Omega_{4m}$ that were described earlier, which all have $\chi_q(\Omega_{4m}) = 4m$ but have chromatic number exponential in $m$. It is known that $\chi(\Omega_{4m}) \leq (\beta + o(1))^{4m}$ for $\beta = 2^{1-h(1/4)} \approx 1.1398$, and it is conjectured that this is tight, with this having been shown for certain special families of $m$ (\cite{Frankl1986OrthogonalVectors}, \cite{Ihringer_2019}). Hence, we have:
\[\chi(\Omega_{4m}) \leq (\beta + o(1))^{\chi_q(\Omega_{4m})}\]

In combination with known results, theorem \ref{thm:realspheres} turns out to yield a slightly larger separation between the quantum and classical chromatic numbers than that which is afforded by the graphs $\Omega_{4m}$.  Combining point two of theorem \ref{thm:realspheres}, the Paley construction (see theorem \ref{thm:existencehadamard}) and the asymptotics of \cite{Kumchev2002ConsecutivePrimesAP} for the distribution of gaps between primes congruent to three mod four, we get (noting that a quantum $k$-coloring of $S^{n-1}$ yields a quantum $k$-coloring of $S^{m-1}$ in the obvious way whenever $m \leq n$):
\begin{corollary}
    It holds that $\chi_q(S^{n-1}) = n + O(n^{0.53})$.
\end{corollary}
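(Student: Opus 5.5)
We need $\chi_q(S^{n-1}) \le n + O(n^{0.53})$. The lower bound $\chi_q(S^{n-1})\ge n$ comes from the standard-basis $n$-clique. For the upper bound we use two facts. First, quantum colorability is monotone in the dimension: if $m\le n$, the isometric embedding $\mathbb{R}^m\hookrightarrow\mathbb{R}^n$, $x\mapsto(x,0)$, is a graph homomorphism $S^{m-1}\to S^{n-1}$, since it preserves unit vectors and orthogonality. Composing a quantum $k$-coloring of $S^{n-1}$ with it gives a quantum $k$-coloring of $S^{m-1}$. Hence $\chi_q(S^{m-1})\le \chi_q(S^{n-1})$. Second, point two of theorem \ref{thm:realspheres} gives $\chi_q(S^{N-1})\le\chi^{(d_{N-1})}_q(S^{N-1})=N$ whenever a Hadamard matrix of order $N$ exists, because a rank-$r$ quantum coloring is in particular a quantum coloring. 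So for every $n$,
\[\chi_q(S^{n-1}) \le \min\{N\ge n : \text{a Hadamard matrix of order } N \text{ exists}\}.\]
It remains to show that Hadamard orders occur with gaps $O(N^{0.53})$.

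For this we use the Paley construction (theorem \ref{thm:existencehadamard}): if $q$ is a prime power with $q\equiv 3 \pmod 4$, there is a Hadamard matrix of order $q+1$. Restricting to primes $p\equiv 3\pmod 4$, we invoke \cite{Kumchev2002ConsecutivePrimesAP}. It gives an $x_0$ such that for all $x\ge x_0$ the interval $[x, x+x^{0.53}]$ contains a prime $p\equiv 3 \pmod 4$. Concretely, the count of such primes there is $\gg x^{0.53}/\log x>0$.

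Given $n \ge x_0+1$, apply this with $x=n-1$ to get a prime $p\equiv 3\pmod 4$ with $n-1\le p\le n-1+(n-1)^{0.53}$. Set $N=p+1$. Then $n\le N\le n+n^{0.53}$, and a Hadamard matrix of order $N$ exists, so $\chi_q(S^{n-1})\le N\le n+n^{0.53}$. The finitely many $n<x_0+1$ can be absorbed into the implied constant, since each $\chi_q(S^{n-1})$ is finite: it is at most the value for any larger Hadamard order. This gives $\chi_q(S^{n-1})=n+O(n^{0.53})$.

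The main obstacle is confirming the exact form of the prime-gap result in arithmetic progressions. It must hold for all sufficiently large $x$, not just almost all, and for the specific progression $3 \bmod 4$ with exponent $0.53$. Apart from that, the argument is just monotonicity plus the Paley construction.
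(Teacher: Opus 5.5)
Your proposal is correct and is the same argument the paper gives: the clique lower bound, monotonicity under the embedding $S^{m-1}\to S^{n-1}$, point two of Theorem~\ref{thm:realspheres} at a Hadamard order $N\ge n$, and the Paley construction for primes $p\equiv 3 \pmod 4$, whose gaps are $O(p^{0.53})$ by Kumchev's bound. Your remark that a rank-$r$ quantum $n$-coloring is in particular a quantum $n$-coloring (with local dimension $rn$) supplies a step the paper leaves implicit, and the exact form of the interval in the prime-gap theorem affects only the implied constant.
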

Let $\Lambda_n$ be the orthogonality graph of the vectors in $\{-1,0,1\}^n \backslash \{0\}$. The graph $\Lambda_3$, for example, is homomorphically equivalent to the graph $G_{13}$ of \cite{mancinska2018odditiesquantumcolorings}. These graphs are not known to have low-dimensional flat orthogonal representations, so that point one of theorem \ref{thm:quantchrom} does not imply the existence of quantum coloring with few colors for the $\Lambda_n$. However, by definition, $\xi_{\mathbb{R}}(\Lambda_n) \leq n$ (and equality holds because the standard basis vectors form an $n$-clique), so that the previous result implies that $\chi_q(\Lambda_n) = n + O(n^{0.53})$. A lower bound of \cite{Rai99} on the classical chromatic numbers of the $\Lambda_n$ then yields:
\begin{corollary} \label{cor:asymptotics}
    It holds that $\chi(\Lambda_n) \geq (\gamma - o(1))^n$, for $\gamma = 2/\sqrt{3} \approx 1.1547$. Hence, we have
    \[\chi(\Lambda_n) \geq (\gamma - o(1))^{\chi_q(\Lambda_n)}\]
\end{corollary}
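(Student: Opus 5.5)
The plan combines three ingredients: the lower bound of \cite{Rai99} on $\chi(\Lambda_n)$, the upper bound $\chi_q(\Lambda_n) = n + O(n^{0.53})$ from the previous corollary, and a short asymptotic manipulation converting one into the other.

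\textbf{Step 1 (classical lower bound).} I would quote Raigorodskii's result \cite{Rai99} as giving $\chi(\Lambda_n) \geq (2/\sqrt{3} - o(1))^n$. Since the statement is phrased with exactly $\gamma = 2/\sqrt{3}$, the cited bound should read this way after rewriting whatever form it takes (e.g.\ a ratio of binomial coefficients bounding an independence number, evaluated via Stirling's formula). The mechanism I expect is a Frankl--Wilson/Frankl--R\"odl type bound on sets of $\{-1,0,1\}$-vectors with a fixed support size and no orthogonal pairs, combined with $\chi \geq |V|/\alpha$ on a suitable induced subgraph. Verifying that the exponent comes out to $2/\sqrt{3}$ is the only place needing care, but I treat it as citable.

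\textbf{Step 2 (quantum upper bound).} The upper bound on $\chi_q(\Lambda_n)$ comes from the following chain.
\begin{itemize}
\item The identity map gives $\xi_\mathbb{R}(\Lambda_n) \leq n$, so there is a homomorphism $\Lambda_n \to S^{n-1}$.
\item Quantum chromatic number is monotone under homomorphisms, so $\chi_q(\Lambda_n) \leq \chi_q(S^{n-1})$.
\item The previous corollary gives $\chi_q(S^{n-1}) = n + O(n^{0.53})$.
\end{itemize}
The standard basis is an $n$-clique in $\Lambda_n$, so $n \leq \chi_q(\Lambda_n) \leq n + O(n^{0.53})$. Hence $\chi_q(\Lambda_n) = n(1+o(1))$, and in particular $n \geq \chi_q(\Lambda_n)\,(1 - o(1))$.

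\textbf{Step 3 (combination).} From Step 1, $\chi(\Lambda_n) \geq (\gamma - o(1))^n$. Substitute $n = \chi_q(\Lambda_n)/(1+\epsilon_n)$ with $\epsilon_n = O(n^{-0.47}) \to 0$. This gives $(\gamma-o(1))^{n} = \big((\gamma-o(1))^{1/(1+\epsilon_n)}\big)^{\chi_q(\Lambda_n)}$. Since $\gamma > 1$ is fixed, $(\gamma - o(1))^{1/(1+\epsilon_n)} = \gamma - o(1)$. Therefore $\chi(\Lambda_n) \geq (\gamma - o(1))^{\chi_q(\Lambda_n)}$, where $o(1)$ is understood as $n \to \infty$, equivalently as $\chi_q(\Lambda_n)\to\infty$ since the two are asymptotically equal.

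The main obstacle is Step 1, namely matching the precise constant from \cite{Rai99} to $\gamma = 2/\sqrt{3}$. Steps 2 and 3 are routine given the previous corollary.
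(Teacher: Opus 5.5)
Your proposal is correct and follows the paper's argument: the paper also gets $\chi_q(\Lambda_n) = n + O(n^{0.53})$ from $\xi_\mathbb{R}(\Lambda_n) = n$ (a homomorphism into $S^{n-1}$) together with the preceding corollary, and it simply cites \cite{Rai99} for $\chi(\Lambda_n) \geq (\gamma - o(1))^n$. Your Step 3 spells out the substitution $n = \chi_q(\Lambda_n)(1-o(1))$, which the paper leaves implicit.
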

Note also that, unlike the case of the $\Omega_n$, where $n$ had to be restricted to be a multiple of 4 to get a quantum-classical separation, no such restriction needs to be imposed on the $\Lambda_n$, so that the resulting separation is arguably cleaner. 

\subsubsection{Comparison between the orthogonal rank and the quantum chromatic numbers}
As was mentioned previously, it was shown by \cite{mancinska2018odditiesquantumcolorings} that it is not the case that $\chi_q(G) \leq \xi_\mathbb{R}(G)$ for all graphs by exhibiting a graph $G_{13}$ on 13 vertices with $\xi_\mathbb{R}(G_{13}) = 3$ but $\chi_q(G_{13}) = 4$. Until this work, it was consistent with previous results that, for example, $\chi_q(G) \leq \xi_\mathbb{R}(G)$ does hold for all graphs $G$ such that $\xi_\mathbb{R}(G) \geq 4$. Point one of theorem \ref{thm:realspheres} shows that this is generically false unless $\xi_\mathbb{R}(G)$ is a multiple of 4, thereby greatly generalizing the counterexample of \cite{mancinska2018odditiesquantumcolorings}. 

Similarly, it follows that the observation that one can have $\chi_q(G) = \chi_q(G \vee K_1)$ (which is satisfied by $G_{13}$) holds true for graphs with arbitrarily large orthogonal rank. Indeed, suppose that $n \geq 4$ is such that a Hadamard matrix of order $n$ exists. It follows from either the Sylvester or the Paley constructions that infinitely many such values of $n$ exist. We then have that $\chi_q(S^{n-1}) = n$, by point two of the theorem, and thus $\chi_q(S^{n-2} \vee K_1) = n$ necessarily, since this graph has clique number $n$ and can be realized as a subgraph of $S^{n-1}$. Since $\chi_q(S^{n-2}) \leq \chi_q(S^{n-2} \vee K_1)$ and $\chi_q(S^{n-2}) > n-1$, the latter being given by point one of the theorem, it follows that $\chi_q(S^{n-2}) =n$, as desired.

\subsubsection{The connection with the exact remote state preparation of real states}
Theorem \ref{thm:realspheres} also has implications for the existence of communication-efficient exact remote state preparation schemes for real-valued states. Remote state preparation (\cite{Lo_2000}, \cite{Bennett_2001}) is a relaxed version of the well-known task of quantum teleportation (\cite{PhysRevLett.70.1895}). This task involves two parties, Alice and Bob, who are permitted to share an entangled state of their choosing. Given some publicly known set $T \subseteq S^{n-1}_\mathbb{C}$ and a register size $k$, Alice will be given a classical description of a state $\psi \in T$, and she is to send $c \in [k]$ to Bob so that, at the end of the protocol, Bob has a copy of $\psi$ in his laboratory. For an exact protocol, we require this copy to be perfect. Quantum teleportation is an example of such a protocol with $k = n^2$, whereas extensions of Holevo's theorem imply that any exact protocol must satisfy $k \geq n$ if $T$ contains an orthonormal basis of $\mathbb{C}^n$, as the protocol can then be used to communicate a classical register of size $n$.  

This is the case when $T = S^{n-1}$. One would like to know for which values of $n$ there exists an exact protocol with the lowest communication allowed by information theory, namely $k=n$. This was first shown to be the case for $n=2$ by \cite{Pati_2000}, whose results were then extended to $n \in \{4,8\}$ by \cite{Zeng_2002}. That these dimensions coincide with the ones in construction \ref{thm:quaternions} is not a coincidence: in fact, the two sets of constructions turn out to be entirely equivalent and appear to be independent discoveries. 

In fact, the connection between remote state preparation and quantum colorings runs deeper. On the one hand, the protocols of \cite{Pati_2000} and \cite{Zeng_2002} have the property that the shared entangled state is maximally entangled and that, given her input $\psi \in T$, the measurement that Alice performs on her share of the state to determine her communication $c$, write it $\{P^\psi_c\}_{c \in [k]}$, is projective. Call a protocol of this form restricted. It is then simple to see that the correctness of the protocol implies $P^\psi_c P^\phi_c = 0$ whenever $\braket{\psi | \phi} = 0$. Hence, the existence of a restricted remote state preparation scheme for $T$ with communicated register size $k$ implies that, letting $G_T$ be the graph whose vertices consist of elements of $T$ and with adjacency relations given by orthogonality, the graph $G_T$ is quantumly $k$-colorable, as per theorem \ref{thm:structure2cameron}.

Though the converse is naturally false in general (i.e. the quantum $k$-colorability of $G_T$ need not imply the existence of such a scheme), specifically when $T = S^{n-1}$ and $k=n$, the converse does hold. Figure 2 describes how the equivalence between the first two points of theorem \ref{thm:theproof} can be harnessed to yield a restricted exact remote state preparation scheme for $S^{n-1}$ with communicated register size $n$ whenever it holds that $\chi_q(S^{n-1}) = n$. This protocol recovers the protocols of \cite{Pati_2000} and \cite{Zeng_2002} as the special cases where $n \in \{2,4,8\}$, for which one may take $r=1$ in the statement of the protocol. It was conjectured by \cite{Zeng_2002} that these are the only values of $n$ for which such a protocol with $r=1$ exists. This follows at once from point four of theorem \ref{thm:realspheres}, and is given an alternative proof in section \ref{sec:realspheres}.

\begin{figure}[] \label{fig:theprotocol}
\begin{tcolorbox}[colback=white,colframe=black,sharp corners,boxrule=0.5pt] 
\textbf{Given.}

\begin{itemize}
    \item Natural numbers $n \geq 2$, $r \geq 1$, and unitaries $V_1, \dots,V_n$ on $\mathbb{C}^n \otimes \mathbb{C}^r$ as in point two of theorem \ref{thm:theproof}.
\end{itemize}

\medskip

\noindent\textbf{Protocol.}
Alice and Bob share the standard maximally entangled state $\ket{\Phi}_{A_1A_2, B_1B_2}$ with $A_1$ and $B_1$ of dimension $n$ and $A_2$ and $B_2$ of dimension $r$. Alice is given $\psi \in S^{n-1}$.

\begin{enumerate}
    \item Define the matrices $\{P_c\}_{c \in [n]}$ by:
    \[
        P_c = V_c ((\ket{\psi} \bra{\psi})_{A_1} \otimes I_{A_2}) V_c^\dagger 
    \]
    Lemma \ref{lem:polarization} shows that these matrices specify a projective measurement. Alice measures her share of the state according to this measurement, obtaining outcome $c$. She performs $V_c^\dagger$ on her share of the state and sends $c$ to Bob.
    \item Bob performs $V_c^T$ on his share of the state. 
\end{enumerate}
The final state of the protocol is:
\[\ket{\psi}_{A_1} \otimes \ket{\psi}_{B_1} \otimes \ket{\Phi}_{A_2, B_2}\]
Indeed, corollary \ref{cor:protocolok} shows that the analysis can proceed as if $V_c = I$, after which this follows from the factorization \ref{eq:factorization} and the transpose trick (lemma \ref{lem:transposetrick}). 

\end{tcolorbox}
\caption{The exact remote state preparation protocol proving corollary \ref{cor:rsp}.}
\end{figure}

\begin{corollary} \label{cor:rankone}
    It holds that $\chi^{(1)}_q(S^{n-1}) = n$ if and only if $n \in \{2,4,8\}$.
\end{corollary}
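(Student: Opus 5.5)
For the "if" direction, I would invoke point three of Theorem \ref{thm:realspheres}. The irreducible dimension is $d_{n-1} = 2^{\lfloor (n-1)/2\rfloor}$, which gives $d_1 = 1$, $d_3 = 2$ and $d_7 = 8$. For $n \in \{2,4,8\}$ these satisfy $d_{n-1} \le n$. Hence $r = \max(1, d_{n-1}/n) = 1$, and $\chi^{(1)}_q(S^{n-1}) = n$ follows directly. The case $n=2$ is also immediate from bipartiteness, as noted in the introduction. The case $n \in \{4,8\}$ alternatively follows from Theorem \ref{thm:quaternions}, since the CMNSW construction uses rank-one projections built from quaternion/octonion multiplication.

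For the "only if" direction, I would apply point four of Theorem \ref{thm:realspheres} with $r=1$. If $\chi^{(1)}_q(S^{n-1}) = n$ with $n\ge 3$, then $d_{n-1}$ must divide $n$. Since $d_{n-1}$ is a power of two, $d_{n-1}$ must divide the $2$-part of $n$, and in particular $d_{n-1} \le n$. Point one additionally forces $4 \mid n$.

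The remaining step is elementary arithmetic. The condition $2^{\lfloor (n-1)/2 \rfloor} \mid n$ with $4\mid n$ forces $2^{\lfloor (n-1)/2\rfloor} \le 2^{v_2(n)} \le n$. Since $2^{\lfloor (n-1)/2\rfloor} > n$ for all $n \ge 9$ (one checks $n=9,10$ directly and then uses exponential growth), only $n \le 8$ survives. Among the multiples of $4$ in that range, $n=4$ gives $d_3 = 2 \mid 4$ and $n=8$ gives $d_7=8 \mid 8$, so both pass. The conclusion is $n \in \{4,8\}$, together with $n=2$.

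There is no real obstacle here, because all the content sits in point four of Theorem \ref{thm:realspheres}. That point in turn rests on the equivalence of Theorem \ref{thm:theproof}: a rank-$r$ coloring gives a representation of the Clifford algebra on $n-1$ generators in $U(nr)$, and such a representation is a direct sum of irreducibles of dimension $d_{n-1}$, so $d_{n-1} \mid nr$. If I had to avoid citing point four, the step to reprove would be this dimension count. In that case I would first check that the Clifford relations among the $V_a^\dagger V_b$ produced by Theorem \ref{thm:theproof} give a genuine unital representation whose irreducible summands all have dimension exactly $d_{n-1}$. In the case $n-1$ odd, I would also verify that both inequivalent irreducibles have the same dimension.
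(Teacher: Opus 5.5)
Your proof is correct, but it is not the argument the paper gives for this corollary. It is the shortcut the paper mentions in the introduction, where it notes that the "only if" direction follows at once from point four of Theorem~\ref{thm:realspheres}.

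\textbf{Your route.} You read the result off as the $r=1$ case of the general obstruction $d_{n-1}\mid nr$. The paper derives that obstruction via two steps: the reshuffling $V_a\ket{b}\ket{j}=W_b\ket{a}\ket{j}$ of Lemma~\ref{lem:realstructure}, and the Hermitian Clifford generators $U_a=iW_1^\dagger W_{a+1}$ of Theorem~\ref{thm:equiverrorcorrection}.

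\textbf{The paper's route.} It exploits the fact that for $r=1$ the partial transpose $\Gamma_1$ is the full transpose.
\begin{enumerate}
\item After normalizing $V_1=I$, the $V_a$ with $a>1$ are antisymmetric.
\item Transposing $V_a^\dagger V_b=-(V_a^\dagger V_b)^T$ then yields $V_aV_b=-V_bV_a$ directly.
\item On each irreducible block of the $*$-algebra they generate, each $V_a^2$ commutes with everything and is therefore a scalar. A phase rescaling then produces irreducible Clifford representations, giving $d_{n-1}\mid n$.
\item The converse is done with explicit Pauli unitaries.
\end{enumerate}

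\textbf{Comparison.} The paper's argument is elementary and self-contained, and it serves as motivation for the Clifford connection. Yours is shorter once the main theorem is available, and it shows the rank-one result is one slice of a divisibility obstruction valid for every $r$. Your appeal to point one is redundant, since $d_{n-1}\nmid n$ already excludes $n\in\{3,5,6,7\}$.

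Two minor cautions.

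First, for $n=4$, citing point three is circular within the paper's logical structure. Theorem~\ref{thm:colorspoweroftwo} only covers powers of two $n\ge 8$; it does independently give $r=d_7/8=1$ for $n=8$. The paper's only proof of a rank-one $4$-coloring of $S^3$ is the explicit construction $V_1=I$, $V_2=Y\otimes I$, $V_3=X\otimes Y$, $V_4=Z\otimes Y$, which sits inside the proof of this very corollary. So it is your fallback to the rank-one quaternionic construction behind Theorem~\ref{thm:quaternions} that actually carries $n=4$. You should present it as the primary argument there.

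Second, your closing remark is slightly off: for general $r$, the $V_a^\dagger V_b$ do not satisfy Clifford relations, because $\Gamma_1$ is neither multiplicative nor anti-multiplicative. For example, replace a rank-one solution $V_a'$ by $V_a'\otimes R_a$, with $R_1=I$ and noncommuting unitaries $R_a$. This preserves the defining condition but destroys anticommutation. This is exactly why the paper passes to the $W_a$. Only for $r=1$ do the normalized $V_a$ anticommute, and even then they need a blockwise phase rescaling before they square to $I$.
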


In summary, we record the following implication of corollary \ref{cor:rankone} and point three of theorem \ref{thm:realspheres}, where we specialize to the case where $n$ is a power of two so the relevant resources can be described using information-theoretic terminology.
\begin{corollary} \label{cor:rsp}
    For all $m \geq 1$, there exists an exact remote state preparation protocol for states on $m$ qubits with real-valued amplitudes which uses $m$ bits of communication. This protocol requires $N = \max(m, 2^{m-1} - 1)$ perfect EPR pairs to run but only consumes $m$ of them (i.e. Alice and Bob are left with $N-m$ perfect EPR pairs at the end of the protocol).
\end{corollary}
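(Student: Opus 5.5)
We take $n = 2^m$ and combine point three of theorem \ref{thm:realspheres} with the protocol of figure \ref{fig:theprotocol}. The real $m$-qubit pure states are exactly the points of $S^{n-1}$. Point three gives $\chi^{(r)}_q(S^{n-1}) = n$ with $r = \max(1, d_{n-1}/n)$. By theorem \ref{thm:theproof} (equivalence of points one and two), there are then unitaries $V_1,\dots,V_n$ on $\mathbb{C}^n\otimes\mathbb{C}^r$ with $(V_a^\dagger V_b)^{\Gamma_1} = -V_a^\dagger V_b$ for $a\neq b$. For $m=1$ ($n=2$), where theorem \ref{thm:theproof} is stated only for $n\ge 3$, we use instead the Pati construction, i.e. the $n=2$, $r=1$ instance of the protocol.

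We plug these unitaries into the protocol of figure \ref{fig:theprotocol}. Its correctness rests on lemma \ref{lem:polarization}, corollary \ref{cor:protocolok}, the factorization \ref{eq:factorization} and the transpose trick (lemma \ref{lem:transposetrick}), which we take as given. These show that Alice's operators form a projective measurement and that the final state is $\ket{\psi}_{A_1}\otimes\ket{\psi}_{B_1}\otimes\ket{\Phi}_{A_2B_2}$.

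The resources are then counted as follows.
\begin{itemize}
\item \textbf{Communication.} Alice sends $c\in[n]$, i.e. $\log_2 n = m$ bits.
\item \textbf{Entanglement used.} The shared state is maximally entangled of local dimension $nr$, i.e. $\log_2(nr)$ EPR pairs.
\item \textbf{Entanglement consumed.} At the end, $A_1B_1$ holds a product state, while $A_2B_2$ still holds a maximally entangled state of local dimension $r$, i.e. $\log_2 r$ EPR pairs. Exactly $\log_2 n = m$ pairs are consumed.
\end{itemize}

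It remains to check that $\log_2(nr) = \max(m, 2^{m-1}-1)$. This is only arithmetic, but it depends on the exact convention for $d_{n-1}$, which should be checked against subsection \ref{subsec:reptheory}. We have $d_{n-1} = 2^{\lfloor (n-1)/2\rfloor} = 2^{2^{m-1}-1}$ for $m\ge1$. Hence
\[ nr = \max(n, d_{n-1}) = \max\bigl(2^m, 2^{2^{m-1}-1}\bigr), \]
so $\log_2(nr) = \max(m, 2^{m-1}-1)$, as claimed. The $m=1$ case has $N=1$, which matches the Pati protocol with $r=1$.

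Finally, corollary \ref{cor:rankone} is cited only to explain when $N=m$, i.e. when $r=1$, which happens for $m\le 3$. It plays no role in the existence proof.
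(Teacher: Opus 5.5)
Your proposal is correct and follows the paper's route. You take $n=2^m$, obtain the unitaries $V_a$ from point three of theorem \ref{thm:realspheres} via proposition \ref{prop:redrsp}, run the remote state preparation protocol, and count $\log_2(nr)=\max(m,2^{m-1}-1)$ shared EPR pairs, of which the $\log_2 r$ in $A_2B_2$ survive. One small correction to your final remark: corollary \ref{cor:rankone} is not idle, because theorem \ref{thm:colorspoweroftwo} only covers $n\geq 8$, and the $n=4$ case of point three (and trivially $n=2$, via $V_1=I$, $V_2=Y$) rests on the explicit rank-one unitaries given there, which is why the paper cites it.
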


Remarkably, in addition to being exact, this protocol slightly outperforms previously known approximate protocols for remote state preparation in terms of communication cost and entanglement consumption, though these can handle arbitrary complex states while ours cannot. In fact, it was shown by \cite{Leung_2003} that any exact protocol for remote state preparing arbitrary $m$-qubit states which have certain properties must communicate $2m$ bits. Our protocol does satisfy these properties, so that corollary \ref{cor:rsp} shows that the main result of \cite{Leung_2003} can be circumvented by restricting to real-valued states. Our protocol also saturates known bounds, as it is known that $\approx m$ bits of communication and $\approx m$ EPR pairs are necessary to perform an efficient remote state preparation protocol, even allowing for error, as shown, for example, in \cite{kundu2026nearoptimalentanglementcommunicationtradeoffsremote}. Note that in this last reference, only the remote state preparation of complex states is considered, but the results extend to real-valued states without too much effort. Also, the catalytic nature of the protocol is interesting, as it requires an enormous amount of entanglement but only actually consumes a small fraction of it. We believe that this is the first example of such a protocol in the context of remote state preparation.

\subsubsection{Implications for the rank-$r$ chromatic numbers and the Hadamard conjecture}
Our results also have implications for the study of the rank-$r$ quantum chromatic numbers. It was asked in \cite{cameron2006quantumchromaticnumbergraph} whether it holds that $\chi_q(G) = \chi^{(1)}_q(G)$ for all graphs; this was first explicitly disproved in \cite{lalonde2023quantumchromaticnumberssmall}, which gave the smallest-known counterexample of this in the form of a graph $G_{21}$ on 21 vertices with $\chi_q({G_{21}}) = \chi^{(2)}_q({G_{21}}) = 4$ but $\chi^{(1)}_q(G_{21}) = \chi(G_{21}) = 5$. Though it was already known from complexity-theoretic considerations that, for all $r$, there is a graph $G$ with $\chi_q(G) < \chi^{(r)}_q(G)$, few concrete examples of such graphs are known, and all have been engineered specifically to yield a separation of this form. Theorem \ref{thm:realspheres} shows that the aforementioned conjecture is disproved by a natural class of graphs, namely, real spheres: indeed, it is known that there exists a strictly increasing sequence $\{n_k\}$ of orders for which a Hadamard matrix exists, so that $\chi_q(S^{n_k-1}) = n_k$, but which are such that the least $r_k$ such that $\chi^{(r_k)}_q(S^{n_k-1}) = n_k$ grows exponentially with $n_k$. This is reminiscent of the exponential lower bound on the amount of entanglement required for playing XOR games optimally due to \cite{Slofstra_2011}, and in fact, the underlying mechanisms of both entanglement lower bounds are the same, namely, the representation theory of Clifford algebras. The generalized de Bruijn-Erd\H{o}s theorem implies that finite subgraphs of these spheres exist which exhibit the same features. 

Theorem \ref{thm:realspheres} also has implications for the comparison between the rank-$r$ chromatic numbers. It is simply stated as 'clear' in \cite{cameron2006quantumchromaticnumbergraph} that it holds that $\chi_q^{(r)}(G) \geq \chi_q^{(r')}(G)$ for all graphs $G$ whenever $r \leq r'$. Although this claim was flagged as unproven and possibly false in \cite{lalonde2023quantumchromaticnumberssmall}, until now, whether this really held remained open. This claim is refuted by theorem \ref{thm:realspheres}: for example, points three and four of the theorem imply that $\chi^{(r)}_q(S^{15}) = 16$ if and only if $r$ is a multiple of 8. Again, theorem \ref{thm:debruijnerdoswatereddown} implies the existence of a finite graph $G$ which refutes the claim: indeed, it follows that there exists a subgraph $G'$ of $S^{15}$ with no rank-$9$ 16-coloring. Then, setting $G = G' \sqcup K_{16}$, we have $\chi_q^{(9)}(G) \geq 17$ but $\chi_q^{(8)}(G) = 16$.

Our last point regards the Hadamard conjecture itself. Although it is widely believed that it is true, were it to be false for a given order, as far as we know, the only known proof methods which could establish its falsity would essentially involve exhaustive search, which, even for the smallest order for which the conjecture is not presently known to hold (namely, 668), would almost certainly be computationally completely unfeasible. Interestingly, theorem \ref{thm:realspheres} yields an alternative refutation method: indeed, by point two, it is a prediction of the Hadamard conjecture that any orthogonality graph of a finite subset of $S^{n-1}$ is quantumly $n$-colorable for $n$ a multiple of 4. Hence, were one to exhibit such a graph and prove that it is not quantumly $n$-colorable using, for example, sum-of-squares techniques, the nonexistence of a Hadamard matrix of order $n$ would be established. 

\section{Preliminaries and miscellaneous facts}
We now present definitions and collect general-purpose facts that will be used throughout the rest of the paper.

\subsection{Quantum information and linear algebra} \label{subsec:qm}
We will write $U(n)$ to mean the group of $n \times n$ unitary matrices. We will write $\mathbb{F}_2$ to mean the field with two elements. Given a matrix $M$ mapping $\mathbb{C}^n \otimes \mathbb{C}^{q}$ into $\mathbb{C}^n \otimes \mathbb{C}^{p}$, the partial transpose of $M$ on the first register, denoted $M^{\Gamma_1}$, is defined as follows: writing
\[M = \sum_{a \in [p],b \in [q]} M_{a,b} \otimes \ket{a} \bra{b}\]
We set:
\[M^{\Gamma_1} = \sum_{a \in [p],b \in [q]} M^T_{a,b} \otimes \ket{a} \bra{b}\]
\subsubsection{Maximally entangled states} Given two systems $A$ and $B$ of dimension $d$, the standard maximally entangled state on $AB$, denoted $\ket{\Phi}_{A,B}$, is given by:
\[\ket{\Phi}_{A,B} = \frac{1}{\sqrt{d}} \sum_{k=1}^d \ket{k}_A \ket{k}_B\]
Given systems $A_1$ and $B_1$ of dimension $d_1$ and systems $A_2$ and $B_2$ of dimension $d_2$, the maximally entangled state $\ket{\Phi}_{A_1A_2,B_1B_2}$ is the above state with $A$ taken to be $A_1A_2$ and $B$ taken to be $B_1B_2$. Reordering the registers, we have the factorization:
\begin{equation}  \label{eq:factorization}
    \ket{\Phi}_{A_1A_2,B_1B_2} = \ket{\Phi}_{A_1,B_1} \otimes \ket{\Phi}_{A_2, B_2}
\end{equation}
We have the transpose trick:
\begin{lemma} \label{lem:transposetrick}
    Given a $d \times d$ matrix $M$, we have
    \[(M \otimes I) \ket{\Phi} = (I \otimes M^T) \ket{\Phi}\]
\end{lemma}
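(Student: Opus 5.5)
The transpose trick is a direct computation in the standard basis, and I would prove it by expanding both sides in coordinates. Write $M = \sum_{j,k=1}^d M_{jk} \ket{j}\bra{k}$, so that $M^T = \sum_{j,k} M_{jk} \ket{k}\bra{j}$. Since both sides are linear in $M$, it is enough either to check the identity on the matrix units $\ket{j}\bra{k}$, or to expand the general case directly. I will do the latter.

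For the left-hand side, apply $M \otimes I$ to $\ket{\Phi} = \frac{1}{\sqrt{d}} \sum_{l} \ket{l}\ket{l}$. This gives
\[
(M \otimes I)\ket{\Phi} = \frac{1}{\sqrt{d}} \sum_{l} M\ket{l} \otimes \ket{l} = \frac{1}{\sqrt{d}} \sum_{j,l} M_{jl} \ket{j}\ket{l}.
\]
For the right-hand side, apply $I \otimes M^T$. This gives
\[
(I \otimes M^T)\ket{\Phi} = \frac{1}{\sqrt{d}} \sum_{l} \ket{l} \otimes M^T\ket{l} = \frac{1}{\sqrt{d}} \sum_{l,k} (M^T)_{kl} \ket{l}\ket{k} = \frac{1}{\sqrt{d}} \sum_{l,k} M_{lk} \ket{l}\ket{k}.
\]
Renaming the summation indices $(l,k)\mapsto(j,l)$ in the last sum shows that the two expressions coincide, which proves the lemma.

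There is no real obstacle here. The only point requiring care is the index convention: the transpose is taken with respect to the same computational basis that defines $\ket{\Phi}$, and no complex conjugation is involved. Later in the paper the lemma is applied on the composite registers $A_2B_2$ and $A_1B_1$. There, $\ket{\Phi}_{A_1A_2,B_1B_2}$ is defined with respect to the product basis, so the lemma applies verbatim with $d = d_1 d_2$. It is also compatible with the factorization (\ref{eq:factorization}), because the transpose of a tensor product is the tensor product of the transposes.
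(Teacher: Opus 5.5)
Your computation is correct: both sides expand to $\frac{1}{\sqrt{d}}\sum_{j,l} M_{jl}\ket{j}\ket{l}$, with the transpose taken in the same basis that defines $\ket{\Phi}$. The paper states this lemma as a standard fact and gives no proof, and your coordinate expansion is the standard verification.
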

A simple corollary is the correctness of our remote state preparation protocol:
\begin{corollary} \label{cor:protocolok}
    Given a $d \times d$ matrix $M$ and $U \in U(d)$, 
    \[(U^\dagger \otimes U^T)(UMU^\dagger \otimes I) \ket{\Phi} = (M \otimes I) \ket{\Phi}\]
\end{corollary}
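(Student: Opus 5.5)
We plan to prove Corollary~\ref{cor:protocolok} by a direct computation, using only Lemma~\ref{lem:transposetrick} and the fact that operators acting on different tensor factors commute. The idea is to push the unitary $U^T$ acting on Bob's side over to Alice's side, where it becomes $U$ and cancels the outer $U$ in $UMU^\dagger$.

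First, we factor the left-hand side as
\[(U^\dagger \otimes U^T)(UMU^\dagger \otimes I) = (U^\dagger \cdot UMU^\dagger \otimes I)(I \otimes U^T) = (MU^\dagger \otimes I)(I \otimes U^T).\]
This uses the identity $(A\otimes B)(C\otimes D) = AC\otimes BD$ together with $U^\dagger U = I$.

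Next, we apply Lemma~\ref{lem:transposetrick} to the matrix $U$. Since $(U\otimes I)\ket{\Phi} = (I\otimes U^T)\ket{\Phi}$, the factor $(I \otimes U^T)$ acting on $\ket{\Phi}$ can be replaced by $(U\otimes I)$. This gives
\[(MU^\dagger \otimes I)(I\otimes U^T)\ket{\Phi} = (MU^\dagger \otimes I)(U \otimes I)\ket{\Phi} = (MU^\dagger U \otimes I)\ket{\Phi} = (M\otimes I)\ket{\Phi}.\]
Here we use $U^\dagger U = I$ once more, which completes the argument.

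There is no real obstacle in this proof. The only point needing care is the order of the operators: the lemma must be applied directly to $\ket{\Phi}$, so we first commute the Bob-side factor $I\otimes U^T$ to the right, adjacent to $\ket{\Phi}$, before invoking it.
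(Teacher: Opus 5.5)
Your proposal is correct and follows essentially the same route as the paper: you regroup the left-hand side as $(MU^\dagger \otimes I)(I \otimes U^T)\ket{\Phi}$, then apply the transpose trick (Lemma~\ref{lem:transposetrick}) to $U$ to replace $(I \otimes U^T)\ket{\Phi}$ with $(U \otimes I)\ket{\Phi}$, and finally cancel $U^\dagger U$. These are exactly the three lines of the paper's computation.
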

\begin{proof}
    We have:
    \begin{align*}
    (U^\dagger \otimes U^T)(UMU^\dagger \otimes I) \ket{\Phi} &= (MU^\dagger \otimes I)((I \otimes U^T) \ket{\Phi})\\
    &= (MU^\dagger \otimes I)((U \otimes I) \ket{\Phi})\\
    &= (M \otimes I) \ket{\Phi}
    \end{align*}
    As desired. 
\end{proof}

\subsubsection{Conditions for the vanishing of real and complex quadratic forms}
We have the following standard lemma (see, for example, \cite{axler2015linear}, chapter 7):
\begin{lemma} \label{lem:polarization_1}
Let $M$ be an $n \times n$ matrix. Then:
\begin{enumerate}
    \item It holds that $\braket{\psi | M | \psi} = 0$ for all $\psi \in S^{n-1}_\mathbb{C}$ if and only if $M = 0$.
    \item It holds that $\braket{\psi | M | \psi} = 0$ for all $\psi \in S^{n-1}$ if and only if $M = -M^T$.
\end{enumerate}
\end{lemma}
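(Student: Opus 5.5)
The statement to prove is Lemma~\ref{lem:polarization_1}. In each part one direction is immediate and the other comes from a polarization argument.

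For point one, if $M = 0$ the form vanishes trivially. For the converse, first extend the hypothesis from unit vectors to all of $\mathbb{C}^n$ by homogeneity: $\braket{v|M|v} = \|v\|^2 \braket{\hat v|M|\hat v}$ for $v \neq 0$, where $\hat v = v/\|v\|$. Then apply the complex polarization identity
\[\braket{u|M|v} = \frac{1}{4}\sum_{k=0}^{3} i^{k} \braket{u + i^{k} v | M | u + i^{k} v},\]
which can be checked by expanding the right-hand side. The only mildly delicate point is getting the powers of $i$ right under the convention that the bra is antilinear. Every term on the right vanishes, so $\braket{u|M|v} = 0$ for all $u, v$. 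Taking $u, v$ to be standard basis vectors gives every matrix entry of $M$, so $M = 0$.

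For point two, first suppose $M = -M^T$. For real $\psi$ we have $\braket{\psi|M|\psi} = \psi^T M \psi$, a scalar, so it equals its own transpose $\psi^T M^T \psi = -\psi^T M \psi$. Hence it is zero.

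Conversely, suppose the form vanishes on $S^{n-1}$. Homogeneity extends this to all of $\mathbb{R}^n$. Now use real polarization:
\[(u+v)^T M (u+v) - u^T M u - v^T M v = u^T M v + v^T M u = u^T (M + M^T) v,\]
and the left-hand side vanishes. Taking $u = e_a$ and $v = e_b$ gives $(M+M^T)_{ab} = 0$ for all $a, b$, so $M = -M^T$.

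This argument works for complex $M$ as well as real $M$, since $\psi$ is real and no conjugation enters. That matters here because $M$ is a general complex $n \times n$ matrix.

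There is no real obstacle in either part. The one point needing care is noting that real polarization only recovers the symmetric part of $M$, which is exactly why the conclusion in point two is skew-symmetry rather than $M = 0$.
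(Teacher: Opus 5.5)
Your argument is correct and is the standard polarization proof; the paper gives no proof of its own and simply cites Axler (ch.~7), where this same argument appears. One harmless slip: with the bra antilinear, the right-hand side of your complex identity evaluates to $\braket{v|M|u}$ rather than $\braket{u|M|v}$ (weights $i^{-k}$ would give the latter), but since $u$ and $v$ range over all of $\mathbb{C}^n$, the conclusion $M=0$ is unaffected.
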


This result can be extended as follows: 
\begin{lemma} \label{lem:polarization}
For $p, q, n \geq 1$, let $M$ be a matrix encoding a linear map from $\mathbb{C}^n \otimes \mathbb{C}^q$ into $\mathbb{C}^n \otimes \mathbb{C}^p$. Then:
\begin{enumerate}
    \item It holds that $(\bra{\psi} \otimes I_p )M(\ket{\psi} \otimes I_q) = 0$ for all $\psi \in S^{n-1}_\mathbb{C}$ if and only if $M = 0$.
    \item It holds that $(\bra{\psi} \otimes I_p )M(\ket{\psi} \otimes I_q) = 0$ for all $\psi \in S^{n-1}$ if and only if $M = -M^{\Gamma_1}$. 
\end{enumerate}
\end{lemma}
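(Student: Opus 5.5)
The plan is to reduce to Lemma \ref{lem:polarization_1} one block at a time. Write $M = \sum_{a \in [p], b \in [q]} M_{a,b} \otimes \ket{a}\bra{b}$ as in the definition of the partial transpose, where each $M_{a,b}$ is an $n \times n$ matrix. For any $\psi \in \mathbb{C}^n$ we then have
\[(\bra{\psi} \otimes I_p) M (\ket{\psi} \otimes I_q) = \sum_{a,b} \braket{\psi | M_{a,b} | \psi} \, \ket{a}\bra{b},\]
which is a $p \times q$ matrix with $(a,b)$ entry $\braket{\psi | M_{a,b} | \psi}$. This identity only uses bilinearity of the tensor product and the fact that $(\bra{\psi} \otimes I_p)(M_{a,b} \otimes \ket{a}\bra{b})(\ket{\psi} \otimes I_q) = \braket{\psi|M_{a,b}|\psi}\ket{a}\bra{b}$. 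It is the one computation I would write out explicitly.

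It follows that the expression vanishes for all $\psi$ in the relevant sphere if and only if, for every pair $(a,b)$, the scalar quadratic form $\braket{\psi | M_{a,b} | \psi}$ vanishes on that sphere. For the complex sphere, part 1 of Lemma \ref{lem:polarization_1}, applied to each block, says this happens exactly when every $M_{a,b} = 0$, that is, when $M = 0$. For the real sphere, part 2 of Lemma \ref{lem:polarization_1} says it happens exactly when $M_{a,b} = -M_{a,b}^T$ for all $a,b$. By the definition of $M^{\Gamma_1}$ and the uniqueness of the block decomposition, this is precisely the condition $M = -M^{\Gamma_1}$. Both directions of each equivalence come along automatically, since every step above is an equivalence.

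I do not expect a real obstacle. The only points needing care are bookkeeping. First, $\bra{\psi}$ for real $\psi$ coincides with $\psi^T$, so the scalar lemma applies verbatim. Second, the decomposition into blocks $M_{a,b}$ is unique, which is what lets us pass from the blockwise condition $M_{a,b} = -M_{a,b}^T$ to the global identity $M = -M^{\Gamma_1}$.
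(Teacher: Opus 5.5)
Your proposal is correct and is essentially the paper's proof. It decomposes $M$ into $n \times n$ blocks $M_{a,b}$, notes that the $(a,b)$ entry of the compressed expression is $\braket{\psi | M_{a,b} | \psi}$, and applies Lemma \ref{lem:polarization_1} blockwise. Your remarks about real $\psi$ and the uniqueness of the block decomposition are the only bookkeeping this requires.
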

\begin{proof}
    Note that, writing 
    \[M = \sum_{a \in [p],b \in [q]} M_{a,b} \otimes \ket{a} \bra{b}\]
    the statement that $(\bra{\psi} \otimes I_p )M(\ket{\psi} \otimes I_q) = 0$ is equivalent to the statement that $\braket{\psi | M_{a,b} | \psi} = 0$ for all $a,b$. The result then follows from lemma \ref{lem:polarization_1}.
\end{proof}

\subsubsection{Maximal code spaces}
An operator $P$ acting on a Hilbert space $\mathcal{H}$ is said to be a projector if it holds that $P^2 = P$. It is said to be orthogonal if $P^\dagger = P$. In this paper, the term 'projector' will always mean 'orthogonal projector' unless specified otherwise. Given a collection $\mathcal{U}$ of unitaries on $\mathcal{H}$, define the unnormalized noise channel $\Phi_\mathcal{U}$ by:
\[\Phi_\mathcal{U}(X) = \sum_{U \in \mathcal{U}} U X U^\dagger\]
We show the following. The terminology is inspired by quantum error correction, as $P$ is then a projector satisfying the Knill-Laflamme conditions \cite{Knill_2000} of as large a rank as is allowed by dimensional constraints. 
\begin{proposition} \label{prop:codespace}
    For $r, n \geq 1$ let $\mathcal{U} = \{U_1, \dots, U_{n-1}\} \subseteq U(rn)$ and let $P$ be an orthogonal projector on $\mathbb{C}^{rn}$. The following are equivalent:
    \begin{enumerate}
    \item The rank of $P$ is $r$, and it holds that $P U_a P = 0$ for all $a$ and $P U_a^\dagger U_b P = 0$ for all $a \neq b$. 
    \item Letting $P_1 = P$ and $P_{a+1} = U_a P U_a^\dagger$ for $a \in [n-1]$, we have
    \[\sum_a P_a = I\]
    \item Setting $X = P - I/n$, we have $\Phi_\mathcal{U}(X) = -X$.
    \end{enumerate}
    A projector $P$ satisfying the above will be called a maximal code space for the unitaries $\mathcal{U}$.
\end{proposition}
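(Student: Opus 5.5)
Set $U_0 = I$, so that $P_{a+1} = U_a P U_a^\dagger$ for $a \in \{0, \dots, n-1\}$; every $P_a$ is then an orthogonal projector of the same rank as $P$. The plan is to prove (1)$\Leftrightarrow$(2) by a standard fact about sums of projectors, and (2)$\Leftrightarrow$(3) by rewriting $\Phi_\mathcal{U}(X)$ directly.

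\textbf{(1)$\Leftrightarrow$(2).} First rewrite the conditions in (1). For $0 \le a \neq b \le n-1$, the condition $P U_a^\dagger U_b P = 0$ (with $U_0 = I$ this also covers $PU_bP=0$ and, taking adjoints, $PU_a^\dagger P = 0$) is equivalent to $P_{a+1}P_{b+1} = U_a P U_a^\dagger U_b P U_b^\dagger = 0$. This uses that $U_a, U_b$ are invertible. So (1) says precisely that the $n$ projectors $P_1,\dots,P_n$ are pairwise orthogonal and each has rank $r$.
\begin{itemize}
\item Pairwise orthogonal projectors have a sum which is a projector of rank $\sum_a \mathrm{rank}(P_a) = nr$. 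In $\mathbb{C}^{rn}$ this sum must be $I$, so (1) implies (2).
\item Conversely, suppose $\sum_a P_a = I$. Taking traces gives $n\,\mathrm{rank}(P) = rn$, so $\mathrm{rank}(P) = r$. For pairwise orthogonality, use the standard argument: for $v$ in the range of $P_a$,
\[\|v\|^2 = \sum_b \braket{v|P_b|v} = \|v\|^2 + \sum_{b\ne a}\|P_b v\|^2,\]
so $P_b v = 0$ for all $b \neq a$, i.e.\ $P_bP_a = 0$. This gives (1).
\end{itemize}

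\textbf{(2)$\Leftrightarrow$(3).} Here $\Phi_\mathcal{U}$ sums over the $n-1$ unitaries $U_1,\dots,U_{n-1}$, which excludes the identity. Compute
\[\Phi_\mathcal{U}(X) = \sum_{a=1}^{n-1} U_a P U_a^\dagger - \frac{n-1}{n} I = \sum_{a=1}^{n} P_a - P - \frac{n-1}{n}I.\]
Setting this equal to $-X = -P + I/n$ is equivalent to $\sum_a P_a = I$. So (2) and (3) are equivalent by pure algebra.

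The main obstacle is the converse direction (2)$\Rightarrow$(1), specifically extracting orthogonality from the sum condition. Even this is handled by the trace count together with the positivity argument above. The remaining point to watch is the indexing convention: $P_1 = P$ corresponds to the identity, which is why (1) includes the separate conditions $PU_aP = 0$. With that bookkeeping in place, the proof is routine.
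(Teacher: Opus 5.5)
Your proof is correct and follows essentially the same route as the paper: you reduce (1) to pairwise orthogonality of the conjugated projectors $P_a$, use the rank/trace count for (1)$\Rightarrow$(2), use a positivity argument for (2)$\Rightarrow$(1), and verify (2)$\Leftrightarrow$(3) by direct algebra. Your vector computation $\|v\|^2 = \|v\|^2 + \sum_{b\neq a}\|P_b v\|^2$ is the paper's observation that $\sum_{a\neq b} P_b P_a P_b = 0$ has positive semidefinite summands, and your bookkeeping with $U_0 = I$ (including the invertibility step giving $P_{a+1}P_{b+1}=0 \iff PU_a^\dagger U_b P = 0$) spells out what the paper leaves implicit.
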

\begin{proof}
We first show that (1) implies (2). Assuming (1) holds, it follows that $P_a P_b = 0$ for all $a \neq b$. Write:
\[Q = \sum_a P_a\]
It is simple to check that $Q$ is a projector. Since its rank is equal to the dimension of the space (as seen by taking the trace), we must have $Q=I$.

Next, suppose that (2) holds. Taking traces shows that the rank of $P$ is $r$. Next, take $b \in [n]$. Then:
\[P_b = P_b I P_b = P_b^3 + \sum_{a \neq b} P_b P_a P_b\]
Therefore,
\[\sum_{a \neq b} P_b P_a P_b = 0\]
Since all the terms in the above sum are positive semidefinite (since $P_b$ is self-adjoint), this gives that $P_b P_a P_b = 0$ whenever $a \neq b$, which implies the desired result.  

Finally, that (2) and (3) are equivalent follows algebraically. 
\end{proof}

\subsubsection{Hadamard matrices}
Given $n \geq 2$, a (real) Hadamard matrix of order $n$ is an $n \times n$ matrix $M$ with entries in $\{-1,1\}$ and with all columns pairwise orthogonal. The simplest example of such a matrix is the familiar Hadamard gate with the $1/\sqrt{2}$ normalization dropped. Such a matrix can only exist if $n = 2$ or $n$ is divisible by four. The classical Hadamard conjecture affirms the converse:
\begin{conjecture}[Hadamard] \label{conj:hadamard}
    There exists a Hadamard matrix of order $n$ whenever $n$ is divisible by four. 
\end{conjecture}
This conjecture has been verified for the following families: 
\begin{theorem} \label{thm:existencehadamard}
For $n$ divisible by four, there exists a Hadamard matrix of order $n$ whenever one of the following is satisfied: 
\begin{enumerate}
\item (Sylvester) If $n=n_1 n_2$ for some $n_1,n_2$ such that Hadamard matrices of orders $n_1$ and $n_2$ exist. In particular, if $n$ is a power of two.
\item (The Paley construction, \cite{Paley}) If $n-1$ is prime.
\item (The result of \cite{KharaghaniTayfehRezaie2005Hadamard428}) If $n \leq 664$. 
\end{enumerate}
\end{theorem}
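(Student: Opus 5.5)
The three items are of very different nature, and I would treat them separately. Item three is a computational result and will simply be cited from \cite{KharaghaniTayfehRezaie2005Hadamard428}. That paper constructs a Hadamard matrix of order 428, the last open case below 668, and the orders up to 664 other than 428 were already covered by earlier constructions. I would not attempt to reprove this. The real content to sketch is therefore items one and two.

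For item one (Sylvester), I will use the Kronecker product. Let $H_1$ and $H_2$ be Hadamard matrices of orders $n_1$ and $n_2$, and set $H = H_1 \otimes H_2$. Its entries are products of $\pm 1$ entries, so they lie in $\{-1,1\}$. Moreover
\[H^T H = (H_1^T H_1) \otimes (H_2^T H_2) = n_1 I \otimes n_2 I = nI,\]
so the columns of $H$ are pairwise orthogonal. For the ``in particular'' part, I would start from the $2 \times 2$ matrix with rows $(1,1)$ and $(1,-1)$ and induct on the exponent $k$ in $n = 2^k$.

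For item two (Paley), let $q = n-1$ be prime. Since $4 \mid n$, we have $q \equiv 3 \pmod 4$, and hence $-1$ is a quadratic nonresidue mod $q$. I define the Jacobsthal matrix $Q$, indexed by $\mathbb{F}_q$, by $Q_{x,y} = \chi(y-x)$, where $\chi$ is the Legendre symbol (so $\chi(0)=0$). The key facts to establish are the following:
\begin{enumerate}
\item $Q^T = -Q$, because $\chi(-1) = -1$.
\item $QJ = JQ = 0$, where $J$ is the all-ones matrix, because there are equally many residues and nonresidues.
\item $QQ^T = qI - J$.
\end{enumerate}
The third fact is the main obstacle. It reduces to showing that for $c \neq 0$,
\[\sum_{t} \chi(t)\chi(t+c) = -1.\]
I would prove this by substituting $t \mapsto ct$ and writing $\chi(t)\chi(t+1) = \chi(1 + t^{-1})$ for $t \neq 0$. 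As $t$ ranges over $\mathbb{F}_q^\times$, the quantity $1 + t^{-1}$ ranges over $\mathbb{F}_q \setminus \{1\}$, so the sum equals $\sum_{s} \chi(s) - \chi(1) = 0 - 1 = -1$.

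With these facts in hand, I set
\[S = \begin{pmatrix} 0 & \mathbf{1}^T \\ -\mathbf{1} & Q \end{pmatrix}\]
and $H = I + S$, which is an $n \times n$ matrix with entries in $\{-1,1\}$. One checks that $S^T = -S$ and $SS^T = qI$; the latter is a blockwise computation using $Q\mathbf{1} = 0$ and $QQ^T + J = qI$. It then follows that
\[HH^T = I + S + S^T + SS^T = (q+1)I = nI,\]
which completes item two.
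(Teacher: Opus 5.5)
Your proposal is correct. It matches the paper, which gives no proof of its own and simply attributes the three items to Sylvester's Kronecker-product construction, Paley's Jacobsthal-matrix construction, and the computational result of Kharaghani and Tayfeh-Rezaie; your checks of $H^TH=nI$ for $H_1\otimes H_2$ and of $QQ^T=qI-J$, $S^T=-S$, $SS^T=qI$ for the Paley matrix are the standard arguments behind those citations. The only unstated step is passing from $HH^T=nI$ to the column orthogonality $H^TH=nI$ required by the paper's definition, which is immediate for square matrices.
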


\subsection{Operator algebras and representation theory} \label{subsec:reptheory}
A $*$-algebra $\mathcal{A}$ is an associative $\mathbb{C}$-algebra equipped with a conjugate-linear involution $*$ satisfying $(xy)^* = y^*x^*$ for all $x,y \in \mathcal{A}$. A $*$-algebra is said to be unital if it has a multiplicative identity, denoted 1: in this work, all $*$-algebras are assumed to be unital. The standard topology to consider on $\mathcal{A}$, called the weak-$*$ topology, is that which is induced by pointwise convergence. Given a set $X$, we will write $\mathbb{C}\langle  X, X^*\rangle$ to mean the free unital $*$-algebra generated by $X$, i.e. the set of noncommutative polynomials in the elements of $X$ and their formal adjoints. An element $x \in \mathcal{A}$ is said to be positive if $x = y^* y$ for some $y \in \mathcal{A}$. A linear functional $f: \mathcal{A} \to \mathbb{C}$ is said to be positive if $f(x) \geq 0$ for all positive $x$. It can be checked that a positive linear functional $f$ must be Hermitian, i.e. $f(x^*) = \overline{f(x)}$ for all $x$. A positive linear functional $f$ is said to be a state if $f(1) = 1$, and a state $f$ is said to be tracial if $f(xy) = f(yx)$ for all $x,y \in \mathcal{A}$. The set of states and tracial states on $\mathcal{A}$ will be denoted by $S(\mathcal{A})$ and $T(\mathcal{A})$ respectively. It can be seen that $T(\mathcal{A})$ is a weak-$*$-closed subset of $S(\mathcal{A})$.  An element $p \in \mathcal{A}$ is said to be a projector if $p^* = p^2 = p$. We will call $\mathcal{A}$ projective if it has a generating set consisting of projectors. Note that a projector $p$ is positive and is such that $1-p$ is also a projector.

An important class of $*$-algebras is given by the vector space of endomorphisms of a vector space $V$, denoted End$(V)$, with multiplication given by composition and with the $*$ operator given by taking the adjoint. We will be assuming that $V$ is finite-dimensional, in which case one may identify End$(V)$ with $M_d(\mathbb{C})$ for $d = \dim V$, the $*$-algebra of $d \times d$ complex matrices. Given $T \subseteq M_d(\mathbb{C})$, we will write $\mathbb{C}\langle T\rangle$ to mean the $*$-subalgebra of $M_d(\mathbb{C})$ generated by the elements of $T$. A $*$-subalgebra $\mathcal{A}$ of $M_d(\mathbb{C})$ is said to be unital if it contains $I_d$ and is said to be irreducible if no nonzero, proper subspace of $\mathbb{C}^d$ is $\mathcal{A}$-invariant. We have the following alternative characterization of irreducibility (see, for example, \cite{LomonosovRosenthal2004Burnside}):
\begin{theorem}[Burnside's theorem] \label{thm:burnside}
A unital $*$-subalgebra of $M_d(\mathbb{C})$ is reducible if and only if it is a proper $*$-subalgebra.  
\end{theorem}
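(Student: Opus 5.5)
The proof has two directions. For the easy direction, suppose $\mathcal{A}$ is a unital $*$-subalgebra of $M_d(\mathbb{C})$ that leaves a nonzero proper subspace $W \subsetneq \mathbb{C}^d$ invariant. Pick unit vectors $w \in W$ and $v \in W^\perp$, and consider the rank-one matrix $\ket{v}\bra{w}$. It sends $w$ to $v \notin W$, so it does not leave $W$ invariant. Therefore $\ket{v}\bra{w} \notin \mathcal{A}$, and $\mathcal{A}$ is proper.

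The real content is the converse: an irreducible unital $*$-subalgebra must be all of $M_d(\mathbb{C})$. My plan is to use the $*$-structure through the double commutant and Schur's lemma, rather than the general (non-$*$) Burnside argument.

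\textbf{Step 1: the commutant is trivial.} Let $\mathcal{A}' = \{T : TA = AT \text{ for all } A \in \mathcal{A}\}$. This is again a unital $*$-algebra, because $\mathcal{A}$ is closed under $*$. If $T \in \mathcal{A}'$, then both $T + T^\dagger$ and $i(T - T^\dagger)$ lie in $\mathcal{A}'$ and are Hermitian. Each spectral projection of a Hermitian element of $\mathcal{A}'$ is a polynomial in that element, so it also lies in $\mathcal{A}'$. The range of such a projection is $\mathcal{A}$-invariant, so irreducibility forces each spectral projection to be $0$ or $I$. Hence every Hermitian element of $\mathcal{A}'$ is a scalar, and so $\mathcal{A}' = \mathbb{C} I$.

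\textbf{Step 2: the double commutant.} Next I show $\mathcal{A} = \mathcal{A}''$, which is the finite-dimensional von Neumann bicommutant theorem. Take $X \in \mathcal{A}''$ and let $e_1, \dots, e_d$ be the standard basis. Set $\xi = \bigoplus_i e_i \in (\mathbb{C}^d)^{\oplus d}$, and consider the subspace $K = \{\bigoplus_i A e_i : A \in \mathcal{A}\}$. This subspace is invariant under the amplified algebra $\mathcal{A} \otimes I_d$, and since $\mathcal{A}$ is unital it contains $\xi$. Because the amplified algebra is $*$-closed, the orthogonal projection onto $K$ commutes with it. That projection is therefore a $d \times d$ block matrix with entries in $\mathcal{A}'$, so it commutes with $X \otimes I_d$. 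It follows that $X \otimes I_d$ preserves $K$, which gives $(X \otimes I_d)\xi \in K$. So there is some $A \in \mathcal{A}$ with $A e_i = X e_i$ for every $i$, and hence $X = A$.

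\textbf{Conclusion.} Combining the two steps, $\mathcal{A} = \mathcal{A}'' = (\mathbb{C} I)' = M_d(\mathbb{C})$.

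\textbf{Main obstacle.} The delicate point is the bicommutant step, specifically checking that the projection onto $K$ has entries in $\mathcal{A}'$. This is where closure under $*$ is used: it makes $K^\perp$ invariant as well as $K$. In finite dimensions no topological closure issues arise, so that check is the only genuine work. The remaining steps are routine, and together they establish the theorem as stated.
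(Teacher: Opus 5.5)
Your proof is correct. The paper does not prove this statement; it quotes Burnside's theorem as a known result and points to \cite{LomonosovRosenthal2004Burnside}. That reference proves the stronger classical statement, valid for arbitrary (not necessarily $*$-closed) subalgebras of $M_d(\mathbb{C})$. Its argument shows that an irreducible algebra must contain a rank-one operator: take an element of minimal nonzero rank, and use irreducibility to lower its rank if that rank exceeds one. It then multiplies this operator on both sides by elements of the algebra to produce every rank-one operator.

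Your route is genuinely different. You use the $*$-structure, which guarantees that every invariant subspace has an invariant orthogonal complement, and this reduces the claim to Schur's lemma plus the finite-dimensional double commutant theorem. Both steps are carried out correctly. In particular, the check that the projection onto the cyclic subspace $K$ has block entries in $\mathcal{A}'$ is right, and you correctly identify where unitality enters, namely $\xi \in K$.

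What each approach buys:
\begin{itemize}
\item Your approach gives a clean operator-algebraic proof that also yields $\mathcal{A} = \mathcal{A}''$ for every unital $*$-subalgebra. This is exactly the setting in which the paper uses the theorem, for example for the algebra generated by Clifford unitaries. It also fits naturally with the complete-reducibility statement of Theorem~\ref{thm:structure}.
\item The minimal-rank argument is more general: it needs neither self-adjointness nor an inner product.
\end{itemize}

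One small simplification: Step 1 does not need the $*$-structure. For any $T \in \mathcal{A}'$ and any eigenvalue $\lambda$ of $T$, the space $\ker(T - \lambda I)$ is a nonzero $\mathcal{A}$-invariant subspace. Irreducibility then gives $T = \lambda I$ directly, so the Hermitian decomposition and spectral projections can be skipped.
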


The following standard generalization of Maschke's theorem from the representation theory of groups holds (see, for example, \cite{Putnam2019}):
\begin{theorem}\label{thm:structure}
    Let $\mathcal{A}$ be a $*$-subalgebra of $M_d(\mathbb{C})$. There exists an orthogonal decomposition of $\mathbb{C}^d$ into nonzero $\mathcal{A}$-invariant subspaces
    \[\mathbb{C}^d = \bigoplus_{k=1}^N V_k\]
    such that, for every $k$, the restriction of $\mathcal{A}$ to End($V_k$) is irreducible.
\end{theorem}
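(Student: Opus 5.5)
The plan is to give the usual orthogonal-complement argument, applied inductively. Start from the trivial family consisting of the single space $\mathbb{C}^d$. Since $\mathcal{A}$ is a $*$-subalgebra, the key observation is this: if $W \subseteq \mathbb{C}^d$ is $\mathcal{A}$-invariant, then so is $W^\perp$. Indeed, for $x \in \mathcal{A}$, $w \in W$ and $u \in W^\perp$ we have
\[\braket{w | x u} = \braket{x^* w | u} = 0,\]
because $x^* \in \mathcal{A}$ maps $W$ into $W$.

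The decomposition is then built by strong induction on dimension. We show that every nonzero $\mathcal{A}$-invariant subspace $V$ splits as an orthogonal direct sum of nonzero invariant subspaces on each of which $\mathcal{A}$ acts irreducibly.

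1. Suppose $V$ has no nonzero proper $\mathcal{A}$-invariant subspace. Then $V$ itself is the required summand: the restriction of $\mathcal{A}$ to $V$ is irreducible by definition.
2. Otherwise, pick a nonzero proper invariant subspace $W \subsetneq V$. By the observation above, applied inside $V$, the space $W' = W^\perp \cap V$ is also invariant, and $V = W \oplus W'$ orthogonally.
3. Both $W$ and $W'$ are nonzero and have strictly smaller dimension than $V$. The induction hypothesis therefore applies to each of them.
4. Concatenate the two resulting decompositions. Since $W \perp W'$, the combined decomposition of $V$ is still orthogonal.

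Applying this to $V = \mathbb{C}^d$ gives $\mathbb{C}^d = \bigoplus_k V_k$ as required. Finite dimensionality guarantees that the process terminates.

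Irreducibility of each restriction is meant in the sense defined just before Burnside's theorem: $V_k$ has no nonzero proper subspace invariant under the operators $x|_{V_k}$, $x \in \mathcal{A}$. This holds by construction, because at termination each piece $V_k$ has no such subspace. If one also wants the restricted algebra to be unital in $\operatorname{End}(V_k)$, note that $\mathcal{A}$ is unital by the paper's standing convention, so $I_d \in \mathcal{A}$ restricts to the identity on $V_k$.

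There is no real obstacle here. The only point needing care is that the $*$-closure of $\mathcal{A}$ is used to pass invariance to orthogonal complements; this is exactly what makes the summands orthogonal rather than merely complementary.
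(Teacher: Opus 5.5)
Your proof is correct. Closure of $\mathcal{A}$ under $*$ makes the orthogonal complement of an invariant subspace invariant, and induction on dimension then gives an orthogonal decomposition into pieces with no nonzero proper invariant subspace, which is exactly irreducibility in the paper's sense; your remark that $I_d \in \mathcal{A}$ restricts to the identity on each $V_k$ is also what lets the paper later apply Burnside's theorem to these pieces. The paper does not prove this statement itself but quotes it as a standard fact with a reference, and your orthogonal-complement induction is the standard argument behind it.
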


Given $m \geq 1$, a collection of unitaries $\mathcal{U} = \{U_1, \dots, U_m\} \subseteq M_d(\mathbb{C})$ is said to form a unitary representation of the real Clifford algebra on $m$ generators if, for all $a, b \in [m]$, one has
\[\{U_a, U_b\} = 2 \delta_{a,b} I\]
where $\{A, B\} = AB + BA$ is the anticommutator of matrices. This representation is said to be irreducible if $\mathbb{C}\langle \mathcal{U} \rangle$ is an irreducible $*$-subalgebra of $M_d(\mathbb{C})$, or equivalently, if it is the whole of $M_d(\mathbb{C})$, by theorem \ref{thm:burnside}. The unitary irreps of the Clifford algebras are classified as follows (see, for example, \cite{Lawson1989}, theorem 5.6): 
\begin{theorem}[Classification of representations of Clifford algebras] \label{thm:clifford}
Suppose the unitaries $U_1, \dots, U_m$ form a unitary representation of the real Clifford algebra in $M_d(\mathbb{C})$. Then, this representation is irreducible if and only if $d = d_m$, where $d_m = 2^{\lfloor m/2 \rfloor}$. Also, given two representations $U_1, \dots, U_m$ and $V_1, \dots, V_m$ in $M_{d_m}(\mathbb{C})$, there exists a unitary $U \in U(d_m)$ and a sign $b = \pm 1$ such that, for all $a$, $U_a = b U V_a U^\dagger$. If $m$ is even, one may take $b=1$. 
\end{theorem}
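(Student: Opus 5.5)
The plan is to reduce everything to the even case $m = 2k$ and handle that case by building an explicit tensor-product decomposition. Throughout, each $U_a$ is Hermitian, since $U_a^2 = I$ and $U_a$ is unitary. The generators pairwise anticommute.

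\textbf{Even case, $m=2k$.} I would group the generators into $k$ pairs and build $k$ commuting copies of the Pauli algebra, in the spirit of an inverse Jordan--Wigner transform. Put
\[
X_j = \epsilon_j\, U_1U_2\cdots U_{2j-2}\,U_{2j-1}, \qquad Z_j = i\,U_{2j-1}U_{2j},
\]
with phases $\epsilon_j \in \{1, i\}$ chosen so that each $X_j$ is Hermitian. One checks the following by counting anticommutations:
\begin{itemize}
\item $X_j^2 = Z_j^2 = I$;
\item $X_j$ and $Z_j$ anticommute;
\item operators carrying different indices $j \neq j'$ commute.
\end{itemize}
A family of $k$ mutually commuting Pauli pairs on $\mathbb{C}^d$ yields a unitary $W$ with $W X_j W^\dagger = I\otimes\cdots\otimes \sigma_x\otimes\cdots\otimes I \otimes I_s$, and similarly for $Z_j$. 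To see this, take the joint eigenspaces of the commuting $Z_j$. All $2^k$ of them have the same dimension $s$, because the $X_j$ permute them unitarily. Then fix an orthonormal basis of the common eigenspace with all $Z_j = +1$ and transport it by the $X_j$. This gives $\mathbb{C}^d \cong (\mathbb{C}^2)^{\otimes k}\otimes \mathbb{C}^s$.

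The original $U_a$ can be recovered as products of the $X_j, Z_j$ up to phases. So every $U_a$ acts as $(\text{fixed Pauli string})\otimes I_s$, and these strings do not depend on the representation. Any subspace $\mathbb{C}^{2^k}\otimes \ket{v}$ is then invariant, so irreducibility forces $s=1$, i.e.\ $d = 2^k = d_m$. The same picture gives uniqueness: two irreps of dimension $d_m$ are both conjugate to the same Pauli-string representation, so they are unitarily conjugate with $b=1$.

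\textbf{Odd case, $m=2k+1$.} Consider $\omega = c\,U_1\cdots U_m$, with the phase $c$ chosen so that $\omega^2=I$. Since $m$ is odd, $\omega$ commutes with every $U_a$, so $\omega$ is central in $\mathbb{C}\langle\mathcal{U}\rangle$. In an irrep, Burnside's theorem (Theorem~\ref{thm:burnside}) makes the algebra all of $M_d(\mathbb{C})$, so $\omega = \pm I$. Hence $U_m = \pm \bar c\,(U_1\cdots U_{m-1})^{-1}$ lies in the algebra generated by $U_1,\dots,U_{m-1}$. This restricted family is a representation of the Clifford algebra on the even number $m-1$ of generators, and it is still irreducible. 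By the even case, $d = d_{m-1} = 2^{k} = d_m$.

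For uniqueness, conjugate the first $m-1$ generators of the two representations to agree, using the even case. Then their last generators agree up to the sign of $\omega$. Replacing every $V_a$ by $-V_a$ flips $\omega$, because the product contains an odd number of factors. This accounts exactly for the sign $b$.

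\textbf{Converse direction.} Suppose the $U_a$ act on $M_{d_m}(\mathbb{C})$ but the representation is reducible. Theorem~\ref{thm:structure} splits it into at least two irreducible summands, each of dimension $d_m$ by the forward direction. Their dimensions would then add up to more than $d_m$, which is impossible. So a representation of dimension $d_m$ is irreducible.

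\textbf{Main obstacle.} I expect the hardest step to be the even-case tensor factorization. Specifically, two things need care: tracking the phases so that the $X_j, Z_j$ really are Hermitian and satisfy the right (anti)commutation relations, and checking that all joint eigenspaces have equal dimension. I would verify the sign bookkeeping first on $m=2$ and $m=4$.
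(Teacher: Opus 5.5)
Your proof is correct, and it takes a different route from the paper. The paper gives no proof of its own and cites Lawson--Michelsohn. There the result comes from the structure of the complexified Clifford algebra as an abstract algebra: it is $M_{2^k}(\mathbb{C})$ on $2k$ generators, and $M_{2^k}(\mathbb{C})\oplus M_{2^k}(\mathbb{C})$ on $2k+1$ generators, with the two summands separated by the volume element. This is combined with the fact that every representation of a full matrix algebra is a multiple of the defining one.

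Your argument is a hands-on version of the same mechanism:
\begin{itemize}
\item the inverse Jordan--Wigner pairs $(X_j,Z_j)$ exhibit $k$ commuting copies of $M_2(\mathbb{C})$;
\item your joint-eigenspace construction of $W$ is exactly the proof that any $*$-representation of $M_2(\mathbb{C})^{\otimes k}$ has the form $\mathbb{C}^{2^k}\otimes\mathbb{C}^s$;
\item your central $\omega$ plays the role of the volume element.
\end{itemize}
The bookkeeping checks out. One needs $\epsilon_j=i$ exactly when $j$ is even, and the exponent $|S||T|-|S\cap T|$ from Lemma~\ref{lem:cliffordmonomials} gives every required commutation and anticommutation. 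The odd-case sign trick ($V_a\mapsto -V_a$ flips $\omega$ because $m$ is odd) correctly produces a single global $b$.

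What your route buys is a self-contained proof that uses only tools already in the paper (Theorems~\ref{thm:burnside} and~\ref{thm:structure}, Lemma~\ref{lem:cliffordmonomials}) and produces an explicit intertwiner. It also gives Corollary~\ref{cor:superclifford} almost for free: for odd $m$, split an arbitrary representation along the $\pm 1$ eigenspaces of the central $\omega$ and apply your even-case factorization on each piece. The abstract route instead gives the full algebra structure, and in the textbook also the real classification, which the paper does not need.

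Two small points should be made explicit:
\begin{itemize}
\item In the odd-case uniqueness, you use that $\omega$ is scalar in \emph{both} representations. This holds because, once the first $m-1$ generators are conjugated to the $s=1$ Jordan--Wigner form, they already generate all of $M_{d_m}(\mathbb{C})$. Alternatively, it follows from your converse argument, which does not depend on the uniqueness step and so can be moved before it.
\item The case $m=1$ is the degenerate instance $k=0$ of your reduction, and it is trivial to check directly.
\end{itemize}
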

The most well-known case of the above theorem is when $m=3$, in which case such an irreducible representation is given by the three non-identity Pauli matrices, which underpin quantum teleportation (\cite{PhysRevLett.70.1895}): 
\[
X = \begin{pmatrix} 0 & 1 \\ 1 & 0 \end{pmatrix}, \quad
Y = \begin{pmatrix} 0 & -i \\ i & 0 \end{pmatrix}, \quad
Z = \begin{pmatrix} 1 & 0 \\ 0 & -1 \end{pmatrix}
\]
For larger values of $m$, an irreducible representation of the Clifford algebra on $m$ generators can be constructed using the Pauli matrices by means of the Jordan-Wigner transformation (\cite{jordan1928pauli}). Given such an irreducible representation, the following corollary, which follows from theorems \ref{thm:structure}, \ref{thm:clifford}, lets one parameterize all representations:
\begin{corollary} \label{cor:superclifford}
There exists a unitary representation of the real Clifford algebra in $M_d(\mathbb{C})$ if and only if $d$ is divisible by $d_m$. For such a value of $d$, the representations are exactly the ones of the following form. Pick an irreducible representation $\Gamma_1,\dots,\Gamma_m$ of the real Clifford algebra in $M_{d_m}(\mathbb{C})$. Given natural numbers $p, q$ with $(p+q)d_m = d$ and a unitary $U \in U(d)$, take the corresponding representation to be:
\[U_a = U((I_p \otimes \Gamma_a) \oplus (I_q \otimes (-\Gamma_a)))U^\dagger\]
If $m$ is even, one may take $q=0$.
\end{corollary}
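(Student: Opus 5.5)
The statement to prove is Corollary \ref{cor:superclifford}. The plan is to reduce everything to irreducible blocks using Theorem \ref{thm:structure} and then identify each block by the uniqueness part of Theorem \ref{thm:clifford}.

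\textbf{Decomposition into irreducible blocks.} Let $U_1,\dots,U_m$ be a unitary representation of the real Clifford algebra in $M_d(\mathbb{C})$, and let $\mathcal{A}=\mathbb{C}\langle U_1,\dots,U_m\rangle$.
\begin{enumerate}
\item By Theorem \ref{thm:structure}, there is an orthogonal decomposition $\mathbb{C}^d=\bigoplus_{k=1}^N V_k$ into $\mathcal{A}$-invariant subspaces on which $\mathcal{A}$ acts irreducibly.
\item Each $V_k$ is invariant under every $U_a$ and every $U_a^\dagger$. Hence the restrictions $U_a|_{V_k}$ are unitaries on $V_k$ that still satisfy $\{U_a,U_b\}=2\delta_{a,b}I$.
\item After choosing an orthonormal basis of $V_k$, the restrictions form an irreducible unitary representation in $M_{\dim V_k}(\mathbb{C})$. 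By Theorem \ref{thm:clifford}, $\dim V_k=d_m$, and therefore $d=Nd_m$.
\end{enumerate}
This proves the "only if" direction of the existence claim.

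\textbf{Identifying each block.} By the uniqueness clause of Theorem \ref{thm:clifford}, for each $k$ there are a unitary $W_k:\mathbb{C}^{d_m}\to V_k$ and a sign $b_k=\pm1$ with $U_a|_{V_k}=b_kW_k\Gamma_aW_k^\dagger$ for all $a$; when $m$ is even, $b_k=1$. Reorder the blocks so that those with $b_k=1$ come first, and let $p$ and $q$ count the blocks with $b_k=1$ and $b_k=-1$ respectively. Then $(p+q)d_m=d$.

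\textbf{Assembling the global unitary.} Define $U$ as the unitary sending the $k$-th standard block $\mathbb{C}^{d_m}$ of $\mathbb{C}^p\otimes\mathbb{C}^{d_m}\oplus\mathbb{C}^q\otimes\mathbb{C}^{d_m}$ to $V_k$ via $W_k$. It is unitary because the $V_k$ are mutually orthogonal and span $\mathbb{C}^d$. With this choice,
\[U_a=U\bigl((I_p\otimes\Gamma_a)\oplus(I_q\otimes(-\Gamma_a))\bigr)U^\dagger,\]
and $q=0$ when $m$ is even.

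\textbf{Converse.} Given $d$ divisible by $d_m$, an irreducible representation $\Gamma_a$ exists, for instance via the Jordan-Wigner construction. For any $p,q$ with $(p+q)d_m=d$ and any $U\in U(d)$, the displayed matrices are unitary. They satisfy the anticommutation relations blockwise, since $(-\Gamma_a)(-\Gamma_b)=\Gamma_a\Gamma_b$ and conjugation by $U$ preserves the relations. So these matrices form a representation, which shows both existence and that every matrix of the stated form is a representation.

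The only point needing care is that the blocks from Theorem \ref{thm:structure} are themselves irreducible Clifford representations of the right size. This holds because $\mathcal{A}$ is a $*$-algebra, so each invariant subspace is reducing and the restrictions of the $U_a$ remain unitary. Beyond that, the argument is bookkeeping.
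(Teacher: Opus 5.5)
Your proof is correct and takes the same route as the paper. The paper only asserts that the corollary follows from Theorem~\ref{thm:structure} and Theorem~\ref{thm:clifford}. Your argument writes out that deduction in full: decompose into irreducible blocks, identify each block with $\pm\Gamma_a$ via the uniqueness clause (with $+$ forced when $m$ is even), and assemble the global unitary $U$.
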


Given a unitary representation $U_1,\dots,U_m$ of the real Clifford algebra, for $S \subseteq [m]$, define the Clifford monomial $U_S$ by
\[U_S = \prod_{a \in S} U_a\]
The degree of a monomial $U_S$ is defined to be $|S|$. We will be writing $\Lambda_d$ to denote the vector space generated by the degree-$d$ monomials. The following is a simple consequence of the fact that the Clifford generators anticommute and square to the identity: 
\begin{lemma} \label{lem:productmonomials}
    Given Clifford monomials $U_S$, $U_T$, we have that $U_S U_T \in \Lambda_{|S \Delta T|}$, where $\Delta$ denotes the symmetric difference of two sets. 
\end{lemma}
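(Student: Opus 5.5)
The statement to prove is Lemma \ref{lem:productmonomials}: for Clifford monomials $U_S$, $U_T$ we have $U_S U_T \in \Lambda_{|S \Delta T|}$. The plan is to show that $U_S U_T = \pm U_{S \Delta T}$. Since $\Lambda_{|S\Delta T|}$ is a vector space containing $U_{S\Delta T}$, this gives the claim at once.

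The first step is to fix the convention that the product $U_S = \prod_{a\in S} U_a$ is taken in increasing order of the indices. We then observe that reordering any word in the generators changes it only by a sign. Indeed, for $a \neq b$, the relation $\{U_a,U_b\} = 0$ gives $U_aU_b = -U_bU_a$, so each adjacent transposition contributes a factor of $-1$.

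Next we write $U_SU_T$ as the word $U_{s_1}\cdots U_{s_k}U_{t_1}\cdots U_{t_l}$, where $s_1<\dots<s_k$ and $t_1<\dots<t_l$. Using adjacent transpositions, we sort this word into nondecreasing order of indices; this costs only an overall sign. In the sorted word, each index in $S\cap T$ occurs exactly twice and consecutively, and each index in $S\Delta T$ occurs once. Applying $U_a^2 = I$, which follows from $\{U_a,U_a\}=2I$, removes every repeated pair. What remains is exactly the increasing-order product $U_{S\Delta T}$, so $U_SU_T = \pm U_{S\Delta T} \in \Lambda_{|S\Delta T|}$.

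There is no real obstacle here. The only point requiring care is fixing the ordering convention for $U_S$. If monomials were defined with an arbitrary ordering, they would still differ from the increasing-order ones only by a sign, so $\Lambda_d$ does not depend on the convention chosen.
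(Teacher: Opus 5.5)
Your proof is correct and is the argument the paper intends; the paper gives no written proof and only notes that the lemma follows because the generators anticommute and square to the identity. Your sorting-and-cancelling argument, which shows $U_SU_T = \pm U_{S\Delta T}$, makes this precise, and your remark that the ordering convention for $U_S$ only affects signs (so $\Lambda_d$ is well defined) settles a point the paper leaves implicit.
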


Note that because the $U_a$ are self-adjoint, square to the identity and pairwise anticommute, the $*$-algebra they generate is spanned by the $\Lambda_d$. Hence, if the representation is irreducible, by theorem \ref{thm:burnside}, the $\Lambda_d$ span the entire matrix algebra. We have the following useful formulas:
\begin{lemma} \label{lem:cliffordmonomials}
Let $U_S$ and $U_T$ be two Clifford monomials. Setting $r = |S| |T| - |S \cap T|$, we have
\[U_S U_T = (-1)^r U_T U_S\]
Also, $U_S^\dagger = b U_S$, where $b=1$ if $|S| \equiv 0,1 \mod 4$ and $b=-1$ otherwise. 
\end{lemma}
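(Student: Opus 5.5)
The plan is to reduce both identities to counting transpositions of adjacent generators. Throughout, I use only the relations $U_a^2 = I$, $U_aU_b = -U_bU_a$ for $a \neq b$, and $U_a^\dagger = U_a$. These follow from $\{U_a,U_b\} = 2\delta_{a,b}I$ together with unitarity: $U_a^2 = I$ and $U_a$ unitary give $U_a^\dagger = U_a^{-1} = U_a$. I fix once and for all the ordering of the factors in each monomial $U_S = \prod_{a \in S} U_a$ (say increasing order). Since $S$ is a set, the factors of a single monomial are pairwise distinct generators.

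For the commutation formula, write $U_S = U_{s_1}\cdots U_{s_k}$ and $U_T = U_{t_1}\cdots U_{t_l}$. I transform the word $U_{s_1}\cdots U_{s_k}U_{t_1}\cdots U_{t_l}$ into $U_{t_1}\cdots U_{t_l}U_{s_1}\cdots U_{s_k}$ using only adjacent swaps. I move $U_{t_1}$ leftward past all $k$ letters of $U_S$, then $U_{t_2}$, and so on. Each pair $(s_i,t_j)$ is swapped exactly once, so there are exactly $kl = |S||T|$ swaps. A swap of $U_{s_i}$ with $U_{t_j}$ contributes a factor $-1$ if $s_i \neq t_j$ and a factor $+1$ if $s_i = t_j$, since a generator trivially commutes with itself. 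The pairs with $s_i = t_j$ are in bijection with $S \cap T$, so the total sign is $(-1)^{|S||T| - |S\cap T|} = (-1)^r$. No cancellation via $U_a^2 = I$ is needed here, because we only reorder letters.

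For the adjoint formula, set $k = |S|$. Since $U_S^\dagger = U_{s_k}^\dagger \cdots U_{s_1}^\dagger = U_{s_k}\cdots U_{s_1}$, I only need to restore the reversed word to the original order. Reversing a word of $k$ letters requires exactly $\binom{k}{2}$ adjacent swaps (one for each pair). Each swap involves two distinct generators and therefore contributes $-1$. Hence $U_S^\dagger = (-1)^{k(k-1)/2} U_S$.

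The last step is the elementary check that $k(k-1)/2$ is even exactly when $k \equiv 0,1 \pmod 4$. Writing $k = 4j + \epsilon$ for $\epsilon \in \{0,1,2,3\}$, one finds $k(k-1)/2 \equiv 0,0,1,1 \pmod 2$ respectively. This gives $b = 1$ for $k \equiv 0,1$ and $b = -1$ otherwise, as claimed.

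There is no real obstacle. The only point requiring care is the commutation count: swaps between equal indices must be counted as sign-free, not as $-1$, which is exactly the source of the correction term $-|S\cap T|$ in $r$.
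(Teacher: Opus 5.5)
Your proof is correct. Counting adjacent transpositions is the standard argument for this lemma: a swap of equal indices contributes no sign, which produces the $-|S\cap T|$ correction, and reversing a $k$-letter word costs $\binom{k}{2}$ sign flips. The paper states the lemma without proof, citing only that the generators are self-adjoint, square to the identity and pairwise anticommute, so your write-up supplies the routine verification the paper leaves implicit.
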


\subsection{Graph theory background}
This subsection provides background and notation for general graph theory and graph parameters other than the quantum chromatic numbers, which will be handled in the next subsection.

\subsubsection{General graph theory and homomorphisms}
In this paper, all graphs under consideration are undirected and simple, so that a graph $G$ is specified by a vertex set $V(G)$ and an edge set $E(G) \subseteq {V(G) \choose 2}$. The graph $G$ is said to be finite if $|V(G)| < \infty$ and infinite otherwise. In this paper, though we will be allowing for some of our graphs to be infinite, a certain amount of regularity will implicitly be assumed, for example, the existence of a homomorphism $f: G \to H$ for some finite graph $H$. Given a graph $G$ and $S \subseteq V(G)$, the induced subgraph $G[S]$ refers to the graph with vertex set $S$ and edge set $E(G) \cap (S \times S)$. Given two graphs $G$ and $H$, their disjoint union, denoted $G \sqcup H$, is the graph made up of disjoint copies of $G$ and $H$. Their join, denoted $G \vee H$, is the graph obtained by taking the disjoint union of $G$ and $H$ and making every vertex of $G$ adjacent to every vertex of $H$. In this work, $K_k$ will stand for the complete graph on $k$ vertices. 

Given two graphs $G$ and $H$, a graph homomorphism from $G$ to $H$ is a map $f: V(G) \to V(H)$ such that $(f(u), f(v)) \in E(H)$ for all $(u,v) \in E(G)$. In this work, the word 'homomorphism' will always refer to a graph homomorphism unless stated otherwise. We will write $\text{Hom}(G,H)$ to denote the set of homomorphisms from $G$ to $H$.  We write $\text{End}(G)$, the set of endomorphisms of $G$, for $\text{Hom}(G,G)$, and we write $\text{Aut}(G)$, the automorphism group of $G$, to mean the set of elements in $\text{End}(G)$ which are permutations and whose inverses are also endomorphisms. Two homomorphisms $f,g \in \text{Hom}(G,H)$ will be called adjacent if $(f(x), g(x)) \in E(H)$ for all $x \in V(G)$. Two graphs $G$ and $H$ are said to be homomorphically equivalent if $\text{Hom}(G,H), \text{Hom}(H,G) \neq \emptyset$. A graph $G$ is said to be a core if End$(G)$ = Aut$(G)$, and it is said to be bipartite if $\text{Hom}(G,K_2) \neq \emptyset$. We have the following simple properties of homomorphisms, which readily follow from the definitions:
\begin{proposition} \label{prop:homomorphisms}
    Letting $G,H, K$ be graphs, we have:
    \begin{enumerate}
        \item If $f \in \text{Hom}(G,H)$ and $g \in \text{Hom}(H,K)$, then $g \circ f \in \text{Hom}(G,K)$.
        \item If $f, g \in \text{Hom}(G,H)$ are adjacent and $h \in \text{Hom}(H,K)$, then $h \circ f$ and $h \circ g$ are adjacent. 
        \item If $f, g \in \text{Hom}(H,K)$ are adjacent and $h \in \text{Hom}(G,H)$, then $f \circ h$ and $g \circ h$ are adjacent. 
        \item If $G$ is a core and $H$ can be realized as a proper subgraph of $G$, then $\text{Hom}(G,H) = \emptyset$. 
    \end{enumerate}
\end{proposition}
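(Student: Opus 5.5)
The four items of Proposition \ref{prop:homomorphisms} follow directly from the definitions, and I will check them in the order stated. For (1), I take an edge $(u,v) \in E(G)$. Since $f$ is a homomorphism, $(f(u),f(v)) \in E(H)$. Since $g$ is a homomorphism, $(g(f(u)),g(f(v))) \in E(K)$, so $g \circ f \in \text{Hom}(G,K)$. For (2), I fix $x \in V(G)$. Adjacency of $f$ and $g$ gives $(f(x),g(x)) \in E(H)$, and applying the homomorphism $h$ gives $(h(f(x)),h(g(x))) \in E(K)$, which is exactly the statement that $h\circ f$ and $h \circ g$ are adjacent. Item (1) also guarantees that both compositions are homomorphisms. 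For (3), I fix $x \in V(G)$ and apply the adjacency of $f$ and $g$ at the point $h(x) \in V(H)$. This gives $(f(h(x)),g(h(x))) \in E(K)$, and again (1) makes the compositions homomorphisms.

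The only item with content is (4). I suppose for contradiction that there is some $\phi \in \text{Hom}(G,H)$. Since $H$ is realized as a proper subgraph of $G$, there is an injective homomorphism $\iota: H \to G$ whose image is a proper subgraph. By (1), $\psi = \iota \circ \phi$ lies in $\text{End}(G)$. Because $G$ is a core, $\psi \in \text{Aut}(G)$. So $\psi$ is a bijection on $V(G)$ whose inverse is also an endomorphism, and hence $\psi$ maps $E(G)$ bijectively onto $E(G)$. I then split on how $\iota(H)$ fails to equal $G$.

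First case: $\iota(H)$ misses a vertex of $G$. The image of $\psi$ lies in $\iota(V(H)) \subsetneq V(G)$, so $\psi$ is not surjective, which is a contradiction.

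Second case: $\iota(H)$ has all vertices of $G$ but misses an edge $e$ of $G$. Every edge in the image $\psi(E(G))$ lies in $\iota(E(H))$, because $\phi$ sends edges of $G$ to edges of $H$. Thus $\psi(E(G)) \subseteq \iota(E(H)) \subsetneq E(G)$. This contradicts the fact that $\psi$ is a bijection of $E(G)$ onto itself.

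The step I expect to need the most care is this last one when $G$ is infinite, since a counting argument is then unavailable. The argument above avoids counting by using only that an automorphism, as defined in the paper, has an inverse that is also an endomorphism, so that $\psi$ is surjective on edges. I will make sure to invoke that part of the definition explicitly rather than a cardinality argument.
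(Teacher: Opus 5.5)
Your proof is correct, and it is the direct verification from the definitions that the paper intends; the paper writes out no proof and only remarks that these properties readily follow from the definitions. Your handling of item (4) avoids counting and instead uses that an automorphism is surjective on vertices and, because its inverse is an endomorphism, also on edges. That is the right way to cover infinite cores such as $\mathbb{RP}^2$, where the paper applies this item with $K_3$ as the proper subgraph.
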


\subsubsection{Spheres and Grassmannians}
For a given field $\mathbb{F} \in \{\mathbb{R}, \mathbb{C}\}$, we will write $S_\mathbb{F}^{n-1}$ to mean the unit sphere in $\mathbb{F}^n$. In line with conventional notation, we will write $S^{n-1}$ to mean $S_{\mathbb{R}}^{n-1}$. The Grassmannian $Gr_{\mathbb{F}}(k,n)$ refers to the set of linear subspaces of dimension $k$ of $\mathbb{F}^n$. We will be writing $Gr(k,n)$ to mean $Gr_{\mathbb{C}}(k,n)$. Elements of this set are in one-to-one correspondence with rank-$k$ orthogonal projectors over $\mathbb{F}^n$, and notation will be abused slightly by identifying them for notational convenience.  These spaces are all made into graphs by taking two points to be adjacent if they are orthogonal, where two subspaces $T$ and $U$ are said to be orthogonal if $\braket{\psi | \phi} = 0$ whenever $\psi \in T,\phi \in U$. 

It is simple to see that $Gr_\mathbb{F}(k, 2k)$ is bipartite for all $k \geq 1$. The spaces $Gr_\mathbb{F}(1,n)$ are known as the real and complex projective spaces $\mathbb{FP}^{n-1}$ and are the quotient of the sphere $S_\mathbb{F}^{n-1}$ by the equivalence relation under which two points are identified if they are collinear. Hence, as was already implicit in \cite{cameron2006quantumchromaticnumbergraph} and was spelled out in \cite{godsil2024quantumindependencechromaticnumbers}, the graphs $\mathbb{FP}^{n-1}$ and $S_\mathbb{F}^{n-1}$ can be seen to be homomorphically equivalent. 

We have the following Uhlhorn-type rigidity result for homomorphisms from spheres into Grassmannians, due to \cite{TianXu2025WignerRankN}, which will be foundational to our analysis in sections \ref{sec:complex} and \ref{sec:realspheres}. Note that, translating their work into graph-theoretic terminology, their result only classified elements  of Hom$(Gr_\mathbb{F}(1,n), Gr(r, rn))$. However, it is simple to see that any $f \in \text{ Hom}(S_\mathbb{F}^{n-1}, Gr(r, rn))$ is such that $f(\psi) = f(\phi)$ whenever $\psi$ and $\phi$ are collinear, so that Hom$(Gr_\mathbb{F}(1,n), Gr(r, rn))$ and Hom$(S^{n-1}_\mathbb{F}, Gr(r, rn))$ are completely equivalent objects and the given statement follows from their result. 
\begin{theorem}[\cite{TianXu2025WignerRankN}, corollary 1.3] \label{thm:tianxu}
For $r \geq 1$, $n \geq 3$, $\mathbb{F} \in \{\mathbb{R}, \mathbb{C}\}$, we have that Hom$(S_\mathbb{F}^{n-1}, Gr(r, rn))$ consists of all maps $f$ of the following form, where $\mathbb{C}^{rn}$ is being identified with $\mathbb{C}^{n} \otimes \mathbb{C}^r$ in the natural way. For a given choice of a unitary $U$ acting on $\mathbb{C}^{n} \otimes \mathbb{C}^r$ and $p, q \geq 0$ with $p+q = r$, set 
\[f(\psi) = M(\psi)M(\psi)^\dagger\]
where
\[M(\psi) = U ((\ket{\psi} \otimes I_p) \oplus (\ket{\overline{\psi}} \otimes I_q)) \]
If $\mathbb{F} = \mathbb{R}$, one may take $q=0$ in the above (as complex conjugation has no effect on real vectors).
\end{theorem}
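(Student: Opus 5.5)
The approach is to reduce the classification to Gleason's theorem and then to the structure theory of Jordan homomorphisms. First I check that every map of the displayed form is a homomorphism. Write $M(\psi) = U((\ket{\psi}\otimes I_p)\oplus(\ket{\overline{\psi}}\otimes I_q))$. Then $M(\psi)^\dagger M(\psi) = I_r$, so $f(\psi)$ is a rank-$r$ projector. Moreover $M(\psi)^\dagger M(\phi) = \braket{\psi|\phi} I_p \oplus \overline{\braket{\psi|\phi}} I_q$, which vanishes when $\psi \perp \phi$, so $f(\psi)f(\phi)=0$.

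For the converse, fix $f \in \text{Hom}(S_\mathbb{F}^{n-1}, Gr(r,rn))$. As noted before the statement, $f$ factors through $Gr_\mathbb{F}(1,n)$. Take an orthonormal basis $e_1,\dots,e_n$ of $\mathbb{F}^n$. The projectors $f(e_i)$ are pairwise orthogonal, each of rank $r$, in dimension $rn$. Hence $\sum_i f(e_i) = I$, and $f$ is a projection-valued frame function.

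For each $v \in \mathbb{C}^{rn}$, the function $\psi \mapsto \braket{v|f(\psi)|v}$ is then a nonnegative frame function of weight $\|v\|^2$. Since $n \geq 3$, Gleason's theorem applies; its real version also holds for $n\ge 3$. It gives $\braket{v|f(\psi)|v} = \braket{\psi|A_v|\psi}$ for a Hermitian $A_v$, real symmetric when $\mathbb{F}=\mathbb{R}$. Polarizing in $v$ and using linearity, I get $f(\psi) = \Phi(\ket{\psi}\bra{\psi})$. Here $\Phi$ is a linear, unital, positive map into $M_{rn}(\mathbb{C})$. Its domain is the Hermitian $n\times n$ matrices in the complex case and the real symmetric ones in the real case.

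Next I show that $\Phi$ is a Jordan homomorphism, i.e.\ $\Phi(A^2)=\Phi(A)^2$. Write a Hermitian (resp.\ symmetric) $A = \sum_i \lambda_i \ket{e_i}\bra{e_i}$. Then $\Phi(A)=\sum_i\lambda_i f(e_i)$ with the $f(e_i)$ pairwise orthogonal projectors, so $\Phi(A)^2 = \sum_i \lambda_i^2 f(e_i) = \Phi(A^2)$.

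In the complex case, extend $\Phi$ complex-linearly to $M_n(\mathbb{C})$. It becomes a unital Jordan $*$-homomorphism into $M_{rn}(\mathbb{C})$. By the Jacobson--Rickart/Størmer theorem it is the direct sum of a $*$-homomorphism and a $*$-antihomomorphism on complementary reducing subspaces. Unital $*$-homomorphisms $M_n \to M_{pn}$ are unitarily equivalent to $A \mapsto A\otimes I_p$. Antihomomorphisms are this composed with transposition, $A\mapsto A^T\otimes I_q$, and $(\ket{\psi}\bra{\psi})^T = \ket{\overline\psi}\bra{\overline\psi}$. Dimension counting forces $p+q=r$, which gives exactly the stated form.

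The step I expect to be the main obstacle is the real case. There $\Phi$ is only a Jordan homomorphism of the Jordan algebra $\mathrm{Sym}_n(\mathbb{R})$. Its complex-linear extension lives on the complex symmetric matrices, not on all of $M_n(\mathbb{C})$. I would first try to show that the projectors $f(\psi)$ generate a $*$-algebra which is a homomorphic image of $M_n(\mathbb{C})$. For this I would use the universal enveloping (associative) algebra of $\mathrm{Sym}_n(\mathbb{R})$, which is known to be $M_n(\mathbb{R})$ for $n\ge3$. I would then check that $\Phi$ lifts to a real representation of $M_n(\mathbb{R})$ and complexify, giving $A\otimes I_p \oplus A^T\otimes I_q$. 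Since $A^T=A$ on symmetric matrices, the two summands coincide, so one may merge them and take $q=0$.

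A fallback is direct. Extract partial isometries $f(e_1)\to f(e_j)$ from the projectors onto $(e_1\pm e_j)/\sqrt2$, normalize them into matrix units $E_{ij}$, and show that these determine $\Phi$ on all rank-one symmetric projectors. This recovers $U(\ket{\psi}\bra{\psi}\otimes I_r)U^\dagger$ directly.
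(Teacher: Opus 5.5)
Your route differs from the paper's, which gives no argument of its own here. The paper cites Tian--Xu and only adds the remark that a homomorphism out of $S^{n-1}_\mathbb{F}$ is constant on lines, so the sphere and projective-space versions coincide. You instead rebuild the classification from Gleason's theorem (which is where $n\ge 3$ enters) and the structure of Jordan homomorphisms. You do not even need the factoring-through-lines remark, since Gleason already makes $f$ a function of $\ket{\psi}\bra{\psi}$. The following steps are all sound: the sufficiency check, the linearization $f(\psi)=\Phi(\ket{\psi}\bra{\psi})$, the Jordan identity via spectral decompositions, and the complex case via Jacobson--Rickart/St{\o}rmer. This route also has the merit of explaining the $\psi$ and $\overline{\psi}$ blocks as the homomorphism and antihomomorphism parts of $\Phi$.

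The gap is the real case, which you leave as a plan even though it is the case used in Section \ref{sec:realspheres}. In route (a), invoking the universal-envelope result is legitimate, but the conclusion you draw from it is off. An associative homomorphism of $M_n(\mathbb{R})$ complexifies to a homomorphism of $M_n(\mathbb{C})$, so no $A^T\otimes I_q$ summand arises. Moreover, such a homomorphism is a priori only \emph{similar} to $A\otimes I_r$. To get a unitary you must check that it is $*$-preserving; it is, since antisymmetric matrices are spanned by commutators $[S_1,S_2]$ of symmetric ones, whose images are anti-Hermitian.

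Your fallback (b) is the cleaner fix, but its one nontrivial step is not done. Concretely, set $P_i=f(e_i)$, $S_{ij}=\ket{i}\bra{j}+\ket{j}\bra{i}$ and $X_{ij}=\Phi(S_{ij})=f(\tfrac{e_i+e_j}{\sqrt2})-f(\tfrac{e_i-e_j}{\sqrt2})$.
\begin{itemize}
\item The Jordan property gives $X_{ij}^2=P_i+P_j$ and $P_iX_{ij}+X_{ij}P_i=X_{ij}$, and likewise with $P_j$. Hence $X_{ij}=F_{ij}+F_{ij}^\dagger$ with $F_{ij}=P_iX_{ij}P_j$, $F_{ij}F_{ij}^\dagger=P_i$ and $F_{ij}^\dagger F_{ij}=P_j$.
\item For distinct $i,j,k$, apply $\Phi$ to $S_{ij}S_{jk}+S_{jk}S_{ij}=S_{ik}$ and compress by $P_i(\cdot)P_k$, noting that $P_iX_{jk}=0$. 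This yields $F_{ik}=F_{ij}F_{jk}$.
\item Therefore $M_{ii}=P_i$, $M_{ij}=F_{ij}$ satisfy the hypotheses of Lemma \ref{lem:normalform}. This gives a unitary $U$ with $UM_{ij}U^\dagger=\ket{i}\bra{j}\otimes I_r$ for all $i,j$.
\item The $\ket{i}\bra{i}$ and $S_{ij}$ span the real symmetric matrices, so linearity gives $f(\psi)=U^\dagger(\ket{\psi}\bra{\psi}\otimes I_r)U$ for all real $\psi$.
\end{itemize}
This is exactly the statement with $q=0$, obtained without any enveloping-algebra input.
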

The special case of the above with $r=1$ was already shown by \cite{pankov2020geometricapproachwignertypetheorems} and implies: 
\begin{corollary} \label{cor:isacore}
    For $n \geq 3$, we have that $\mathbb{FP}^{n-1}$ is a core.
\end{corollary}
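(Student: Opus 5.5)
Corollary \ref{cor:isacore} claims that $\mathbb{FP}^{n-1}$ is a core for $n \geq 3$. The plan is to deduce this from theorem \ref{thm:tianxu} with $r=1$ by composing endomorphisms with the natural embedding of $\mathbb{FP}^{n-1}$ into $Gr(1,n)=\mathbb{CP}^{n-1}$. Throughout, a point of $\mathbb{FP}^{n-1}$ is identified with its rank-one projector $\ket{\psi}\bra{\psi}$ for $\psi \in S^{n-1}_\mathbb{F}$.

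\textbf{Step 1 (form of endomorphisms).} Let $g \in \text{End}(\mathbb{FP}^{n-1})$. Composing with the quotient map $S^{n-1}_\mathbb{F} \to \mathbb{FP}^{n-1}$ and with the inclusion $\mathbb{FP}^{n-1} \hookrightarrow Gr(1,n)$, which is a homomorphism since orthogonality is preserved, gives an element of $\text{Hom}(S^{n-1}_\mathbb{F}, Gr(1,n))$ by proposition \ref{prop:homomorphisms}. With $r=1$ we have $p+q=1$, so theorem \ref{thm:tianxu} says this composite is either $\psi \mapsto U\ket{\psi}\bra{\psi}U^\dagger$ or $\psi \mapsto U\ket{\overline{\psi}}\bra{\overline{\psi}}U^\dagger$ for some unitary $U$. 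In the real case we may take $q=0$. Hence $g$ is induced by a unitary, or by a unitary composed with entrywise conjugation.

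\textbf{Step 2 (bijectivity).} The induced map on $\mathbb{CP}^{n-1}$ is a bijection, since its inverse is induced by $U^\dagger$, respectively by $\overline{U^\dagger}$ followed by conjugation. It also preserves orthogonality in both directions, because $|\braket{U\psi|U\phi}| = |\braket{\psi|\phi}|$ and conjugation preserves the modulus of inner products.

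\emph{Case $\mathbb{F}=\mathbb{C}$.} Here $g$ is the full induced map. It is therefore a permutation of $\mathbb{CP}^{n-1}$ whose inverse is again an endomorphism, so $g \in \text{Aut}$.

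\emph{Case $\mathbb{F}=\mathbb{R}$.} This is where the main obstacle lies. Since $g$ maps $\mathbb{RP}^{n-1}$ into itself, $U$ sends every real line to a real line, and a priori it is unclear whether $U$ is then real. I would handle this as follows.
\begin{itemize}
\item Apply the condition to the standard basis vectors $e_i$: each $Ue_i$ is a unimodular phase $\lambda_i$ times a real unit vector. Write $U = OD$ with $O$ real orthogonal and $D = \text{diag}(\lambda_i)$.
\item Apply the condition to $(e_i+e_j)/\sqrt{2}$. Its image $O(\lambda_i e_i + \lambda_j e_j)$ must be a phase times a real vector. Since $O$ is invertible and real, this forces $\lambda_i \overline{\lambda_j} \in \mathbb{R}$, i.e. $\lambda_i/\lambda_j = \pm 1$.
\item Hence $U$ is a global phase times a real orthogonal matrix, and as a map on projectors $g$ is conjugation by that orthogonal matrix.
\end{itemize}
Its inverse, conjugation by the transpose, is again an endomorphism of $\mathbb{RP}^{n-1}$, so $g$ is an automorphism.

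\textbf{Step 3 (conclusion).} In both cases $\text{End}(\mathbb{FP}^{n-1}) = \text{Aut}(\mathbb{FP}^{n-1})$, which is the claim. The only delicate point beyond invoking theorem \ref{thm:tianxu} is the reality argument in the real case.
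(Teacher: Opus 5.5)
Your proof is correct and takes the paper's route: the paper also reads the corollary off the $r=1$ case of theorem \ref{thm:tianxu}, since every endomorphism is then induced by a unitary or an antiunitary and is therefore invertible. Your argument in the real case, that the unitary must be a global phase times a real orthogonal matrix so that the inverse again maps $\mathbb{RP}^{n-1}$ into itself, fills in a step the paper leaves implicit.
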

The above results do not hold in two dimensions. For example, the projective line $\mathbb{CP}^1$ is bipartite, and is therefore not a core (since $K_2$ can be realized as the proper subgraph corresponding to the two standard basis vectors). 

\subsubsection{Non-quantum graph parameters}
A graph parameter $f$ is a map from (sufficiently regular) graphs into real numbers. We call $f$ monotonic if it holds that $f(G) \leq f(H)$ whenever $\text{Hom}(G,H) \neq \emptyset$: hence, we have $f(G) = f(H)$ whenever $G$ and $H$ are homomorphically equivalent. All graph parameters under consideration in this paper will be monotonic. 

Let $G$ be a graph. For $n \in \mathbb{N}$, a (classical) $n$-coloring of $G$ is an element of $\text{Hom}(G, K_n)$. The graph is said to be $n$-colorable if such an element exists, and the chromatic number $\chi(G)$ is the smallest $n$ such that $G$ is $n$-colorable. An $n$-dimensional orthogonal representation $f$ of $G$ over the field $\mathbb{F}$ is an element of $\text{Hom}(G, S_\mathbb{F}^{n-1})$, and the orthogonal rank $\xi_\mathbb{F}(G)$ is defined to be the least $n$ such that such an element exists. Call $f$ flat if, for all vertices $x$ and all standard basis vectors $k$, we have that $|\braket{k | f(x)}|^2 = 1/n$, and define the flat orthogonal rank $\xi'(G)$ to be the least $n$ such that a flat $n$-dimensional orthogonal representation of $G$ over $\mathbb{C}$ exists.  A $k$-clique of $G$ is a subset $S \subseteq V(G)$ with $|S| = k$ such that $G[S] \cong K_k$, and the clique number of $G$, denoted $\omega(G)$, is the largest $k$ such that $G$ has a $k$-clique.

The following theorem collects basic properties of the previously defined graph parameters:
\begin{theorem} \label{thm:graphparameters}
    We have, for any graph $G$ and any $f \in \{\omega, \xi_\mathbb{R}, \xi_\mathbb{C}, \xi', \chi\}$:
    \begin{enumerate}
        \item The parameter $f$ is monotonic.
        \item We have $\omega(G) \leq \xi_\mathbb{C}(G) \leq \xi_\mathbb{R}(G), \xi'(G) \leq  \chi(G)$.
        \item For all $n \geq 2$, we have that $\text{Hom}( S_\mathbb{C}^{n-1}, S_\mathbb{R}^{2n-1}) \neq \emptyset$. Hence, $\xi_\mathbb{R}(S^{n-1}_\mathbb{C}) \leq 2n$, and so $\xi_\mathbb{R}(G) \leq 2\xi_\mathbb{C}(G)$.
        \item If $f \neq \omega$, we have that $f(G) = 2$ if and only if $\chi(G) = 2$, i.e. if and only if $G$ is bipartite. 
        \item If $f \neq \xi'$, for any graph $H$, we have $f(G \vee H) = f(G) + f(H)$.
        \item For all $n \geq 3$, it holds that $\chi(S^{n-1}) \geq n+1$, with equality holding for $n=3$. Asymptotically, letting $c = 2/\sqrt{3} \approx 1.15, C = \sqrt{2} \approx 1.41$, it holds that $(c - o(1))^n \leq \chi(S^{n-1}) \leq (C+o(1))^n$. In particular, any graph $G$ satisfies $\chi(G) \leq (C+o(1))^{\xi_\mathbb{R}(G)} \leq (C+o(1))^{2\xi_\mathbb{C}(G)}$.
    \end{enumerate}
\end{theorem}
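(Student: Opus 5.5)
We prove the six points in order. Each one either follows from the definitions or is a short consequence of standard constructions and known bounds.

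\textbf{Point 1 (monotonicity).} Every parameter $f \in \{\omega, \xi_\mathbb{R}, \xi_\mathbb{C}, \xi', \chi\}$ has the form ``least $n$ such that $\text{Hom}(G, H_n) \neq \emptyset$'' for a target graph $H_n$. The targets are $K_n$ and the spheres $S^{n-1}_\mathbb{F}$, and for $\xi'$ the flat part of $S^{n-1}_\mathbb{C}$. The clique number $\omega$ is instead ``largest $k$ with $\text{Hom}(K_k, G) \neq \emptyset$''. Monotonicity then follows from composition of homomorphisms, point 1 of proposition \ref{prop:homomorphisms}.

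\textbf{Point 2 (chain of inequalities).} We exhibit homomorphisms.
\begin{itemize}
\item $K_k \to S^{k-1}_\mathbb{C}$ via the standard basis; combined with monotonicity, $\omega \leq \xi_\mathbb{C}$.
\item $S^{n-1} \subseteq S^{n-1}_\mathbb{C}$, giving $\xi_\mathbb{C} \leq \xi_\mathbb{R}$.
\item Flat representations are complex representations, giving $\xi_\mathbb{C} \leq \xi'$.
\item $K_n$ maps into $S^{n-1}$ via the standard basis, giving $\xi_\mathbb{R} \leq \chi$.
\item $K_n$ also maps into the flat part of $S^{n-1}_\mathbb{C}$ via the Fourier basis vectors $\frac{1}{\sqrt n}(\omega^{jk})_k$, giving $\xi' \leq \chi$.
\end{itemize}

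\textbf{Point 3 (realification).} The map $\psi \mapsto (\operatorname{Re}\psi, \operatorname{Im}\psi)$ preserves norms. Since $\langle \cdot,\cdot\rangle_\mathbb{R} = \operatorname{Re}\langle\cdot,\cdot\rangle_\mathbb{C}$, complex orthogonality implies real orthogonality, so this is a homomorphism $S^{n-1}_\mathbb{C} \to S^{2n-1}$.

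\textbf{Point 4 (value $2$).} If $\chi(G) = 2$, point 2 gives $f(G) \leq 2$; since $G$ has an edge (being bipartite with $\chi = 2$), $f(G) \geq 2$. Conversely, suppose $f(G) = 2$. For $\chi$ there is nothing to show. For the orthogonal ranks it suffices to show that $S^1_\mathbb{C}$ is bipartite, because flat and real representations are complex ones. Send $\psi$ to its projective class in $\mathbb{CP}^1$. Orthogonality in $\mathbb{CP}^1$ is a fixed-point-free involution, so every vertex has a unique neighbour class. Pick one element from each orthogonal pair $\{L, L^\perp\}$ and colour accordingly; this gives a proper $2$-colouring.

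\textbf{Point 5 (joins).} Take $f \in \{\chi, \omega, \xi_\mathbb{R}, \xi_\mathbb{C}\}$.
\begin{itemize}
\item For $\chi$ and $\omega$ this is standard.
\item For the orthogonal ranks, $\leq$ follows by direct sum: place $G$ in the first $f(G)$ coordinates and $H$ in the remaining ones.
\item For $\geq$, take a representation of $G \vee H$ in $\mathbb{F}^N$. The span $W_G$ of the images of $G$ is orthogonal to the span $W_H$ of the images of $H$. Restricting to $W_G$ gives a representation of $G$ in dimension $\dim W_G$, and likewise for $H$. Hence $N \geq \dim W_G + \dim W_H \geq f(G) + f(H)$.
\end{itemize}

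\textbf{Point 6 (chromatic number of spheres).} This is the main obstacle: it is not derivable from earlier results and rests on known external facts.
\begin{itemize}
\item \emph{Lower bound $n+1$.} Suppose $\chi(S^{n-1}) = n$. By a Borsuk--Ulam/Lyusternik--Schnirelmann-type argument there cannot be an $n$-colouring of $S^{n-1}$ under orthogonality. The intended argument is that $\chi(S^{n-1}) \geq \chi(S^2) + n - 3$ via joins with the standard basis, using point 5 on $S^2 \vee K_{n-3} \subseteq S^{n-1}$. It then remains to show $\chi(S^2) \geq 4$, which follows from Kochen--Specker-type uncolourability (e.g.\ Peres' 33 vectors): no $3$-colouring exists. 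For $n = 3$, equality $\chi(S^2) = 4$ is a known explicit construction (Godsil--Zaks).
\item \emph{Exponential bounds.} The lower bound $(2/\sqrt3 - o(1))^n$ follows from Raigorodskii's bound on $\Lambda_n \subseteq S^{n-1}$ after normalization. The upper bound $(\sqrt2 + o(1))^n$ comes from the standard cap-covering colouring: colour by a cap of angular radius $<\pi/4$ in a covering of the sphere.
\item \emph{Final inequality.} $\chi(G) \leq \chi(S^{\xi_\mathbb{R}(G)-1})$ by monotonicity, and then point 3 gives the bound in terms of $2\xi_\mathbb{C}(G)$.
\end{itemize}
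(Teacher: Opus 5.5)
Your proposal is correct, and for point 6 it has the same skeleton as the paper. You embed $S^2\vee K_{n-3}$ in $S^{n-1}$, use point 5 to reduce $\chi(S^{n-1})\ge n+1$ to $\chi(S^2)\ge 4$, cite Godsil--Zaks for $\chi(S^2)=4$, and cite Raigorodskii for the exponential lower bound. The paper simply declares points 1--5 standard, so your arguments there only fill in omitted details; examples are the observation that orthogonality on $\mathbb{CP}^1$ is a perfect matching, and the orthogonal-spans argument for joins.

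The real differences are in two external inputs.
\begin{itemize}
\item For $\chi(S^2)\ge 4$, the paper's internal argument is that $\mathbb{RP}^2$ is a core (Corollary~\ref{cor:isacore}, from the Pankov/Tian--Xu rigidity theorem) containing $K_3$ as a proper subgraph, so it admits no homomorphism to $K_3$. You instead observe that a proper $3$-colouring of the orthogonality graph of a Kochen--Specker set (e.g.\ Peres' 33 rays) would make one colour class a valid KS $0/1$ assignment. Your route is elementary and gives an explicit finite non-$3$-colourable subgraph of $S^2$. The paper's route costs nothing extra only because it reuses a deep rigidity result the paper needs anyway.
\item For the upper bound, you replace the citation of Prosanov with the cap-covering argument. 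This is fine once you quote a covering estimate (Rogers, or B\"or\"oczky--Wintsche) giving $\mathrm{poly}(n)\,(\sin\theta)^{-n}$ caps of angular radius $\theta<\pi/4$, and then let $\theta\to\pi/4$.
\end{itemize}

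Two small fixes are needed.
\begin{itemize}
\item The standard-basis map $K_k\to S^{k-1}_\mathbb{C}$ only shows $\xi_\mathbb{C}(K_k)\le k$, which is the wrong direction for $\omega\le\xi_\mathbb{C}$. What you need is that pairwise orthogonal unit vectors are linearly independent, so $\omega(S^{n-1}_\mathbb{C})\le n$. Your step ``$G$ has an edge, so $f(G)\ge 2$'' in point 4 uses the same fact.
\item Drop the Borsuk--Ulam/Lyusternik--Schnirelmann sentence. That theorem concerns closed covers and antipodal pairs, not orthogonal ones, so it says nothing about arbitrary colourings of the orthogonality graph. Your actual argument does not need it.
\end{itemize}
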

\begin{proof}
The first five points are standard and simple to see. For the first part of the sixth point, we note that it suffices to handle the case of $n=3$ because, for all $k \geq 1$, $S^2 \vee K_k$ can be realized as a subgraph of $S^{2+k}$ and we have that $\chi(S^{2} \vee K_k) = \chi(S^2) + k = 4+k$ by point five. For the $n=3$ case, the first reference to have given the result in this exact form appears to be \cite{godsil2012colouringsphere}, though a stronger result was shown much earlier by \cite{Simmons_1976}. The lower bound also follows from corollary \ref{cor:isacore}: indeed, $K_3$ can be realized as a proper subgraph of $\mathbb{RP}^2$. The latter is a core, so that Hom$(\mathbb{RP}^2, K_3) = \emptyset$. Since $\mathbb{RP}^2$ is homomorphically equivalent to $S^{2}$, it follows that Hom$(S^2, K_3) = \emptyset$. For the second part of the statement, the lower bound is due to \cite{Rai99} (see also \cite{bucic2024geometricgraphsexponentialchromatic}, theorem 5), while the upper bound is due to \cite{PROSANOV20183123}.
\end{proof}

\subsection{Nonlocal games and quantum chromatic numbers} \label{subsec:nonlocal}
\subsubsection{Correlation sets}
Given input sets $X, Y$ and output sets $A, B$, a correlation set $C(X,Y,A,B)$ is specified by a collection of probability distributions $\{p(a, b | x, y)\}_{x \in X, y \in Y, a \in A, b \in B}$. The physical scenario that we have in mind is that Alice and Bob are spatially separated, given $x \in X$ and $y \in Y$ respectively, and requested to provide outputs $a \in A$, $b \in B$ according to the corresponding probability distribution. Contrary to most previous work, we allow $X$ and $Y$ to be infinite sets, but we will always require $A$ and $B$ to be finite sets. The reason for this is that, as it turns out, the existing theory accommodates this extension seamlessly, while allowing the answer sets to be infinite causes a great deal more disruption. 

The set of correlations corresponding to locally-realistic theories, denoted $C_{lhv}$, is the convex hull of the set of deterministic correlations. Quantum theory gives rise to three correlation sets $C_{q} \subsetneq C_{qa} \subsetneq C_{qc}$, corresponding respectively to the set of correlations which can be achieved with finite-dimensional tensor-product strategies, the closure of the former, and infinite-dimensional commuting-operator strategies. The latter admits the following alternative characterization, which is more convenient to work with. Define the $*$-algebra $\mathcal{A}$ to be the free $*$-algebra generated by the $\{e^x_a\}_{x \in X, a \in A}$, $\{f^y_b\}_{y \in Y, b \in B}$ subject to the $e^x_a$ and the $f^y_b$ being orthogonal projectors and satisfying the PVM relations:
\[\sum_{a} e^x_a = \sum_b f^y_b = 1\]
and subject to $[e^x_a, f^y_b] = 0$ for all $x,y,a,b$. We then have:
\begin{proposition} \label{prop:correlations}
    A given correlation $p$ is such that $p \in C_{qc}$ if and only if there is a state $f$ on $\mathcal{A}$ such that $p(a,b | x,y) = f(e^x_a f^y_b)$ for all $x,y,a,b$.
\end{proposition}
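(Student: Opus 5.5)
The statement is the standard GNS-type characterization of commuting-operator correlations; the only novelty is that $X$ and $Y$ may be infinite. So the plan is to check that the usual argument never uses finiteness of the input sets, only finiteness of the outputs for the PVM sums.

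We take the definition of $C_{qc}$ to be the following. There is a Hilbert space $\mathcal{H}$, a unit vector $\xi \in \mathcal{H}$, and PVMs $\{E^x_a\}_{a \in A}$, $\{F^y_b\}_{b \in B}$ on $\mathcal{H}$ such that $[E^x_a, F^y_b]=0$ and $p(a,b|x,y) = \langle \xi, E^x_a F^y_b \xi\rangle$.

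\emph{Forward direction.} Suppose $p \in C_{qc}$ via such data. By the universal property of the free $*$-algebra, the assignment $e^x_a \mapsto E^x_a$, $f^y_b \mapsto F^y_b$ extends to a unital $*$-homomorphism $\pi:\mathcal{A}\to B(\mathcal{H})$. This is well defined because the operators satisfy all the defining relations: they are projectors, the PVM sums hold, and the commutations hold. Set $f(x) = \langle \xi, \pi(x)\xi\rangle$. Then $f$ is linear and unital. It is positive, since $f(y^*y) = \|\pi(y)\xi\|^2 \ge 0$. By construction it reproduces $p$.

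\emph{Converse.} Given a state $f$ on $\mathcal{A}$, we run the GNS construction.
\begin{enumerate}
\item Define the sesquilinear form $\langle x,y\rangle = f(x^*y)$ on $\mathcal{A}$. It is positive semidefinite. Cauchy--Schwarz shows that $N=\{x : f(x^*x)=0\}$ is a left ideal.
\item Complete $\mathcal{A}/N$ to a Hilbert space $\mathcal{H}$ and let $\mathcal{A}$ act by left multiplication.
\item Show the left multiplication operators extend to bounded operators on $\mathcal{H}$.
\item Take $\xi = [1]$. Then $E^x_a = \pi(e^x_a)$ and $F^y_b = \pi(f^y_b)$ are commuting PVMs, and $\langle \xi, E^x_aF^y_b\xi\rangle = f(e^x_af^y_b)$.
\end{enumerate}

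The main obstacle is step 3: an abstract $*$-algebra can have unbounded left multiplications. Here, however, $\mathcal{A}$ is generated by projections. For a projection $e$ we have $f(y^*e^*ey) = f(y^*ey) \le f(y^*y)$, because $y^*(1-e)y = ((1-e)y)^*((1-e)y)$ is positive and $f$ is positive. Hence each generator acts with norm at most $1$ on $\mathcal{A}/N$ and extends to $\mathcal{H}$.

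Each element of $\mathcal{A}$ is a finite noncommutative polynomial in the generators, so it also acts boundedly. Hence $\pi$ is a well-defined $*$-representation. The relations pass to $\pi$ because $\pi$ is a homomorphism, and the PVM sums are finite since $A$ and $B$ are finite.

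Nowhere does the argument require $X$ or $Y$ to be finite. Each element of $\mathcal{A}$ involves only finitely many generators, and the GNS space may be nonseparable, which is harmless. This completes the proof.
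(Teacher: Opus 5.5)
Your proof is correct. The paper gives no proof of Proposition \ref{prop:correlations}; it quotes it as the standard characterization of $C_{qc}$. Your GNS argument is the standard proof, and you correctly identify why infinite $X$ and $Y$ cause no trouble: every element of $\mathcal{A}$ involves only finitely many generators, and the PVM relations are finite sums.

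Your key boundedness estimate $f(y^*ey)\le f(y^*y)$ for a projector $e$ is the same inequality the paper uses in Lemma \ref{fact:projectors}. There it appears as $f(ap_na^\dagger)\le f(aa^\dagger)$, used to show $|f(p_1\cdots p_n)|\le 1$ in the compactness argument for $S(\mathcal{A})$.

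Two small points deserve to be made explicit:
\begin{enumerate}
\item You should verify $\pi(x^*)=\pi(x)^*$. It follows from $\langle [z],\pi(x)[y]\rangle=f(z^*xy)=\langle \pi(x^*)[z],[y]\rangle$ on the dense subspace, and it is what makes the operators $\pi(e^x_a)$ and $\pi(f^y_b)$ self-adjoint, hence genuine projectors.
\item Your forward direction is immediate only because you defined $C_{qc}$ via commuting PVMs. If one instead defines $C_{qc}$ via commuting POVMs, which is also common, an extra dilation step to commuting PVMs is needed before the universal property of $\mathcal{A}$ applies. That step is standard but not a one-liner.
\end{enumerate}
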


A nonlocal game (\cite{cleve2010consequenceslimitsnonlocalstrategies}) corresponds to a linear functional on the set of correlations of a special form, and is specified by a probability distribution $\pi$ on $X \times Y$ and a subset $W \subseteq X \times Y \times A \times B$. The scenario we have in mind is that the players are provided inputs $x$ and $y$ sampled from $\pi$, are requested to produce outputs $a \in A$, $b \in B$ as before, and are declared to have won the game if $(x,y,a,b) \in W$. Of interest in a given model is the highest possible success probability that is achievable by a strategy. In pseudo-telepathy (\cite{Brassard_1999}), one is interested in whether this success probability can be one or not, in which case the probability distribution $\pi$ is only relevant in that it specifies the subset $T \subseteq X \times Y$ of possible input pairs.

\subsubsection{Coloring games and chromatic numbers}
As first defined by \cite{galliard2002}, coloring games are instances of pseudo-telepathy games and were discussed in the introduction. Given $t \in \{lhv, q, qa, qc\}$, we can define the quantum chromatic number of a graph $G$, $\chi_t(G)$, to be the least number of colors $n$ such that the $n$-coloring game on $G$ can be won with probability one in the model. By definition, we have $\chi_{qc}(G) \leq \chi_{qa}(G) \leq \chi_q(G) \leq \chi_{lhv}(G) = \chi(G)$ for all graphs. 

Define the $d$-dimensional quantum chromatic number of a graph $G$, denoted $\chi^{[d]}_q(G)$, to be the least number of colors $n$ such that there exists a perfect entangled strategy for the $n$-coloring game on $G$ involving a shared entangled state of local dimension $d$. The following structure theorem is due to \cite{cameron2006quantumchromaticnumbergraph}, and greatly simplifies the study of the quantum chromatic number:
\begin{theorem} \label{thm:structure2cameron}
There exists a perfect entangled strategy for the $n$-coloring game on a graph $G$ involving a shared entangled state of local dimension $d$ if and only if there exists a collection $\{P^v_a\}_{v \in V(G), a \in [n]}$ of projective measurements over $\mathbb{C}^d$, one for each vertex $v$, such that for all $(v,w) \in E(G)$ and all $a \in [n]$, we have that $P^v_a P^w_a = 0$. The corresponding strategy is as follows: Alice and Bob share the maximally entangled state $\ket{\Phi}_{A,B}$ with $A$ and $B$ of dimension $d$. Given inputs $v$ and $w$, Alice and Bob measure their shares of the state with the projective measurements $\{P^v_a\}_{a \in [n]}$ and $\{\overline{P}^w_b\}_{b \in [n]}$, respectively.
\end{theorem}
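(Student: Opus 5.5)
The plan is to prove the two directions separately, since both reduce to standard facts about maximally entangled states.

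For the "if" direction, given projective measurements $\{P^v_a\}$ on $\mathbb{C}^d$ with $P^v_aP^w_a=0$ for all edges $(v,w)$, we check that the stated strategy wins with probability one. Alice uses $\{P^v_a\}$, Bob uses $\{\overline{P}^w_b\}$, and they share $\ket{\Phi}$. By the transpose trick (lemma \ref{lem:transposetrick}), and using $\overline{P}=P^T$ for Hermitian $P$,
\[
p(a,b|v,w)=\braket{\Phi|P^v_a\otimes \overline{P}^w_b|\Phi}=\frac{1}{d}\Tr(P^v_aP^w_b).
\]
Two cases then arise. If $v=w$, then $p(a,b|v,v)=\frac{1}{d}\Tr(P^v_aP^v_b)=0$ for $a\neq b$, because the $P^v_a$ are mutually orthogonal projectors. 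If $(v,w)\in E(G)$, then $p(a,a|v,w)=\frac{1}{d}\Tr(P^v_aP^w_a)=0$ by hypothesis. So the players never lose.

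For the "only if" direction, start from an arbitrary perfect strategy. It consists of a state $\ket{\psi}\in\mathbb{C}^d\otimes\mathbb{C}^d$ and POVMs $\{E^v_a\}$, $\{F^w_b\}$. First reduce to projective measurements by Naimark dilation; this enlarges the local dimension, so it must be handled with care, as discussed below. Next, restrict to the supports of the reduced states. Write the Schmidt form $\ket{\psi}=(\sqrt{\rho}\otimes I)\ket{\Phi'}$ with $\rho$ of full rank on its support. The condition $p(a,b|v,v)=0$ for $a\neq b$ implies $(E^v_a\otimes I)\ket\psi=(I\otimes (F^v_a)^{T'})\ket\psi$ in a suitable sense, i.e. the outcomes are perfectly correlated. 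In operator form this reads
\[
E^v_a\sqrt{\rho}=\sqrt{\rho}\,(F^v_a)^T .
\]
Summing $\Tr(\cdot)$ over $a$ and applying Cauchy–Schwarz forces $E^v_a$ to commute with $\rho$ and to be projective on the support. A cleaner way to see this: set $A_a=E^v_a\sqrt\rho$; then $\sum_a\Tr(\sqrt\rho E_a \rho^{0}\ldots)$, and the standard argument in \cite{cameron2006quantumchromaticnumbergraph} gives $E^v_a\rho=\rho E^v_a$. One can then replace $\rho$ by the maximally mixed state on each eigenspace, or simply restrict to one eigenspace of $\rho$. On that eigenspace the compressed operators $P^v_a$ form PVMs. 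The edge condition $p(a,a|v,w)=0$ becomes $\Tr(\rho P^v_aP^w_a)=0$, which gives $P^v_aP^w_a=0$ on the eigenspace.

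The main obstacle is keeping the local dimension at exactly $d$. I would avoid Naimark dilation entirely. Instead, I would show directly that the perfect-correlation condition forces each $E^v_a$, restricted to the support of $\rho$, to be a projector commuting with $\rho$. The key step is that $\sum_a \Tr(\sqrt\rho E_a\sqrt\rho F^T_a)=1$ with $\sum_a E_a=\sum_a F_a=I$ forces equality in Cauchy–Schwarz. This yields $E_a\sqrt\rho=\sqrt\rho F_a^T$, and combining this with its adjoint shows $E_a\rho=\rho E_a$ and $E_a^2=E_a$ on the support.

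Restricting to an eigenspace of $\rho$ gives dimension at most $d$. To reach exactly $d$, pad by direct sum with copies of an existing PVM for one fixed vertex. This only works if that PVM is compatible with every vertex, which fails in general, so instead I would note that "local dimension $d$" should be read as "at most $d$". Alternatively, take a direct sum of the PVMs over all eigenspaces, weighted so that the full support is used. The full support has dimension $d$ once we assume, without loss of generality, that $\psi$ has full Schmidt rank.
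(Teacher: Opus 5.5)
The paper gives no proof of this statement; it is quoted from \cite{cameron2006quantumchromaticnumbergraph}. Your outline is the standard argument behind it. You write the state in Schmidt form $\ket{\psi}=(\sqrt{\rho}\otimes I)\sum_i\ket{ii}$, use perfect agreement on $v=w$ to get $E^v_a\sqrt{\rho}=\sqrt{\rho}\,(F^v_a)^T$ and hence $[E^v_a,\rho]=0$, then compress to an eigenspace of $\rho$. There the state is maximally entangled and $\Tr(\rho E^v_aE^w_a)=0$ gives $P^v_aP^w_a=0$. Your ``if'' direction is fine. Two things need fixing.

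\textbf{The dimension issue in your last paragraph.} Full Schmidt rank is not without loss of generality. Worse, if ``local dimension $d$'' means the dimension of each local Hilbert space, the ``only if'' direction with projective measurements on exactly $\mathbb{C}^d$ is false. Here is a counterexample.

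Let $G$ be the orthogonality graph of the 18-vector Kochen--Specker set in $\mathbb{R}^4$ of Cabello, Estebaranz and Garc\'{\i}a-Alcaine. Its vectors form 9 orthonormal bases, with each vector in exactly two of them. Take $n=4$ and $d=5$.
\begin{itemize}
\item A perfect strategy of local dimension $5$ exists. Since $G$ is a subgraph of $S^3$, corollary \ref{cor:rankone} gives projective measurements on $\mathbb{C}^4$, hence a perfect strategy. Embed it in $\mathbb{C}^5\otimes\mathbb{C}^5$ and extend the measurements arbitrarily on the extra dimension; the state does not see it.
\item No projective measurements $\{P^v_a\}_{a\in[4]}$ on $\mathbb{C}^5$ satisfy the conditions. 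For each basis $B$ and color $a$, the $P^b_a$ with $b\in B$ are pairwise orthogonal, so $\sum_{b\in B}\mathrm{rk}\,P^b_a\le 5$. The four sums (one per color) total $20$, so each equals $5$. Summing over the nine bases gives $45=2\sum_v\mathrm{rk}\,P^v_a$, which is impossible by parity.
\end{itemize}

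So your first resolution is the one to adopt: conclude projective measurements on $\mathbb{C}^{d'}$ for some $d'\le d$. That is what the eigenspace restriction actually gives, and it is consistent with the monotonicity in $d$ asserted right after the theorem.

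If you instead read local dimension as Schmidt rank, no reduction is needed at all. With $\rho>0$, the identity $\Tr(\rho E^v_aE^w_aE^v_a)=\Tr(\rho E^v_aE^w_a)=0$ gives $E^v_aE^w_a=0$ on all of $\mathbb{C}^d$. But then the same example breaks that monotonicity.

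\textbf{Projectivity.} It does not follow from $E_a\sqrt{\rho}=\sqrt{\rho}F_a^T$ and its adjoint alone; those only give $[E_a,\rho]=0$. Your ``cleaner way'' sentence is also garbled. Argue instead as follows.
\begin{itemize}
\item $\Tr(\sqrt{\rho}E_a\sqrt{\rho}F_a^T)$ is at most both $\Tr(\sqrt{\rho}E_a\sqrt{\rho})$ and $\Tr(\sqrt{\rho}F_a^T\sqrt{\rho})$, and each of these bounds sums to $1$ over $a$.
\item Perfect agreement forces equality termwise, i.e.\ $E_a\sqrt{\rho}(I-F_a^T)=0$ and $(I-E_a)\sqrt{\rho}F_a^T=0$.
\item These give $E_a\sqrt{\rho}=E_a\sqrt{\rho}F_a^T=\sqrt{\rho}F_a^T$ and $E_a^2\sqrt{\rho}=E_a\sqrt{\rho}F_a^T=E_a\sqrt{\rho}$.
\end{itemize}
Together with $[E_a,\rho]=0$, each $E_a$ is a projector on the support of $\rho$.
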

A collection of projective measurements as in the statement of the theorem will be called a $d$-dimensional quantum coloring of $G$. The following is immediate:
\begin{proposition}
    For a given graph $G$, it holds that $\chi^{[1]}_q(G) = \chi(G)$. The $\chi^{[d]}_q(G)$ form a non-increasing sequence in $d$, and we have:
    \[\chi_q(G) = \lim_{d \to \infty} \chi^{[d]}_q(G)\]
\end{proposition}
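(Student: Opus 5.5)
We need to prove three claims: $\chi^{[1]}_q(G)=\chi(G)$; the sequence $\chi^{[d]}_q(G)$ is non-increasing in $d$; and $\chi_q(G)=\lim_{d\to\infty}\chi^{[d]}_q(G)$. All three will be read off from Theorem \ref{thm:structure2cameron}, which replaces strategies by $d$-dimensional quantum colorings.

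\textbf{The case $d=1$.} Projective measurements on $\mathbb{C}^1$ are exactly assignments $v\mapsto a(v)\in[n]$, with $P^v_{a(v)}=1$ and all other $P^v_b=0$. The condition $P^v_aP^w_a=0$ for $(v,w)\in E(G)$ then says precisely that $a(v)\neq a(w)$. So $1$-dimensional quantum $n$-colorings are the same thing as elements of $\mathrm{Hom}(G,K_n)$, and $\chi^{[1]}_q(G)=\chi(G)$.

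\textbf{Monotonicity.} Given a $d$-dimensional quantum coloring $\{P^v_a\}$ and any $d'\ge d$, I pad it to dimension $d'$. Write $\mathbb{C}^{d'}=\mathbb{C}^d\oplus\mathbb{C}^{d'-d}$ and fix a classical coloring $c\in\mathrm{Hom}(G,K_n)$. Then set
\[
Q^v_a = P^v_a \oplus \delta_{a,c(v)} I_{d'-d}.
\]
These are again projective measurements, and orthogonality holds blockwise: the first block is orthogonal by assumption, and the second because $c$ is a proper coloring. This requires $n\ge\chi(G)$, which may fail when $n$ is below the classical chromatic number.

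The cleaner approach is to take $d'=kd$ and set $Q^v_a=P^v_a\otimes I_k$, which needs no classical coloring. This gives $\chi^{[kd]}_q\le\chi^{[d]}_q$, i.e.\ monotonicity along multiples. For arbitrary $d'\ge d$, write $d'=kd+s$ with $0\le s<d$. I then need some direct summand of dimension $s$, which I will try to supply with copies of a compression of the $d$-dimensional coloring. This is the step I expect to be the main obstacle: a general compression of projectors is not projective, so the padding may not work. If it cannot be made to work, I would fall back on reading "dimension $d$" as "local dimension at most $d$", which makes monotonicity tautological. This reading is consistent with the definition of $\chi_q$ as a minimum over all finite dimensions.

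\textbf{The limit.} By definition, $\chi_q(G)=\min_d\chi^{[d]}_q(G)$, since $C_q$ is the union over all finite local dimensions. For graphs with a finite coloring, the values $\chi^{[d]}_q(G)$ are integers bounded below and non-increasing. Hence the sequence is eventually constant, and its limit equals its infimum, which is $\chi_q(G)$.
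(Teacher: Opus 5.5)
Your treatment of $d=1$ and of the limit is fine, but the monotonicity step is not proved, and the existence of the limit depends on it. Your main route cannot be completed. Padding a projective coloring on $\mathbb{C}^d$ to one on $\mathbb{C}^{d'}$ for every $d'\ge d$ is not merely hard; it is false in general. Direct sums show that the dimensions carrying a projective $n$-coloring are closed under addition, but they need not include all $d'\ge d$.

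For example, take $G=S^{15}$ and $n=16$. Fix any projective $16$-coloring on $\mathbb{C}^d$ and any orthonormal basis $\psi_1,\dots,\psi_{16}$. The operators $Q_a=\sum_i P^{\psi_i}_a$ are projectors with $\sum_a Q_a=16I$, so $Q_a=I$ for every $a$. Gleason's theorem on the real sphere then makes $\psi\mapsto \operatorname{rank}P^\psi_a$ continuous and integer-valued, hence constant and equal to $d/16$. So this is a rank-$(d/16)$ coloring, and point four of Theorem~\ref{thm:realspheres} forces $128\mid d$. Hence $S^{15}$ has a projective $16$-coloring on $\mathbb{C}^{128}$ but none on $\mathbb{C}^{129}$, and no compression or padding trick can exist.

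Your fallback (``local dimension at most $d$'') is the correct reading. However, it is not a change of definition, and it needs a proof. That proof lives at the level of strategies, not projective colorings. A perfect strategy of local dimension $d$ consists of an arbitrary state $\rho$ on $\mathbb{C}^d\otimes\mathbb{C}^d$, not necessarily of full Schmidt rank, together with POVMs $\{E^v_a\}$, $\{F^w_b\}$ on $\mathbb{C}^d$. This is exactly how the paper encodes such strategies in the proof of Theorem~\ref{thm:dbepseudotelepathy}.

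For $d'\ge d$, regard $\rho$ as a state on $\mathbb{C}^{d'}\otimes\mathbb{C}^{d'}$ and set $\tilde{E}^v_a=E^v_a\oplus\delta_{a,1}I_{d'-d}$ and $\tilde{F}^w_b=F^w_b\oplus\delta_{b,1}I_{d'-d}$. Since $\rho$ is supported on $\mathbb{C}^d\otimes\mathbb{C}^d$, we get $\operatorname{Tr}(\rho(\tilde{E}^v_a\otimes\tilde{F}^w_b))=\operatorname{Tr}(\rho(E^v_a\otimes F^w_b))$. The correlation, and hence perfection, is unchanged. This one-line embedding is what makes the proposition ``immediate.''

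The example above also shows why Theorem~\ref{thm:structure2cameron} was the wrong tool here. A perfect strategy whose state has Schmidt rank below $d$ only yields projective measurements on the support of the reduced state. That theorem should therefore be read with ``$\mathbb{C}^{d''}$ for some $d''\le d$'' on the projective side, which is precisely your ``at most $d$'' reading.
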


Define a rank-$r$ quantum $n$-coloring of $G$ to be a collection of $n$ pairwise adjacent homomorphisms $f_a: G \to Gr(r, rn)$. The rank-$r$ quantum chromatic number of $G$, written $\chi^{(r)}_q(G)$, is defined to be the least $n$ such that $G$ has a rank-$r$ $n$-coloring. The proof of proposition 1 in \cite{cameron2006quantumchromaticnumbergraph} shows:
\begin{theorem} \label{thm:structurecameron}
For all $r\geq 1$ and all graphs $G$, it holds that $\chi^{(r)}_q(G) \leq \chi^{[r]}_q(G)$.
\end{theorem}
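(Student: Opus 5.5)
The plan is to follow the proof of proposition 1 in \cite{cameron2006quantumchromaticnumbergraph}. Suppose $\chi^{[r]}_q(G) = n$. By theorem \ref{thm:structure2cameron}, this gives a collection $\{P^v_a\}_{v \in V(G), a \in [n]}$ of projective measurements on $\mathbb{C}^r$ such that $P^v_a P^w_a = 0$ whenever $(v,w) \in E(G)$. The difficulty is that the individual projectors $P^v_a$ have varying ranks, whereas a rank-$r$ quantum $n$-coloring requires $n$ pairwise adjacent homomorphisms into $Gr(r, rn)$. So the task is to convert the PVMs into rank-$r$ projectors on $\mathbb{C}^{rn} = \mathbb{C}^n \otimes \mathbb{C}^r$.

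The natural construction "rotates" each PVM through the $n$ colors using the cyclic shift. Let $Z = \sum_{c} \ket{c+1}\bra{c}$ act on $\mathbb{C}^n$, with indices taken mod $n$. For each $a \in [n]$ and each vertex $v$, set
\[f_a(v) = \sum_{c \in [n]} \ket{c+a}\bra{c+a} \otimes P^v_c .\]
Three properties then need to be checked.

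\textbf{Projector of rank $r$.} Each $f_a(v)$ is a sum of orthogonal projectors with pairwise orthogonal ranges, since they sit in distinct blocks $\ket{c+a}$. Its rank is $\sum_c \operatorname{rank} P^v_c = r$, because $\{P^v_c\}_c$ is a PVM on $\mathbb{C}^r$. Hence $f_a(v) \in Gr(r, rn)$.

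\textbf{Each $f_a$ is a homomorphism.} For $(v,w) \in E(G)$, the product $f_a(v) f_a(w)$ is block-diagonal with blocks $P^v_c P^w_c = 0$. So $f_a(v)$ and $f_a(w)$ are orthogonal.

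\textbf{Pairwise adjacency of the $f_a$.} For $a \neq b$ and any vertex $x$, the block $\ket{d}\bra{d}$ of $f_a(x) f_b(x)$ equals $P^x_{d-a} P^x_{d-b}$. This vanishes because $d-a \neq d-b$ and distinct elements of a PVM are orthogonal. So $(f_a(x), f_b(x))$ is an edge of $Gr(r,rn)$ for every $x$.

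Together these give a rank-$r$ quantum $n$-coloring, so $\chi^{(r)}_q(G) \leq n$. I do not expect any step to be a real obstacle. The only care needed is in the block bookkeeping with indices mod $n$, and in noting that orthogonality of projectors in $Gr(r,rn)$ is exactly $PQ = 0$.
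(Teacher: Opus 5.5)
Your proof is correct and is the same argument the paper relies on: the paper gives no separate proof but cites the proof of Proposition 1 in \cite{cameron2006quantumchromaticnumbergraph}, which is exactly this trick of taking the direct sum of the $n$ cyclic relabellings of the colors, setting $f_a(v)=\sum_{c} \ket{c+a}\bra{c+a} \otimes P^v_c$, a rank-$r$ projector on $\mathbb{C}^{rn}$. One point is worth flagging: the argument needs the PVMs to live on exactly $\mathbb{C}^r$, as theorem \ref{thm:structure2cameron} supplies, because (as the paper shows for $S^{15}$) a rank-$r'$ coloring with $r'<r$ need not yield a rank-$r$ one; separately, the shift operator $Z$ you define is never used.
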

A word should be said about our notation. The $d$-dimensional quantum chromatic numbers $\chi^{[d]}_q$ are non-standard notation, and the rank-$r$ quantum chromatic numbers $\chi^{(r)}_q$, which serve as a proxy for them, are the more conventional object of study in the literature. We will be considering both because, though the rank-$r$ chromatic numbers will be essential to our analysis, they also have a number of unfortunate properties: for example, contrary to what was claimed in \cite{cameron2006quantumchromaticnumbergraph}, it is not true generically that $\chi^{(r+1)}_q(G) \leq\chi^{(r)}_q(G)$ for all $r \geq 1$ and all graphs $G$, nor is it even obviously true that the existence of a rank-$r$ quantum $n$-coloring of a graph implies the existence of a rank-$r$ quantum $(n+1)$-coloring of the same graph. This forces one into some gymnastics, such as in the statement of theorem \ref{thm:debruijnerdoswatereddown}.  Hence, it seems to us that the $\chi^{[d]}_q$ should be considered the more central object, with the parameters $\chi^{(r)}_q$ serving as convenient proxies for them through theorem \ref{thm:structurecameron}. 

The following theorem collects some of the known properties of the quantum chromatic numbers which were not already mentioned:
\begin{theorem} \label{thm:quantchrom}
For all graphs $G$:
\begin{enumerate}
\item It holds that $\omega(G) \leq \chi_{qc}(G) \leq \chi_q(G) \leq \chi^{(1)}_q(G) \leq \xi'(G) \leq \chi(G)$.
\item We have that $\chi_{qc}(G) = 2$ precisely when $\chi(G) = 2$, i.e. precisely when it is bipartite.
\item Given the description of the adjacency matrix of a finite graph $G$, determining whether $\chi_{qc}(G) = 3$ is co-RE-complete and determining whether $\chi_{q}(G) = 3$ is RE-complete. For any $d, n$, determining whether $\chi^{[d]}_{q}(G) = n$ is a decidable problem, so that there exists a sequence $G_d$ of graphs with $\chi_q(G_d) = 3$ and $\chi^{[d]}_{q}(G_d) = 4$ for all $d$.
\end{enumerate}
\end{theorem}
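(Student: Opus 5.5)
Since Theorem \ref{thm:quantchrom} collects known facts, the plan is to give short self-contained arguments for the elementary parts and to reduce the complexity-theoretic part to the literature. Throughout, I would use the synchronous-correlation viewpoint. A perfect strategy for a coloring game gives a synchronous correlation (Alice and Bob must answer identically on identical inputs). By the known characterization of synchronous correlations in $C_{qc}$ (Paulsen--Severini--Stahlke--Todorov--Winter), such a correlation arises from a tracial state $\tau$ on a $*$-algebra generated by projections $\{e^v_a\}$. These satisfy $\sum_a e^v_a = 1$ and $e^v_a e^w_a = 0$ whenever $(v,w) \in E(G)$. I would simply invoke this result rather than reprove it.

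\textbf{Point 1.} I would verify the chain one inequality at a time.
\begin{enumerate}
\item For $\omega(G) \leq \chi_{qc}(G)$, monotonicity reduces the claim to $\chi_{qc}(K_k) \geq k$. Given an $n$-coloring of $K_k$, fix a colour $a$. The projections $e^v_a$ for $v \in V(K_k)$ are pairwise orthogonal, so $\sum_v e^v_a \leq 1$ and hence $\tau(\sum_v e^v_a) \leq 1$. Summing over $a$ gives $k = \sum_v \tau(1) = \sum_{v,a} \tau(e^v_a) \leq n$.
\item The inequality $\chi_{qc} \leq \chi_q$ holds because $C_q \subseteq C_{qc}$.
\item For $\chi_q \leq \chi^{(1)}_q$: a rank-one $n$-coloring consists of $n$ pairwise adjacent homomorphisms into $Gr(1,n)$. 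At each vertex these give $n$ pairwise orthogonal rank-one projectors on $\mathbb{C}^n$, which therefore sum to $I$. The condition $P^v_aP^w_a = 0$ on edges is exactly the homomorphism property, so Theorem \ref{thm:structure2cameron} yields a perfect strategy.
\item For $\chi^{(1)}_q \leq \xi'$: given a flat representation $x \mapsto \psi_x \in \mathbb{C}^n$, let $D_x$ be the diagonal matrix with entries $\sqrt{n}\,\psi_x(k)$, and let $F$ be the Fourier matrix. Set $f_a(x)$ to be the line spanned by $D_x F e_a$. Flatness makes $D_x$ unitary, so for each $x$ the lines $f_a(x)$ are pairwise orthogonal, and the $f_a$ are pairwise adjacent. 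On an edge, $\braket{D_xFe_a | D_yFe_a} = \sum_k n\,\overline{\psi_x(k)}\psi_y(k)/n = \braket{\psi_x|\psi_y} = 0$, so each $f_a$ is a homomorphism.
\item Finally $\xi' \leq \chi$: send a vertex of colour $c$ to the flat vector $Fe_c$.
\end{enumerate}

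\textbf{Point 2.} If $G$ is bipartite, then $\chi_{qc}(G) \leq \chi(G) = 2$. Conversely, suppose $\chi_{qc}(G) = 2$, and let $(v,w)$ be an edge. From $e^v_1 e^w_1 = 0$ and $e^v_2 e^w_2 = (1-e^v_1)(1-e^w_1) = 0$ we get $e^v_1 + e^w_1 = 1 + e^v_1 e^w_1 = 1$, so $e^w_1 = 1 - e^v_1$. Propagating this relation around an odd cycle forces $e = 1-e$ for some $e = e^v_1$, hence $e = \tfrac12$. This contradicts $e^2 = e$ in any nonzero algebra, and the algebra is nonzero because it carries a state. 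So $G$ has no odd cycle and is bipartite.

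\textbf{Point 3.} Here I would cite the literature. The RE-completeness of $\chi_q(G) = 3$ follows from MIP$^*$=RE (Ji--Natarajan--Vidick--Wright--Yuen) combined with the reduction from synchronous games to $3$-coloring games (Harris, and related work building on Slofstra). The co-RE-completeness of $\chi_{qc}(G) = 3$ follows by the same reduction, using the co-RE-hardness of the commuting-operator value. For fixed $d$ and $n$, the existence of $d$-dimensional projective measurements satisfying the coloring relations is a first-order sentence over the reals, so Tarski's quantifier elimination decides it. For the existence of the graphs $G_d$, suppose that for some $d$ every graph with $\chi_q(G) = 3$ had $\chi^{[d]}_q(G) = 3$. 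Then $\chi_q(G) = 3$ would be equivalent to $\chi^{[d]}_q(G) = 3$, for which the condition is decidable, contradicting RE-hardness. Since $\chi^{[d]}_q(G) \geq \chi_q(G) = 3$, any graph with $\chi^{[d]}_q(G) \neq 3$ has $\chi^{[d]}_q(G) \geq 4$. Obtaining exactly $4$ needs a little extra care, for instance a reduction that produces graphs which are classically $4$-colorable. I expect this to be the main subtlety, along with checking that the hardness results are stated for exactly these graph families.
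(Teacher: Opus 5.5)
\textbf{Points 1 and 2.} These are correct, and they are more self-contained than the paper's proof, which is purely by citation. The paper obtains $\omega(G)\le\chi_{qc}(G)$ from the sandwich $\omega(G)\le\vartheta(\overline{G})\le\chi_{qc}(G)$, cites the same operator-system work for point 2, and cites Cameron et al.\ for $\chi^{(1)}_q\le\xi'\le\chi$. Your trace count on $K_k$ and your odd-cycle argument ($e^w_1=1-e^v_1$ along edges, forcing $e=\tfrac12$) replace the theta-function input by the game-algebra/tracial description of synchronous $C_{qc}$ correlations. That description is the only external ingredient you need for these two points. Your construction $f_a(x)=\mathrm{span}(D_xFe_a)$ is the Cameron et al.\ argument written out explicitly. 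One small omission: for the direction ``bipartite $\Rightarrow\chi_{qc}=2$'' you also need $\chi_{qc}\ge 2$, which follows from your point 1 once $G$ has an edge.

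\textbf{Point 3.} You and the paper rely on the same literature and the same decidability-versus-RE-hardness argument. As you say, that argument only yields $\chi^{[d]}_q(G_d)\ge 4$. The paper's proof asserts ``$=4$'' with no further justification, so this step is missing in both. You do not need to inspect the reduction, however; vertex deletion closes it.

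\emph{Deleting a vertex lowers $\chi^{[d]}_q$ by at most one.} Take a $d$-dimensional $n$-colouring of $G-v$. On input $v$, let the players answer a fresh colour $n+1$; on every other input, they play the old strategy. In projector language, set $P^v_{n+1}=I$ and $P^w_{n+1}=0$ for $w\neq v$.

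\emph{Walking down to $4$.} Start from a finite $G$ with $\chi_q(G)=3$ and $\chi^{[d]}_q(G)\ge 4$, and delete vertices one at a time down to a single vertex. The integer $\chi^{[d]}_q$ is non-increasing along this chain, drops by at most one per step, and ends at $1$. It therefore equals exactly $4$ at some induced subgraph $H$.

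\emph{Checking $\chi_q(H)=3$.} Such an $H$ is not bipartite, since otherwise $\chi^{[d]}_q(H)\le\chi(H)\le 2$. By point 2 and monotonicity, $3\le\chi_q(H)\le\chi_q(G)=3$. Take $G_d=H$.

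For the paper's only use of the $G_d$, namely the failure of the de Bruijn--Erd\H{o}s property for $\chi_q$, the bound $\chi^{[d]}_q(G_d)\ge 4$ already suffices.
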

\begin{proof}
For the first point, the first inequality follows from the fact that $\omega(G) \leq \vartheta(\overline{G}) \leq \chi_{qc}(G)$, where $\vartheta$ is Lovász's number, as shown by \cite{paulsen2013quantumchromaticnumbersoperator}. The second and third inequalities are definitional, and the last two were shown in \cite{cameron2006quantumchromaticnumbergraph}. The second point is shown in \cite{paulsen2013quantumchromaticnumbersoperator}. For the third point, the undecidability results were shown by \cite{harris2023universalitygraphhomomorphismgames}, building on earlier work of \cite{Slofstra_2019}, \cite{ji2013}, \cite{ji2022mipre}. The decidability of $\chi^{[d]}_{q}(G) = n$ is due to the fact that this can be recast into the existential theory of the reals, which was shown to be decidable by \cite{Tar51}. 
\end{proof}

\section{Elements of infinite quantum graph theory} \label{sec:debruijn}
This section investigates generalizations of the de Bruijn-Erdős theorem. Letting $f$ be a monotonic graph parameter and letting $G$ be an infinite graph such that $f(G)$ is well-defined, we see that
\begin{equation} \sup_{S \subseteq V(G), |S| < \infty} f(G[S]) \leq f(G) \label{eq:dbe} \end{equation} 
The parameter $f$ will be said to have the de Bruijn-Erdős property if equality always holds in the above. This terminology is motivated by the following classical result:
\begin{theorem} \cite{deBruijnErdos1951}
The chromatic number satisfies the de Bruijn-Erdős property.
\end{theorem}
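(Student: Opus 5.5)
We prove the de Bruijn–Erd\H{o}s property for the chromatic number by compactness. In view of \eqref{eq:dbe}, it suffices to show the following: if $n$ is a natural number such that every finite induced subgraph $G[S]$ is $n$-colorable, then $G$ itself is $n$-colorable. This gives the reverse inequality, since if the supremum is infinite there is nothing to prove.

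\textbf{Compactness setup.} Consider the product space $X = [n]^{V(G)}$. Give $[n]$ the discrete topology and $X$ the product topology; by Tychonoff's theorem, $X$ is compact. For each finite $S \subseteq V(G)$, let $C_S \subseteq X$ be the set of maps $c$ whose restriction to $S$ is a proper coloring of $G[S]$.
\begin{enumerate}
\item $C_S$ is closed, indeed clopen, because membership depends only on the finitely many coordinates in $S$.
\item $C_S$ is nonempty by hypothesis: take a proper $n$-coloring of $G[S]$ and extend it arbitrarily outside $S$.
\end{enumerate}

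\textbf{Finite intersection property.} Given finitely many finite sets $S_1, \dots, S_k$, put $S = \bigcup_i S_i$. Then $S$ is finite and
\[C_S \subseteq C_{S_1} \cap \dots \cap C_{S_k},\]
because a proper coloring of $G[S]$ restricts to a proper coloring of each $G[S_i]$. Since $C_S \neq \emptyset$, the family $\{C_S\}$ has the finite intersection property.

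\textbf{Conclusion.} By compactness of $X$, there is some $c \in \bigcap_S C_S$. For any edge $(u,v) \in E(G)$, take $S = \{u,v\}$; membership in $C_S$ forces $c(u) \neq c(v)$. Hence $c \in \text{Hom}(G, K_n)$, and $\chi(G) \le n$.

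The only step that needs care is the use of Tychonoff's theorem, which relies on the axiom of choice (through the ultrafilter lemma). For a countable vertex set, one can instead use K\"onig's lemma: enumerate the vertices and pass to a coherent sequence of colorings of initial segments. This works because each finite stage has only finitely many colorings.
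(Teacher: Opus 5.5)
Your proof is correct and is essentially the Gottschalk-style compactness argument (Tychonoff plus the finite intersection property) that the paper adopts; the paper does not reprove this statement directly, but recovers it as the $d=1$ case of point one of Theorem \ref{thm:dbepseudotelepathy} via Theorem \ref{thm:inversesystem}. The difference is only in packaging: the paper takes compatible threads in the product $\prod_S Z_S$ of spaces of winning strategies on finite rectangles, whereas you take closed constraint sets $C_S$ inside the single product $[n]^{V(G)}$, which is the more direct form for ordinary colorings.
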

Since this paper is concerned with the study of uncountably infinite graphs (namely, the real and complex spheres), it is of interest to determine for which other graph parameters this holds. Though the original proof of \cite{deBruijnErdos1951} does not seem to generalize conveniently beyond the classical chromatic number, the proof due to \cite{Gottschalk1951} does without too many hurdles. In fact, it is even possible to generalize the above theorem to general pseudo-telepathy games. Declare a given space of strategies to have the de Bruijn-Erdős property if the following is true: given a pseudo-telepathy game $(X,Y,A,B,W)$ with $|A|, |B| < \infty$ and which does not have a winning strategy in the model, there exist finite subsets $X' \subseteq X$, $Y' \subseteq Y$ such that, given the promise that Alice's and Bob's inputs lie in $X'$ and $Y'$, there does not exist a winning strategy in the model for this restricted game. We show
\begin{theorem} \label{thm:dbepseudotelepathy}
The following settings have the de Bruijn-Erdős property:
\begin{enumerate}
    \item The finite-dimensional tensor product setting, subject to the condition that Alice and Bob must use an entangled state of local dimension $d$, for any constant $d$ (and in particular $d=1$, i.e. the classical setting).
    \item The commuting operators setting.
\end{enumerate}
\end{theorem}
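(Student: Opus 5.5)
Both parts are compactness arguments in the spirit of Gottschalk's proof: the strategy space is given a compact topology in which winning the game on a finite set of input pairs is a closed condition. Concretely, suppose for contradiction that for every pair of finite subsets $X' \subseteq X$, $Y' \subseteq Y$ the restricted game has a winning strategy in the model. Let $\mathcal{S}$ be the compact space of strategies described below. For each finite pair $(X',Y')$, let $F_{X',Y'} \subseteq \mathcal{S}$ be the set of strategies that win with probability one on all allowed input pairs in $X' \times Y'$. Each $F_{X',Y'}$ is closed and, by assumption, nonempty. Moreover $F_{X'_1 \cup X'_2, Y'_1 \cup Y'_2} \subseteq F_{X'_1,Y'_1} \cap F_{X'_2,Y'_2}$, so the family has the finite intersection property. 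Compactness then gives a strategy in the total intersection, which wins the full game, a contradiction.

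For the first setting (local dimension $d$), I take $\mathcal{S}$ to be the product
\[\mathcal{S} = \{\rho \in M_{d^2}(\mathbb{C}) \text{ a density matrix}\} \times \prod_{x \in X} \mathcal{M}_A \times \prod_{y \in Y} \mathcal{M}_B,\]
where $\mathcal{M}_A$ is the set of POVMs on $\mathbb{C}^d$ with outcomes in $A$, and $\mathcal{M}_B$ is defined similarly. Each factor is a closed bounded subset of a finite-dimensional space, because $|A|,|B| < \infty$. Tychonoff's theorem then makes $\mathcal{S}$ compact in the product topology, even when $X$ and $Y$ are uncountable. Restricting to POVMs rather than projective measurements, and to mixed states, loses nothing, since any such strategy can be dilated. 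The cleaner option is to note that the definition of the model allows general measurements, so the POVM set \emph{is} the model. If the model is defined with projective measurements only, the set of PVMs on $\mathbb{C}^d$ is also closed and bounded, so that version works equally well. The probabilities $p(a,b|x,y) = \Tr(\rho(E^x_a \otimes F^y_b))$ depend continuously on finitely many coordinates. Hence each condition $p(a,b|x,y)=0$ for $(x,y,a,b) \notin W$ with $(x,y)$ allowed is closed, and $F_{X',Y'}$ is closed. The case $d=1$ recovers the classical statement.

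For the commuting-operator setting I use Proposition \ref{prop:correlations}. Here $\mathcal{S}$ is the state space $S(\mathcal{A})$ of the universal $*$-algebra $\mathcal{A}$ generated by all $e^x_a, f^y_b$ with the PVM and commutation relations. A winning strategy for the restricted game on $(X',Y')$ is a state on the corresponding subalgebra $\mathcal{A}_{X',Y'}$.

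The main obstacle is to extend such a state to a state on all of $\mathcal{A}$, and to check that $S(\mathcal{A})$ is weak-$*$ compact. For the extension, I first try to build a conditional expectation-like retraction. Map each $e^x_a$ with $x \notin X'$ to $\delta_{a,a_0}$ for a fixed $a_0$, and do the same for the $f$'s. This defines a unital $*$-homomorphism $\mathcal{A} \to \mathcal{A}_{X',Y'}$ fixing $\mathcal{A}_{X',Y'}$, because the images satisfy all relations. Composing the restricted state with it gives a state on $\mathcal{A}$ that agrees on the relevant generators.

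For compactness, every generator is a projector, so $\|e^x_a\| \le 1$ in any representation. Hence $|f(w)| \le 1$ for every word $w$ in the generators and every state $f$, and states lie in the product $\prod_{w} \overline{\mathbb{D}}$ over words. Positivity, linearity and normalization are closed conditions under pointwise convergence, so $S(\mathcal{A})$ is compact by Tychonoff. The winning conditions $f(e^x_a f^y_b) = 0$ are closed, and the finite intersection argument above applies verbatim. I expect the bound $|f(w)| \le 1$ to need a short justification, for example via the GNS construction or Cauchy--Schwarz applied to products of projectors, and this is the step I would check most carefully.
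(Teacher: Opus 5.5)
Your argument is correct and uses the same idea as the paper: Gottschalk-style Tychonoff compactness, where winning on finitely many input pairs is a closed condition, and, in the commuting setting, the bound $|f(w)|\le 1$ on words in projectors. The packaging differs, though. The paper proves a general inverse-limit result (Theorem~\ref{thm:inversesystem}). It applies it to the spaces of winning strategies for the \emph{restricted} games on finite rectangles, linked by restriction maps, and then glues a point of the inverse limit into a global strategy. You instead work in a single global compact strategy space and use the finite intersection property of the closed sets $F_{X',Y'}$. The trade-off is in what each route has to check. You must extend each restricted winning strategy to a global one to get $F_{X',Y'}\neq\emptyset$. In part~1 this is trivial (assign arbitrary POVMs to inputs outside $X'\times Y'$, which you leave implicit). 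In part~2 your retraction $*$-homomorphism $e^x_a\mapsto\delta_{a,a_0}1$ handles it; it is valid, and it also shows that $\mathcal{A}_{X',Y'}$ genuinely embeds in $\mathcal{A}$. The paper's route avoids extension but must check that a compatible family of states on the restricted algebras defines a state on $\mathcal{A}$, a step it leaves as ``similar to the first part.'' Your part~2 is therefore more explicit than the paper's. Likewise, your compactness proof for $S(\mathcal{A})$, as a closed subset of $\prod_w\overline{\mathbb{D}}$, is a more direct version of the paper's inverse system over finite sets of words. The bound $|f(w)|\le 1$ that you flag is exactly Lemma~\ref{fact:projectors}, proved there by Cauchy--Schwarz and induction on word length. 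Your GNS route also works: a projector $e$ acts contractively because $\|\xi\|^2=\|e\xi\|^2+\|(1-e)\xi\|^2$. One aside should be dropped. Naimark dilation and purification both raise the local dimension, so they cannot justify passing to POVMs and mixed states at fixed $d$. Your fallback of taking whichever compact set matches the model is the right fix: POVMs and density matrices (as the paper does), or PVMs.
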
 
Theorem \ref{thm:debruijnerdoswatereddown} is simply the special case of the above where coloring games are considered. On the other hand, theorem \ref{thm:dbepseudotelepathy} is false for the general tensor-product setting, even for coloring games, as the following shows:
\begin{proposition}
        The quantum chromatic number $\chi_q$ does not have the de Bruijn-Erdős property.
\end{proposition}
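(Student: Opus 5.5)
Our plan is to build a graph $G$ with $\chi_q(G)$ strictly larger than $\sup_S \chi_q(G[S])$ over finite $S$. We glue together infinitely many finite graphs whose $\chi_q$ is small but which need ever larger entanglement. By point three of theorem \ref{thm:quantchrom}, there is a sequence of finite graphs $G_d$ with $\chi_q(G_d) = 3$ and $\chi^{[d]}_q(G_d) = 4$ for every $d$. We take $G = \bigsqcup_{d \geq 1} G_d$, the disjoint union of countably many finite graphs; this is a well-defined (countably infinite) graph.

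\textbf{Finite subgraphs.} Any finite $S \subseteq V(G)$ meets only finitely many components $G_{d_1}, \dots, G_{d_k}$. For each $j$, since $\chi_q(G_{d_j}) = 3$, there is a finite-dimensional quantum $3$-coloring of $G_{d_j}$ in the sense of theorem \ref{thm:structure2cameron}, say of local dimension $e_j$. We combine these into a single quantum $3$-coloring of $G[S]$. The cleanest way is to pad each one to a common dimension $e = \prod_j e_j$ by tensoring the projectors with an identity: replace $P^v_a$ by $P^v_a \otimes I_{e/e_j}$. This preserves the PVM relations and the orthogonality conditions $P^v_a P^w_a = 0$. Since there are no edges between components, the conditions of theorem \ref{thm:structure2cameron} are then met on all of $G[S]$, so $\chi_q(G[S]) \leq 3$. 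As $G[S]$ is nonbipartite whenever it contains a nonbipartite piece, the supremum in \eqref{eq:dbe} equals $3$ (and in any case is $\leq 3$).

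\textbf{The whole graph.} Suppose $\chi_q(G) \leq 3$. Then there is a perfect strategy for the $3$-coloring game on $G$ with a shared state of some finite local dimension $d$; we may assume it has the form of theorem \ref{thm:structure2cameron} with $d$-dimensional projectors. Restricting it to the component $G_d$ gives a $d$-dimensional quantum $3$-coloring of $G_d$, so $\chi^{[d]}_q(G_d) \leq 3$, contradicting $\chi^{[d]}_q(G_d) = 4$. Hence $\chi_q(G) \geq 4$, and equality in \eqref{eq:dbe} fails.

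The main subtlety is definitional: $\chi_q$ for an infinite graph must be read as requiring a single finite-dimensional strategy for all inputs. This matches the $C_q$ model as defined. Restricting a strategy to an induced subgraph is clearly legitimate, so this reading is consistent. The only external input is the existence of the sequence $G_d$, which we take from theorem \ref{thm:quantchrom}.
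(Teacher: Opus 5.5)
Your proposal is correct and is essentially the paper's proof. You take the disjoint union of the graphs $G_d$ from theorem \ref{thm:quantchrom} and observe that any finite subgraph meets only finitely many components, hence is quantumly $3$-colorable. You then note that a quantum $3$-coloring of the whole union would have a single finite local dimension $d^*$, contradicting $\chi^{[d^*]}_q(G_{d^*}) = 4$. Your tensor-padding argument, which combines the finitely many component strategies into one strategy of common dimension, simply spells out the step the paper leaves as ``easy to see.''
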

\begin{proof}
    By theorem \ref{thm:quantchrom}, there exists a sequence of graphs $\{G_d\}$, all with $\chi_q(G_d) = 3$ but $\chi^{[d]}_q(G_d) = 4$. It is then easy to see that their disjoint union $G$ is such that every finite subgraph of $G$ has quantum chromatic number at most $3$, but $G$ doesn't, for if it did, we would have that $\chi_q^{[d^*]}(G) \leq 3$ for some $d^*$, which would give that $\chi^{[d^*]}_q(G_d) = 3$ for all $d$, which would be a contradiction. 
\end{proof}
One could declare a given graph parameter to have the countable de Bruijn-Erdős property if equation \ref{eq:dbe} holds with the supremum taken over all countable subgraphs instead. The quantum chromatic number is easily seen to have the countable de Bruijn-Erdős property: indeed, we have shown that a given graph $G$ is such that, for every $d$, there exists a finite subgraph $G_r$ of $G$ with $\chi^{[d]}_q(G_r) = \chi^{[d]}_q(G)$. Letting $G'$ be the union of all the $G_r$, which evidently has countable vertex set, we get that $\chi_q(G) = \chi_q(G')$.

In order to prove theorem \ref{thm:dbepseudotelepathy}, we introduce notation. Let $\Omega$ be a set and let $\mathcal{S}$ be a collection of subsets of $\Omega$ such that for all $S_1,\dots,S_n \in \mathcal{S}$, there exists $T \in \mathcal{S}$ such that $S_i \subseteq T$ for every $i$, and which is such that: 
\[\bigcup_{S \in \mathcal{S}} S = \Omega\]
We take a presheaf on $\mathcal{S}$ to mean an assignment to every $S \in \mathcal{S}$ of a topological space $Z_S$ as well as a collection of continuous inclusion maps $g_{T,S}: Z_T \to Z_S$ for every $T,S \in \mathcal{S}$ with $S \subsetneq T$ (setting $g_{T,T} = \text{id}_{Y_T}$), satisfying $g_{R,S} \circ g_{T,R} = g_{T,S}$ for all $S,R,T \in \mathcal{S}$ with $S \subsetneq R \subsetneq T$. We start with the following lemma, proved here for completeness:
\begin{lemma} \label{lem:hausdorff}
    Let $X,Y,Z$ be topological spaces with $Y$ Hausdorff, and endow $A = X \times Y \times Z$ with the product topology. Given a continuous function $f: X \to Y$, define the set $S \subseteq A$ by:
    \[S = \{a \in A  \mid  f(a_X) = a_Y\}\]
    Then $S$ is closed in the product topology.
\end{lemma}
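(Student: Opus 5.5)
The plan is to show that the complement of $S$ is open. Take a point $a = (a_X, a_Y, a_Z) \in A \setminus S$, so that $f(a_X) \neq a_Y$. Since $Y$ is Hausdorff, there are disjoint open sets $U, V \subseteq Y$ with $f(a_X) \in U$ and $a_Y \in V$. Because $f$ is continuous, $f^{-1}(U)$ is an open subset of $X$ containing $a_X$.

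Now consider the basic open set
\[W = f^{-1}(U) \times V \times Z\]
in the product topology, which contains $a$. For any $b \in W$ we have $f(b_X) \in U$ and $b_Y \in V$. Since $U \cap V = \emptyset$, this forces $f(b_X) \neq b_Y$, so $b \notin S$. Hence $W \subseteq A \setminus S$. Every point of the complement therefore has an open neighbourhood inside the complement, so the complement is open and $S$ is closed.

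An equivalent route is to note that $S$ is the preimage of the graph of $f$ under the continuous projection $A \to X \times Y$. The graph of a continuous map into a Hausdorff space is closed, being the preimage of the diagonal $\Delta_Y$ under $f \times \mathrm{id}$, and $\Delta_Y$ is closed exactly because $Y$ is Hausdorff. Either way there is no real obstacle. The only point needing care is the use of the Hausdorff hypothesis to separate $f(a_X)$ from $a_Y$; without it, the graph need not be closed.
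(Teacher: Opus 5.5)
Your proof is correct and is essentially the paper's argument: the paper also separates $f(a_X)$ from $a_Y$ by disjoint open sets $U,V$, then takes the intersection of the pullbacks of $f^{-1}(U)$ and $V$ to $A$, which is exactly your basic open set $f^{-1}(U)\times V\times Z$. Your alternative route, via the closed graph as the preimage of the closed diagonal $\Delta_Y$, is also valid.
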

\begin{proof}
    We prove that $S^c$ is open. Take $a \notin S$. Because $Y$ is Hausdorff, there exist disjoint open sets $U, V \subseteq Y$ with $f(a_X) \in U$, $a_Y \in V$. Let $V'$ and $W$ be the pullbacks of $V$ and $f^{-1}(U)$ in $A$, which are open by definition: then $V' \cap W$ is open and contains $a$. Also, if $b \in S \cap W$, then $f(b_X) \in U$, and so $b \notin V'$ since $U$ and $W$ are assumed disjoint: hence $S$ and $V' \cap W$ and $S$ are disjoint, which completes the proof.
\end{proof}
We can show the following semi-standard result, which is, for example, a slight generalization of a proof method already used in \cite{Gottschalk1951}, and which we will be using as a template for the rest of the subsection.
\begin{theorem} \label{thm:inversesystem}
Let $\Omega$ and $\mathcal{S}$ be as above and take a presheaf on $\mathcal{S}$ such that the corresponding topological spaces $Z_S$ are compact and Hausdorff. Writing
\[ Z = \prod\limits_{S \in \mathcal{S}} Z_S\]
\[Z' = \{z \in Z \mid g_{T,S}(z_T) = z_S, \;\; \forall S,T \in \mathcal{S},S \subsetneq T\}\]
Then, endowing $Z$ with the product topology, $Z'$ is compact, and is nonempty if the $Z_S$ are all nonempty.
\end{theorem}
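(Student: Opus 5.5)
This is the standard inverse-limit compactness argument, which I would run in three steps.

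First, compactness of $Z'$. By Tychonoff's theorem, $Z$ is compact, and it is Hausdorff as a product of Hausdorff spaces. It therefore suffices to show that $Z'$ is closed. Write $Z'$ as the intersection over pairs $S \subsetneq T$ in $\mathcal{S}$ of the sets
\[C_{T,S} = \{z \in Z \mid g_{T,S}(z_T) = z_S\}.\]
Identify $Z$ with $Z_T \times Z_S \times \prod_{R \neq S,T} Z_R$ via the canonical homeomorphism. Lemma \ref{lem:hausdorff}, applied with $X = Z_T$, $Y = Z_S$ (Hausdorff) and $f = g_{T,S}$, shows that each $C_{T,S}$ is closed. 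An intersection of closed sets is closed, and a closed subset of a compact space is compact, so $Z'$ is compact.

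Second, nonemptiness via the finite intersection property. Assume every $Z_S$ is nonempty. Since $Z$ is compact and the $C_{T,S}$ are closed, it is enough to show that any finite intersection $C_{T_1,S_1} \cap \dots \cap C_{T_k,S_k}$ is nonempty. Apply the directedness hypothesis on $\mathcal{S}$ to the finitely many sets $S_1, T_1, \dots, S_k, T_k$ to get $U \in \mathcal{S}$ containing all of them. Choose any $u \in Z_U$, which is possible since $Z_U$ is nonempty. Then define $z \in Z$ by
\[z_R = g_{U,R}(u) \quad \text{for } R \subseteq U,\]
using $g_{U,U} = \mathrm{id}$, and let $z_R$ be an arbitrary point of $Z_R$ for all other $R$. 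This uses the axiom of choice for the arbitrary coordinates, which is harmless.

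Third, checking the constraints. For each $i$, both $S_i$ and $T_i$ lie inside $U$, so the compatibility relation $g_{T_i,S_i} \circ g_{U,T_i} = g_{U,S_i}$ gives $g_{T_i,S_i}(z_{T_i}) = z_{S_i}$. The degenerate cases $T_i = U$ or $S_i = U$ follow directly because $g_{U,U} = \mathrm{id}$. Hence $z$ lies in the finite intersection. By compactness, the full intersection $Z'$ is nonempty.

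The only mildly delicate point is the bookkeeping: the compatibility relation is stated only for strict chains $S \subsetneq R \subsetneq T$, so the cases where some of the sets coincide must be handled separately using $g_{T,T} = \mathrm{id}$. Otherwise the argument is routine.
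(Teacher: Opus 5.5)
Your proposal is correct and follows essentially the same argument as the paper: Tychonoff's theorem plus Lemma \ref{lem:hausdorff} show that each $C_{T,S}$ is closed, which gives compactness of $Z'$, and nonemptiness comes from the finite intersection property. The only difference is bookkeeping. The paper groups the constraints into $C_T=\bigcap_{S\subsetneq T}C_{T,S}$ and proves $C_T\subseteq C_S$ for $S\subsetneq T$. You instead verify the finite intersection property directly on the $C_{T,S}$ by pushing a point of $Z_U$ forward along the maps $g_{U,R}$, which rests on the same compatibility identity $g_{R,S}\circ g_{T,R}=g_{T,S}$.
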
 
\begin{proof}
    For given $S, T \in \mathcal{S}$ with $S \subsetneq T$, write:
    \[C_{T,S} = \{z \in Z \mid g_{T,S}(z_T) = z_S\}\]
    And for given $T \in \mathcal{S}$, write 
    \[C_T =  \bigcap \limits_{S \subsetneq T} C_{T,S}  \]
    Note that
    \[Z' = \bigcap_S C_{S}\]
    The rest of the proof proceeds as follows. Note that because the $Z_S$ are assumed compact, Tychonoff's theorem implies that so is $Z$. Lemma \ref{lem:hausdorff} implies that the $C_{T,S}$ are closed, and therefore so are the $C_S$, and hence so is $Z'$. Since $Z' \subseteq Z$, we get that $Z'$ is compact. \\

    For the second part of the proof, suppose that the $Z_S$ are all non-empty. Note that this implies that so are the $C_S$. We will show that $C_T \subseteq C_S$ whenever $S \subsetneq T$. This implies that the $C_S$ have the finite intersection property: indeed, given $S_1,\dots,S_n \in \mathcal{S}$, it is assumed that there exists $T$ such that $S_1,\dots,S_n \subseteq T$, hence
    \[\bigcap_{i=1}^n C_{S_i} \supseteq C_T \neq 0\]
    Since the $C_S$ are compact, this implies that their intersection is nonempty, as desired. Note first that given $R \subsetneq S \subsetneq T$, we have that $C_{T,S} \cap C_{T,R} \subseteq C_{S,R}$ for all $R \subsetneq S \subsetneq T$: indeed, if $z \in C_{T,S} \cap C_{T,R}$, then $z_R = g_{T,R}(z_T) = g_{S,R}(g_{T,S}(z_T)) = g_{S,R}(z_S)$. Therefore, 
    \begin{align*}
    C_T &\subseteq C_{T,S} \cap \left(\bigcap_{R \subsetneq S} C_{T,R}\right) \\
    &= \bigcap_{R \subsetneq S} (C_{T,S} \cap C_{T,R})\\
    &\subseteq \bigcap_{R \subsetneq S} C_{S,R}\\
    &= C_S
    \end{align*}
    which completes the proof.
\end{proof}

This lets us prove:
\begin{proof}[Proof of point one of theorem \ref{thm:dbepseudotelepathy}]
Let $\mathfrak{G} = (X,Y,A,B,W)$ be such a pseudo-telepathy game. We prove the contrapositive using \ref{thm:inversesystem}. We take the set $\Omega$ to be $X \times Y$ and we take the collection $\mathcal{S}$ to be the set of finite rectangles $S = X' \times Y'$ in $X \times Y$. It can be seen that $\mathcal{S}$ has the required properties. Write $M_d$ for the set of $d \times d$ positive semidefinite matrices with largest eigenvalue at most one, and, given a finite set $R$, define $P_{d,R}$ by:
\[P_{d,R} = \left\{E \in (M_d)^{R} \mid \sum_{r \in R } E_r= I\right\}\]
The set $P_{d,R}$ encodes the set of all POVMs on $\mathbb{C}^d$ indexed by elements of $R$. Since $M_d$ is compact, we see that so is $P_{d,R}$. 

Define a presheaf on $\mathcal{S}$ as follows: given $S=  X' \times Y' \in \mathcal{S}$, set the topological space $Z_S$ to be:
\[Z_S = \{z \in P^{X'}_{d,A} \times P^{Y'}_{d, B} \times M_{d^2} \mid \Tr(z_3) = 1 \text{ and, } \forall (x,y,a,b) \in W^c \cap (X' \times Y' \times A \times B),\; \Tr(z_3 ((z_1)_{x,a}\otimes (z_2)_{y,b})) = 0\} \]
We see that $Z_S$ encodes the space of all winning strategies involving an entangled state of local dimension $d$. We have that $Z_S$ is Hausdorff and compact with the usual identifications and that it is nonempty by assumption. The inclusion maps are defined in the natural way: given $X'' \subseteq X'$, $Y'' \subseteq Y'$, $S' = X'' \times Y''$, the map $g_{S,S'}: Z_S \to Z_{S'}$ is defined by $(g(z)_1)_x = (z_1)_x$, $(g(z)_2)_y = (z_2)_y$, $g(z)_3 = z_3$. These inclusion maps can be seen to satisfy the compatibility requirement. \\

Hence, set
\[Z = \prod_{S \in \mathcal{S}} Z_S\]
and let $z$ be an element of $Z'$ as in the statement of theorem \ref{thm:inversesystem}. Pick any $S \in \mathcal{S}$ and set:
\[\rho = (z_S)_3\]
The fact that $z \in Z'$ shows that this $\rho$ is independent of $S$. Indeed, given $S, S' \in \mathcal{S}$, take $T \in \mathcal{S}$ such that $S,S' \subseteq T$. Then:
\[(z_S)_3 = g_{T,S}((z_T))_3 = (z_T)_3 = g_{T,S'}((z_T))_3 = (z_{S'})_3\]
Given $x \in X, y \in Y$, pick some $S \in \mathcal{S}$ with $(x,y) \in S$ and set, for all $a \in A, b \in B$,
\[E^x_a = ((z_S)_1)_{x,a}\]
\[F^y_b = ((z_S)_2)_{y,b}\]
For any fixed $x$, the $E^x_a$ encode a POVM on $\mathbb{C}^d$, and likewise for the $F^y_b$. For the same reason as before, their definition does not depend on the choice of $S$. Also, the fact that $z_S \in Z_S$ implies that, for all $(x,y,a,b) \notin W$, we have
\[\tr(\rho (E^x_a \otimes F^y_b)) = 0\]
as desired. 
\end{proof}
The following can be shown along very similar lines:
\begin{proposition}
    The orthogonal ranks $\xi_\mathbb{C}, \xi_{\mathbb{R}}, \xi'$ all have the de Bruijn-Erdős property. Also, for given $r, n \geq 1$, if an infinite graph $G$ is such that every finite subgraph of $G$ has a rank-$r$ quantum $n$-coloring, then so does $G$.
\end{proposition}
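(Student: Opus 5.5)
We reuse the template of theorem \ref{thm:inversesystem} exactly as in the proof of point one of theorem \ref{thm:dbepseudotelepathy}. Take $\Omega = V(G)$ and $\mathcal{S}$ the collection of finite subsets of $V(G)$; it is directed and covers $\Omega$. For each $S \in \mathcal{S}$ we define a compact Hausdorff space $Z_S$ encoding the relevant objects on $G[S]$. The restriction maps $g_{T,S}$ forget the data attached to vertices in $T \setminus S$. These maps are continuous and compatible, and they land in $Z_S$ because the defining constraints are all local: each involves one vertex or one edge. In every case the hypothesis is that every finite induced subgraph admits the object, so every $Z_S$ is nonempty. An element $z \in Z'$, which exists by theorem \ref{thm:inversesystem}, then defines a global object by reading off the data at $v$ from any $z_S$ with $v \in S$. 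Well-definedness follows from compatibility, exactly as for $\rho$ and $E^x_a$ before. Each constraint on $G$ involves finitely many vertices, so it lies inside some $S$ and is satisfied because $z_S \in Z_S$.

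It remains to choose the spaces. For $\xi_\mathbb{F}$ with target dimension $n$, take $Z_S$ to be the set of maps $S \to S_\mathbb{F}^{n-1}$ that send edges of $G[S]$ to orthogonal pairs. This is a closed subset of the compact product $(S_\mathbb{F}^{n-1})^S$, since orthogonality is a closed condition given by the continuous map $(\psi,\phi)\mapsto\braket{\psi|\phi}$.

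For $\xi'$ we add the closed conditions $|\braket{k|\psi_v}|^2 = 1/n$ for every standard basis vector $k$.

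For rank-$r$ quantum $n$-colorings, take $Z_S \subseteq (Gr(r,rn)^{n})^{S}$, where $Gr(r,rn)$ is identified with the compact set of rank-$r$ orthogonal projectors on $\mathbb{C}^{rn}$. The defining conditions are:
\begin{itemize}
\item for each $v \in S$, the $n$ projectors $f_a(v)$ are pairwise orthogonal, i.e.\ $f_a(v) f_b(v) = 0$ for $a \neq b$, which is what adjacency of the homomorphisms $f_a$ requires;
\item for each edge $(v,w)$ of $G[S]$ and all $a$, we have $f_a(v) f_a(w) = 0$.
\end{itemize}
All of these are closed polynomial equations, so $Z_S$ is compact Hausdorff.

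Because the spaces are set up at a fixed dimension, this gives the statement directly for rank-$r$ colorings. For the parameters $\xi_\mathbb{C}, \xi_\mathbb{R}, \xi'$ we conclude as follows. Let $n$ be the supremum of $f(G[S])$ over finite $S$. If the supremum is infinite, equality in \eqref{eq:dbe} is trivial. Otherwise every finite induced subgraph has a representation in dimension exactly $n$, and the compactness argument yields $f(G) \leq n$.

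The one point needing care is this passage from ``at most $n$'' to ``exactly $n$''. For $\xi_\mathbb{F}$ it is trivial, since we can pad vectors with zeros. For $\xi'$, flatness in dimension $m$ does not obviously give flatness in a larger dimension. We handle it by noting that the supremum is attained: the values are integers, so some finite subgraph has $\xi' = n$. The compactness argument with fixed $n$ still requires every finite $S$ to admit a flat representation in dimension exactly $n$. To arrange this, we replace $S$ by $S \cup S_0$, where $S_0$ is a finite set attaining the supremum: by the definition of the supremum, $G[S \cup S_0]$ has a flat representation in some dimension $m \leq n$, and $m \geq n$ because $G[S_0]$ is an induced subgraph of $G[S \cup S_0]$ and $\xi'$ is monotonic. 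Restricting that representation to $S$ gives a flat representation of $G[S]$ in dimension $n$. So $Z_S \neq \emptyset$ for all $S$, and the same argument goes through. The rank-$r$ statement has no such issue, as it is phrased at fixed $n$ — which is precisely why the paper states it that way.
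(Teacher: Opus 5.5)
Your proposal is correct and takes the paper's approach: the paper only says the result follows ``along very similar lines'', meaning a rerun of the compactness template of theorem \ref{thm:inversesystem} (as in point one of theorem \ref{thm:dbepseudotelepathy}), where $Z_S$ is the compact space of fixed-dimension orthogonal representations or rank-$r$ quantum colorings of $G[S]$. Your extra step for $\xi'$ is also correct: enlarging each $S$ to $S \cup S_0$ guarantees a flat representation in dimension exactly $n$, which fills in a point the paper's one-line justification passes over, namely that flatness in dimension $m$ does not obviously give flatness in a larger dimension.
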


The rest of the section is devoted to proving the second point of theorem \ref{thm:dbepseudotelepathy}. The following is standard:
\begin{lemma}[Cauchy-Schwarz] \label{fact:cauchyschwarz}
A positive linear functional $f$ is such that for all $x, y \in \mathcal{A}$, 
\[|f(x^* y)|^2 \leq f(x^* x) f(y^* y)\]
\end{lemma}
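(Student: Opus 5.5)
The plan is the standard proof of Cauchy--Schwarz for positive sesquilinear forms, applied to the form $\langle x, y\rangle_f := f(x^* y)$ on $\mathcal{A}$.

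\emph{Positivity and Hermitian symmetry.} First, $\langle x, x\rangle_f = f(x^*x) \geq 0$, since $x^*x$ is positive by definition and $f$ is a positive functional. Second, $\langle y, x\rangle_f = \overline{\langle x, y\rangle_f}$: we have $(x^*y)^* = y^*x$, and the paper notes that positive functionals are Hermitian, so $f(y^*x) = f((x^*y)^*) = \overline{f(x^*y)}$. Sesquilinearity is immediate from linearity of $f$ and conjugate-linearity of $*$.

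\emph{Main estimate.} Fix $x, y$. For every $\lambda \in \mathbb{C}$, the element $(x + \lambda y)^*(x+\lambda y)$ is positive, so
\[0 \leq f(x^*x) + \lambda f(x^*y) + \overline{\lambda} f(y^*x) + |\lambda|^2 f(y^*y).\]
Write $\alpha = f(x^*y)$, and note $f(y^*x) = \overline{\alpha}$ by the previous step. Take $\lambda = -t\overline{\alpha}$ with $t \in \mathbb{R}$. The inequality becomes
\[0 \leq f(x^*x) - 2t|\alpha|^2 + t^2 |\alpha|^2 f(y^*y)\]
for all real $t$.

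\emph{Case split.} If $f(y^*y) > 0$, this is a nonnegative real quadratic in $t$. Its discriminant is therefore nonpositive, which gives
\[4|\alpha|^4 \leq 4|\alpha|^2 f(x^*x) f(y^*y).\]
Dividing by $4|\alpha|^2$ gives the claim when $\alpha \neq 0$, and the claim is trivial when $\alpha = 0$. If instead $f(y^*y) = 0$, the expression is linear in $t$. It stays nonnegative as $t \to \infty$ only if $|\alpha|^2 = 0$, so $f(x^*y) = 0$ and the inequality holds with both sides zero.

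There is no real obstacle. The only point needing care is this degenerate case $f(y^*y)=0$, where dividing by $f(y^*y)$ is not allowed, and it is handled by the linear-in-$t$ argument above.
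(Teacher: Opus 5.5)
Your proof is correct. The paper gives no proof of this lemma and simply calls it standard; your argument (expand $f((x+\lambda y)^*(x+\lambda y)) \geq 0$, take $\lambda = -t\overline{\alpha}$, use the discriminant, and treat $f(y^*y)=0$ separately) is the standard proof it relies on. The one outside input you use, $f(y^*x) = \overline{f(x^*y)}$, follows from the same expansion with $\lambda \in \{1, i\}$, so the argument is self-contained.
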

We can now show: 
\begin{lemma} \label{fact:projectors}
Let $\mathcal{A}$ be a unital $*-$algebra. If $p_1,\dots,p_n \in \mathcal{A}$ are projectors and $f$ is a state, then $|f(p_1\dots p_n)| \leq 1$.
\end{lemma}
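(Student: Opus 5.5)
The plan is to prove the bound with the Cauchy--Schwarz inequality of lemma \ref{fact:cauchyschwarz}, splitting the product $p_1 \cdots p_n$ into an adjoint-type left factor and a right factor. Set $x = p_1$, which satisfies $x^* = p_1$, and $y = p_2 \cdots p_n$. Then $f(p_1 \cdots p_n) = f(x^* y)$, so
\[|f(p_1\cdots p_n)|^2 \leq f(p_1^* p_1)\, f(y^* y) = f(p_1)\, f(p_n \cdots p_2 p_2 \cdots p_n).\]
Since $p_1$ and $1 - p_1$ are projectors, both are positive. Hence $f(p_1) \geq 0$ and $f(1) - f(p_1) \geq 0$, which gives $0 \leq f(p_1) \leq f(1) = 1$.

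It then suffices to show that $f(y^* y) \leq 1$, where $y^*y = p_n \cdots p_3\, p_2\, p_3 \cdots p_n$ after using $p_2^2 = p_2$. The key observation is that conjugating by a projector contracts in the operator order: if $1 - z$ is positive and $p$ is a projector, then
\[1 - pzp = (1 - p) + p(1 - z)p.\]
Both terms on the right are positive. The first, $1 - p$, is a projector. For the second, write $1 - z = w^* w$; then $p(1-z)p = (wp)^*(wp)$.

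I would run an induction from the inside out. Start with $z_2 = p_2$, for which $1 - z_2 = 1 - p_2$ is positive. Define $z_{k+1} = p_{k+1} z_k p_{k+1}$. Each $1 - z_k$ is then a sum of positive elements, namely a projector plus terms of the form $u^*u$. A positive functional is nonnegative on each such term, so $f(1 - z_n) \geq 0$, i.e. $f(y^* y) = f(z_n) \leq 1$. Combining this with the display above gives $|f(p_1\cdots p_n)|^2 \leq 1$.

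The one point to watch is that in an abstract $*$-algebra, "positive" was defined to mean a single $y^* y$, so sums of positives need not obviously be of this form. I avoid the problem by never needing that: I only use linearity of $f$ together with $f(u^*u) \geq 0$ for each summand, tracking $1 - z_k$ explicitly as a finite sum of elements $u^* u$. For example, $1 - p = (1-p)^*(1-p)$, so every term has the required form. The case $n = 1$ is the bound $0 \leq f(p_1) \leq 1$ already shown.
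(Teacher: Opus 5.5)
Your proof is correct and follows essentially the same route as the paper: Cauchy--Schwarz reduces the claim to showing that $f$ of a palindromic product of projectors is at most $1$, and an induction proves this using only linearity of $f$ together with $f(u^*u)\geq 0$, applied to $1-p$ and to conjugated terms. The differences are cosmetic. You split as $x=p_1$, $y=p_2\cdots p_n$, which costs the extra easy bound $f(p_1)\le 1$, whereas the paper takes $x=p_1\cdots p_n$, $y=1$. You also build the decomposition $1-pzp=(1-p)+p(1-z)p$ from the inside out, whereas the paper peels off the innermost projector via $f(ap_na^*)\le f(ap_na^*)+f(a(1-p_n)a^*)=f(aa^*)$.
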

\begin{proof}
Using lemma \ref{fact:cauchyschwarz}, we have
\begin{align*}
|f(p_1 \dots p_n)|^2 &= |f((p_1 \dots p_n)^\dagger 1)|^2\\
&\leq f(p_1 p_{2} \dots  p_n p_{n-1} p_1)
\end{align*} 
We prove that this last expression is bounded by one by induction on $n$. This is trivially true if $n=0$ because $f$ is a state. If the result is true for $n-1$, writing $a = p_1 \dots p_{n-1}$, then, for any projector $q$, $aqa^* = (qa^*)^* (qa^*)$ and so $aqa^*$ is positive. Hence, 
\begin{align*}
f(a p_n a^\dagger) &\leq f(a p_n a^\dagger) + f(a(1-p_n)a^\dagger)\\
                &= f(a a^\dagger)\\
                &\leq 1
\end{align*} 
where the last inequality can be seen to follow from the induction hypothesis.
\end{proof}
We can then show:
\begin{proposition}
Let $\mathcal{A}$ be a projective $*$-algebra. It holds that $S(\mathcal{A})$ is compact in the weak-$*$ topology.
\end{proposition}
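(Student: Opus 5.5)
The plan is to embed $S(\mathcal{A})$ into a product of compact sets via evaluation on a spanning set of words, and then apply Tychonoff's theorem. Fix a generating set $P$ of projectors for $\mathcal{A}$. Since $P$ generates $\mathcal{A}$ as a unital $*$-algebra and each $p \in P$ is self-adjoint, every element of $\mathcal{A}$ is a finite linear combination of words $w = p_1 \cdots p_k$ with $p_i \in P$. The empty word is $1$. Let $W$ denote this set of words.

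Consider the map
\[\iota: S(\mathcal{A}) \to \prod_{w \in W} D, \qquad f \mapsto (f(w))_{w \in W},\]
where $D = \{z \in \mathbb{C} : |z| \leq 1\}$. Lemma \ref{fact:projectors} is exactly what guarantees that $\iota$ lands in this product. The product of closed unit disks is compact Hausdorff by Tychonoff's theorem. The map $\iota$ is injective, since a linear functional is determined by its values on a spanning set.

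Next I would check that $\iota$ is a homeomorphism onto its image for the weak-$*$ topology, i.e. the topology of pointwise convergence on $\mathcal{A}$. Continuity of $\iota$ is immediate, since each coordinate is an evaluation map. Conversely, for any $x = \sum_j c_j w_j \in \mathcal{A}$, the evaluation $f \mapsto f(x) = \sum_j c_j f(w_j)$ is a finite linear combination of coordinate functions. Hence the pointwise topology on $S(\mathcal{A})$ coincides with the topology pulled back from the product.

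It then remains to show that $\iota(S(\mathcal{A}))$ is closed in the product, which gives compactness. A point $z = (z_w)$ lies in the image exactly when the following conditions all hold.
\begin{enumerate}
    \item The assignment $w \mapsto z_w$ extends to a well-defined linear functional on $\mathcal{A}$. This amounts to $\sum_j c_j z_{w_j} = 0$ whenever $\sum_j c_j w_j = 0$ in $\mathcal{A}$.
    \item Normalization holds: $z_1 = 1$.
    \item Positivity holds: for every $y = \sum_j c_j w_j$, we have $\sum_{j,k} \overline{c_j} c_k z_{w_j^* w_k} \geq 0$. Here $w_j^* w_k$ is again a word, obtained by reversing $w_j$ and concatenating.
\end{enumerate}
Each condition involves only finitely many coordinates, through a continuous function into $\mathbb{C}$ or $\mathbb{R}$. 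So each one cuts out a closed set, and the intersection of all of them is closed.

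The main point needing care is the first condition. One must make sure that linear relations among words in $\mathcal{A}$, which come from the defining relations of the algebra, are faithfully encoded as closed conditions. Once this is phrased as the family of linear equations above, one per vanishing finite combination, the argument goes through. With that in place, $S(\mathcal{A})$ is homeomorphic to a closed subset of a compact space and is therefore compact.
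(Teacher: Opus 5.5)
Your argument is correct and is essentially the paper's proof. Both rest on Tychonoff's theorem, on the bound $|f(w)| \leq 1$ for words in projectors from Lemma~\ref{fact:projectors}, and on linearity, normalization and positivity being closed conditions on finitely many values; the paper packages this through its inverse-system Theorem~\ref{thm:inversesystem}, with compact sets of partial functionals on spans of finite sets of words and a continuous surjection onto $S(\mathcal{A})$, whereas you embed $S(\mathcal{A})$ homeomorphically as a closed subset of $\prod_{w \in W} D$, which is slightly more direct.
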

\begin{proof}
By hypothesis, there exists a generating subset $P \subseteq \mathcal{A}$ consisting of projectors. Let $X$ be the set of all words over $P$, with the empty word being identified as the identity. Note that since projectors are self-adjoint, the linear span of the algebra generated by $P$ is all of $\mathcal{A}$. We define a presheaf over $X$. We take $\mathcal{S}$ to be the collection of all finite subsets of $X$ which contain the identity. Given such a $S = \{w_1,w_2,\dots,w_n\}$, write $V_S$ for the linear subspace of $\mathcal{A}$ spanned by $w_1,\dots,w_n$, and take $Y_S$ to be the set of linear functionals $f: V_S \to \mathbb{C}$ satisfying $|f(w_i)| \leq 1$ for every $i$, $f(1) = 1$ and $f(x) \geq 0$ for every positive $x \in V_S$. It can be seen that $Y_S$ is compact and Hausdorff under the natural topology coming from viewing every element in $Y_S$ as an element of $\mathbb{C}^n$. Given $T$ with $S \subsetneq T$, the inclusion map $g_{T,S}: Y_T \to Y_S$ is defined straightforwardly: given $f \in Y_T$ and given $p \in V(S)$, we set $(g_{T,S}(f))(p) = f(p)$. Clearly, this map is continuous and satisfies the required consistency relations. 

Given this data, let $Y'$ be the set that is promised by theorem \ref{thm:inversesystem}. Define the map $\phi: Y' \to L(\mathcal{A})$ as follows. Given $y \in Y'$, we define $\phi(y)$ as follows: for a given $p \in \mathcal{A}$, take $S \in \mathcal{S}$ such that $p \in V_S$ and set $\phi(y)(p) = y_S(p)$. Note first that this is well-defined: indeed, if $S,S'$ are such that $p \in V_S$, $p \in V_{S'}$, then taking $T \in \mathcal{S}$ such that $S,S' \subseteq T$ we have that $y_S(p) = g_{T,S}(y_T)(p) = y_T(p) = g_{T,S'}(y_T)(p) = y_{S'}(p)$. From the way $Y_S$ was defined, it follows that $\phi(y)$ is always a state, and more generally it can be seen that $\phi: Y' \to S(\mathcal{A})$ is a bijection (with the reverse direction being given by fact \ref{fact:projectors}). It is simple to see that $\phi$ is continuous, and hence $S(\mathcal{A})$ is the image of a compact topological space under a continuous map and is therefore itself compact.
\end{proof}

From here, the proof of point two of theorem \ref{thm:dbepseudotelepathy} follows in a similar way as the first part.

\section{Quantum colorings of complex spheres} \label{sec:complex}
This section is devoted to proving the lower bound on the quantum chromatic numbers of complex spheres that is promised by theorem \ref{thm:noconstructioncomplex}. We first handle the infinitary case, which turns out to only require elementary linear algebra. We then present a candidate for a family of finitary witnesses of this separation in the form of the graphs $G_n = G_{19} \vee K_{n-3}$. 

\subsection{The infinitary case} \label{sub:infinitarycomplex}

To prove that $\chi_q(S_\mathbb{C}^{n-1}) > n$ for all $n \geq 3$, we will prove the following stronger statement, which implies it in view of theorem \ref{thm:structurecameron}:

\begin{proposition} \label{prop:complexcase}
    For all $n \geq 3$, $r \geq 1$, there do not exist three pairwise adjacent elements of Hom($S_\mathbb{C}^{n-1}, Gr(r, rn)$).
\end{proposition}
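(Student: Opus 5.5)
The plan is to use the rigidity theorem \ref{thm:tianxu} to put each homomorphism into normal form, translate adjacency into a block-vanishing condition via lemma \ref{lem:polarization}, and then derive a contradiction from a parity argument on the block structure. Suppose $f_1,f_2,f_3 \in \text{Hom}(S_\mathbb{C}^{n-1}, Gr(r,rn))$ are pairwise adjacent. By theorem \ref{thm:tianxu}, each has the form $f_a(\psi) = M_a(\psi)M_a(\psi)^\dagger$, where
\[M_a(\psi) = U_a\big((\ket{\psi}\otimes I_{p_a}) \oplus (\ket{\overline{\psi}} \otimes I_{q_a})\big),\]
for unitaries $U_a$ and integers $p_a+q_a=r$. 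Each $M_a(\psi)$ is an isometry onto the range of $f_a(\psi)$. Hence adjacency of $f_a$ and $f_b$, i.e. $f_a(\psi)f_b(\psi)=0$ for all $\psi$, is equivalent to $M_a(\psi)^\dagger M_b(\psi) = 0$ for all unit $\psi \in \mathbb{C}^n$.

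\textbf{Step 1 (diagonal blocks vanish).} Fix $a \neq b$ and set $W_{ab} = U_a^\dagger U_b$. Decompose the domain of $W_{ab}$ as $(\mathbb{C}^n\otimes\mathbb{C}^{p_b}) \oplus (\mathbb{C}^n \otimes \mathbb{C}^{q_b})$, the \enquote{$\psi$-part} and \enquote{$\overline\psi$-part} of $b$, and decompose its codomain analogously according to $a$. Write $W_{ab}$ in $2\times 2$ block form $(W^{ij})$. The condition $M_a(\psi)^\dagger M_b(\psi)=0$ splits into four block conditions:
\begin{itemize}
\item $(\bra{\psi}\otimes I)W^{11}(\ket{\psi}\otimes I)=0$;
\item $(\bra{\overline\psi}\otimes I)W^{22}(\ket{\overline\psi}\otimes I)=0$;
\item two mixed conditions, involving $\psi^\dagger(\cdot)\overline\psi$ and $\psi^T(\cdot)\psi$.
\end{itemize}
The first condition holds for all $\psi \in S^{n-1}_\mathbb{C}$, and the second holds for all $\overline\psi$, which also ranges over $S^{n-1}_\mathbb{C}$. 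Point one of lemma \ref{lem:polarization} therefore gives $W^{11}=0$ and $W^{22}=0$. I will not need the mixed conditions; they would only yield the antisymmetry $W^{12} = -(W^{12})^{\Gamma_1}$, which is irrelevant here. So $W_{ab}$ is \emph{block off-diagonal}: it maps the $\psi$-part of $b$ into the $\overline\psi$-part of $a$, and the $\overline\psi$-part of $b$ into the $\psi$-part of $a$.

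\textbf{Step 2 (parity contradiction).} Note that $W_{13} = U_1^\dagger U_3 = (U_1^\dagger U_2)(U_2^\dagger U_3) = W_{12}W_{23}$. The block decompositions are compatible: the codomain splitting of $W_{23}$ and the domain splitting of $W_{12}$ are both the $(p_2,q_2)$ splitting attached to $U_2$. The composition therefore sends the $\psi$-part of $3$ to the $\overline\psi$-part of $2$, and then to the $\psi$-part of $1$. Likewise it sends the $\overline\psi$-part of $3$ to the $\psi$-part of $2$, and then to the $\overline\psi$-part of $1$. Thus $W_{13}$ is \emph{block diagonal} with respect to the splittings of $3$ and $1$. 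By Step 1 applied to the pair $(1,3)$, it is also block off-diagonal with respect to the same splittings. Hence $W_{13}=0$, which contradicts unitarity since $rn \geq 1$. This proves the proposition, and hence theorem \ref{thm:noconstructioncomplex} via theorem \ref{thm:structurecameron}. The degenerate cases $p_a=0$ or $q_a=0$ are covered automatically, since some blocks are then simply empty.

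\textbf{Main obstacle.} The only step needing care is Step 1: checking that the diagonal block conditions really are of the sesquilinear form $\psi^\dagger B\psi$ (or its conjugate) to which point one of lemma \ref{lem:polarization} applies, and keeping the block bookkeeping consistent so that the splittings match in the composition $W_{12}W_{23}$. The hypothesis $n\ge 3$ enters only through theorem \ref{thm:tianxu}. This is consistent with the fact that $S^1_\mathbb{C}$ is bipartite, where rigidity fails.
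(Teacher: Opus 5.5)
Your proof is correct and follows essentially the paper's argument. You put each homomorphism in normal form via theorem \ref{thm:tianxu}, kill the diagonal blocks with point one of lemma \ref{lem:polarization} (this is the content of lemma \ref{lem:complexstructure}), and then use the parity clash that a product of two block off-diagonal unitaries is block diagonal; your identity $W_{13}=W_{12}W_{23}$ replaces the paper's normalization $U_1=I$ and spares you the dimension-matching part of lemma \ref{lem:complexstructure}, since a map that is both block diagonal and block off-diagonal is simply zero.
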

We note that the above proposition is false if three is replaced by two, provided that $n$ is even: indeed, taking $r=1$, one can choose $f_1(\psi) = \ket{\psi} \bra{\psi}$ and $f_2(\psi) = U \ket{\overline{\psi}} \bra{\overline{\psi}} U^\dagger$ for some antisymmetric unitary $U$, for example $U = Y^{\oplus \frac{n}{2}}$.

We will need the following lemma:
\begin{lemma} \label{lem:complexstructure}
Let $r,n \geq 1$, let $p_1, p_2,q_1,q_2 \geq 0$ be such that $p_1 + q_1 = p_2 + q_2 = r$, and suppose that $U$ is a unitary on $\mathbb{C}^{rn}$ such that, for all $\psi \in S^{n-1}_\mathbb{C}$, 
\[
((\bra{\psi} \otimes I_{p_1}) \oplus (\bra{\overline{\psi}} \otimes I_{q_1})) U ((\ket{\psi} \otimes I_{p_2} ) \oplus (\ket{\overline{\psi}} \otimes I_{q_2})) = 0 \label{eq:eqlemma}
\]
Then we must have $p_1 = q_2$, $q_1 = p_2$, and $U$ must be of the form
\[U = \begin{pmatrix} 0 & B \\ C & 0 \end{pmatrix}\]
With $B \in U(np_1)$, $C \in U(np_2)$.
\end{lemma}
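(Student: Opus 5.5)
We decompose $U$ into blocks according to the direct sum decompositions of domain and codomain. The domain is $\mathbb{C}^{rn} = (\mathbb{C}^n \otimes \mathbb{C}^{p_2}) \oplus (\mathbb{C}^n \otimes \mathbb{C}^{q_2})$ and the codomain is $(\mathbb{C}^n \otimes \mathbb{C}^{p_1}) \oplus (\mathbb{C}^n \otimes \mathbb{C}^{q_1})$, so we write
\[U = \begin{pmatrix} A & B \\ C & D \end{pmatrix},\]
with $A: \mathbb{C}^n\otimes\mathbb{C}^{p_2} \to \mathbb{C}^n\otimes\mathbb{C}^{p_1}$, $B: \mathbb{C}^n\otimes\mathbb{C}^{q_2}\to\mathbb{C}^n\otimes\mathbb{C}^{p_1}$, and so on. The hypothesis then splits into four separate identities, one per block, valid for all $\psi \in S^{n-1}_\mathbb{C}$:
\[(\bra{\psi}\otimes I)A(\ket{\psi}\otimes I)=0,\quad (\bra{\psi}\otimes I)B(\ket{\overline\psi}\otimes I)=0,\]
\[(\bra{\overline\psi}\otimes I)C(\ket{\psi}\otimes I)=0,\quad (\bra{\overline\psi}\otimes I)D(\ket{\overline\psi}\otimes I)=0.\]

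For the diagonal blocks we apply lemma \ref{lem:polarization} directly. The first identity is exactly the hypothesis of point one of that lemma, so $A=0$. For the fourth, as $\psi$ ranges over $S^{n-1}_\mathbb{C}$ so does $\overline\psi$, so the same point gives $D=0$. Thus $U$ is block off-diagonal.

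Unitarity of $U = \begin{pmatrix} 0 & B\\ C & 0\end{pmatrix}$ now forces the dimension constraints. Since $U^\dagger U = I$, the block $C$ is an isometry from $\mathbb{C}^n\otimes\mathbb{C}^{p_2}$ into $\mathbb{C}^n\otimes\mathbb{C}^{q_1}$, so $np_2 \le nq_1$. Since $UU^\dagger=I$, $C$ is also a co-isometry, so $nq_1 \le np_2$. Hence $p_2 = q_1$ and $C$ is unitary, and symmetrically $q_2=p_1$ and $B$ is unitary. This gives the stated form with $B\in U(np_1)$ and $C \in U(np_2)$.

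The off-diagonal identities need not be used: they hold for a large class of $B,C$ (for instance antisymmetric blocks), which is consistent with the lemma, since it does not claim $B = C = 0$. The main point to check is therefore only that the diagonal blocks genuinely reduce to point one of lemma \ref{lem:polarization}. Two details matter here. First, the sesquilinear pairing uses $\bra{\psi}$ with $\ket{\psi}$ in block $A$, and $\bra{\overline\psi}$ with $\ket{\overline\psi}$ in block $D$, so each is a genuine Hermitian-type quadratic form in a single vector. Second, when $p_i$ or $q_i$ is zero the corresponding blocks are empty, and the argument degenerates harmlessly; for example, $q_1=q_2=0$ forces $U=A=0$, contradicting unitarity unless $r=0$. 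That degenerate case is consistent with the conclusion $p_1=q_2=0$.
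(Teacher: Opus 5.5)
Your proposal is correct and follows essentially the same route as the paper: you block-decompose $U$, kill the diagonal blocks $A$ and $D$ with point one of lemma \ref{lem:polarization} (using that $\psi \mapsto \overline{\psi}$ permutes $S^{n-1}_\mathbb{C}$), and then use unitarity of the resulting block off-diagonal matrix to force $B$ and $C$ to be square and unitary. The only difference is cosmetic: you phrase the last step as $C$ (resp.\ $B$) being both an isometry and a co-isometry, while the paper argues via orthonormality of the rows and columns of $B$.
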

\begin{proof}
    Write $U$ in block form:
    \[U = \begin{pmatrix} A & B \\ C & D \end{pmatrix}\]
    with $A$ of dimension $(np_1) \times (np_2)$, $B$ of dimension $(np_1) \times (nq_2)$, $C$ of dimension $(nq_1) \times (np_2)$ and $D$ of dimension $(nq_1) \times (nq_2)$. Then, for all $\psi$, 
    \[0 = ((\bra{\psi} \otimes I_{p_1}) \oplus (\bra{\overline{\psi}} \otimes I_{q_1})) U ((\ket{\psi} \otimes I_{p_2} ) \oplus (\ket{\overline{\psi}} \otimes I_{q_2})) = \begin{pmatrix} (\bra{\psi} \otimes I_{p_1}) A (\ket{\psi} \otimes I_{p_2}) & (\bra{\psi} \otimes I_{p_1}) B (\ket{\overline{\psi}} \otimes I_{q_2}) \\(\bra{\overline{\psi}} \otimes I_{q_1})  C (\ket{\psi} \otimes I_{p_2}) & (\bra{\overline{\psi}} \otimes I_{q_1}) D (\ket{\overline{\psi}} \otimes I_{q_2})\end{pmatrix}\]
We see that point one of lemma \ref{lem:polarization} implies that $A=0, D=0$, since complex conjugation is a permutation of $S_\mathbb{C}^{n-1}$. Hence, $U$ is of the form:
\[U = \begin{pmatrix} 0 & B \\ C & 0 \end{pmatrix}\]
Because $U$ is a unitary, the rows of $B$ all have unit norm and are pairwise orthogonal, and likewise for the columns of $B$. Since the row and column spaces of a matrix have the same dimension, this forces $B$ to be a square matrix, which gives $p_1 = q_2$ and that $B$ is a unitary. The same reasoning applied to $C$ yields the rest of the lemma. 

\end{proof}
We can now prove: 

\begin{proof}[Proof of proposition \ref{prop:complexcase}]
We proceed by contradiction. Let $f_i$ be three such homomorphisms, and let $U_1, U_2, U_3$ and $(p_1,q_1),(p_2,q_2),(p_3, q_3)$ be the data that is promised by theorem \ref{thm:tianxu}. Note that $P \to U_1^\dagger P U_1$ is an automorphism of $Gr(r, rn)$: hence, composing the $f_i$ with this automorphism, by proposition \ref{prop:homomorphisms}, we can assume without loss of generality that $U_1 = I$. For any given $\psi$, because the $f_a(\psi)$ are pairwise orthogonal subspaces by assumption, it holds that for $a \neq b$,
\begin{equation} \label{eq:myequation}
((\bra{\psi} \otimes I_{p_b} ) \oplus (\bra{\overline{\psi}} \otimes I_{q_b})) U_b^\dagger U_a (( \ket{\psi} \otimes I_{p_a}) \oplus (\ket{\overline{\psi}} \otimes I_{q_a})) = 0
\end{equation}
Specializing equation \ref{eq:myequation} to $b=1$, we have that for $a \in \{2,3\}$ and for all $\psi$, 
\[((\bra{\psi} \otimes I_{p_1} ) \oplus (\bra{\overline{\psi}} \otimes I_{q_1})) U_a (( \ket{\psi} \otimes I_{p_a}) \oplus (\ket{\overline{\psi}} \otimes I_{q_a})) = 0\]
Writing $p_1 = q, q_1 = p$, lemma \ref{lem:complexstructure} implies that $p_a = p$, $q_a = q$ and that we can write:
\[U_2 = \begin{pmatrix} 0 & B_1 \\ C_1 & 0\end{pmatrix}\]
\[U_3 = \begin{pmatrix} 0 & B_2 \\ C_2 & 0\end{pmatrix}\]
with $B_1, B_2 \in U(nq)$ and $C_1, C_2 \in U(np)$. Hence:
\[U_2^\dagger U_3 = \begin{pmatrix} C_1^\dagger C_2 &  0\\ 0 & B_1^\dagger B_2\end{pmatrix}\]
Also, specializing equation \ref{eq:myequation} to $b=2$, $a=3$ and invoking lemma \ref{lem:complexstructure}, we get that $p = q$ and that, for some $B_3,C_3 \in U(np)$, 
\[U_2^\dagger U_3 = \begin{pmatrix} 0 &  B_3\\ C_3 & 0 \end{pmatrix}\]
which is a clear contradiction.
\end{proof}

\subsection{The finitary case: the case of the graph $G_{19}$} \label{sec:subfinitary}
We just showed that the unit sphere in $\mathbb{C}^n$ is never quantumly $n$-colorable for $n \geq 3$, thereby showing that the construction behind theorem \ref{thm:quaternions} fails if the assumption that the orthogonal representation be real is dropped. We would like to have a finitary witness for this, i.e. a finite graph $G_n$ with $\xi(G_n) = n$ and $\chi_q(G_n) > n$.  Here we present a candidate for such a family of graphs. Define the graph $G_{19}$ as the orthogonality graph of the set of vectors $V$ given as follows, where $i$ is the imaginary unit:
\begin{equation}
\begin{aligned}
V := {}&
\left\{
\begin{pmatrix}1\\0\\0\end{pmatrix},
\begin{pmatrix}0\\1\\0\end{pmatrix},
\begin{pmatrix}0\\0\\1\end{pmatrix}
\right\} \\[0.5em]
&{}\cup
\left\{
\begin{pmatrix}1\\1\\0\end{pmatrix},
\begin{pmatrix}1\\-1\\0\end{pmatrix},
\begin{pmatrix}1\\0\\1\end{pmatrix},
\begin{pmatrix}1\\0\\-1\end{pmatrix},
\begin{pmatrix}0\\1\\1\end{pmatrix},
\begin{pmatrix}0\\1\\-1\end{pmatrix}
\right\} \\[0.5em]
&{}\cup
\left\{
\begin{pmatrix}1\\1\\1\end{pmatrix},
\begin{pmatrix}1\\1\\-1\end{pmatrix},
\begin{pmatrix}1\\-1\\1\end{pmatrix},
\begin{pmatrix}1\\-1\\-1\end{pmatrix}
\right\} \\[0.5em]
&{}\cup
\left\{
\begin{pmatrix}1\\0\\-i\end{pmatrix},
\begin{pmatrix}0\\1\\-i\end{pmatrix},
\begin{pmatrix}1\\i\\0\end{pmatrix}
\right\} \\[0.5em]
&{}\cup
\left\{
\begin{pmatrix}1\\1\\i\end{pmatrix},
\begin{pmatrix}1\\-i\\i\end{pmatrix},
\begin{pmatrix}1\\-i\\1\end{pmatrix}
\right\}
\end{aligned}
\end{equation}

We note that the induced subgraph on the first 13 vertices of $G_{19}$ is exactly the graph $G_{13}$ which was considered in \cite{mancinska2018odditiesquantumcolorings}. Hence, we immediately have that $\chi_q(G_{19}) \geq \chi_q(G_{13}) = 4$. We have that equality holds as $G_{19}$ is classically four-colorable, with a possible four-coloring being given by
\[(1, 2, 3, 2, 1, 3, 1, 3, 2, 3, 2, 4, 1, 1, 2, 1, 3, 2, 3)\]

On the other hand, by definition, $\xi(G_{19}) \leq 3$, and equality holds since the first three vertices form a 3-clique. The purpose of this subsection is to prove theorem \ref{thm:boundGn}. Note that the upper bound is immediate as $\chi(G_{n}) = n+1$ for all $n$. Hence, it suffices to show the following:
\begin{theorem} \label{thm:non3coloring}
    For all $n \geq 3$, we have that $\chi^{(1)}_q(G_{n}) > n$. 
\end{theorem}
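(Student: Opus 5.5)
The plan is to prove the statement through a rigidity statement for $G_{19}$ that plays the role theorem \ref{thm:tianxu} played for spheres, and then to run the argument of proposition \ref{prop:complexcase} on finitely many vectors.

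\textbf{Setup.} Suppose for contradiction that $f_1,\dots,f_n$ are pairwise adjacent homomorphisms $G_n \to Gr(1,n)$. For each vertex $v$ the lines $f_1(v),\dots,f_n(v)$ then form an orthonormal basis of $\mathbb{C}^n$.

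\textbf{Step 1: reduce to three dimensions.} Fix a colour $a$. The $n-3$ clique vertices $k$ of $K_{n-3}$ are pairwise adjacent, so the lines $f_a(k)$ are pairwise orthogonal. Every vertex of $G_{19}$ is adjacent to all of them, so each $f_a(v)$ with $v \in V(G_{19})$ lies in the $3$-dimensional orthogonal complement $W_a$ of $\{f_a(k)\}_k$. Hence each $f_a$ restricts to a $3$-dimensional complex orthogonal representation of $G_{19}$ inside $W_a \cong \mathbb{C}^3$.

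\textbf{Step 2: rigidity of $G_{19}$ (finite, computational).} I would show that every orthogonal representation of $G_{19}$ in $\mathbb{C}^3$ agrees, up to a unitary and vertexwise phases, with either the defining representation $v \mapsto v$ from the set $V$ or its complex conjugate. The argument starts from the $3$-clique on the first three vertices and moves it to the standard basis. The adjacencies of the $\{0,\pm1\}$-vectors then pin down the remaining coordinates up to diagonal phases; this is essentially the analysis of $G_{13}$ in \cite{mancinska2018odditiesquantumcolorings}. Finally, the six vectors containing $i$ should force the residual diagonal phases to be real up to a global choice of $i \mapsto \pm i$, which is exactly the conjugation ambiguity. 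The conclusion is that $f_a(v)$ is spanned by $U_a v$ or by $U_a \overline{v}$, where $U_a : \mathbb{C}^3 \to W_a$ is an isometry and the type (plain or conjugate) is fixed for each colour $a$.

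\textbf{Step 3: a finite polarization lemma.} Take two colours $a \neq b$ of the same type. The condition $f_a(v) \perp f_b(v)$ for all $v \in V$ says $\braket{v | M | v} = 0$ for every $v \in V$, where $M = U_a^\dagger U_b$ is a $3\times 3$ matrix. I would check directly that the $19$ rank-one matrices $\ket{v}\bra{v}$ span all of $M_3(\mathbb{C})$; complex entries are needed here to reach non-symmetric directions. This finite replacement for point one of lemma \ref{lem:polarization} gives $M=0$, that is, $W_a \perp W_b$.

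\textbf{Step 4: conclude.} For $n=3$ we have $W_a = \mathbb{C}^3$, so $M$ is unitary and cannot vanish. Since pigeonhole gives two colours of the same type among three, this is a contradiction, and mixed-type pairs never need to be examined. For general $n$, every same-type pair has $W_a \perp W_b$, so each type class contains at most $\lfloor n/3 \rfloor$ colours; since there are only two type classes, this forces $n \leq 2n/3$, which is absurd. So whenever $n \geq 3$ some type class has more than $n/3$ colours, and their mutually orthogonal $3$-dimensional subspaces $W_a$ cannot fit in $\mathbb{C}^n$.

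\textbf{Main obstacle.} I expect Step 2 to be the hardest part. The reduction to standard form and the elimination of the phases is a delicate case analysis, and I would first try to carry it out with the three columns of coordinates treated symbolically. The spanning claim in Step 3 is a routine rank computation.
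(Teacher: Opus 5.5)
Your proposal is essentially the paper's proof: Step 1 is the reduction via the clique $\{1,2,3\} \cup K_{n-3}$, Step 2 is Theorem \ref{thm:classificationreps} (your plain and conjugate types are the cases $b=-1$ and $b=+1$ once $\mathrm{diag}(1,\alpha,\beta)$ is absorbed into the unitary), Step 3 is Lemma \ref{lem:Aiszero} recast as a spanning statement (vertices $1,2,3,4,6,8,14,15,16$ already span $M_3(\mathbb{C})$), and Step 4 is the same pigeonhole-and-dimension count. The one ingredient your Step 2 sketch leaves out is non-degeneracy: the coordinate propagation only works if non-adjacent vertices never receive collinear vectors, so that base vectors have the needed nonzero coordinates and each propagated vector is cut out by two independent neighbours, and the paper obtains this from a computer check (Lemma \ref{lem:notcollinear}).
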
 
We need the following structural input about the graph $G_{19}$, which is proven with a computer using the algorithm described in the last section of \cite{lalonde2023quantumchromaticnumberssmall}. The rest of the proof, however, proceeds in a human-checkable fashion. 
\begin{lemma} \label{lem:notcollinear}
Let $u$ and $v$ be distinct non-adjacent vertices of $G_{19}$ and let $G'$ be the graph on 18 vertices that is obtained by identifying them. Then, $\xi(G') > 3$. 
\end{lemma}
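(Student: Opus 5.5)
The plan is to reformulate the statement as a rigidity property of orthogonal representations of $G_{19}$ in $\mathbb{C}^3$. An element of $\text{Hom}(G',S^2_\mathbb{C})$ is the same as an element $f \in \text{Hom}(G_{19},S^2_\mathbb{C})$ with $f(u)=f(v)$. Since the adjacency relation on $S^2_\mathbb{C}$ only sees lines, it suffices to show the following: for every orthogonal representation $f$ of $G_{19}$ in $\mathbb{C}^3$ and every non-adjacent pair $u \neq v$, the vectors $f(u)$ and $f(v)$ are not collinear. I would therefore first compute $\text{Aut}(G_{19})$ and reduce to one representative non-adjacent pair per orbit. For each representative, the task becomes showing that the system of polynomial equations
\[
\braket{f(x) | f(y)} = 0 \ \ ((x,y) \in E(G_{19})), \qquad f(u) \wedge f(v) = 0, \qquad \|f(x)\|=1
\]
has no solution.

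To make this tractable I would normalize. First, the vertices $1,2,3$ form a triangle, so a unitary lets us assume that they are sent to the standard basis $e_1,e_2,e_3$. Further diagonal phase unitaries, together with possibly complex conjugation (which preserves orthogonality), fix the phases of a few further vectors. Second, I would exploit the basic forcing principle in $\mathbb{C}^3$: if $z$ is orthogonal to two linearly independent vectors $x,y$, then $z$ is collinear with $\overline{x \times y}$. Vertices of the second group, such as $(1,1,0)$, are orthogonal to $e_3$, so they lie in a coordinate plane, which leaves one complex parameter each. The pairs among them that are orthogonal, together with their adjacencies to the third and fourth groups, then pin down these parameters step by step. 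In the generic branch this propagation should show that $f$ is, up to unitary and conjugation, the defining representation on all 19 vertices. In that representation a direct inspection shows that no two of the listed vectors are collinear, which settles the generic branch.

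The main obstacle is the degenerate branches of this propagation. The forcing principle fails exactly when the two vectors $x,y$ used to determine $z$ become collinear. This is precisely the kind of coincidence that the identification of $u$ and $v$ may introduce, and it can cascade, since one coincidence may enable further ones. I would handle this by a case tree. At each forcing step, either the two determining vectors are independent and we propagate, or they are collinear and we add that collinearity as a new identification and recurse on a smaller quotient graph. Each branch must then end either with the clique number of the quotient graph exceeding $3$ (for example, an identification creating a $K_4$, or making a vertex adjacent to a collinear copy of itself, which is impossible), or with an inconsistency among the forced vectors. This branching is exactly what the computer search of \cite{lalonde2023quantumchromaticnumberssmall} automates. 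As a fallback for any branch that resists hand analysis, I would certify infeasibility algebraically: introduce the remaining free coordinates and their conjugates as independent real variables, add Rabinowitsch-style variables enforcing nonvanishing of the norms, and exhibit a Gröbner basis certificate that $1$ lies in the resulting ideal.
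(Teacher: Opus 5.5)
Your plan is correct and takes essentially the same route as the paper, which establishes this lemma only by a computer check using the orthogonal-representation algorithm of \cite{lalonde2023quantumchromaticnumberssmall}. That is an exhaustive propagate-or-identify analysis of $3$-dimensional representations like the one you describe, and your generic branch is exactly the cross-product propagation written out by hand in the proof of Theorem~\ref{thm:classificationreps}, where the only possible degeneracies turn out to be the base vertices $4,6,8,14$ collapsing onto coordinate axes. One caveat about your fallback: a Gr\"obner certificate computed with conjugates (or real and imaginary parts) as independent variables is sound but need not exist, because the complexified system can have spurious non-Hermitian solutions, so a stubborn branch might require a real-algebraic (Positivstellensatz or sum-of-squares) certificate instead.
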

\begin{figure}[!t]
    \centering
    \includegraphics[width=0.6\textwidth]{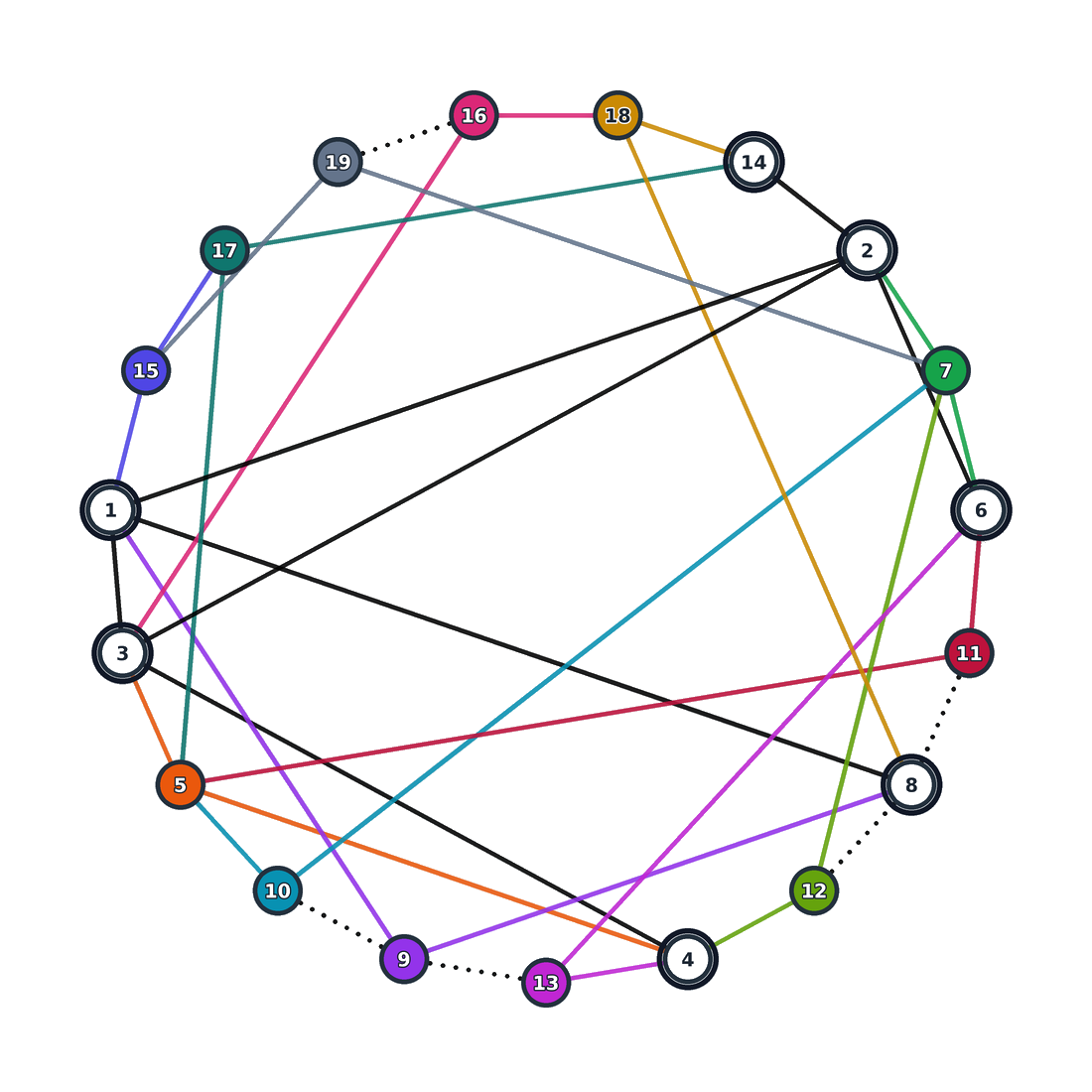}
    \caption{An alternative representation of the graph $G_{19}$, used in the proof of theorem \ref{thm:classificationreps}.}
    \label{fig:g19_2}
\end{figure}

Given this input, we can now analytically classify all possible orthogonal representations of the graph $G_{19}$. This is achieved by manually running an algorithm used in the Kochen-Specker literature for proving that a given graph does not have a real three-dimensional orthogonal representation and which is due to \cite{Uijlen_2014}. As a byproduct, theorem \ref{thm:complexrealdistinct} is proven. 
\begin{theorem} \label{thm:classificationreps}
Up to a global unitary and up to scaling, the orthogonal representations of the graph $G_{19}$ in $\mathbb{C}^3$ are exactly those of the following form, for some $\alpha, \beta \in \mathbb{C}$ with $|\alpha| = |\beta| = 1$ and for some $b \in \{1, -1\}$, which will be referred to as being in normal form:
\[
\begin{array}{llll}
\psi_1=(1,0,0),&
\psi_2=(0,1,0),&
\psi_3=(0,0,1),&
\psi_4=(1,\alpha,0),\\[1mm]
\psi_5=(1,-\alpha,0),&
\psi_6=(1,0,\beta),&
\psi_7=(1,0,-\beta),&
\psi_8=\left(0,\alpha,\beta\right),\\[1mm]
\psi_9=\left(0,\alpha,-\beta\right),&
\psi_{10}=(1,\alpha,\beta),&
\psi_{11}=(1,\alpha,-\beta),&
\psi_{12}=(1,-\alpha,\beta),\\[1mm]
\psi_{13}=(1,-\alpha,-\beta),&
\psi_{14}=(1,0,b i\beta),&
\psi_{15}=\left(0,\alpha,bi\beta\right),&
\psi_{16}=(1,-b i\alpha,0),\\[1mm]
\psi_{17}=(1,\alpha,-b i\beta),&
\psi_{18}=(1,b i\alpha,-b i\beta),&
\psi_{19}=(1,b i\alpha,\beta)
\end{array}
\]
In particular, there does not exist an orthogonal representation of $G_{19}$ in $\mathbb{R}^3$. 
\end{theorem}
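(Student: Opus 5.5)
We propose to classify the orthogonal representations by propagating orthogonality constraints outward from the clique formed by $\psi_1,\psi_2,\psi_3$. This is in the spirit of the Uijlen–Winands algorithm, with lemma \ref{lem:notcollinear} used at every branching point to exclude degenerate choices. Let $\{\phi_v\}$ be an orthogonal representation of $G_{19}$ in $\mathbb{C}^3$. Since vertices $1,2,3$ form a triangle, we may apply a global unitary so that $\phi_1,\phi_2,\phi_3$ are the standard basis vectors up to phases. Since each vertex may be rescaled individually, we may take them to be exactly $e_1,e_2,e_3$.

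The first key observation is the following. Whenever two non-adjacent vertices $u,v$ have a common neighbour pair $w,w'$ with $\phi_w\perp\phi_{w'}$, the vectors $\phi_u$ and $\phi_v$ both lie in the one-dimensional complement of $\mathrm{span}(\phi_w,\phi_{w'})$. In that situation lemma \ref{lem:notcollinear} forbids $\phi_u$ and $\phi_v$ from being collinear, so such a configuration cannot occur. More generally, we use lemma \ref{lem:notcollinear} in this form: no two distinct non-adjacent vertices receive collinear vectors. With this in hand, we proceed layer by layer.

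Vertex $4$ is orthogonal to $e_3$, so $\phi_4\propto(1,\alpha,0)$, where $\alpha\neq 0$ because $\phi_4$ is not collinear with $\phi_1$ or $\phi_2$ (lemma \ref{lem:notcollinear}). After rescaling, and using the freedom of a diagonal unitary (which we may instead fix later), we can reduce to $|\alpha|=1$. One way is to use the edge $4$–$5$ together with the fact that $\phi_5\perp e_3$, which forces $\phi_5\propto(\bar\alpha,-1,0)$. The remaining modulus constraints then come from further vertices. Similarly, $\phi_6=(1,0,\beta)$ and $\phi_7\propto(\bar\beta,0,-1)$.

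Next, vertices $8,\dots,13$ are each determined as cross products of two already-fixed vectors; these are the "forced" steps of the algorithm. For example, $\phi_{10}$ is the cross product of the two vectors it is adjacent to. Consistency with the remaining adjacencies among $8$–$13$ (each of $10$–$13$ is orthogonal to one of $8,9$ and one of $4$–$7$) yields polynomial equations in $\alpha,\beta$. We expect these to force $|\alpha|=|\beta|=1$ after appropriate normalization, recovering the $G_{13}$ structure. We then treat vertices $14$–$19$ in the same way. Each is fixed as a cross product, and the final closing adjacencies give an equation of the form $\gamma^2=-1$ for the ratio $\gamma$ appearing in $\psi_{14}$. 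This produces the sign $b$ and the $bi$ factors. Finally, we check directly that every normal-form family is indeed a representation.

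The main obstacle is ensuring that, at every step, the vertex being fixed has two already-determined, non-collinear neighbours, so that the cross product is well defined. This requires finding a good vertex ordering (presumably the one depicted in figure \ref{fig:g19_2}), and degenerate branches must be ruled out via lemma \ref{lem:notcollinear} each time. The realness claim then follows at once: $bi\beta$ and $\beta$ cannot both be real, so $\psi_{14}$ and $\psi_6$ cannot simultaneously be real.
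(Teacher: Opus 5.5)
Your overall strategy is the paper's. You put the triangle $\{1,2,3\}$ on the standard basis and use Lemma \ref{lem:notcollinear} to know that all the $\psi_v$ are pairwise non-collinear, so every Hermitian cross product you take is well defined and nonzero. You then propagate and impose the edges not used in the propagation. But the propagation as you describe it does not go through, because your picture of the adjacencies is wrong in two places.

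\textbf{Vertices 8 to 13.} Among vertices $1,\dots,7$, vertices $8$ and $9$ are adjacent only to vertex $1$, so they cannot be fixed as cross products right after $4,\dots,7$. Conversely, each of $10,11,12,13$ has \emph{two} neighbours among $4,\dots,7$, not one; they are $\{5,7\},\{5,6\},\{4,7\},\{4,6\}$ respectively. So you have two options.
\begin{itemize}
\item Fix $10,\dots,13$ first, then get $\psi_8$ from $\psi_{11},\psi_{12}$ and $\psi_9$ from $\psi_{10},\psi_{13}$. The closing conditions $\psi_8\perp\psi_1$ and $\psi_9\perp\psi_1$ then read $|\alpha||\beta|=1$ and $|\alpha|=|\beta|$, and $\psi_8\perp\psi_9$ follows automatically.
\item As the paper does, make $8$ a base vertex $\psi_8=(0,1,\gamma)$ and close along $(9,10),(9,13),(8,11),(8,12)$.
\end{itemize}

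\textbf{Vertices 14 to 19.} The vertices $14,17,15,19,16,18$ form a hexagon in that cyclic order. Each is attached to exactly one vertex of $G_{13}$, namely $2,5,1,7,3,8$ respectively. So none of them is a cross product of previously fixed vectors, contrary to your claim that each is fixed as a cross product. You need a fresh parameter, e.g.\ $\psi_{14}=(1,0,\delta)$ with $\delta\neq 0$ by the lemma. Going around the hexagon then determines the rest. The single closing edge $(16,19)$ gives $\delta^2=-\beta^2$ once $|\beta|=1$ is known. This is exactly the paper's fourth base vertex and fifth equation.

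\textbf{Two smaller points.} A diagonal unitary only changes phases, so it cannot be used to arrange $|\alpha|=1$. That must come out of the closing equations, which you leave at ``we expect'', although they are the actual content of the theorem and must be carried out.

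For the real case, ``$\beta$ and $bi\beta$ cannot both be real'' needs one more remark. A real representation can be brought to the standard frame by a real orthogonal map and real rescalings, so its parameters come out real. Equivalently, the quantity $\langle\psi_3|\psi_{14}\rangle\langle\psi_1|\psi_6\rangle/(\langle\psi_1|\psi_{14}\rangle\langle\psi_3|\psi_6\rangle)$ is invariant under unitaries and rescalings. It equals $bi$ in normal form, but it would be real for a real representation.
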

\begin{proof}
Let $\{\psi_a\}_{a \in [19]}$ be an orthogonal representation of $G_{19}$.  Note that the first three vertices of $G_{19}$ form a clique. Since orthogonal representations are being classified up to unitary equivalence, we may assume:
\[
\begin{array}{llll}
\psi_1=(1,0,0),&
\psi_2=(0,1,0),&
\psi_3=(0,0,1)
\end{array}
\]

The rest of the proof is then described visually by figure \ref{fig:g19_2}. We start by parameterizing the vectors associated to vertices with a bold outline, which we call base vertices. Looking at vertex four, for example, the fact that it is adjacent to vertex three implies that $\psi_4$ is of the form $\alpha' \ket{1} + \alpha \ket{2}$. Note however that we know from lemma \ref{lem:notcollinear} that $\psi_4$ is not collinear with $\psi_1$ or $\psi_2$, so that $\alpha \alpha'\neq 0$ necessarily. Since we are disregarding scaling, we can assume that $\alpha' = 1$ without loss of generality. Applying this logic to all remaining base vertices, the corresponding vectors can be parameterized as follows, given  $\alpha,\beta,\gamma,\delta \in \mathbb{C} \backslash \{0\}$:
\[
\begin{array}{llll}
\psi_4=(1,\alpha,0), \;
\psi_6=(1,0,\beta), \;
\psi_8=(0,1,\gamma), \;
\psi_{14}=(1,0,\delta)
\end{array}
\]
For the next step of the proof, let $G'_{19}$ be the subgraph of $G_{19}$ corresponding to the undashed edges in figure \ref{fig:g19_2}. The point is that the vectors that were assigned to the base vertices extend to a unique orthogonal representation of $G'_{19}$, using the following algorithm: we find a non-base vertex $v$ with undetermined $\psi_v$ whose two neighbors $w_1,w_2$ along the edges of the same color as $v$ have already determined $\psi_{w_i}$. By lemma \ref{lem:notcollinear}, it can be assumed that the $\psi_{w_i}$ are not collinear (as it turns out, the ones produced by the algorithm are not collinear for any values of the parameters), and hence the condition that $\psi_v$ be orthogonal to them pins it down up to scale. An expression for it can be obtained by taking the cross product of the conjugates of the $\psi_{w_i}$. The base vertices have been chosen so that this algorithm ends up determining $\psi_v$ for every vertex $v$, with a possible sequence of vertices being:
\[9, 18, 16, 13, 5, 11, 17, 15, 7, 10, 19, 12\]
Given the vectors that were assigned to the base vertices, the corresponding vectors in the rest of the induced orthogonal representation of $G'_{19}$ (in which some algebraic simplifications have been made for convenience) are the following:
\[
\begin{array}{llll}
\psi_5=(\bar\alpha,-1,0),&
\psi_7=(\bar\beta,0,-1),&
\psi_9=(0,\bar\gamma,-1),&
\psi_{10}=(1,\alpha,\beta), \\[1mm]
\psi_{11}=(\bar\beta,\alpha\bar\beta,-1),&
\psi_{12}=(\bar\alpha,-1,\bar\alpha\beta),&
\psi_{13}=(\bar\alpha\bar\beta,-\bar\beta,-\bar\alpha),&
\psi_{15}=(0,1,\bar\alpha\delta),\\[1mm]
\psi_{16}=(\gamma,-\delta,0),&
\psi_{17}=(\bar\delta,\alpha\bar\delta,-1),&
\psi_{18}=(\bar\delta,\bar\gamma,-1),&
\psi_{19}=(1,-\alpha\beta\bar\delta,\beta)
\end{array}
\]
We then impose the necessary algebraic relations so that this orthogonal representation of $G'_{19}$ is in fact a valid orthogonal representation of $G_{19}$, i.e. so that the orthogonality relations corresponding to the dashed edges are satisfied. These are: 
\[E(G_{19}) \backslash E(G'_{19}) = \{(9,10),(9,13),(8,11), (8,12),(16,19)\}\] 
The corresponding equations are:
\begin{align}
\beta &= \gamma \alpha \label{eq:ga-beta}\\
 \alpha  &= \beta \bar{\gamma} \label{eq:bg-alpha}\\
\gamma  &= \bar{\alpha} \beta \label{eq:ab-gamma}\\
\alpha \gamma \bar{\beta} &= 1 \label{eq:agb-one}\\
\gamma &= - \delta^2 \bar{\alpha} \bar{\beta} \label{eq:gamma-delta}
\end{align}
Substituting \ref{eq:ab-gamma} into \ref{eq:ga-beta} and \ref{eq:bg-alpha}, we get that $\bar{\alpha} \alpha = \bar{\beta}\beta = 1$. From here, the first four equations can be seen to be equivalent to the conditions that $|\alpha| = |\beta| = 1$ and $\gamma = \beta \overline{\alpha}$. Finally, equation \ref{eq:gamma-delta} gives $\beta^2 = -\delta^2$, hence $\delta = i b\beta$ for $b \in \{1, -1\}$. Substituting these expressions for $\gamma$ and $\delta$ yields the first part of the theorem.

That $G_{19}$ does not have an orthogonal representation in $\mathbb{R}^3$ can be seen from the classification. Alternatively, if such a representation existed, one would exist with $\alpha, \beta, \gamma, \delta$ all real. The condition $\beta^2 = -\delta^2$ that was previously derived would force $\beta = \delta = 0$, which contradicts equation \ref{eq:agb-one}. 
\end{proof}

This lets us show the following lemma: 
\begin{lemma} \label{lem:Aiszero}
Let $b \in \{1,-1\}$ and $\alpha_1,\alpha_2,\beta_1,\beta_2$ be complex numbers with unit magnitude, and let $\{\psi_v\}$ and $\{\phi_v\}$ be the orthogonal representations of $G_{19}$ given by theorem \ref{thm:classificationreps} with parameters $(\alpha_1, \beta_1, b)$ and 
$(\alpha_2, \beta_2, b)$ respectively. If $A$ is a $3 \times 3$ matrix such that, for all $v$, 
\begin{align}
\braket{\phi_v | A | \psi_v}  &= 0 \label{eq:Aorthogonality}
\end{align}
Then $A = 0$.
\end{lemma}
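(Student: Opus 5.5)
This is a finite linear system: 19 homogeneous equations in the 9 entries $A_{jk}$ of $A$, and I will show it has only the trivial solution. Each condition reads $\sum_{j,k}\overline{(\phi_v)_j}\,A_{jk}\,(\psi_v)_k=0$. Since both representations are in the normal form of theorem \ref{thm:classificationreps} with the same sign $b$, every vector has the shape $(\epsilon_1, \epsilon_2\alpha, \epsilon_3\beta)$ with $\epsilon_j\in\{0,\pm1,\pm bi\}$. The same pattern of $\epsilon$'s occurs at vertex $v$ in both families; only $(\alpha,\beta)$ differs.

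The first step is to remove the parameters by a change of variables. Set $D_\ell=\mathrm{diag}(1,\alpha_\ell,\beta_\ell)$, so that $\psi_v=D_1 e_v$ and $\phi_v=D_2 e_v$, where $e_v$ is the parameter-free vector of $\epsilon$'s. Equation \ref{eq:Aorthogonality} then becomes $\braket{e_v|B|e_v}=0$ with $B=D_2^\dagger A D_1$. Because the $D_\ell$ are invertible (the parameters have unit modulus), $A=0$ if and only if $B=0$. So it suffices to show that the only $B$ with $\braket{e_v|B|e_v}=0$ for all 19 fixed vectors $e_v$ is $B=0$. In the normal form with $\alpha=\beta=1$, these are exactly the vectors listed in the definition of $V$ (for $b=1$, up to scaling and conjugation).

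Next I will read off the entries of $B$ in stages.
\begin{enumerate}
\item Vertices $1,2,3$ (the standard basis vectors) kill the diagonal entries $B_{11},B_{22},B_{33}$.
\item Vertices $4$ and $5$, namely $(1,\pm1,0)$, give $\pm(B_{12}+B_{21})=0$. Vertices $6,7$ and $8,9$ give the analogous symmetric-part equations for the other two index pairs. This forces $B$ to be antisymmetric, $B=-B^T$, leaving three unknowns $B_{12},B_{13},B_{23}$.
\item Vertices $10$ through $13$, the real $(1,\pm1,\pm1)$ vectors, then give nothing new. This is consistent with lemma \ref{lem:polarization_1}, since real vectors cannot detect an antisymmetric part.
\item The complex vertices $14$ through $19$ finish the job. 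For $e=(1,0,bi)$, the quadratic form is $\bar e_1 e_3 B_{13}+\bar e_3 e_1 B_{31}=bi B_{13}-bi B_{31}=2biB_{13}$, which forces $B_{13}=0$. Vertex $15$, $(0,1,bi)$, gives $B_{23}=0$. Vertex $16$, $(1,-bi,0)$, gives $B_{12}=0$.
\end{enumerate}
Hence $B=0$, and therefore $A=0$.

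The main obstacle is the first step: checking carefully that the normal form really factors as $D_\ell e_v$ with $e_v$ independent of $(\alpha,\beta)$, and tracking the conjugations correctly. The bra $\bra{\phi_v}$ conjugates $\alpha_2$ and $\beta_2$, which is why the change of variables uses $D_2^\dagger$ on the left. Once this is done, the rest is the short elimination above. The hypothesis that both representations share the same $b$ is exactly what makes the $e_v$ coincide on the two sides.
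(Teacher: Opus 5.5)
Your proof is correct and is essentially the paper's argument. The paper kills the diagonal with vertices $1,2,3$ and then solves the $2\times 2$ systems from the pairs $\{4,16\}$, $\{6,14\}$, $\{8,15\}$. Its equations, such as $\alpha_1 a_{12}+\bar\alpha_2 a_{21}=0$, are exactly your $B_{12}+B_{21}=0$ under $B=D_2^\dagger A D_1$, so your substitution only makes the parameter-independence explicit. One harmless imprecision: $\psi_2,\psi_3$ (and $\phi_2,\phi_3$) equal $D_\ell e_v$ only up to a nonzero scalar, which does not matter because each condition is homogeneous in $\psi_v$ and $\phi_v$.
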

\begin{proof}
Write
\[A = \begin{pmatrix} a_{11} & a_{12} & a_{13} \\ a_{21} & a_{22} & a_{23} \\ a_{31} & a_{32} & a_{33} \end{pmatrix}\]
Applying condition \ref{eq:Aorthogonality} with $v \in \{1,2,3\}$, we get that $a_{1,1} = a_{2,2} = a_{3,3} = 0$. After this, applying it with $v \in \{4,16\}$, we get the equations:
\[\alpha_1 a_{1,2} + \bar{\alpha}_2 a_{2,1} = 0 \]
\[-bi\alpha_1 a_{1,2} + bi\overline{\alpha}_2 a_{2,1} = 0 \]
Substituting the first equation into the second and using the fact that $b\alpha_1 \neq 0$ gives $a_{1,2}=0$, after which the second equation gives $a_{2,1}=0$. By the same token, applying \ref{eq:Aorthogonality} with $v \in \{6, 14\}$ yields $a_{1,3} = a_{3,1} = 0$, and applying it with $v \in \{8, 15\}$ yields $a_{2,3} = a_{3,2} = 0$. Hence $A=0$, as desired.
\end{proof}

We can now complete the proof of the main theorem of the subsection:
\begin{proof}[Proof of theorem \ref{thm:non3coloring}]
Take $n \geq 3$. We proceed by contradiction. Suppose that $G_{n}$ does have a rank-one quantum $n$-coloring, so that, identifying $S_\mathbb{C}^{n-1}$ with $\mathbb{CP}^{n-1}$ (as the two graphs are homomorphically equivalent), there exist $n$ orthogonal representations $\{\psi^{k}_v\}_{k \in [n], v \in V(G_{n})}$ of $G_{n}$ satisfying, for all $v$ and all $k \neq k'$,
\[\braket{\psi^{k}_v | \psi^{k'}_v} = 0\]
Let $T$ be the subspace of $\mathbb{C}^{n}$ spanned by the first three basis vectors. Note that the vertices $K = \{1,2,3,20,\dots,16+n\}$ form a $n$-clique of $G_{n}$ by construction.  For a given $k \in [n]$, let $U_k \in U(n)$ be the unitary whose columns consist of the (normalized) vectors $\psi^{k}_v$ with $v \in K$ in increasing order. By construction, given $v \in [19]$, we have that $U_k^\dagger\ket{\psi^{k}_v} \in T$, and $U_k^\dagger\ket{\psi^{k}_v} = \ket{v}$ for $v \in \{1,2,3\}$. Hence, the $\{U_k^\dagger\ket{\psi^{k}_v}\}_{v \in [19]}$ form an orthogonal representation of $G_{19}$ in $T\cong \mathbb{C}^3$ in normal form. Such representations were classified by theorem \ref{thm:classificationreps}: let $(\alpha_k, \beta_k, b_k)$ be the corresponding parameters. Since there are only two possible values for the $b_k$, setting $m= \lceil \frac{n}{2} \rceil$, by reordering, we can assume that $b_k = b_{k'}$ for all $k,k' \in [m]$. For $k \in [m]$, set $V_k = U_1^\dagger U_k$. Then, for all $k \neq k'$ and all $v \in [19]$, we have:
\[(\bra{\psi^{{k'}}_v}U_{k'})V_{k'}^\dagger V_k (U_k^\dagger\ket{\psi^{k}_v}) =\braket{\psi^{k'}_v |\psi^{k}_v} = 0\]
Lemma \ref{lem:Aiszero} then implies that the $V_{k'}^\dagger V_k$ are of the form:
\[
V_{k'}^\dagger V_k=
\begin{pmatrix}
0_{3\times 3} & *\\
W_{k',k} & *
\end{pmatrix} \]
where the $W_{k,k'}$ are $(n-3) \times 3$ isometries, since $V_{k'}^\dagger V_k$ is unitary. Since $V_1 = I$, setting $W_k = W_{1,k}$ for $k > 1$, this shows that $W_{k'}^\dagger W_k = 0$ whenever $k \neq k'$ and $k,k' > 1$, i.e. the column spaces of the $W_k$ are all pairwise orthogonal for $k > 1$. This gives $3(m-1)$ linearly independent vectors in a vector space of dimension $n-3$, which forces $3(m - 1) \leq n-3$, which is easily verified to be impossible for any $n \geq 3$. This finishes the proof. 
\end{proof}
One would like to show more strongly that $\chi_q(G_{n}) = n+1$ for all $n$. This seems difficult: even classifying elements of Hom($G_{n}, Gr(r, 3r)$) beyond $r=1$ becomes much more challenging, as there doesn't appear to be a simple proof anymore that subspaces assigned to different vertices must be linearly independent, in which case the propagation algorithm that was used to prove theorem \ref{thm:classificationreps} is no longer applicable. Even if this were to be true and a classification could be obtained, it is not at all clear that lemma \ref{lem:Aiszero} could be generalized in such a way that the proof technique of theorem \ref{thm:non3coloring} could be applicable, as we used a fact about the orthogonal representations of $G_{19}$ that was quite special. Most likely, the best that can be done is to prove that this holds for the first few values of $n$ using sum-of-square techniques, although that too appears to be quite hard because the semidefinite programs at play are very large. 

\section{Quantum colorings of real spheres} \label{sec:realspheres}
\subsection{The recasting as a linear-algebraic problem and consequences} \label{subsec:introitus}
This section handles the problem of determining for which values of $n$ it holds that $\chi_q(S^{n-1}) = n$. We first record the following algebraic simplification, which greatly simplifies the problem: 
\begin{proposition}
\label{prop:redrsp}
For $n \geq 3$, $r\geq 1$, it holds that $\chi_q^{(r)}(S^{n-1}) = n$ if and only if there exist unitaries $V_1, \dots,V_n$ on $\mathbb{C}^n \otimes \mathbb{C}^r$ such that, for all $a \neq b$, 
\[(V_a^\dagger V_b)^{\Gamma_1} = -V_a^\dagger V_b\]
\end{proposition}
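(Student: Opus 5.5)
The plan is to combine the rigidity theorem \ref{thm:tianxu} with the polarization lemma \ref{lem:polarization}. By definition, $\chi^{(r)}_q(S^{n-1}) = n$ means there are $n$ pairwise adjacent homomorphisms $f_1,\dots,f_n \in \text{Hom}(S^{n-1}, Gr(r,rn))$. (The inequality $\chi^{(r)}_q(S^{n-1}) \geq n$ is automatic from the $n$-clique, since $n$ pairwise orthogonal rank-$r$ projectors are needed in $\mathbb{C}^{rn}$. So equality is the same as the existence of such an $n$-tuple.)

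For the forward direction, apply theorem \ref{thm:tianxu} with $\mathbb{F}=\mathbb{R}$ and $q=0$. Each $f_a$ then has the form $f_a(\psi) = V_a(\ket{\psi}\bra{\psi}\otimes I_r)V_a^\dagger$ for some unitary $V_a$ on $\mathbb{C}^n\otimes\mathbb{C}^r$. The adjacency of $f_a$ and $f_b$ says that the ranges of $V_a(\ket{\psi}\otimes I_r)$ and $V_b(\ket{\psi}\otimes I_r)$ are orthogonal for every $\psi \in S^{n-1}$. Equivalently,
\[(\bra{\psi}\otimes I_r)V_a^\dagger V_b(\ket{\psi}\otimes I_r) = 0 \quad \text{for all } \psi\in S^{n-1}.\]
Point two of lemma \ref{lem:polarization}, applied with $p=q=r$ and $M = V_a^\dagger V_b$, turns this into $(V_a^\dagger V_b)^{\Gamma_1} = -V_a^\dagger V_b$.

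For the converse, start from unitaries $V_a$ satisfying the condition and define $f_a(\psi) = V_a(\ket{\psi}\bra{\psi}\otimes I_r)V_a^\dagger$. Each $f_a$ is a rank-$r$ projector, and $f_a$ is a homomorphism $S^{n-1}\to Gr(r,rn)$. Indeed, if $\psi\perp\phi$, then $(\ket{\psi}\bra{\psi}\otimes I)(\ket{\phi}\bra{\phi}\otimes I)=0$, and this is preserved under conjugation by $V_a$ (alternatively, these maps are exactly the ones listed in theorem \ref{thm:tianxu}). Running the polarization lemma in reverse shows that $f_a$ and $f_b$ are adjacent for $a\neq b$. Hence the $f_a$ form a rank-$r$ quantum $n$-coloring.

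There is no real obstacle, since the heavy lifting is done by theorem \ref{thm:tianxu}. The one point to watch is making sure the real case genuinely allows $q=0$, so that no conjugated blocks appear and the partial transpose condition comes out cleanly. A second check is that the orthogonality of ranges really is equivalent to the vanishing of the compressed $r\times r$ block, which holds because both maps $V_a(\ket{\psi}\otimes I_r)$ are isometries.
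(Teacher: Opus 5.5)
Your proposal is correct and follows the paper's proof: Theorem \ref{thm:tianxu} with $q=0$ puts each $f_a$ in the form $V_a(\ket{\psi}\bra{\psi}\otimes I_r)V_a^\dagger$, adjacency reduces to the vanishing of $(\bra{\psi}\otimes I_r)V_a^\dagger V_b(\ket{\psi}\otimes I_r)$ for all real unit $\psi$, and point two of Lemma \ref{lem:polarization} turns this into the partial-transpose condition. You also spell out three details the paper leaves implicit: the $n$-clique bound $\chi^{(r)}_q(S^{n-1}) \geq n$, which makes equality equivalent to the existence of a rank-$r$ $n$-coloring; the isometry argument showing that orthogonality of the ranges is equivalent to the vanishing of the compressed block; and the direct check in the converse that the maps $f_a$ are homomorphisms.
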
 \begin{proof}
By definition, we have that $\chi_q^{(r)}(S^{n-1}) = n$ if and only if there exist $n$ pairwise adjacent homomorphisms $f_a: S^{n-1} \to Gr(r, rn)$. From the structure theorem \ref{thm:tianxu}, such homomorphisms are exactly these of the following form, for some unitaries $V_1,\dots,V_n$ acting on $\mathbb{C}^n \otimes \mathbb{C}^r$: 
\[f_a(\ket{\psi}\bra{\psi}) = V_a(\ket{\psi} \bra{\psi} \otimes I_r) V_a^\dagger\]
These homomorphisms are pairwise adjacent precisely when, for all $\psi \in S^{n-1}$ and $a,b \in [n]$ with $a \neq b$, 
\[(\bra{\psi} \otimes I_r) V_a^\dagger V_{b} (\ket{\psi} \otimes I_r) = 0\]
Lemma \ref{lem:polarization} shows that this holds precisely when the $V_a$ satisfy the condition in the statement of the proposition. 
\end{proof}

As a warm-up and to illustrate why Clifford algebras turn out to be essential for determining when unitaries as in the statement of proposition \ref{prop:redrsp} exist, we can handle the case of rank-one colorings without using the machinery of the next subsection. The following proves theorem \ref{thm:quaternions} and corollary \ref{cor:rankone}:
\begin{corollary}
    It holds that $\chi^{(1)}_q(S^{n-1}) = n$ if and only if $n \in \{2,4,8\}$.
\end{corollary}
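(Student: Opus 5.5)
We work directly from proposition \ref{prop:redrsp} with $r=1$. The case $n=2$ is trivial because $S^1$ is bipartite, so assume $n \geq 3$. By the proposition, $\chi^{(1)}_q(S^{n-1}) = n$ holds if and only if there are unitaries $V_1,\dots,V_n \in U(n)$ with $(V_a^\dagger V_b)^T = -V_a^\dagger V_b$ for all $a \neq b$. With $r=1$ the partial transpose is just the transpose, so the condition says that every $V_a^\dagger V_b$ with $a\ne b$ is antisymmetric.

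We first normalize by replacing $V_a$ with $V_1^\dagger V_a$; the products $V_a^\dagger V_b$ are unchanged. Now $V_1 = I$, and we set $W_a = V_{a+1}$ for $a \in [n-1]$. The conditions become:
\begin{itemize}
\item $W_a^T = -W_a$ for each $a$;
\item $W_a^\dagger W_b$ is antisymmetric for $a \neq b$.
\end{itemize}
Since $W_a$ is unitary and antisymmetric, $W_a^\dagger = \overline{W_a}^T = -\overline{W_a}$. The second condition then reads $(\overline{W_a} W_b)^T = -\overline{W_a} W_b$. Expanding the left side gives $W_b^T \overline{W_a}^T = W_b \overline{W_a}$, so the condition is
\[W_b\overline{W_a} + \overline{W_a}W_b = 0 .\]
For $a=b$ we also have $W_a\overline{W_a} = -W_a W_a^\dagger = -I$.

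To turn these relations into Clifford relations, we pass to the real linear maps $J_a(x) = W_a \overline{x}$ on $\mathbb{C}^n \cong \mathbb{R}^{2n}$. Then $J_aJ_b(x) = W_a\overline{W_b}x$. From the relations above, $J_a^2 = -\mathrm{id}$ and $J_aJ_b + J_bJ_a = 0$ for $a\neq b$. Each $J_a$ is orthogonal and antilinear, so it anticommutes with the complex structure $i$. Setting $J_0 = i$ gives $n$ pairwise anticommuting orthogonal complex structures $J_0,\dots,J_{n-1}$ on $\mathbb{R}^{2n}$. Equivalently, $\mathbb{R}^{2n}$ becomes a module over the real Clifford algebra $Cl_{0,n}$ with $n$ generators squaring to $-1$.

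\textbf{Converse.} Conversely, such a module with the first generator taken as $i$ yields the antilinear $J_a$, and hence antisymmetric unitaries $W_a$. Reversing the computation recovers the $V_a$, so the equivalence holds in both directions.

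\textbf{Main obstacle.} The main step is the classification: for which $n$ does $\mathbb{R}^{2n}$ carry a $Cl_{0,n}$-module structure? By the standard Radon–Hurwitz count, $\mathbb{R}^N$ admits $k$ anticommuting complex structures if and only if $k \leq \rho(N)-1$, where $\rho$ is the Radon–Hurwitz number. We need $n \leq \rho(2n)-1$. Writing $2n = 2^{c+4d}\cdot\text{odd}$, we have $\rho(2n) = 2^c + 8d$, which grows only logarithmically in $n$.
\begin{itemize}
\item If $n$ is not a power of two, then $2n$ has an odd factor at least $3$, so $\rho(2n) \le 2\log_2 n + O(1) \ll n$, which rules it out.
\item For $n = 2^k$, the inequality $2^k \leq \rho(2^{k+1})-1$ holds only for $k=1,2,3$: $\rho(4)=4$, $\rho(8)=8$, $\rho(16)=9$, and $\rho$ grows slowly thereafter.
\end{itemize}
If we prefer to stay inside the paper's tools, we would instead complexify. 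The operators $i J_0 J_a$ are linear, and we would check dimension divisibility via theorem \ref{thm:clifford} and corollary \ref{cor:superclifford}. This only gives a necessary condition like $d_{n-1} \mid n$, which is weaker than what we need. So we would cite Radon–Hurwitz, or equivalently Hurwitz's theorem via composition $x\mapsto (J_a x)_a$, for the sharp bound. Existence for $n\in\{2,4,8\}$ follows from the explicit left multiplications in $\mathbb{C}$, $\mathbb{H}$ and $\mathbb{O}$, recovering theorem \ref{thm:quaternions}.
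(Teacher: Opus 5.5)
\textbf{Genuine gap: the $J_a$ need not anticommute.} Your claim that ``from the relations above'' $J_aJ_b+J_bJ_a=0$ for $a\neq b$ does not follow. The relation you derived is $W_b\overline{W_a}+\overline{W_a}W_b=0$. Via $\overline{W_a}=-W_a^\dagger$, this is equivalent to $W_aW_b=-W_bW_a$. But $J_aJ_b+J_bJ_a=W_a\overline{W_b}+W_b\overline{W_a}=-(W_aW_b^\dagger+W_bW_a^\dagger)$. Given that the $W$'s anticommute, this vanishes if and only if $W_a^2=W_b^2$, and nothing forces that.

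Your hypotheses are invariant under $W_a\mapsto e^{i\theta_a}W_a$, but the $J$'s are not. Concretely, take $V_1=I$, $V_2=Y\otimes I$, $V_3=iX\otimes Y$, $V_4=Z\otimes Y$. All $V_a^\dagger V_b$ with $a\neq b$ are antisymmetric, so this is a valid rank-one datum for $n=4$. Yet $J_1J_2+J_2J_1=2\,Z\otimes Y\neq 0$. So as written you never obtain a $Cl_{0,n}$-module on $\mathbb{R}^{2n}$.

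Your converse has the mirror-image defect. Anticommuting $J$'s only give $W_aW_b^\dagger=-W_bW_a^\dagger$. Whether $W_aW_b=-W_bW_a$ holds depends on which real structure you use to write $J_a(x)=W_a\overline{x}$. So existence ``via left multiplications'' is not justified by that route. It is easy to do directly: take $V_1=I$ and $V_{a+1}=iL_{e_a}$, where the $L_{e_a}$ are the real left-multiplication matrices by the imaginary units of $\mathbb{C},\mathbb{H},\mathbb{O}$. These are antisymmetric, unitary and pairwise anticommuting, which is all that is required. The paper's explicit Pauli matrices also work.

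\textbf{The missing idea, and how the paper supplies it.} What is missing is a normalization of the squares, which is exactly the step the paper performs. Since the $W_a$ anticommute, each $W_a^2$ is a symmetric unitary commuting with every $W_b$. Replace $W_a$ by $S_a^{-1}W_a$, where $S_a$ is a unitary square root of $W_a^2$ given by a polynomial in $W_a^2$. This preserves antisymmetry, unitarity and anticommutation, and gives $W_a^2=I$. Then $W_a$ is Hermitian and antisymmetric, hence purely imaginary: $W_a=iB_a$ with $B_a$ real orthogonal, $B_a^2=-I$, and the $B_a$ pairwise anticommuting. Hurwitz--Radon on $\mathbb{R}^n$ (no doubling to $\mathbb{R}^{2n}$ needed) then gives $n\in\{1,2,4,8\}$, and your $J$'s anticommute after this step as well.

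The paper instead normalizes on each irreducible component of the $*$-algebra generated by $V_2,\dots,V_n$ (with $V_1=I$). There each $U_a^2$ is a scalar, so one gets a complex Clifford representation, whence $d_{n-1}\mid n$. Contrary to your closing remark, this condition is not weaker than what you need. For $n\geq 3$, $2^{\lfloor (n-1)/2\rfloor}$ divides $n$ only for $n\in\{4,8\}$. So the complex dimension count is already sharp, and no appeal to Radon--Hurwitz is necessary.
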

\begin{proof}
    Since $S^1$ is a bipartite graph, the given condition does hold for $n=2$. Hence we can assume $n \geq 3$. Proposition \ref{prop:redrsp} then implies that the first part holds if and only if there exist matrices $V_1, \dots, V_n \in U(n)$ such that $V_a^\dagger V_b = -(V_a^\dagger V_b)^T$ for all $a \neq b$. 
    
    For the forward implication, note that we can assume without loss of generality that $V_1 = I$, by pre-multiplying the $V_a$ by $V_1^\dagger$. We then get that $V_a = -V_a^T$ for all $a > 1$, and therefore $V_a^\dagger V_b = -V_b^T (V_a^T)^\dagger = -V_b V_a^\dagger$, for all $a, b > 1$, which, upon left- and right-multiplying both sides by $V_a$, gives $V_aV_b = -V_bV_a$. 
    
    Let $\mathcal{A}$ be the $*$-algebra of $M_n(\mathbb{C})$ generated by $V_2,\dots,V_n$. Theorem \ref{thm:structure} then implies the existence of an orthogonal decomposition
    \[\mathbb{C}^n = \bigoplus_k T_k\]
    with the $T_k$ nonzero $\mathcal{A}$-invariant subspaces such that the restriction of $\mathcal{A}$ to any $T_k$ is irreducible. Fix such a value of $k$ and let $U_1, \dots,U_{n-1}$ be the restriction of the matrices $V_2,\dots,V_n$ to $T_k$.  Since $U_a U_b = -U_b U_a$ for all $a \neq b$, we see that $U_a^2$ commutes with all the $U_b$ and therefore their adjoints. Since the $U_b$ generate all of End$(T_k)$, it follows that the $U_a^2$ are all multiples of the identity: by absorbing the corresponding phase factor into the $U_a$, it can be assumed without loss of generality that $U_a^2 = I$ for all $a$. Hence, the $U_a$ form an irreducible representation of the real Clifford algebra on $n-1$ generators, which, by corollary \ref{cor:superclifford}, implies that the dimensions of the $T_k$ are all divisible by $d_{n-1}$. Hence, $n$ must be divisible by $d_{n-1}$. It is then simple to check that the only such values of $n$ are as given in the statement of the corollary.     

    For the converse, such unitaries can be exhibited directly in terms of the usual Pauli matrices. For $n=4$, set:
    \[V_1 = I, V_2 = Y \otimes I, V_3 = X \otimes Y, V_4 = Z \otimes Y\]
    For $n=8$, set:
    \[V_1 = I, V_2 = I \otimes I \otimes Y, V_3 = I \otimes Y \otimes Z, V_4 = X \otimes Y \otimes Z, V_5 = Y \otimes I \otimes Z, V_6 = Y \otimes X \otimes X, V_7 = Y \otimes Z \otimes X, V_8 = Z \otimes Y \otimes Z \]
    It can be checked that the $V_a$ above satisfy the required conditions.    
\end{proof}

\subsection{The correspondence with quantum error correction} \label{subsec:equivqecc}
This subsection derives algebraic conditions for the existence of a quantum $n$-coloring of $S^{n-1}$. The main result is the following (refer to proposition \ref{prop:codespace} for the definition of a maximal code space):
\begin{theorem} \label{thm:equiverrorcorrection}
For $n \geq 3$, $r \geq 1$, it holds that $\chi_q^{(r)}(S^{n-1}) = n$ if and only if there exists a unitary representation $\mathcal{U}$ of the Clifford algebra on $n-1$ generators in $U(nr)$ 
along with a maximal code space $P$ for $\mathcal{U}$.
\end{theorem}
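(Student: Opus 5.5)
The plan is to go through proposition \ref{prop:redrsp}, which already reduces $\chi^{(r)}_q(S^{n-1}) = n$ to the existence of unitaries $V_1,\dots,V_n$ on $\mathbb{C}^n\otimes\mathbb{C}^r$ with $(V_a^\dagger V_b)^{\Gamma_1} = -V_a^\dagger V_b$ for $a\neq b$. It then remains to show that such a family exists if and only if there is a Clifford representation $U_1,\dots,U_{n-1}$ in $U(nr)$ with a maximal code space $P$. Throughout I will use the block description: writing $V_a^\dagger V_b = \sum_{k,l} \ket{k}\bra{l}\otimes B^{ab}_{kl}$, the condition says $B^{ab}_{lk} = -B^{ab}_{kl}$, and in particular $B^{ab}_{kk}=0$.

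\textbf{From a code space to the $V_a$.} Set $U_0 = I$, let $W:\mathbb{C}^r\to\mathbb{C}^{nr}$ be an isometry onto the range of $P$, and define
\[V_a(\ket{k}\otimes x) = U_{a-1}U_{k-1}Wx, \qquad a,k\in[n].\]
By proposition \ref{prop:codespace}(2), the ranges of the $U_{k-1}W$ are the pairwise orthogonal projectors $P_k$, which sum to $I$. Hence each $V_a$ is a unitary, being a Clifford generator times one fixed unitary. The blocks are then $B^{ab}_{kl} = W^\dagger U_{k-1}U_{a-1}U_{b-1}U_{l-1}W$, a compression by $P$ of a Clifford monomial. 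To check antisymmetry I will use lemmas \ref{lem:productmonomials} and \ref{lem:cliffordmonomials} to bring $U_{l-1}U_{a-1}U_{b-1}U_{k-1}$ to $\pm U_{k-1}U_{a-1}U_{b-1}U_{l-1}$, with a case split according to which of $0,a-1,b-1,k-1,l-1$ coincide. In the degenerate cases the product reduces to a monomial of degree $1$ or $2$, and the Knill--Laflamme conditions $PU_cP=0$ and $PU_c^\dagger U_dP=0$ kill the compression. In the generic case the reordering sign will have to come out as $-1$. If it does not, I will adjust the convention to $V_a(\ket{k}\otimes x)=U_{k-1}U_{a-1}Wx$, or insert a factor of $i$ in front of the generators. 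This is a finite bookkeeping check.

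\textbf{From the $V_a$ to a code space.} First normalize $V_1 = I$, which is allowed since the condition only involves the products $V_a^\dagger V_b$. Then $V_b^{\Gamma_1} = -V_b$ for $b>1$. Put $P = \ket{1}\bra{1}\otimes I_r$, which has rank $r$. The homomorphisms $f_a(\psi)=V_a(\psi\psi^T\otimes I)V_a^\dagger$ are pairwise orthogonal rank-$r$ projectors summing to $I$ for every real unit $\psi$. Taking $\psi=\ket{1}$ gives $\sum_a V_aPV_a^\dagger = I$. So $\{V_2,\dots,V_n\}$ already satisfies proposition \ref{prop:codespace}(2), and the only missing ingredient is that they anticommute and square to scalars. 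In the rank-one warm-up this followed because $\Gamma_1$ was the full transpose, which is anti-multiplicative; for $r>1$ it is not.

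\textbf{Main obstacle.} The hard step is therefore to replace $V_2,\dots,V_n$ by genuine Clifford generators $U_1,\dots,U_{n-1}$ satisfying $U_{b-1}PU_{b-1}^\dagger = V_bPV_b^\dagger$. My first attempt is to polarize the identity $\sum_a V_a(S\otimes I)V_a^\dagger = \Tr(S)\,I$, which holds for all real symmetric $S$. I will combine it with the antisymmetry of the off-diagonal blocks to show that the operators
\[U_{b-1} = \sum_{k} V_b(\ket{k}\bra{k}\otimes I)V_1^\dagger\]
are unitary and anticommute, with possible rephasings as in the warm-up. If that fails, a second candidate is $U_{b-1} := \sum_k V_k\bigl((\ket{k}\bra{1}\pm\ket{1}\bra{k})\otimes I\bigr)V_k^\dagger$, followed by the same analysis: decompose the algebra they generate via theorem \ref{thm:structure}, use irreducibility to show the squares are scalars, and conclude with corollary \ref{cor:superclifford}. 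The guiding principle for either attempt is to invert the forward construction above: in that construction $V_b\,V_1^\dagger$ maps $P_k$ onto $U_{b-1}P_k$, so the generators should be recoverable blockwise from the $V_b$.
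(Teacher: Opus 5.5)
\textbf{Gap: the direction from a coloring to a code space is not established.} Neither of your candidates is shown to give Clifford generators, and the first one is false in general.

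Since $\sum_k \ket{k}\bra{k}\otimes I = I$, your first candidate is just $U_{b-1}=V_bV_1^\dagger$, i.e.\ $V_b$ itself after normalizing $V_1=I$. The hypothesis is invariant under $V_b\mapsto V_b(I_n\otimes R_b)$ for arbitrary $R_b\in U(r)$, because $\Gamma_1$ commutes with left and right multiplication by $I_n\otimes R$. Take the paper's rank-one solution for $n=4$ ($v_1=I$, $v_2=Y\otimes I$, $v_3=X\otimes Y$, $v_4=Z\otimes Y$) and set $V_a=v_a\otimes R_a$ with $R_1=R_4=I$, $R_2=X$, $R_3=Z$. This is a valid family with $r=2$ and $V_1=I$, yet $V_2V_3=v_2v_3\otimes XZ=V_3V_2$. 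So no rephasing can make $V_2,V_3$ anticommute. Your forward construction only produces the special solutions in which $V_bV_1^\dagger$ happens to be a generator.

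Your second candidate, as written, does not depend on $b$. Its natural correction $\sum_k V_k\bigl((\ket{b}\bra{1}-\ket{1}\bra{b})\otimes I\bigr)V_k^\dagger$ is in fact the right object. However, the verification you outline (decompose the generated algebra, use irreducibility to make squares scalar, invoke corollary \ref{cor:superclifford}) presupposes the anticommutation that has to be proved. In the rank-one warm-up that came from the anti-multiplicativity of the full transpose, which is exactly what is lost for $r>1$.

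\textbf{The missing idea} is to swap the color index with the first-register index, as in lemma \ref{lem:realstructure}. Define $W_b$ by $W_b\ket{a}\ket{j}=V_a\ket{b}\ket{j}$, i.e.\ $W_b=\sum_a V_a(\ket{b}\bra{a}\otimes I_r)$. Then $\bra{c}\bra{j}W_a^\dagger W_b\ket{d}\ket{k}=\bra{a}\bra{j}V_c^\dagger V_d\ket{b}\ket{k}$. Unitarity of $V_c$ gives $P_cW_a^\dagger W_bP_c=\delta_{a,b}P_c$ for $P_c=\ket{c}\bra{c}\otimes I_r$. The block antisymmetry of $V_c^\dagger V_d$ for $c\neq d$ makes the off-diagonal blocks of $W_a^\dagger W_b$ and $W_b^\dagger W_a$ negatives of each other. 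Hence the $W_a$ are unitary, $W_a^\dagger W_b+W_b^\dagger W_a=2\delta_{a,b}I$, and $P_cW_a^\dagger W_bP_c=0$ for $a\neq b$. It follows that $U_a=iW_1^\dagger W_{a+1}$ are Hermitian, pairwise anticommuting unitaries, and $P_1$ satisfies condition (1) of proposition \ref{prop:codespace}.

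Your corrected second candidate equals $2W_bW_1^\dagger$, and $iW_bW_1^\dagger=W_1U_{b-1}W_1^\dagger$. With $V_1=I$ these conjugate $P_1$ to $\ket{b}\bra{b}\otimes I_r$; the relation $U_{b-1}PU_{b-1}^\dagger=V_bPV_b^\dagger$ you were aiming for is not needed.

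\textbf{The other direction.} Your direct construction is a fine shortcut around lemma \ref{lem:normalform}, but it does not work as written. For $a=k=1$, $l=b$, both $B^{1b}_{1b}$ and $B^{1b}_{b1}$ equal $W^\dagger U_{b-1}^2W=I_r$, and reordering to $U_{k-1}U_{a-1}$ fails at the same entries. The factor $i$ is essential. Take $\tilde U_0=I$, $\tilde U_c=iU_c$ and $V_a(\ket{k}\otimes x)=\tilde U_{k-1}\tilde U_{a-1}Wx$. Then antisymmetry of the $(k,l)$ blocks for $k\neq l$ follows from $\tilde U_{k-1}^\dagger\tilde U_{l-1}=-\tilde U_{l-1}^\dagger\tilde U_{k-1}$, and the diagonal blocks vanish because $P\tilde U_{a-1}^\dagger\tilde U_{b-1}P=0$.
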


We prove a few lemmata first. The following lemma provides a more workable angle of attack for the condition given by proposition \ref{prop:redrsp}: 
\begin{lemma} \label{lem:realstructure}
Let $n, r \geq 1$. Suppose that $V_1,\dots,V_n$ and $W_1, \dots, W_n$ are matrices acting on $\mathbb{C}^n \otimes \mathbb{C}^r$ which satisfy, for all $a,b \in [n]$ and $j \in [r]$, 
\[V_a \ket{b} \ket{j} = W_b \ket{a} \ket{j}\]

The following are then equivalent:
\begin{enumerate}
\item The $V_a$ are unitary, and for all $a,b \in [n]$ with $a \neq b$, 
\begin{align}(V_a^\dagger V_{b})^{\Gamma_1} = -V_a^\dagger V_{b}\label{eq:antisymmetric}\end{align}
\item The $W_a$ are unitary and satisfy:
\begin{align} W_a^\dagger W_{b}  + W_{b}^\dagger W_{a} = 2\delta_{a,b} I \label{eq:pseudoclifford}
\end{align} 
and, for all $a,b,c \in [n]$ with $a \neq b$, letting $P_c = \ket{c} \bra{c} \otimes I$, 
\begin{align} P_c W_a^\dagger W_{b} P_c = 0 \label{eq:killedbyP} \end{align}
\end{enumerate}
\end{lemma}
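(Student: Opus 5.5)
The plan is to turn both conditions into statements about matrix entries and compare them entry by entry; no structural input beyond the defining relation $V_a \ket{b}\ket{j} = W_b \ket{a}\ket{j}$ is needed. The key identity I will establish first is that, for all $a,b,c,d \in [n]$ and $i,j \in [r]$,
\[\braket{c,i | V_a^\dagger V_b | d,j} = \big(V_a\ket{c}\ket{i}\big)^\dagger \big(V_b \ket{d}\ket{j}\big) = \big(W_c\ket{a}\ket{i}\big)^\dagger\big(W_d\ket{b}\ket{j}\big) = \braket{a,i | W_c^\dagger W_d | b,j}.\]
In words, the roles of the ``which matrix'' index and the first-register index are swapped. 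Everything else is a translation of the two conditions through this identity.

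Next I unpack the partial transpose. Writing $M = \sum_{i,j} M_{i,j} \otimes \ket{i}\bra{j}$ as in subsection \ref{subsec:qm}, the condition $M^{\Gamma_1} = -M$ says exactly that $\braket{c,i|M|d,j} = -\braket{d,i|M|c,j}$ for all $c,d,i,j$. Hence condition (1), i.e. unitarity of each $V_a$ together with (\ref{eq:antisymmetric}) for $a \neq b$, is equivalent via the key identity to the following two families of relations.
\begin{itemize}
\item[(i)] For every $a$ and all $c,d,i,j$: $\braket{a,i|W_c^\dagger W_d|a,j} = \delta_{c,d}\delta_{i,j}$. This is unitarity of $V_a$. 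Equivalently, $P_a W_c^\dagger W_d P_a = \delta_{c,d} P_a$.
\item[(ii)] For $a \neq b$ and all $c,d,i,j$: $\braket{a,i|W_c^\dagger W_d + W_d^\dagger W_c|b,j} = 0$. Equivalently, the off-diagonal blocks $P_a(W_c^\dagger W_d + W_d^\dagger W_c)P_b$ vanish for $a \neq b$.
\end{itemize}

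For the forward direction I split by whether $c = d$. When $c = d$, (i) gives $P_a W_c^\dagger W_c P_a = P_a$, and (ii) gives $P_a W_c^\dagger W_c P_b = 0$ for $a \neq b$. Together these show $W_c^\dagger W_c = I$, so $W_c$ is unitary (square matrices). When $c \neq d$, (i) is precisely (\ref{eq:killedbyP}) after renaming indices. Moreover, applying (i) to both $W_c^\dagger W_d$ and $W_d^\dagger W_c$ kills the diagonal blocks of $W_c^\dagger W_d + W_d^\dagger W_c$, while (ii) kills its off-diagonal blocks. So this sum is $0$, which together with unitarity gives (\ref{eq:pseudoclifford}).

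The converse runs the same bookkeeping backwards. Unitarity of the $W$'s, (\ref{eq:pseudoclifford}) and (\ref{eq:killedbyP}) yield (i), so each $V_a$ is unitary. The relation (\ref{eq:pseudoclifford}) gives (ii) directly when $c \neq d$. When $c = d$, (ii) follows from $W_c^\dagger W_c = I$ and $\braket{a,i|b,j} = 0$ for $a \neq b$. Finally, (i) and (ii) are re-read as unitarity of the $V_a$ and the antisymmetry (\ref{eq:antisymmetric}).

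I expect no real obstacle. The only delicate point is the index bookkeeping in the middle step: the diagonal case $a = b$ of condition (1) is only unitarity, while the case $c = d$ of (ii) is what forces the $W_c$ to be unitary rather than merely isometric on the diagonal blocks. I will check these boundary cases explicitly.
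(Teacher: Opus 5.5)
Your proposal is correct and follows essentially the same route as the paper. The paper's proof rests on the same index-swapping identity $\bra{c}\bra{j} W_a^\dagger W_b \ket{d}\ket{k} = \bra{a}\bra{j} V_c^\dagger V_d \ket{b}\ket{k}$ and then checks the diagonal and off-diagonal first-register blocks separately, as you do; your families (i) and (ii) are a more organized presentation of those entrywise computations.
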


\begin{proof}
First, if the $V_a$ are unitary, we have that, for $a,b,c \in [n]$, $j,k \in [r]$,
\begin{align*}
\bra{c} \bra{j} W_a^\dagger W_{b} \ket{c} \ket{k} &= \bra{a} \bra{j} V_c^\dagger V_{c} \ket{b} \ket{k}\\
&= \delta_{a,b} \delta_{j,k}
\end{align*}
Thus, equation \ref{eq:killedbyP} is satisfied. Conversely, we see that if the $W_a$ are unitary and \ref{eq:killedbyP} holds, then
\[\bra{a} \bra{j} V_c^\dagger V_{c} \ket{b} \ket{k} = \delta_{a,b} \delta_{j,k}\]
Thus, the $V_c$ are unitary. Also, for $a, b,c,d\in [n]$ with $c \neq d$, we have:
\begin{align*}
\bra{c} \bra{j} W_a^\dagger W_{b} \ket{d} \ket{k} &= \bra{a} \bra{j} V_c^\dagger V_{d} \ket{b} \ket{k}\\
&= -\bra{b} \bra{j} V_c^\dagger V_{d} \ket{a} \ket{k}\\
&= -\bra{c} \bra{j} W_{b}^\dagger W_{a} \ket{d} \ket{k} 
\end{align*}
The $a=b$ special case of the above along with the previous calculation shows that if (1) holds, then the $W_a$ are unitary. Similarly, the $a \neq b$ case of the above along with the previous calculation shows that if (1) holds, then so does equation \ref{eq:pseudoclifford}, as desired. 

It remains to show that equation \ref{eq:antisymmetric} is implied by (2). For $a,b,c \in [n]$ with $a \neq b$ and $j,k \in [r]$, 
\[\bra{c} \bra{j} V_a^\dagger V_b \ket{c} \ket{k} = \bra{a} \bra{j} W_c^\dagger W_c \ket{b} \ket{k} = 0\]
Also, for $c,d \in [n]$ with $c \neq d$, 
\[\bra{c} \bra{j} V_a^\dagger V_b \ket{d} \ket{k} = -\bra{a} \bra{j} W_d^\dagger W_c \ket{b} \ket{k} = -\bra{d} \bra{j} V_a^\dagger V_b \ket{c} \ket{k} \]
which shows that equation \ref{eq:antisymmetric} holds, as desired.
\end{proof}

We will also need the following consequence of the classification of finite-dimensional $*$-algebras, here proven from scratch for completeness:
\begin{lemma} \label{lem:normalform}
Suppose that $\{M_{a,b}\}_{a,b\in [n]}$ are $D \times D$ matrices such that, for all $a,b,c,d$, 
\begin{enumerate}
\item $M_{a,b} M_{c,d} = \delta_{b,c} M_{a,d}$
\item $M_{a,b}^\dagger = M_{b,a}$
\item $\sum_a M_{a,a} = I$
\end{enumerate}
Then $D$ is divisible by $n$, and writing $D=nr$, there exists an isometry $U: \mathbb{C}^D \to \mathbb{C}^n \otimes \mathbb{C}^r$ such that, for all $a,b$, 
\[U M_{a,b} U^\dagger = \ket{a} \bra{b} \otimes I_r\]
\end{lemma}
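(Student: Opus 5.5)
The plan is to build the matrix units by hand: choose an orthonormal basis adapted to $M_{1,1}$ and transport it by the partial isometries $M_{a,1}$.

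\textbf{Step 1: the diagonal elements.} By conditions (1) and (2), each $P_a := M_{a,a}$ satisfies $P_a^2 = P_a$ and $P_a^\dagger = P_a$, so it is an orthogonal projector. Condition (1) also gives $P_aP_b = \delta_{a,b}P_a$. Together with (3), this means $\mathbb{C}^D = \bigoplus_a \operatorname{ran}(P_a)$ is an orthogonal decomposition.

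\textbf{Step 2: equal ranks, hence $n \mid D$.} For each $a$ we have $M_{a,1}^\dagger M_{a,1} = M_{1,a}M_{a,1} = M_{1,1}$ and $M_{a,1}M_{a,1}^\dagger = M_{a,a}$. So $M_{a,1}$ is a partial isometry with initial space $\operatorname{ran}(P_1)$ and final space $\operatorname{ran}(P_a)$. Hence all the $P_a$ have the same rank $r$, and taking traces in (3) gives $D = nr$.

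\textbf{Step 3: define the unitary.} Fix an orthonormal basis $e_1,\dots,e_r$ of $\operatorname{ran}(P_1)$ and set $f_{a,j} := M_{a,1}e_j$. These vectors are orthonormal:
\[
\braket{f_{a,j} | f_{b,k}} = \braket{e_j | M_{1,a}M_{b,1} | e_k} = \delta_{a,b}\braket{e_j | M_{1,1} | e_k} = \delta_{a,b}\delta_{j,k}.
\]
There are $nr = D$ of them, so they form an orthonormal basis of $\mathbb{C}^D$. Define $U$ by $U f_{a,j} = \ket{a}\ket{j}$. Then $U$ is unitary, and in particular an isometry as required.

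\textbf{Step 4: verify the normal form.} Compute $M_{a,b}f_{c,k} = M_{a,b}M_{c,1}e_k = \delta_{b,c}M_{a,1}e_k = \delta_{b,c}f_{a,k}$. This is exactly the action of $\ket{a}\bra{b}\otimes I_r$ in the new basis, so $UM_{a,b}U^\dagger = \ket{a}\bra{b}\otimes I_r$.

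The only step needing care is Step 2, namely showing the ranks agree and that the $f_{a,j}$ span everything. Both follow from the partial isometry relations and (3), so I expect no real obstacle.
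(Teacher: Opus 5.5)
Your proof is correct and follows the paper's argument essentially step for step: pick an orthonormal basis of the range of $M_{1,1}$, transport it by the $M_{a,1}$ to get an orthonormal basis of $\mathbb{C}^D$, and define $U$ on that basis. The differences are cosmetic. You get equal ranks from the partial-isometry relations, where the paper uses submultiplicativity of rank. You conclude spanning by counting $nr = D$ orthonormal vectors, where the paper identifies each family $\{M_{a,1}e_j\}_j$ with a basis of the range of $M_{a,a}$.
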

\begin{proof}
Note first that hypotheses 1 and 2 imply that the $M_{a,a}$ are orthogonal projectors. Since $M_{a,a} = M_{a,1} M_{1,1} M_{a,1}^\dagger$ and $M_{1,1} = M_{1,a} M_{a,a} M_{1,a}^\dagger$, it follows from the submultiplicativity of rank that the $M_{a,a}$ all have the same rank as $M_{1,1}$, call it $r$. Hypothesis 1 also gives $M_{a,a} M_{b,b} = 0$ whenever $a \neq b$, so that the $M_{a,a}$ are pairwise orthogonal projectors. Taking traces in hypothesis 3 then implies that $D=nr$. As $j$ runs over $[r]$, let the $\ket{\psi_{1,j}}$ be an orthonormal basis of the range of $M_{1,1}$, and, for $a \in [n]$, set $\ket{\psi_{a,j}} = M_{a,1} \ket{\psi_{1,j}}$. For any $a,b,c,d \in [n], j,k \in [r]$, we see that:
\begin{align*}
\braket{\psi_{a,j} | M_{c,d}  | \psi_{b,k}} &= \braket{\psi_{1,j} | M_{1,a} M_{c,d} M_{b,1} | \psi_{1,k}}\\
                                    &= \delta_{a,c} \delta_{b,d} \delta_{j,k}
\end{align*}
Note first that setting $a=b=c=d$ in the above shows that, for any $a$, the $\ket{\psi_{a,j}}$ form an orthonormal set inside the range of $M_{a,a}$, and hence constitute an orthonormal basis for said range. Since the ranges of differing $M_{a,a}$ are orthogonal, this implies that the $\ket{\psi_{a,j}}$ form an orthonormal basis of $\mathbb{C}^D$. Hence, the linear map $U:  \mathbb{C}^D \to \mathbb{C}^n \otimes \mathbb{C}^r$ specified by
\[U\ket{\psi_{a,j}} = \ket{a} \ket{j}\]
is an isometry. It is simple to see from the previous calculation that $U$ satisfies the conditions of the lemma. 
\end{proof}

With these lemmata in hand, we can show:

\begin{proof}[Proof of theorem \ref{thm:equiverrorcorrection}]
From proposition \ref{prop:redrsp} and lemma \ref{lem:realstructure}, we get that $\chi^{(r)}_q(S^{n-1}) = n$ if and only if there exist unitaries $W_1, \dots, W_n$ on $\mathbb{C}^n \otimes \mathbb{C}^r$ with $W_a^\dagger W_{b} + W_{b}^\dagger W_{a} = 2\delta_{a,b} I$ for all $a,b$ and with $P_cW_a^\dagger W_{b} P_c = 0$ for all $a \neq b$ and for all projectors $P_c$ of the form $P_c = \ket{c}\bra{c} \otimes I$ for $c \in [n]$. We prove that this is equivalent to the second part of the theorem.

We start by proving the direct implication, where we identify $\mathbb{C}^{nr}$ with $\mathbb{C}^n \otimes \mathbb{C}^r$ in the natural way. Suppose that such unitaries $W_a$ do exist, and for $a \in [n-1]$, set $U_a = i W_1^\dagger W_{a+1}$. We see that 
\begin{align*}
U_a^\dagger &= -i W_{a+1}^\dagger  W_1 \\
            &= -i (-W_1^\dagger W_{a+1})\\
            &= U_a
\end{align*}
Also, for $a \neq b$, 
\begin{align*}
U_{a} U_b &= U_{b}^\dagger U_a \\
         &= W_{a+1}^\dagger W_{b+1}\\
         &= -W_{b+1}^\dagger W_{a+1}\\
         &= - U_b U_a
\end{align*}
Hence, the $U_a$ satisfy the Clifford relations. The projector $P_1 = \ket{1} \bra{1} \otimes I_r$ has rank $r$ and satisfies $P_1 U_aP_1 = P_1 U_a^\dagger U_b P_1 = 0$ for all $a \neq b$, by assumption, and is therefore a maximal code space for the $U_a$, as per the first statement in proposition \ref{prop:codespace}.

We now show the converse. Let the $U_a$ and $P$ as in the statement of the theorem be given. Set $\tilde{W}_1 = I$, and for $a\in [n] \backslash \{1\}$, set $\tilde{W}_a = i U_{a-1}$. It is simple to verify that it holds that $\tilde{W}_a^\dagger \tilde{W}_b + \tilde{W}^\dagger_b \tilde{W}_a = 2\delta_{a,b}I$ for all $a,b$. Also, from proposition \ref{prop:codespace}, we have that $P \tilde{W}_a^\dagger \tilde{W}_bP = 0$ for all $a \neq b$. 

Set $M_{a,b} = \tilde{W}_aP\tilde{W_b}^\dagger$. First, we see:
\begin{align*}
M_{a,b} M_{c,d} &= \tilde{W}_a (P \tilde{W_b}^\dagger \tilde{W_c} P) \tilde{W}_d^\dagger\\
               &= \delta_{b,c} \tilde{W}_a P^2\tilde{W}_d^\dagger\\
               &= \delta_{b,c} M_{a,d}
\end{align*}
where the hypothesis that $PU_aP = PU_a U_bP = 0$ for all $a \neq b$ was used in the second line. We also have that $M_{a,b}^\dagger = M_{b,a}$ since $P$ is Hermitian. Finally, the second statement in proposition \ref{prop:codespace} shows that the $M_{a,a}$ sum to the identity. Hence, by lemma \ref{lem:normalform}, there exists an isometry $U: \mathbb{C}^{nr} \to \mathbb{C}^n \otimes \mathbb{C}^r$ such that, for all $a,b \in [n]$, 
\[UM_{a,b}U^\dagger = \ket{a} \bra{b} \otimes I_r\]
For every $a \in [n]$, set $W_a = U\tilde{W}_aU^\dagger$. We see that we still have that $W_a^\dagger W_b + W_b^\dagger W_a = 2 \delta_{a,b} I$ for all $a,b$. Also, take $j \in [n]$ and set $P_j = \ket{j} \bra{j} \otimes I$. We have, for $a \neq b$, 
\begin{align*}
P_j W_a^\dagger W_b P_j &= U(M_{j,j} \tilde{W}_a^\dagger \tilde{W}_b M_{j,j})U^\dagger \\
                        &= U(\tilde{W}_j P \tilde{W}_j^\dagger \tilde{W}_a^\dagger \tilde{W}_b  \tilde{W}_j P \tilde{W}_j^\dagger)U^\dagger\\
                        &= \pm U\tilde{W}_j (P \tilde{W}_a^\dagger \tilde{W}_b  \tilde{W}_j^2 P) \tilde{W}_j^\dagger U^\dagger\\
                        &=0
\end{align*}
where, in the second to last line, we used the fact that the $\tilde{W}_a$ are all either the identity or scaled Clifford algebra generators, and therefore either commute or anticommute and square to the identity. Hence, the constructed $W_a$ have the required property. 
\end{proof}

\subsection{Proving the main theorem} \label{subsec:theproof}
With the results of the previous subsection in hand, we can complete the proof of theorem \ref{thm:realspheres}. Note that point 4 of theorem \ref{thm:realspheres} already follows from theorem \ref{thm:equiverrorcorrection} combined with corollary \ref{cor:superclifford} (as the existence of the $U_i$ forces $nr$ to be divisible by $d_{n-1}$). 

We start by determining the spectrum of the noise channel corresponding to an irreducible representation of a Clifford algebra. The reader will find the relevant definitions in subsections \ref{subsec:qm} and \ref{subsec:reptheory}.
\begin{lemma} \label{lem:eigvalsclifford}
    For a given $m \geq 1$, let $\mathcal{U} = \{\Gamma_1, \dots, \Gamma_m\}$ be an irreducible unitary representation of the Clifford algebra on $m$ generators in $U(d_m)$. The constant degree spaces $\Lambda_d$ are all eigenspaces of $\Phi_\mathcal{U}$ viewed as a linear operator on $M_{d_m}(\mathbb{C})$, with corresponding eigenvalues:
    \[\lambda_d = (m-2d)(-1)^d\]
    Since the representation is irreducible, this is the full spectrum of $\Phi_\mathcal{U}$.
\end{lemma}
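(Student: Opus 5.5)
The plan is to compute $\Phi_\mathcal{U}$ directly on each Clifford monomial $\Gamma_S$ using the commutation rule of lemma \ref{lem:cliffordmonomials}. Then I will use the fact that the monomials span $M_{d_m}(\mathbb{C})$ to conclude that these eigenvalues exhaust the spectrum.

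Fix $S \subseteq [m]$ with $|S| = d$ and compute $\Gamma_a \Gamma_S \Gamma_a^\dagger = \Gamma_a \Gamma_S \Gamma_a$, since the generators are self-adjoint. By lemma \ref{lem:cliffordmonomials} applied with $T = \{a\}$, we have $\Gamma_S \Gamma_a = (-1)^{d - |S\cap\{a\}|}\Gamma_a \Gamma_S$. Using $\Gamma_a^2 = I$, this gives
\[\Gamma_a \Gamma_S \Gamma_a = (-1)^{d - [a \in S]}\Gamma_S .\]
So $\Gamma_S$ is fixed up to the sign $(-1)^d$ when $a \notin S$, which happens for $m-d$ values of $a$. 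It picks up the sign $(-1)^{d-1}$ when $a \in S$, which happens for $d$ values of $a$. Summing over $a$ gives
\[\Phi_\mathcal{U}(\Gamma_S) = \big((m-d)(-1)^d + d(-1)^{d-1}\big)\Gamma_S = (m-2d)(-1)^d\,\Gamma_S.\]
This eigenvalue depends only on $d = |S|$. By linearity, every element of $\Lambda_d$ is therefore an eigenvector with eigenvalue $\lambda_d$.

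For the last claim, recall the remark after lemma \ref{lem:productmonomials}: the $*$-algebra generated by the $\Gamma_a$ is spanned by the $\Lambda_d$. By irreducibility and Burnside's theorem (theorem \ref{thm:burnside}), this algebra is all of $M_{d_m}(\mathbb{C})$. Hence $M_{d_m}(\mathbb{C}) = \sum_d \Lambda_d$, so $\Phi_\mathcal{U}$ is diagonalizable with eigenvalues among the $\lambda_d$, and its spectrum is contained in $\{\lambda_d\}$.

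To assert that each $\lambda_d$ actually occurs, I also need $\Lambda_d \neq 0$. This holds because each $\Gamma_S$ is unitary, hence nonzero. One small subtlety deserves a sentence: when $m$ is odd, $\Gamma_{[m]}$ is a scalar in an irreducible representation, so $\Lambda_d$ and $\Lambda_{m-d}$ coincide. This is consistent with the computation, since
\[\lambda_{m-d} = (2d-m)(-1)^{m-d} = (m-2d)(-1)^d = \lambda_d\]
for $m$ odd. No eigenvalue conflict arises, and the statement is unaffected.

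I do not expect a real obstacle here. The only care needed is the sign bookkeeping in lemma \ref{lem:cliffordmonomials} with $|T|=1$.
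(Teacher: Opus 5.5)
Your proposal is correct and follows essentially the same approach as the paper: compute $\Gamma_a\Gamma_S\Gamma_a = (-1)^{|S|-[a\in S]}\Gamma_S$ from lemma \ref{lem:cliffordmonomials} and sum over $a$ to get $(m-2|S|)(-1)^{|S|}$. Your added justification of the full-spectrum claim (spanning by the $\Lambda_d$ via Burnside) and your remark that $\Lambda_d=\Lambda_{m-d}$ with $\lambda_d=\lambda_{m-d}$ for odd $m$ only spell out what the paper asserts in a single line.
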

\begin{proof}
Given $S \subseteq [m]$  and $a \in [m]$, applying lemma \ref{lem:cliffordmonomials}, we see that $\Gamma_a \Gamma_S \Gamma_a = (-1)^{|S| - I[a \in S]} \Gamma_S$. It is then immediate from the definitions that
\[\Phi_\mathcal{U}(\Gamma_S) = \lambda\Gamma_S\]
with
\[\lambda = |S|(-1)^{|S|-1} + (m-|S|)(-1)^{|S|}\]
Rearranging then gives the lemma. 
\end{proof}

We start by proving the following, which shows part one of theorem \ref{thm:realspheres}: 
\begin{theorem} \label{thm:divisibleby4}
Suppose that, for $n \geq 3$, $\chi_q(S^{n-1}) = n$. Then $n$ is a multiple of 4.
\end{theorem}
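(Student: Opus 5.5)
The plan is to reduce the hypothesis to the existence of a maximal code space and then use the spectrum of the Clifford noise channel, given by lemma \ref{lem:eigvalsclifford}, to rule out every $n$ not divisible by $4$.

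\textbf{Reduction.} If $\chi_q(S^{n-1}) = n$, there is a perfect strategy with some finite local dimension $d$, so $\chi^{[d]}_q(S^{n-1}) \leq n$. Theorem \ref{thm:structurecameron} then gives $\chi^{(d)}_q(S^{n-1}) \leq n$. The standard basis is an $n$-clique, so equality holds. Writing $r=d$ and $m = n-1$, theorem \ref{thm:equiverrorcorrection} yields a unitary representation $U_1,\dots,U_m$ of the Clifford algebra in $U(nr)$ together with a projector $P$ of rank $r$. By proposition \ref{prop:codespace}, $X = P - I/n$ satisfies $\Phi_\mathcal{U}(X) = -X$. Moreover $X \neq 0$, and its spectrum is $\{1-1/n, -1/n\}$ with multiplicities $r$ and $(n-1)r$.

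\textbf{Block decomposition.} By corollary \ref{cor:superclifford}, after conjugating by a unitary we may write $U_a = (I_p \otimes \Gamma_a) \oplus (I_q \otimes (-\Gamma_a))$ with $\Gamma$ irreducible, and $q=0$ if $m$ is even. Decompose $X$ into $d_m \times d_m$ blocks $X_{ij}$ and let $\Phi_\Gamma$ be the channel of lemma \ref{lem:eigvalsclifford}.
\begin{itemize}
\item A block between two copies of the same sign transforms as $X_{ij} \mapsto \Phi_\Gamma(X_{ij})$.
\item A block between copies of opposite sign transforms as $X_{ij} \mapsto \sum_a \Gamma_a X_{ij} (-\Gamma_a) = -\Phi_\Gamma(X_{ij})$.
\end{itemize}
Hence the same-sign blocks must lie in the $(-1)$-eigenspace of $\Phi_\Gamma$, and the opposite-sign blocks in its $(+1)$-eigenspace. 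By lemma \ref{lem:eigvalsclifford} the eigenvalues of $\Phi_\Gamma$ are $\lambda_d = (m-2d)(-1)^d$, and since the degree spaces span $M_{d_m}(\mathbb{C})$, all eigenvectors are combinations of the $\Lambda_d$ with the matching eigenvalue.

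\textbf{Case analysis on $m$.}
\begin{itemize}
\item If $m$ is even, i.e.\ $n$ is odd, then $m-2d$ is even and never equals $\pm 1$, so $\Phi_\Gamma$ has no eigenvalue $\pm1$. Every block of $X$ vanishes, contradicting $X \neq 0$.
\item If $m = 4k+1$, i.e.\ $n \equiv 2 \pmod 4$, the only values $\lambda_d = \pm 1$ occur at $d = 2k$ and $d=2k+1$. Both give $\lambda_d = +1$, so $-1$ is not an eigenvalue. All same-sign blocks then vanish. In particular, relative to the splitting $\mathbb{C}^{nr} = (\mathbb{C}^p\otimes\mathbb{C}^{d_m}) \oplus (\mathbb{C}^q\otimes\mathbb{C}^{d_m})$, $X$ has the form $\begin{pmatrix} 0 & B \\ B^\dagger & 0\end{pmatrix}$. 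Such a Hermitian matrix has spectrum symmetric about $0$, since conjugating by $I \oplus (-I)$ sends it to its negative. But the spectrum $\{1-1/n,-1/n\}$ is not symmetric for $n \geq 3$, a contradiction.
\item If $m = 4k+3$, then $d=2k+1$ gives $\lambda_d = -1$, so no contradiction arises and $n = m+1$ is a multiple of $4$, as claimed.
\end{itemize}

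\textbf{Main obstacle.} The delicate step is the bookkeeping of how $\Phi_\mathcal{U}$ acts on the mixed blocks between the $\Gamma$ and $-\Gamma$ copies, including the sign flip. One must also check that the eigenspace description of $\Phi_\Gamma$ is exhaustive when $m$ is odd, where $\Gamma_{[m]}$ is a scalar and so $\Lambda_d = \Lambda_{m-d}$; the lemma's formula is consistent with this identification.
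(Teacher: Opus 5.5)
Your proof is correct and follows the paper's argument essentially step for step: reduce to a maximal code space via theorem \ref{thm:equiverrorcorrection}, block-decompose using corollary \ref{cor:superclifford}, and use the eigenvalues from lemma \ref{lem:eigvalsclifford} to rule out $m$ even and $m \equiv 1 \pmod 4$. The one small difference is in the case $m \equiv 1 \pmod 4$: you rule out the off-diagonal form by noting that the spectrum $\{1-1/n,-1/n\}$ of $X$ is not symmetric about $0$, while the paper compares the off-diagonal blocks of $P^2=P$ to get $2B/n=B$; both steps are immediate, and your explicit passage from $\chi_q$ to a fixed rank $r$ fills in a step the paper leaves implicit.
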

\begin{proof}
Writing $m=n-1$, for some rank $r \geq 1$, let $\mathcal{U} = \{U_1,\dots,U_m\} \subseteq U(nr)$ and $P$ be the data that is promised by theorem \ref{thm:equiverrorcorrection}. Note that we may conjugate the $U_i$ and $P$ by any given unitary and preserve the setup. Hence, from corollary \ref{cor:superclifford}, we can assume without loss of generality that, for some natural numbers $p, q \geq 0$ with $nr = d_m(p+q)$, we have that
\begin{equation} U_a = (I_p \otimes \Gamma_a) \oplus (I_q \otimes (-\Gamma_a)) \label{eq:blockformUi} \end{equation}
for some irreducible representation $\Gamma_a$ of the Clifford algebra on $m$ generators in $U(d_m)$. 

Writing $X = P - I/n$, it follows from point three of proposition \ref{prop:codespace} that $X$ is an eigenvector of $\Phi_{\mathcal{U}}$ with eigenvalue $-1$. 
 Writing $\Gamma = \{\Gamma_1,\dots,\Gamma_m\}$, we see that if we write:
\[X = \sum_{a, b \in [p+q]} \ket{a} \bra{b} \otimes X_{a,b}\]
Then, from the block decomposition \ref{eq:blockformUi}, we get:
\[X = -\Phi_\mathcal{U}(X) = \sum_{a, b \in [p+q]} (-1)^{I[a > p] + I[b > p] + 1} \ket{a} \bra{b} \otimes \Phi_\Gamma(X_{a,b})\]
Hence, for all $a, b$, we get that 
\[\Phi_\Gamma(X_{a,b}) = (-1)^{I[a > p] + I[b > p] + 1} X_{a,b}\]
From lemma \ref{lem:eigvalsclifford}, we see that for even $m$, the eigenvalues of $\Phi_\Gamma$ are all multiples of two, so that neither 1 nor $-1$ is an eigenvalue. The above would then force $X = 0$, hence $P = I/n$, which is not a projector unless $n=1$.  Hence, $n$ must be even. 

We now show that $m \equiv 1 \mod 4$ is also impossible, which finishes the proof of the theorem. In this case, lemma \ref{lem:eigvalsclifford} can be seen to imply that $-1$ is not in the spectrum of $\Phi_\Gamma$, but $1$ is. This forces $X_{a,b} = 0$ whenever $a,b \leq p$ or $a,b > p$. Hence, $X$ has block form:
\[X = \begin{pmatrix} 0 & B \\ C & 0 \end{pmatrix}\]
Therefore:
\[P = X+I/n = \begin{pmatrix} I_{d_mp}/n & B \\ C & I_{d_mq}/n \end{pmatrix}\]
We calculate:
\[P^2 = \begin{pmatrix} I_{d_mp}/n ^2 + BC & 2B/n \\ 2C/n & I_{d_mq}/n ^2 + CB
\end{pmatrix}\]
Since $P$ is assumed to be a projector, we have $P^2 = P$. Comparing the off-diagonal blocks yields $2B/n = B$ and $2C/n = C$. Since we are assuming that $n \geq 3$, this forces $B,C=0$, which implies $P = I/n$, which is again impossible. 
\end{proof}

We now turn to proving the second part of theorem \ref{thm:realspheres}. First, we show the following simple lemma. The converse is easily seen to hold, but is not stated as it is not needed for our purposes.

\begin{lemma} \label{lem:existenceofsubsets}
    Suppose that $n \geq 2$ is such that there exists a Hadamard matrix of order $n$. Then, there exist subsets $S_1,\dots,S_n$ of $[n-1]$ such that, for all $a \neq b$, 
        \[|S_a \triangle S_b| =\frac{n}{2} \]
\end{lemma}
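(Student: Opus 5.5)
Normalize the Hadamard matrix and read the subsets off from its rows. Let $H$ be a Hadamard matrix of order $n$. Multiplying columns of $H$ by $-1$ preserves the Hadamard property, because it preserves orthogonality of rows. Recall that $HH^T = nI$ is equivalent to $H^TH = nI$, so row orthogonality and column orthogonality are the same condition. We may therefore assume that the first column of $H$ is the all-ones vector.

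Index the columns by $0,1,\dots,n-1$, with column $0$ the all-ones column. For each row $a \in [n]$, let
\[S_a = \{ j \in [n-1] \mid H_{a,j} = -1 \}.\]
Since every entry of column $0$ equals $+1$, this index never lies in any $S_a$. Hence each $S_a$ is a subset of $[n-1]$, as the statement requires.

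Now compute the symmetric differences. For rows $a \neq b$, two rows of $H$ differ in coordinate $j$ exactly when $j \in S_a \triangle S_b$, and they never differ in coordinate $0$. The inner product of the two rows is therefore
\[\langle H_a, H_b\rangle = (n - |S_a \triangle S_b|) - |S_a \triangle S_b| = n - 2|S_a \triangle S_b|.\]
Rows $a \neq b$ are orthogonal, so this inner product is $0$, which gives $|S_a \triangle S_b| = n/2$, as claimed.

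The only step needing any care is the normalization, namely confirming that negating columns preserves the Hadamard property. This follows from the row/column equivalence noted above, so no step of the argument is a genuine obstacle.
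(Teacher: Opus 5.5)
Your argument is the paper's argument transposed. The paper negates columns to make the last row all ones, reads each $S_a$ off a column, and uses column orthogonality directly, since that is how it defines a Hadamard matrix. You read the sets off the rows instead, and your counting step $\langle H_a,H_b\rangle = n-2|S_a\triangle S_b|$ is correct.

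The normalization step, however, is wrong as written, and it is the one step you single out as needing care. Negating columns of $H$ flips each column as a whole, so it cannot turn the first column into the all-ones vector unless that column is already constant. For example, $H=\begin{pmatrix}1&1\\-1&1\end{pmatrix}$ is Hadamard, but no choice of column signs makes its first column $(1,1)^T$.

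What you need is to negate rows, replacing row $i$ by $H_{i,0}$ times row $i$. This preserves row orthogonality immediately, because each $\langle H_a,H_b\rangle$ gets multiplied by $H_{a,0}H_{b,0}=\pm1$. The equivalence $HH^T=nI \iff H^TH=nI$ is then needed only once, to get row orthogonality from the paper's column-based definition. Alternatively, keep the column negations but use them to make a row constant, and read the sets off the columns as the paper does. With either fix, the rest of your proof goes through unchanged.
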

\begin{proof}
Let $M$ be such a Hadamard matrix. By multiplying columns by $-1$ appropriately (which preserves the property of being a Hadamard matrix), we can arrange for $M$ to be such that $M_{n,a} = 1$ for all $a$. For a given $a \in [n]$, set
\[S_a = \{b \in [n-1] \mid M_{b,a} = 1\}\]
Then, using the normalization of $M$,
\[|S_a \triangle S_b| = |\{c\in [n] \mid M_{c,a} \neq M_{c,b}\}|\]
Since, by hypothesis, the columns $a$ and $b$ of $M$ agree on exactly half of the rows, we get that, for $a \neq b$, 
\[|S_a \triangle S_b| = \frac{n}{2}\] 
As desired. 
\end{proof}

We then show the following, which gives part two of theorem \ref{thm:realspheres}: 
\begin{theorem}
If a given multiple of four $n$ is such that a Hadamard matrix of order $n$ exists, we have
\[\chi_q^{(d_{n-1})}(S^{n-1}) = n\] 
\end{theorem}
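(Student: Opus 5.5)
By Theorem \ref{thm:equiverrorcorrection}, it suffices to exhibit, for $m=n-1$ and $r=d_m$, a unitary representation $U_1,\dots,U_m$ of the Clifford algebra on $m$ generators in $U(nr)$ together with a rank-$r$ projector $P$ satisfying condition (1) of Proposition \ref{prop:codespace}, namely $PU_cP=0$ for all $c$ and $PU_cU_dP=0$ for all $c\neq d$. I would build everything on $\mathbb{C}^{d_m}\otimes\mathbb{C}^n$, starting from an irreducible representation $\Gamma_1,\dots,\Gamma_m$ in $U(d_m)$ and the Hadamard matrix $M$.

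\emph{Combinatorial input.} Normalize $M$ as in the proof of Lemma \ref{lem:existenceofsubsets}, so that $M_{n,a}=1$ for all $a$, and let $S_a=\{c\in[m]\mid M_{c,a}=1\}$. Since $M^T$ is also Hadamard, the rows of $M$ are pairwise orthogonal. This gives $\sum_a M_{c,a}=0$ for every $c\le m$ (orthogonality with the all-ones last row) and $\sum_a M_{c,a}M_{d,a}=0$ for $c\neq d$.

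\emph{Construction.} Define the isometry $V:\mathbb{C}^{d_m}\to\mathbb{C}^{d_m}\otimes\mathbb{C}^n$ by $V=n^{-1/2}\sum_a \Gamma_{S_a}^\dagger\otimes\ket{a}$. Since the $\Gamma_{S_a}$ are unitary, $V^\dagger V=I$, so $P=VV^\dagger$ is a projector of rank $d_m=r$. For the errors, set $D=\mathrm{diag}((-1)^{|S_a|})_{a\in[n]}$ and $U_c=\Gamma_c\otimes D$. Because $D$ is a diagonal sign matrix, $D^2=I$ and the $U_c$ remain Hermitian, square to $I$, and pairwise anticommute, so they form a Clifford representation. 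The twist by $D$ is exactly what is needed to cancel a parity factor in the next step.

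\emph{Verification.} The condition $PXP=0$ is equivalent to $V^\dagger XV=0$. By Lemma \ref{lem:cliffordmonomials}, $\Gamma_{S}\Gamma_c\Gamma_{S}^\dagger=(-1)^{|S|-[c\in S]}\Gamma_c$. Hence
\[V^\dagger U_cV=\tfrac1n\sum_a(-1)^{|S_a|}(-1)^{|S_a|-[c\in S_a]}\Gamma_c=-\tfrac1n\Big(\sum_a M_{c,a}\Big)\Gamma_c=0.\]
Next, $U_cU_d=\Gamma_c\Gamma_d\otimes I$ and $\Gamma_S\Gamma_c\Gamma_d\Gamma_S^\dagger=(-1)^{[c\in S]+[d\in S]}\Gamma_c\Gamma_d$. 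Therefore
\[V^\dagger U_cU_dV=\tfrac1n\Big(\sum_a M_{c,a}M_{d,a}\Big)\Gamma_c\Gamma_d=0\qquad (c\neq d).\]
Proposition \ref{prop:codespace} then shows that $P$ is a maximal code space for $\{U_c\}$, and Theorem \ref{thm:equiverrorcorrection} yields $\chi_q^{(d_{n-1})}(S^{n-1})=n$.

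The main obstacle is the sign bookkeeping in the first condition. With the naive choice $U_c=\Gamma_c\otimes I$, the sum that appears is $\sum_a(-1)^{|S_a|}M_{c,a}$, which the Hadamard property does not control. Inserting the diagonal sign $D$ is the fix I expect to work; I would double-check it against Lemma \ref{lem:cliffordmonomials}, in particular that the conjugation sign is $(-1)^{|S|-[c\in S]}$.
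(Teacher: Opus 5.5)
Your proof is correct, and your construction is essentially the paper's. The paper also takes $P=VV^\dagger$, but with $V=n^{-1/2}\sum_a\Gamma_{S_a}\otimes\ket{a}$ and $U_c=\Gamma_c\otimes I_n$. Your use of $\Gamma_{S_a}^\dagger=\pm\Gamma_{S_a}$ only conjugates $P$ by a diagonal sign matrix on the $\mathbb{C}^n$ factor, and that matrix commutes with the errors, so nothing changes.

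Your twist $D$ is in fact trivial. By Lemma \ref{lem:existenceofsubsets}, $|S_a\triangle S_b|=n/2$, which is even when $4\mid n$. Since $|S_a\triangle S_b|\equiv|S_a|+|S_b| \pmod 2$, all the $|S_a|$ have the same parity. Hence $D=\pm I$, and the ``naive'' sum $\sum_a(-1)^{|S_a|}M_{c,a}=\pm\sum_a M_{c,a}$ vanishes after all. The obstacle you flagged is therefore not real, though your argument is unaffected by it.

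Where you genuinely differ is the verification.

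\emph{The paper's route.} It checks condition (3) of Proposition \ref{prop:codespace}. The off-diagonal blocks of $X=P-I/n$ are Clifford monomials of degree $|S_a\triangle S_b|=n/2$ (Lemma \ref{lem:productmonomials}). By Lemma \ref{lem:eigvalsclifford}, these have eigenvalue $(-1)^{n/2}(m-n)=-1$ under $\Phi_\Gamma$. This is exactly where $4\mid n$ enters, and it ties the construction to the same spectral picture the paper uses to show that $4\mid n$ is necessary.

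\emph{Your route.} You check condition (1) directly, conjugating single generators and products of two generators by $\Gamma_{S_a}$. Everything then reduces to $MM^T=nI$, i.e.\ orthogonality of the rows of $M$ with each other and with the all-ones row. This avoids the eigenvalue lemma entirely and is more elementary. With the twist $D$, it also never uses the parity of $n/2$. For instance, it gives a valid code space for $n=2$, where the untwisted choice $U_1=\pm I$ fails. That extra generality is vacuous for $n\geq 3$, since a Hadamard matrix of order $n$ then forces $4\mid n$.
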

\begin{proof}
We will construct the data requested by theorem \ref{thm:equiverrorcorrection} to establish the result. Set $m=n-1$ and let $\Gamma_1, \dots,\Gamma_m$ be an irreducible unitary representation of the Clifford algebra in $U(d_m)$. For $a \in [m]$, set:
\[U_a = \Gamma_a \otimes I_n\]
We will prove the existence of a maximal code space $P$ for the $U_a$. Since we are assuming that a Hadamard matrix of order $n$ exists, lemma \ref{lem:existenceofsubsets} implies the existence of subsets $S_1, \dots, S_n$ of $[m]$ with $|S_a \Delta S_b| = n/2$ for all $a \neq b$. Set 
\[V = \frac{1}{\sqrt{n}} \sum_{a=1}^n \Gamma_{S_a} \otimes \ket{a}\]
\[P = VV^\dagger = \frac{1}{n} \sum_{a,b=1}^n \Gamma_{S_a}^\dagger \Gamma_{S_b} \otimes \ket{a} \bra{b} \]
First note that $V^\dagger V = I_{d_m}$, so that $P$ is an orthogonal projector. To show that $P$ is a maximal code space for the $U_a$, setting $X = P-I/n$, by part three of proposition \ref{prop:codespace}, it is enough to show that $X$ lies in the $(-1)$-eigenspace of $\Phi_\mathcal{U}$. Note that:
\[X =\frac{1}{n} \sum_{\substack{a,b=1 \\ a \neq b}}^n \Gamma_{S_a}^\dagger \Gamma_{S_b} \otimes \ket{a} \bra{b} \]
It follows from lemmata \ref{lem:cliffordmonomials} and \ref{lem:productmonomials} that $\Gamma_{S_a}^\dagger \Gamma_{S_b} \in \Lambda_{|S_a \Delta S_b|} = \Lambda_{n/2}$ whenever $a \neq b$. Since any matrix in $\Lambda_{n/2}$ is an eigenvector of $\Phi_\Gamma$ with eigenvalue $-1$, the result follows.
\end{proof}

We now prove part three of theorem \ref{thm:realspheres}, which finishes the proof of the theorem. It should be noted that the fact that $S^{n-1}$ is quantumly $n$-colorable whenever $n$ is a power of two already follows from the previous theorem, as Sylvester's construction already implies the existence of Hadamard matrices of these orders (these are the familiar Hadamard towers from quantum computing, disregarding the normalization). We provide an additional construction which is entanglement-optimal, unlike the previous one. Note that the statement that $S^7$ admits a rank-one quantum $8$-coloring is recovered as a special case.
\begin{theorem} \label{thm:colorspoweroftwo}
    For all powers of two $n \geq 8$, setting $r = \frac{d_{n-1}}{n}$, it holds that:
    \[\chi_q^{(r)}(S^{n-1}) = n\] 
\end{theorem}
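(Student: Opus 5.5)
The plan is to build the data required by theorem \ref{thm:equiverrorcorrection} directly inside a single irreducible representation. Write $n = 2^k$ with $k \geq 3$ and $m = n-1$, and let $\Gamma_1,\dots,\Gamma_m$ be an irreducible unitary representation of the Clifford algebra in $U(d_m)$, with $d_m = 2^{(n-2)/2}$. Since $nr = d_m$, we may take $\mathcal{U} = \Gamma$ itself. It then suffices to exhibit a projector $P$ of rank $r = d_m/n$ such that $X = P - I/n$ lies in the $(-1)$-eigenspace of $\Phi_\Gamma$; by proposition \ref{prop:codespace} such a $P$ is a maximal code space. By lemma \ref{lem:eigvalsclifford}, that eigenspace contains $\Lambda_{n/2}$, because $(m - n)(-1)^{n/2} = -1$ when $n/2$ is even. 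So the goal is a projector of the form $P = \frac{1}{n}\sum_{g \in G} g$, where $G$ is a group of $n$ commuting Hermitian Clifford monomials of degree $n/2$, together with $I$.

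For the construction, index the generators $[m]$ by the nonzero vectors $u \in \mathbb{F}_2^k$. For $v \in \mathbb{F}_2^k$, set $S_v = \{u : \langle u, v\rangle = 1\}$, which is the simplex code. For $v \neq 0$ we have $|S_v| = n/2$ and $S_v \triangle S_w = S_{v+w}$, and for distinct nonzero $v, w$ we have $|S_v \cap S_w| = n/4$. Let $g_i = \Gamma_{S_{e_i}}$ for $i \in [k]$, and check the following properties using lemma \ref{lem:cliffordmonomials}.
\begin{enumerate}
\item Each $g_i$ is Hermitian, because $|S| = n/2 \equiv 0 \bmod 4$ when $n \geq 8$. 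Being unitary, it therefore squares to $I$.
\item The $g_i$ pairwise commute, because the sign exponent is $(n/2)^2 - n/4$, which is even when $n \geq 8$.
\item By lemma \ref{lem:productmonomials}, every product $\prod_{i \in T} g_i$ equals $\pm \Gamma_{S_v}$ for the corresponding $v$, and hence lies in $\Lambda_{n/2}$ when $v \neq 0$.
\end{enumerate}
Then set $P = \prod_{i=1}^k \frac{I + g_i}{2}$. This is a product of commuting projectors, hence a projector, and it expands as $\frac{1}{n}\big(I + \sum_{v \neq 0} \pm \Gamma_{S_v}\big)$. Consequently $X = P - I/n \in \Lambda_{n/2}$, which is a $(-1)$-eigenvector of $\Phi_\Gamma$, as required.

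The remaining step is the rank computation: $\Tr P = d_m/n$ exactly when $\Tr \Gamma_{S_v} = 0$ for all $v \neq 0$. To see this, pick a generator $\Gamma_a$ that anticommutes with $\Gamma_{S_v}$. Such an $a$ exists because $|S_v|$ is even and $0 < |S_v| < m$: one can take $a \in S_v$, or alternatively rely on the fact that for odd $m$ only the monomials of degree $0$ and $m$ are central. Anticommutation then forces the trace to vanish. Finally, $r = d_m/n = 2^{(n-2)/2 - k} \geq 1$ for $n \geq 8$, and theorem \ref{thm:equiverrorcorrection} gives $\chi_q^{(r)}(S^{n-1}) = n$.

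The main obstacle is the sign bookkeeping: checking that the stabilizer group generated by the $g_i$ does not contain $-I$, and that the sign conventions keep every element Hermitian. Both points are handled by the trace argument above together with the mod-4 conditions, and these conditions are exactly where the hypothesis $n \geq 8$ is used.
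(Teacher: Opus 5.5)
Your proposal is correct and matches the paper's proof: the same simplex-code sets $S_v$ (the paper's $L_w$), the same commuting Hermitian degree-$n/2$ monomials $\Gamma_{S_{e_i}}$, the projector $P=\prod_i \frac{1}{2}(I+\Gamma_{S_{e_i}})$, and the observation that $X=P-I/n\in\Lambda_{n/2}$ is a $(-1)$-eigenvector of $\Phi_\Gamma$. Your separate trace argument for the rank is valid but redundant, since point three of proposition \ref{prop:codespace} already forces $P$ to have rank $r$ (pass to point two and take traces).
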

\begin{proof}
Write $n = 2^s$ and set $m=n-1$. We again construct the data required by theorem \ref{thm:equiverrorcorrection}. Let $\{\Gamma_v\}_{v \in \mathbb{F}^s_2 \backslash \{0\}}$ be an irreducible representation of the Clifford algebra on $m$ generators in $U(d_m)$. Given $w \in \mathbb{F}^s_2 \backslash \{0\}$, define:
\[L_w = \{v \in \mathbb{F}^s_2 \mid v \cdot w = 1\}\]
It is a standard fact that $|L_w| = 2^{s-1}$ for all $w$. Also, given $w, w' \in \mathbb{F}^s_2 \backslash \{0\}$ with $w \neq w'$, we have that $L_w \Delta L_{w'} = L_{w + w'}$, and so $|L_w \cap L_{w'}| = 2^{s-2}$. Since it is assumed that $s \geq 3$, lemma \ref{lem:cliffordmonomials} implies that $\Gamma_{L_w}$ and $\Gamma_{L_{w'}}$ commute and that $\Gamma_{L_w}^\dagger = \Gamma_{L_w}$, and hence $\Gamma_{L_w}^2 = I$. 

Let $e_1, \dots, e_s$ be the standard basis of $\mathbb{F}^s_2$, and set $P_a = \frac{1}{2}(I+\Gamma_{L_{e_a}})$. It is easy to check that the $P_a$ are orthogonal projectors and commute, and therefore that, setting
\[P = \prod_{a=1}^s P_a\]
we have that $P$ is an orthogonal projector. We show that $P$ is a maximal code space for the $\Gamma_v$. Setting $X = P - I/n$, we get
\begin{align*}
X  &= \frac{1}{n} \sum_{w \in \{0,1\}^s \backslash \{0\}}\prod_{a=1}^s \Gamma_{L_{e_a}}^{w_a} \\
&= \frac{1}{n} \sum_{w \in \mathbb{F}_2^s \backslash \{0\}} b_w \Gamma_{L_w}\\
\end{align*}
for some $b_w \in \{1,-1\}$, where lemma \ref{lem:cliffordmonomials} was used to simplify the product. Lemma \ref{lem:eigvalsclifford} shows that $X$ lies in the $(-1)$-eigenspace of $\Phi_\Gamma$, since $|L_w| = n/2$. This completes the proof.  
\end{proof}

\section{Concluding remarks and open problems}
Our results investigated possible directions in which the construction \ref{thm:quaternions} could be extended. We showed that no such extension is possible if the hypothesis that the orthogonal representation be real is lifted, while an extension is indeed possible whenever $n$ is divisible by four provided that a Hadamard matrix of order $n$ exists. We have also shown that the rank of the corresponding colorings must grow exponentially with $n$.

Our results leave open the following avenues for future research:
\begin{enumerate}
    \item Our techniques only apply to the study of the finite-dimensional chromatic numbers $\chi_q(S^{n-1})$. In particular, they are not strong enough to show nontrivial lower bounds on the commuting chromatic numbers $\chi_{qc}(S_\mathbb{F}^{n-1})$, and it would be interesting to look for alternate ways to lower bound them. Conjecture: $\chi_{qc}(S_\mathbb{F}^{n-1}) = \chi_{q}(S_\mathbb{F}^{n-1})$ always. 
    \item Corollary \ref{cor:asymptotics} showed that $\chi_q(S^{n-1}) = n + o(n)$. However, our results have no bearing on the asymptotic behavior of $\chi^{(r)}_q(S^{n-1})$ when $r$ is either a constant or is a polynomial function of $n$, and it could be that this grows exponentially fast with $n$, just like the ordinary chromatic number. Determining whether this is the case appears to be an interesting research project.
    \item We have shown that for all $n \geq 3$ which are multiples of four, the sphere $S^{n-1}$ is quantumly $n$-colorable if a Hadamard matrix of order $n$ exists. It would be interesting to either prove this unconditionally or to show that the reverse implication holds.  In this regard, theorem \ref{thm:colorspoweroftwo} serves as a warning in that the construction we gave can be improved upon when $n$ is a power of two, though it could be that our construction is rigid for other values of $n$. 
    \item We have determined for which values of $n$ there exists an exact remote state preparation scheme for $S^{n-1}$ with communicated register size $n$, subject to the condition that this scheme use a maximally entangled state and projective measurements. Does the same classification hold without this condition?
    \item Another interesting avenue is that of effectiveness. As per the generalized de Bruijn-Erd\H{o}s theorem \ref{thm:debruijnerdoswatereddown}, most theorems we have shown have analogues for finite graphs, though the exact analogues one would hope for would require proving the existence of a subgraph $G$ of $S_\mathbb{F}^{n-1}$ with $\chi_q(G') = \chi_q(S_\mathbb{F}^{n-1})$, which, as was discussed, is not generically true for infinite graphs. It would be interesting to give explicit examples of such subgraphs.
\end{enumerate}

\section{Acknowledgments}
We thank Debbie Leung for having pointed us to the references \cite{Zeng_2002, Leung_2003} as well as William Slofstra, Ashwin Nayak and Mike Brannan for many useful discussions. We also thank Richard Cleve for his help with proofreading parts of the paper. Although everything in the paper was exclusively written by the author, we acknowledge the use of AI tools such as Claude or GPT-5 for researching the literature and for dealing with some of the technical parts of the paper. We acknowledge the support of the Natural Sciences and Engineering Research Council of Canada (NSERC) grants ALLRP-578455-2022 and RGPIN-2023-03731.

\printbibliography

\end{document}